\documentclass[journal]{IEEEtran}

\makeatletter
\def\endthebibliography{%
	\def\@noitemerr{\@latex@warning{Empty `thebibliography' environment}}%
	\endlist
}
\makeatother
\usepackage{xfp}
\usepackage{algorithm}
\usepackage{algpseudocode}
\usepackage{cuted}
\usepackage{amsmath,amsfonts,amsthm}
\usepackage{caption}
\usepackage{subcaption}
\usepackage{mathtools}
\usepackage{multirow}
\usepackage{enumitem}

\usepackage{stfloats}
\usepackage[utf8]{inputenc}
\usepackage[english]{babel} 

\usepackage[export]{adjustbox}

\usepackage{booktabs,tabularx} 
\usepackage{nicefrac}
\usepackage{xcolor}
\usepackage{colortbl}
\usepackage{soul}
\usepackage{tabu}
\usepackage{authblk}

\usepackage{mathabx}

\usepackage{placeins}

\usepackage{comment}

\definecolor{metrictrack}{HTML}{F2F4F5}
\definecolor{metricfill}{HTML}{D4E5E1}
\definecolor{panelshade}{HTML}{F1F4F5}
\newlength{\metricbarwidth}
\newcommand{\metricbar}[2]{%
	\begingroup
	\makebox[\metricbarwidth][l]{%
		\makebox[0pt][l]{\raisebox{-0.14ex}{\color{metrictrack}\rule{\metricbarwidth}{1.25ex}}}%
		\makebox[0pt][l]{\raisebox{-0.14ex}{\color{metricfill}\rule{\fpeval{#1}\metricbarwidth}{1.25ex}}}%
		\makebox[\metricbarwidth][r]{\strut #2\hspace{2pt}}%
	}%
	\endgroup
}
\newcommand{\hibar}[4]{\metricbar{(#1-#2)/(#3-#2)}{#4}}
\newcommand{\lobar}[4]{\metricbar{(#3-#1)/(#3-#2)}{#4}}

\usepackage{newfloat}
\usepackage{listings}

\newcolumntype{C}{>{\centering\arraybackslash}X}

\newcommand{\para}[1]{{\vspace{4pt} \noindent \textit{#1} \hspace{1pt}}}
\AtBeginDocument{%
	\providecommand\BibTeX{{%
			\normalfont B\kern-0.5em{\scshape i\kern-0.25em b}\kern-0.8em\TeX}}}

\newcommand{\Description}[1]{}

\newtheorem{assumption}{Assumption}
\newtheorem{theorem}{Theorem}
\newtheorem{lemma}[theorem]{Lemma}
\newtheorem{prop}[theorem]{Proposition}

\usepackage{pdftexcmds}
\usepackage{catchfile}
\usepackage{ifluatex}
\usepackage{ifplatform}
\usepackage{caption}
\usepackage{color}

\usepackage{makecell}
\DeclareGraphicsExtensions{.eps}

\ifCLASSINFOpdf
\else
\fi
\begin{document}

\title{CIDERS: Cloud-Edge LLM Collaborative Learning via Accelerating Personalized Bilevel Optimization}

\author{Victor H.~Chen\IEEEauthorrefmark{1}, Hairui Yu, Stella K. Chung  and Hong Yan, \IEEEmembership{Life Fellow,  IEEE}
}

\markboth{IEEE Transactions on Pattern Analysis and Machine Intelligence, Sept.~2026}%
{Shell \MakeLowercase{\textit{et al.}}: Bare Demo of IEEEtran.cls for IEEE Journals}
%



\allowdisplaybreaks[4]

\maketitle


\begin{abstract}
Amid the rapid advancement of physical-world intelligence, cloud-edge collaborative large language models (LLMs) have emerged as a promising roadmap for practical LLM deployment. However, existing cloud-edge paradigms struggle to balance global consensus with local personalization, which fails to satisfy the need for a unified knowledge foundation on the cloud and domain-specific adaptation at the edge. To address this, we introduce, for the first time, a personalized bilevel optimization framework that formalizes cloud-edge LLM collaboration as a dual structure: the upper level optimizes edge-side personalization, while the lower level governs cloud-side knowledge transfer, reaching cloud-edge evolving in coordination. We then propose CIDERS, an efficient solver that decomposes the model into a learnable backbone and a messenger. While the cloud performs knowledge transfer to the learnable backbone, the key lies in embedding global trajectories into each local personalization step via consensus-variate correction to reconcile personalization with consensus. We provide a comprehensive theoretical analysis, including a geometric characterization of the local trajectory and a full convergence guarantee, revealing an explicit trade-off structure between personalization and global convergence. Extensive experiments demonstrate that CIDERS consistently outperforms competitive baselines on the compressed edge path, with 3.1× and 1.7× gains on mathematical reasoning and code generation, respectively, and a 10\% relative gain on instruction metrics. Mechanism experiments attribute these gains to early consensus-corrected coordination and task-aware distillation. Overall, CIDERS offers a viable path toward consensus-guided continuous personalization in cloud-edge LLM systems.

\end{abstract}

\begin{IEEEkeywords}
	cloud-edge LLM collaborative learning, personalized biLevel optimization, consensus-guided personalization, convergence,  geometric trajectory.
\end{IEEEkeywords}

\IEEEpeerreviewmaketitle

\section{Introduction}

\IEEEPARstart{L}{arge}   language models (LLMs) have evolved into general‑purpose productivity tools owing to their powerful cognitive and reasoning capabilities ~\cite{yao2023react,schick2023toolformer}. They have profoundly reshaped the knowledge‑intensive work paradigms, which span industrial production, scientific innovation, public services, and business operations etc.~\cite{brohan2023rt2}. Driven by advances in research and industry, LLMs are expanding beyond pure information processing tasks toward cyber‑physical systems (CPS) that interact with the real‑world physical environment~\cite{ha2018worldmodels}. Such scenarios demand that the model’s perception, planning, and control capabilities directly serve the operational closed‑loop of physical entities and tightly align with their real‑time runtime processes, imposing new constraints on the overall deployment and execution architecture~\cite{wang2019edge,chen2024edgegeneralintelligence}.

Despite the broadening scope of application scenarios, mainstream industrial LLM tech‑stacks are inherently cloud‑centric: foundation models are pre‑trained on supercomputing clusters to deliver inference services, with vertical domain adaptation realized via centrally collected domain corpora in the cloud. Edge devices mostly function only as sensing and interaction terminals, uploading prompts and data while receiving inference outputs, without participating in the core model training and updates. This architectural choice arises from intrinsic technical motivations: pre‑training, high‑throughput decoding, and high‑quality domain adaptation all heavily rely on centralized computing resources, high‑speed homogeneous interconnection, and unified data governance. Consequently, nearly all production‑grade LLM systems follow a three‑stage pipeline: pre‑training, inference serving, and vertical‑domain adaptation. The core of large‑scale training lies in multi‑granular system‑level model partitioning: inter‑layer partitioning enables pipeline parallelism, intra‑layer matrix partitioning enables tensor parallelism, and these are combined with data parallelism to form three‑dimensional collaborative training. This enables trainability and convergence of hundred‑billion‑parameter models across multi‑machine clusters. Subsequent instruction alignment transforms vanilla continuation‑oriented models into general‑purpose service models capable of following human instructions. Online inference organizes the step‑wise autoregressive generation process into a concurrent serving system: cached historical key‑value (KV) states eliminate redundant computation, and dynamic batching accommodates irregularly arriving requests, rendering time‑to‑first‑token, per‑token generation latency, and per‑unit cost measurable and optimizable. Task adaptation absorbs domain discrepancies via lightweight parameter fine‑tuning~\cite{houlsby2019adapter,li2021prefix,hu2021lora}. It incorporates private or time‑sensitive information through external knowledge bases and retrieval‑augmented generation, and compresses LLMs into deployable compact forms via knowledge distillation.

While the cloud‑centric paradigm delivers remarkable computing and iteration efficiency, it suffers from structural mismatches for scenarios with strict latency requirements, data‑privacy demands, and physical‑interaction constraints~\cite{wang2019edge,chen2024edgegeneralintelligence}. Round‑trip network latency becomes a critical bottleneck that fails to support hard real‑time applications requiring instantaneous response, including autonomous driving decision‑making, industrial robotic closed‑loop control, and field edge deployments with limited network backhaul. Sensing logs and business documents generated at the edge are often prohibited from offloading by compliance and commercial restrictions, rendering cloud LLM APIs unsuitable for local model fine‑tuning. Bandwidth and compute overhead scale with token volume and KV‑cache traffic rather than the intrinsic intellectual complexity of tasks. A more salient mismatch lies in personalization: user trajectories, industrial logs, and sensor‑text samples for model customization originate on the edge, whereas all modifiable model parameters reside in the cloud. In short, cloud‑centric architectures realize centralized hosting of model capabilities, yet data, timing constraints, and physical processes are inherently distributed. Rather than the merely performance optimization, this fundamental misalignment renders cloud‑edge collaboration an architectural necessity, and here we summarize the inference and learning paradigms as follows:

\para{Cloud-edge collaborative inference:} To overcome the resource limitations of individual edge devices, the system treats the cloud and the heterogeneous edge nodes as a programmable resource pool~\cite{ye2025jupiter,hu2022pipeedge,zhang2025edgeshard}. The workflow starts with offline characterization and planning: it measures the computing, memory and the inter‑device bandwidth, then performs the global placement decisions  via heuristic or optimization‑based approaches~\cite{ye2024galaxy,hu2022pipeedge,zhang2025edgeshard} for the model partitioning and deployment~\cite{borzunov2023petals,mudvari2024splitllm,berenbaum2026pipeline}. Subsequently the system starts inference via pipeline parallelism, where  each device computes only its local shard and forwards intermediate activations to downstream components, which overlaps computation with communication~\cite{ye2024galaxy,hu2022pipeedge}.  Nevertheless, pipeline stalls may occur at shard boundaries, due to heterogeneous straggler nodes, or induced by wide‑area round‑trip delays~\cite{macario2025mdi}. Under such circumstances, a complete lightweight model can be deployed in parallel at the edge to proactively generate subsequent candidate tokens conditioned on available context during waiting intervals~\cite{park2025specedge,han2026pipesd}, and the cloud performs one‑shot validation over the complete candidate window or candidate tree:  upon acceptance, the corresponding prefix is advanced while upon rejection, key‑value states are rolled back and generation restarts from the point of divergence~\cite{zhang2026picospec}.

\para{Cloud-edge collaborative learning:} To address the dual challenges of the growing scarcity of high-quality private data and the need for privacy preservation, this paradigm integrates federated learning (FL) with parameter-efficient fine-tuning (PEFT). It enables multiple clients to collaboratively perform domain adaptation for large language models (LLMs) without exposing their raw local data. Specifically, each client introduces lightweight low-rank adaptation modules for the frozen LLM, and selects the rank of the adapters and performs local learning based on its local resources and data~\cite{wang2024flora,zhang2024federated,cho2023heterogeneous}. Followed by the cloud aggregation, clients upload only a small set of incremental parameters to the cloud, which employs mechanisms such as stacking-based aggregation or selective sharing to eliminate aggregation noise. The global adapter can both absorb common knowledge from across clients and preserve the personalized characteristics inherent in local data~\cite{wang2024flora,guo2025selective,sun2024improving}. To alleviate the computational burden on edge nodes, a split FL architecture is further introduced: the model is logically partitioned into client-side and server-side sub-models, where clients only need to perform lightweight forward propagation and upload activations to the cloud, while offloading the majority of gradient computation and parameter updates to high-performance cloud servers~\cite{lin2024splitlora,li2025splitcom}. Complemented by a temporal redundancy-aware activation reuse mechanism, the system can skip the transmission of activations that exhibit only minor changes between adjacent training epochs, substantially reducing uplink communication overhead~\cite{li2025splitcom,gao2024dlora}.

However, \textit{we observe one paradox}. Existing cloud-edge collaborations for LLMs are fundamentally designed on model consensus, where all edge nodes share the common model parameters. This inherently overlooks one fact: data, user behaviors, and physical environments are naturally personalized in edge scenarios, while the capability to adjust model parameters is dispersed across the cloud and the edge~\cite{kairouz2021advances,tan2022pflsurvey}. Therefore it is necessary to form a systematic architecture with edge personalization: each edge node, while sharing a common global knowledge base, can  evolve its model based on its own local data and feedback, achieving fast general capabilities while personalized fine-tuning adapts to specific scenarios.  To this end, we design CIDERS, which is precisely architected to address this gap: it first decomposes each edge model into a globally shared learnable backbone that captures general linguistic and reasoning capabilities via cloud-based task-aware distillation, and a locally adaptive messenger that captures domain-specific personalization via edge-local updates~\cite{xiao2023offsitetuning,wu2024fedbiot}. The two components are orchestrated through our proposed personalized bi-level optimization framework, where the upper level optimizes the messenger for rapid client-specific adaptation, while the lower level ensures the backbone remains aligned with both the frozen teacher and the downstream task manifold. To further anchor local trajectories to global consensus during edge updates, CIDERS introduces a consensus-corrected geometric update that continuously interpolates between pure local specialization and global consensus through a tunable scalar~\cite{karimireddy2020scaffold}.  Crucially, the algorithm transmits only compressed consensus increments to the cloud, enabling low-bandwidth communication while preserving exact reconstruction of local displacements. This design skillfully decouples the learning procedures to achieve what prior cloud-edge paradigms cannot: a systematic continuum from a shared common knowledge base to individually evolved edge models. In summary, our contributions lies:

\para{Personalized bilevel optimization formulation:} Based on the aforementioned decomposition of the LLM into the messenger and the backbones, we formulate for the first time the personalized bilevel optimization. Specifically, the upper-level problem optimizes the  messenger   that enables fast client-specific personalization, thereby learning  for adaptability to heterogeneous local distributions. This is fundamentally different from conventional FL objectives, which optimize for global model accuracy without regard for how quickly or how well that model can be adapted to individual clients. The lower-level problem learns the student backbone via task-aware knowledge distillation from the frozen full backbone.  This ensures that the compressed backbone faithfully emulates the teacher's foundational capabilities while remaining aligned with the downstream task manifold.  This formulation is the first to capture the distinct roles of personalization and globalization in a principled bilevel optimization.

\para{The novel algorithm CIDERS:} We propose  CIDERS that solves the personalized bilevel optimization. Specifically, the edge performs consensus-variate-corrected local update on two distinct scales: the global trajectory is incorporated into the local counterpart. This steers each client's trajectory toward a tunable balance between the local specialization (slower trajectory) and global consensus (fast trjectory).      Then, CIDERS transmits only consensus-variate increments, enabling exact reconstruction of local displacements on the cloud while reducing the communication overhead. Finally,  with aggregated messenger fixed, it performs task-aware knowledge distillation on the learnable backbone. This fixed messenger conditioning ensures that the backbone learns representations that are not only faithful to the teacher but also coherent with the current personalization context, which prevents misalignment if the messenger and backbone evolved independently.

\para{Comprehensive theoretical analysis:}  
We establish the first complete theoretical foundation for personalized bilevel optimization in cloud-edge LLM systems. We first comprehensively demonstrate that CIDERS attains a sublinear convergence to a stationary point. Then with convergence guarantee, our geometric analysis provides an interpretable characterization of the local update trajectory. We show that the effective direction of the messenger lies on the line segment connecting the local personalized gradient and the global consensus direction, a result that reveals how the consensus scalar sets the  personalization–globalization balance, the consensus-variate learning rate governs the transition speed. Based on this insight, we propose two diagnostic metrics for  a direct quantitative readout of where any given trajectory sits on the personalization–consensus spectrum. 

\para{Comprehensive experimental analysis:} We conduct comprehensive experiments on Qwen2.5-3B/1.5B that probe the underlying mechanisms rather than merely benchmarking performance. These experiments first establish CIDERS's empirical superiority, while also revealing several key mechanistic insights that the bilevel coordination yields an average relative improvement of approximately 12\% across six downstream metrics, with the largest single gain reaching 11.15\% on GSM8K, and the consensus diagnostics validate the geometric predictions that directional personalization is preserved, whereas output-level specialization remains marginal and can be decoupled from the directional component. Collectively, these findings establish that consensus-guided continuous personalization substantially improves both system performance and robustness in cloud-edge LLM deployment.

\section{Related Works}
\subsection{Federated Learning and Personalization}

Federated learning (FL) enables distributed clients to collaboratively train a global model without centralizing private data~\cite{kairouz2021advances}. FedAvg serves as the standard baseline, where clients perform local gradient steps and the server aggregates updates via weighted averaging~\cite{mcmahan2017fedavg}. However, under hterogeneity, client drift and objective inconsistency arise. FedProx adds a proximal term to constrain local-global model deviation~\cite{li2020fedprox} and FedDyn introduces dynamic regularization for the enhancement~\cite{acar2021feddyn}. FedNova normalizes local updates to correct biases from heterogeneous local steps~\cite{wang2020fednova}.  MOON enhances local-global representation consistency through contrastive learning~\cite{li2021moon}. SCAFFOLD maintains control variates on both server and clients, using their difference to correct local update directions, this can be viewed as a variance reduction mechanism for heterogeneous FL~\cite{karimireddy2020scaffold}. In personalized FL, FedPer and FedRep decompose models into shared bottom layers and client-specific top layers or heads~\cite{arivazhagan2019fedper,collins2021fedrep}. pFedMe formulates personalization through a Moreau-envelope objective~\cite{dinh2020pfedme}. Ditto jointly learns global and personalized models with regularization to balance their deviation~\cite{li2021ditto}. Meta-learning-based Per-FedAvg brings MAML to FL~\cite{finn2017maml}, optimizing a global initialization that enables rapid client adaptation after one or a few gradient steps~\cite{fallah2020perfedavg,tan2022pflsurvey}. In general, existing FL and personalization methods provide a theoretical foundation for distributed LLM learning. However, they are designed for conventional models and full-parameter spaces. When applied to cloud-edge LLM scenarios, they face challenges: excessive parameter scale, infeasible control-state maintenance, and coupling between local updates and cloud-side distillation. These issues call for redesigns in both personalization architecture and optimization mechanisms.

\subsection{Cloud-Edge LLM Collaborative Learning via FL}

Full-parameter fine-tuning of LLMs is expensive in resource costs. Parameter-efficient fine-tuning (PEFT) thus becomes a foundation for federated LLM adaptation~\cite{wu2025fedllmsurvey}. Adapter inserts trainable modules into frozen models~\cite{houlsby2019adapter}. Specifically, prefix-tuning optimizes continuous prefix vectors~\cite{li2021prefix}, while LoRA freezes pretrained weights and learns low-rank increments~\cite{hu2021lora}. These methods significantly reduce trainable parameters and thus federated LLM fine-tuning further explores how to adapt LLMs on distributed private instruction data~\cite{zhang2023fedit,wen2025fedpeftsurvey}. FederatedScope-LLM provides a system framework and benchmark support, while it establishes the universally acknowledged challenges in communication, computation and  heterogeneity~\cite{kuang2024federatedscopellm}. FLoRA finds  naive averaging in the direclty combined LoRA and FL is mathematically inexact, which can be simply mitigated by stack based aggregation~\cite{wang2024flora}. FedSA-LoRA analyzes the asymmetric roles of LoRA matrices and shares only the component carrying more general knowledge~\cite{guo2024fedsalora}. FedALT mitigates cross-client interference by combining individual LoRAs with shared background LoRA~\cite{bian2025fedalt}. To obtain smaller models for edge deployment, knowledge distillation transfers output distributions, hidden representations, or attention relations from teacher to student~\cite{hinton2015distilling}. DistilBERT, TinyBERT, and MiniLM distill compact students from larger LLMs~\cite{sanh2019distilbert,jiao2020tinybert,wang2020minilm}. Model decomposition further decouples the full LLM into modules and learns them separately. Specifically, offsite-tuning sends a compressed learnable backbone and lightweight adapter to the data owner, who adapts without accessing the full model \cite{xiao2023offsitetuning}. FedBiOT extends this to federated LLM fine-tuning: the server builds a compressed model and aligns it with the full model via distillation, while clients fine-tune lightweight adapters on the fixed compressed model~\cite{wu2024fedbiot}.

\subsection{Discussion}

FL, as a distributed learning framework, has been widely applied across deep learning. Cloud-edge LLM collaborative learning is inherently distributed, so FL provides a natural fit. Both PEFT and its combination with model decomposition explores heterogeneous adapter configurations, yet they lack systematic handling of drift-aware updates and are insufficient for edge personalization to meet complex applications, since they meet the paradox that personalization and consensus are difficult to coexist in the local adapter update path.  To address this, CIDERS integrates both into a unified bilevel optimization framework that injects global trajectory into each local personalization step for coupling local personalization and global consensus within the same optimization process. We target LLMs  continuously evolving at the edge while staying coordinated with global consensus in cloud-edge collaboration.

\begin{figure*}[t] 
	\centering
	\includegraphics[width=1\textwidth]{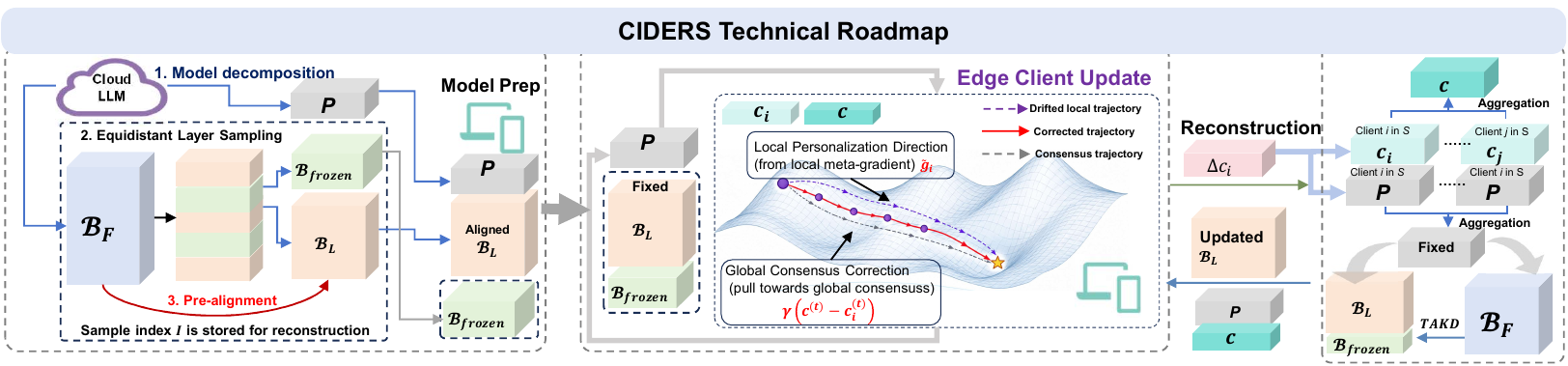} 
	\caption{\small\textbf{CIDERS Workflow.} 
		(I) The framework decomposes the LLM into a cloud server-side full backbone  $\mathcal{B}_F$, edge client-side messengers  $P_M$ and learnable backbone $\mathcal{B}_L$. (II) Each round the server broadcasts  $\{w_{P_M}, w_{B_L}, c\}$; clients run a ${K}$-step consensus-corrected geometric update on private data and upload $\Delta c_i$ only. (III) The server reconstructs $\Delta_i$, aggregates the messenger and consensus, then performs knowledge transfer from $\mathcal{B}_F$ to $\mathcal{B}_L$ with $\mathcal{P}_M$ held fixed before the next broadcast.}
	\label{fig:ciders_workflow}
\end{figure*}


\section{Problem Formulation}
We consider a cloud-edge distributive system comprising a cloud server with a public dataset $\mathcal{D}_{\text{public}}$ and $N$ client edges, each possessing a private heterogeneous dataset $\mathcal{D}_i$. Let $\mathcal{M}_i$ denote the full composite model on client $i$, and $\mathcal{M}_*$ denote its server-side counterpart. To balance global coordination and local adaptation, we partition these models into distinct yet structurally interdependent components: $\mathcal{M}_i = \mathcal{P}_{M,i} \circ \mathcal{B}_L$ on the client side, and $\mathcal{M}_*=\mathcal{P}_{M} \circ \mathcal{B}_F$ on the server side. Specifically, we define:
\begin{itemize}[leftmargin=*]
	\item A \textbf{full backbone} $\mathcal{B}_F$, parameterized by $w_{\mathcal{B}_F}$, which serves as a static, frozen teacher preserving foundational linguistic capabilities and world knowledge.
	\item A \textbf{learnable backbone} $\mathcal{B}_L$, parameterized by $w_{\mathcal{B}_L}$, which acts as the global representation student distilled from $\mathcal{B}_F$ using public data.
	\item The \textbf{messengers} $\mathcal{P}_{M,i}$, parameterized by $w_{\mathcal{P}_{M,i}}$ for $i=1,\cdots,N$. These capture domain-specific knowledge from private datasets, while $\mathcal{P}_{M}$ (parameterized by $w_{\mathcal{P}_M}$) denotes their globally aggregated meta-initialization counterpart on the server.
\end{itemize}

\para{Upper-Level: Personalized Meta-Objective.}
We formulate the upper-level problem by adopting the personalization objective as follows
\begin{equation}\label{upperlevel}
	\mathcal{L}^{\text{pers}}(w_{\mathcal{P}_{M}}, w_{\mathcal{B}_{L}}) = \frac{1}{N}\sum\nolimits_{i=1}^N \mathcal{L}_i\!\left(\widetilde{w}_{\mathcal{P}_{M,i}}, \, w_{\mathcal{B}_{L}} \right),
\end{equation}
where $\widetilde{w}_{\mathcal{P}_{M,i}} := w_{\mathcal{P}_{M}} - \eta_{in} \nabla_{w_{\mathcal{P}_{M}}} \mathcal{L}_i(w_{\mathcal{P}_{M}}, w_{\mathcal{B}_{L}})$ denotes the one-step personalized messenger for client $i$, and $\eta_{in} > 0$ is the inner learning rate. Note  (\ref{upperlevel}) optimizes the global messenger $w_{\mathcal{P}_{M}}$  as a meta-model that facilitates rapid client-specific adaptation. 

\para{Lower-Level: Task-Aware Knowledge Distillation.}
The lower-level objective aims to transfer server-side knowledge to the learnable backbone. While general KD  anchors the student to the teacher's latent space via intermediate representation matching and output logit alignment, it treats the frozen teacher as an infallible oracle. This risks propagating pre-training flaws or calibration biases without optimizing for downstream utility. 
To mitigate this, we propose the Task-Aware Knowledge Distillation (TAKD), which is denoted as  $\mathcal{L}_{\text{KD}}$  on $\mathcal{D}_{\text{public}}$ with the fixed global messenger $w_{\mathcal{P}_{M}}$: 
\begin{equation} 
	\begin{aligned}
		&\mathcal{L}_{\text{KD}}(w_{\mathcal{B}_{L}}; w_{\mathcal{P}_{M}}) = \mathbb{E} \bigg[ \| \mathcal{B}_{L}(x;w_{\mathcal{B}_{L}}) - \mathcal{B}_{F}(x;w_{\mathcal{B}_{F}}) \|_2^2 \\
		&+ \lambda D_{\mathrm{KL}}\!\left( \mathcal{M}_*(x;\{w_{\mathcal{P}_{M}},w_{\mathcal{B}_{F}}\}) \,\|\, \mathcal{M}(x;\{w_{\mathcal{P}_{M}},w_{\mathcal{B}_{L}}\}) \right) \\
		&\quad\quad\quad\quad\quad\quad+ \lambda_{\text{task}} \cdot \ell_{\text{task}}\!\left( \mathcal{M}(x;\{w_{\mathcal{P}_{M}},w_{\mathcal{B}_{L}}\}), y \right) \bigg],
		\label{eq:emulator_obj}
	\end{aligned}
\end{equation}
where $\lambda,\lambda_{\text{task}}>0$. It aims to explicitly injects task-aware supervision grounded in true labels $y$. This ensures  $\mathcal{B}_L$ fully emulates $\mathcal{B}_F$ and aligns with the target task manifold, providing a high-quality foundation for the upper-level meta-adaptation.

\para{The Personalized Bi-Level Optimization (PBO)}
Alternating the optimization of $w_{\mathcal{P}_M}$ and $w_{\mathcal{B}_L}$ via the upper and lower objectives respectively, we formulate the personalized bilevel optimization (PBO) for  the cloud-edge LLM collaborative learning:
\begin{equation}\label{eq:PBO}
	\begin{aligned}
		\min_{w_{\mathcal{P}_{M}}} \quad & \mathcal{L}^{\text{pers}}(w_{\mathcal{P}_{M}}, w_{\mathcal{B}_{L}}) + \frac{\epsilon}{2} \|w_{\mathcal{P}_{M}} - w_{\mathcal{P}_{M}}^{-}\|_2^2\\
		\text{s.t.} \quad & w_{\mathcal{B}_{L}} = \arg\min_{w_{\mathcal{B}_{L}}} \mathcal{L}_{\text{KD}}\!\left( w_{\mathcal{B}_{L}}; w_{\mathcal{P}_{M}} \right).
	\end{aligned}
\end{equation}
In this architecture, we regularize \eqref{upperlevel} by a proximal term  to constrain drift from the previous state $w_{\mathcal{P}_{M}}^{-}$, then   the upper-level optimization learns the global messenger $w_{\mathcal{P}_M}$ to strike a balance between globalization and personalization. Alternately, the lower-level optimization trains $w_{\mathcal{B}_L}$ via TAKD $w_{\mathcal{B}_F}$. This hierarchy explicitly decouples global knowledge alignment from client-specific personalization. 

\para{Optimization Challenges.}
While \eqref{eq:PBO} forms the first personalized bilevel optimization, we pose several key insights in challenges: first is obvious, the computation resource is limited at edges, which requires fast global training. Second, since the cloud and the edges are mutually dependent, if the cloud fails the distillation or the edges result in heterogeneity issues, they will alternate their updates  to a continuous deterioration of mutual learning. Third, edges generally demand personalization, which exacerbates the heterogeneity issue and leads to a decrease in the overall training efficiency.

\section{Methodology}
To enable thorough learning under computational efficiency, CIDERS coordinates two distinct roles, i.e., client-specific messenger adaptation and global teacher-guided backbone alignment. It considers only messenger updates at edges. Then it meticulously navigates the personalization with the global trajectory for fast adapting meta-knowledge encoded within the messenger. On the cloud, the client messengers are aggregated and fixed, then $\mathcal{B}_L$ learns foundational linguistic representations encoded within the full backbone. This architectural decomposition and orchestration provide a complete coordination pathway for heterogeneous edge LLMs reaching personalization with the fast global convergence. We summarize the cloud-edge LLM collaborative learning procedure in Algorithm \ref{alg:ciders_workflow}. 
\subsection{Compressed Model Preparation}
For the pre-trained LLM $\mathcal{M}_*$ with $n$ transformer layers, we decompose $\mathcal{M}_*$ into distinct functional modules ($\mathcal{B}_F$, $\mathcal{B}_L$, $\mathcal{P}_{M}$). This structural decomposition is grounded in the hierarchical representation learning of LLMs: lower layers encode shared  domain-agnostic linguistic priors, while upper layers capture heterogeneous higher-order semantic abstractions. Consequently, the topmost layers are chosen as the messenger $\mathcal{P}_M$ for rapid local adaptation, while the frozen full backbone $\mathcal{B}_F$ preserves foundational capabilities. The learnable backbone $\mathcal{B}_L$ is a uniform strided subsample of $\mathcal{B}_F$, yielding a compact global surrogate. Since directly launching cloud-edge collaborative learning  on these disjoint modules may flaw due to the heterogeneity of $\mathcal{P}_M$'s massive representation mismatches,  pre-alignment via TAKD in \eqref{eq:emulator_obj} for
initializing $\mathcal{B}_L$ should be implemented before client updates so that the local
updates start from a teacher-informed compressed path. 

\begin{algorithm}[t]
	\caption{Module Preparation}
	\label{alg:preparation}
	\begin{algorithmic}[1]
		\Function{ModelPrep}{$\mathcal{M}_*, a, \beta,\mathcal{D}_{\rm pub}$}
		\State Obtain the LLM layer number $n \leftarrow |\mathcal{M}_*|$, 
		\State Compute $\mathcal{B}_L$'s layers $n' \leftarrow \lfloor \beta \cdot (n - a) \rfloor$.  
		
		\State $\mathcal{B}_F \leftarrow \{\mathcal{M}_{*,i}\}_{i=0}^{n-a-1}, \quad \mathcal{P}_M \leftarrow \{\mathcal{M}_{*,i}\}_{i=n-a}^{n-1}$;
		\State Get $\mathit{Indices} \leftarrow \big\{\lfloor j \cdot \frac{n-a-1}{n'-1} \rfloor \mid j = 0, \dots, n' - 1\big\}$,
		\State Sample to form $\mathcal{B}_L \leftarrow \{(\mathcal{B}_F)_k \mid k \in \mathit{Indices}\}$,
		\State $\text{Initialize } w_{\mathcal{P}_M},\,w_{\mathcal{B}_F} \text{ and } w_{\mathcal{B}_L}$
		\State Train $w_{\mathcal{B}_L}\leftarrow \text{argmin}_{w_{\mathcal{B}_L}}\mathcal{L}_{\text{KD}}(w_{\mathcal{B}_{L}}; w_{\mathcal{P}_{M}},\mathcal{D}_{\text{pub}})$
		\State \textbf{return} $\mathcal{P}_M, \mathcal{B}_F, \mathcal{B}_L$;
		\EndFunction
	\end{algorithmic}
\end{algorithm}

\begin{itemize}[leftmargin=*]
	\item  \textit{Step 1: $\mathcal{P}_{M}$ Identification.} The messenger $\mathcal{P}_{M}$, parameterized by $w_{\mathcal{P}_M}$, comprises the topmost $a$ layers of $\mathcal{M}_*$. It acts as the client-side adapter to capture domain-specific knowledge. The remaining model constitutes the frozen full backbone $\mathcal{B}_F$ ($w_{\mathcal{B}_F}$).
	\item \textit{Step 2: Learnable Backbone Construction.}  To preserve
	the teacher’s depth-wise coverage on clients, we compress $\mathcal{B}_F$ by uniformly extracting its $n_E = \lfloor \beta \cdot (n - a) \rfloor$ to construct $\mathcal{B}_L$ ($w_{\mathcal{B}_L}$), where $\beta \in (0,1]$ is the compression rate.
	\item \textit{Step 3: Pre-alignment via TAKD.}  Prior to collaborative learning, $\mathcal{B}_L$ undergoes TAKD on  $\mathcal{D}_{\text{public}}$ for learning $\mathcal{B}_F$'s linguistic priors.
\end{itemize}
In summary, we detail the whole preparation in Algorithm \ref{alg:preparation}.

\subsection{Client Update Procedure}
The client update is meticulously designed to solve the upper-level personalized meta-objective in (\ref{upperlevel}) under the complex environment of the real-world. Our strategy targets incorporating the global learning tragectory into the meta-gradient for personalization adaptation under  data heterogeneity.

\para{Consensus-corrected geometric update.}
At  round $  t  $, each participated client $  i \in \mathcal{S}_t  $ receives the global state triplet {\tiny$  \{w_{\mathcal{P}_M}^{(t)}, w_{\mathcal{B}_L}^{(t)}, {c}^{(t)}\}  $}, then it  initializes its local messenger  at the current global messenger, i.e., {\tiny$  w_{\mathcal{P}_{M,i,0}}^{(t)} \leftarrow w_{\mathcal{P}_M}^{(t)}  $}, and retrieves its historical consensus variate {\small$  {c}_i^{(t)}  $}.  To prevent overfitting to local noise while navigating the heterogeneous landscape, the messenger is updated via a consensus-variate-corrected geometric interpolation over $K$ steps:
\begin{equation}\label{eq:local_update} 
			\begin{aligned}
				&\quad\quad w_{\mathcal{P}_{M,i,k+1}}^{(t)} = w_{\mathcal{P}_{M,i,k}}^{(t)} - \eta g^{(t)}_{traj},\,\text{ where }\\
				&g^{(t)}_{traj}= \widetilde{{g}}_{i,k} + \gamma \big({c}^{(t)} - {c}_i^{(t)}\big)  +\epsilon \big( w_{\mathcal{P}_{M,i,k}}^{(t)} - w_{\mathcal{P}_M}^{(t)} \big), 
			\end{aligned}
\end{equation}
 $\widetilde{{g}}_{i,k}$ is the personalized meta-gradient computed based on a mini-batch $\xi_{i,k} \sim \mathcal{D}_i$ such that
\begin{equation}
	\widetilde{{g}}_{i,k} = \nabla_{w_{\mathcal{P}_M}} \mathcal{L}_i\!\left( \widetilde{w}_{\mathcal{P}_{M,i,k}}^{(t)}, \, w_{\mathcal{B}_L}^{(t)}; \, \xi_{i,k} \right), 
\end{equation}
with $\widetilde{w}_{\mathcal{P}_{M,i,k}}^{(t)}$ being the one-step look-ahead as follows: 
\begin{equation}
	\widetilde{w}_{\mathcal{P}_{M,i,k}}^{(t)} = w_{\mathcal{P}_{M,i,k}}^{(t)} - \eta_{\text{inner}} \nabla_{w_{\mathcal{P}_M}} \mathcal{L}_i\!\left(w_{\mathcal{P}_{M,i,k}}^{(t)}, w_{\mathcal{B}_L}^{(t)}; \xi_{i,k}\right).
\end{equation}
In practical inplementations, (\ref{eq:local_update}) in fact uses  \texttt{SGD} in the optimizer tool, i.e., $w_{\mathcal{P}_{M,i,k+1}}^{(t)}\leftarrow \texttt{OPTIM}(\texttt{SGD},w_{\mathcal{P}_{M,i,k}}^{(t)}, g^{(t)}_{traj},\eta)$, and other optimizer such as \texttt{AdamW} works also well. To circumvent the heavy computation of $\widetilde{{g}}_{i,k}$, CIDERS employs two {Hessian-Free (HF) meta-optimization strategies}. The first is FO strategy, which intentionally drops the second-order derivative, assuming the Hessian impact is locally negligible: $v_i:=\widetilde{{g}}_{i,k}^{\rm FO} =  \nabla_{w_{\mathcal{P}_M}} \mathcal{L}_i( w_{\mathcal{P}_M} - \eta_{\text{inner}} \nabla_{w_{\mathcal{P}_M}} \mathcal{L}_i )$. This requires only standard forward-backward passes, serving as the lightweight default.  The second is FD, which adopts central-difference Hessian--vector estimate
\begin{equation}
	\widehat{H_i v_i}=
	\frac{\nabla \mathcal{L}_i(w+\varepsilon_{\mathrm{FD}}v_i)
		-\nabla \mathcal{L}_i(w-\varepsilon_{\mathrm{FD}}v_i)}
	{2\varepsilon_{\mathrm{FD}}},
	\label{eq:fd_hvp}
\end{equation}
and the curvature-aware surrogate can be obtained 
$g_i^{\mathrm{FD}}=v_i-\lambda_{\mathrm{HVP}}\widehat{H_i v_i}$, and here $\lambda_{\mathrm{HVP}}=\eta_{\mathrm{in}}$. We further provide an insightful analysis of the consensus-corrected geometric update in \eqref{eq:local_update} via the following three aspects. First, the term $\gamma ({c}^{(t)} - {c}_i^{(t)})  $ estimates of how far is the local meta-gradients to the global gradient direction and reaches the consensus-informed global–personal coordination. Second, the geometric scaler $  \gamma \in [0,1]  $ explicitly parameterizes a continuous zero-sum game between global consensus  and local specialization:  when $  \gamma = 1  $, the correction fully aligns the client toward the global meta-initialization, when $  \gamma \to 0  $, the update reduces to pure local meta-gradient descent. Third, the proximal term {\small$  -\epsilon \eta (w_{\mathcal{P}_{M,i,k}}^{(t)} - w_{\mathcal{P}_M}^{(t)})  $} bounds divergence from the current global state, improving stability  in  heterogeneous regimes. After completing $K$ local steps, we record each client's average update trajectory as:
\begin{equation}
	\Delta_{i,t} := \frac{1}{K\eta} \Bigl( w_{\mathcal{P}_M}^{(t)} - w_{\mathcal{P}_{M,i,K}}^{(t)} \Bigr), 
\end{equation}
which  serves as the estimator of the local tragectory. It can be seen $\Delta_i$ can be used to update both the  global messenger and the global consensus variate on the cloud in \eqref{eq:server_aggregation}. The local consensus variate is designed to estimate the local meta-gradient with the $K$ averaging of $\widetilde{{g}}_{i,k}$ and is 
subsequently refreshed using an exponential moving average (EMA) as follows:
\begin{equation}
	{c}_i^{(t+1)} = \bigl[1 + \alpha(\gamma - 1)\bigr]{c}_i^{(t)} + \alpha \Delta_{i,t} - \alpha \gamma {c}^{(t)}.
\end{equation}
Then client $i$ transmits the incremental consensus difference 
\begin{equation}
	\Delta{c}_i = {c}_i^{(t+1)} - {c}_i^{(t)}
\end{equation}
to the cloud. It can be seen that with $\Delta{c}_i$, the cloud can recover $\Delta_{i,t}$ by storing $c_i$ for updating the global messenger.  Consequently, the upstream communication payload is reduced to a single low-dimensional consensus increment.

\begin{algorithm}[t]
	\caption{Algorithmic Framwork of CIDERS}
	\label{alg:ciders_workflow}
	\begin{algorithmic}[1]
		\renewcommand{\algorithmicrequire}{\textbf{Input:}}
		\Require Original LLM $\mathcal{M}_*$, length of messenger $a$, compression rate $\beta$, consensus ${c}^{(0)}, {c}_i^{(0)}$ and the parameters $(\alpha,\gamma,\tau,\eta,K,T)$.

		\State $\{\mathcal{P}_M, \mathcal{B}_F, \mathcal{B}_L\} \leftarrow \texttt{MODELPREP}(\mathcal{M}_*, a, \beta,\mathcal{D}_{\rm pub})$;
		\For{$t = 0, 1, \dots, T-1$}
		\State {\textcolor{gray}{Server performs the lower-level training}}

		\State Recovers exact updates $\{\Delta_{i,t}\}_{i \in \mathcal{S}_t}$ from $\{\Delta {c}_i\}_{i \in \mathcal{S}_t}$.
		\State Performs the consensus and model aggregation via
		\State ${c}^{(t+1)} \leftarrow (1 \!-\! \alpha){c}^{(t)} + \alpha \sum_{i \in \mathcal{S}_t} p_i \Delta_{i,t}$ and 
		\State $w_{\mathcal{P}_M}^{(t+1)} \leftarrow w_{\mathcal{P}_M}^{(t)} - \tau \sum_{i \in \mathcal{S}_t} p_i \Delta_{i,t}$.

		\State Performs TAKD for the coordination of $\mathcal{B}_L$ and $\mathcal{P}_M$ 
		
		$w_{\mathcal{B}_L}^{(t+1)} \leftarrow \arg\min_{w_{\mathcal{B}_L}} \mathcal{L}_{\text{KD}} \text{ on } \mathcal{D}_{\text{pub}}$
		
		\State Transmits $\{w_{\mathcal{P}_M}^{(t+1)}, w_{\mathcal{B}_L}^{(t+1)}, {c}^{(t+1)}\}$ to  $\mathcal{S}_{t+1} $;
		
		\State {\textcolor{gray}{Client performs upper-level training}}
		
		\For{each client $i \in \mathcal{S}_t$ \textbf{in parallel}} 
		\State Obtains the local messenger $w_{\mathcal{P}_{M,i,K}}^{(t)}$  via  (\ref{eq:local_update})
		
		\State Calculates $\Delta_{i,t} \leftarrow \nicefrac{1}{K\eta}\big( w_{\mathcal{P}_M}^{(t)} - w_{\mathcal{P}_{M,i,K}}^{(t)} \big)$
		
		\State Updates ${c}_i^{(t+1)} \leftarrow [1 \!+\! \alpha(\gamma \!-\! 1)]{c}_i^{(t)} \!+\! \alpha \Delta_{i,t} \!-\! \alpha \gamma {c}^{(t)}$.       
		\EndFor
		\State Transmits $\Delta {c}_i \leftarrow {c}_i^{(t+1)} - {c}_i^{(t)}$ to the cloud.    
		\EndFor
		
		\State {\textcolor{black}{Edge inferences via:}} $\mathcal{M}_i \leftarrow \text{Compose}(w_{\mathcal{P}_{M,i}}^{(T)}, w_{\mathcal{B}_L}^{(T)})$;

	\end{algorithmic}
\end{algorithm}

\subsection{Server Update Procedure}
As the centralized orchestrator, the server advances the global meta-initialization via consensus-informed aggregation and refining the global representation backbone via task-aware distillation. With our skillful design, CIDERS can update the client messengers  and the consensus variate via recovering $\Delta_{i}$ from $\Delta c_i$.

\para{Key information reconstruction.}
For each participated  client $i \in \mathcal{S}_t$, it receives the consensus variate increment $\Delta {c}_i = {c}_i^{(t+1)} - {c}_i^{(t)}$. Then the server exactly recovers the averaged local messenger update trajectory $\Delta_{i,t}$ using its synchronized historical state ${c}_i^{(t)}$:
\begin{equation}
	\Delta_{i,t} = \frac{1}{\alpha} \Delta {c}_i - (\gamma - 1){c}_i^{(t)} + \gamma {c}^{(t)}.
\end{equation}
The cloud-stored local consensus variate copies are subsequently updated via ${c}_i^{(t+1)} \leftarrow {c}_i^{(t)} + \Delta {c}_i$. This strategic recovery avoids the upstream transmission of raw messenger weights, thereby reducing the bandwidth costs and shielding the local optimization trajectories from direct privacy exposure.

\para{Global aggregation.}
Given the reconstructed  $\{\Delta_{i,t}\}_{i \in \mathcal{S}_t}$, the server simultaneously advances the aggregation of the global messenger and the global consensus variate as follows:
\begin{equation}
	\begin{aligned}
		w_{\mathcal{P}_M}^{(t+1)} &= w_{\mathcal{P}_M}^{(t)} - \tau K \eta \sum\nolimits_{i \in \mathcal{S}_t} p_i \Delta_{i,t}, \\
		{c}^{(t+1)} &= (1 - \alpha){c}^{(t)} + \alpha \sum\nolimits_{i \in \mathcal{S}_t} p_i \Delta_{i,t},
	\end{aligned}
	\label{eq:server_aggregation}
\end{equation}
where $p_i = |\mathcal{D}_i| / \sum_{j \in \mathcal{S}_t} |\mathcal{D}_j|$ (here $p_i=1/N$ for simplicity), and $\tau>0$. Geometrically, the global messenger update acts as a convex combination of the current meta-initialization and the local adaptations, with the learning rate $\tau$ dampening aggregation noise inherent to partial client participation. Concurrently,  ${c}^{(t+1)}$ functions as a population-level consensus anchor. It tracks the moving average of the personalized gradients to smooth out round-to-round variance and provide an unbiased reference direction for subsequent client adaptations.

\para{Learning ${\mathcal{B}_L}$ via TAKD.}
Finally,  the lower-level objective in \eqref{eq:PBO} learns ${\mathcal{B}_L}$ from  $\mathcal{B}_F$. By fixing $\mathcal{P}_M$, this procedure also plays the role in coordination and adaptation, ensuring the alignment with the current meta-adaptation context. Specifically, we have:
\begin{equation}
	w_{\mathcal{B}_L}^{(t+1)} \leftarrow \arg\min_{w_{\mathcal{B}_L}} \mathcal{L}_{\text{KD}}\Big(w_{\mathcal{B}_L} ; w_{\mathcal{P}_M}^{(t+1)}\Big).
	\label{eq:server_takd_update}
\end{equation}
While this guarantees that ${\mathcal{B}_L}$ absorbs foundational linguistic features, it establishes a robust, task-aligned representation foundation for the next round of client update. 

\section{Theoretical Analysis}
We have proposed CIDERS, and the key mechanisms are twofold: we first divide the LLM into functional modules and adopts different optimization strategies for each of them, then we incoporates the global tracjectory in  the personalized bilevel optimization framework \eqref{eq:PBO}. This will bring complex principles. Luckily, our proposed framework is theoretically robust.

\subsection{Convergence Analysis}
In this section, we analyze the proposed personalized bilevel structure solved by CIDERS, which we  show achieves a sublinear convergence rate to a neighborhood of the stationary point. For simplicity, we denote the PBO as:
\begin{equation}
	\min _v \Phi(v)+\frac{\varepsilon}{2}\left\|v-v^{-}\right\|^2,  \text { s.t. }  w_S=\arg \min _{w_S} \mathcal{L}_{\mathrm{KD}}\left(w_S ; v\right),
\end{equation}
where $v$ denotes the global messenger, $w_S$ the learnable student backbone, $\Phi(v)=$ $\frac{1}{N} \sum_i \mathcal{L}_i\left(\widetilde{v}_i, w_S^*(v)\right)$ is the upper-level meta-objective with $\widetilde{v}_i=v-\eta_{\text {inner }} \nabla_v \mathcal{L}_i\left(v, w_S^*(v)\right)$, and $w_S^*(v)$ is the exact minimizer of the lower-level task-aware knowledge distillation loss $\mathcal{L}_{\mathrm{KD}}(\cdot ; v)$ on the public dataset.  Since CIDERS maintains an inexact backbone $w_S^{(t)}$ together with consensus-corrected local messenger steps, the quantity of interest is the joint gap
\begin{equation}
	J_t
	:=\mathbb{E}\bigl[\Phi(v^{(t)})-\Phi^*\bigr]
	+\mathbb{E}\bigl\|w_S^{(t)}-w_S^*(v^{(t)})\bigr\|^2.
\end{equation}
Moreover, to track the coupled evolution of all error sources, we construct a Lyapunov function to couple all error sources, i.e., the virtual messenger error, client drift $E_t$, consensus lags $C_t,\widetilde{C}_t$, and the joint gap $J_t$:
\begin{equation}
	\begin{aligned}
		V^{(t)}=&\left(1-4 a^2\right) \mathbb{E}\left\|z^{(t)}-v^*\right\|^2+\omega_1 E_t\\
		&\qquad\qquad+Q_{w 2} \eta C_t+Q_{w 3} \eta \widetilde{C}_t+Q_{w 4} \eta J_t,
	\end{aligned}	
\end{equation}
where we have defined $z^{(t)}=v^{(t)}+\nicefrac{(1-\tau)}{\tau}\cdot (v^{(t)}-v^{(t-1)}),$ and $a, \omega_1, Q_{w 2}, Q_{w 3}, Q_{w 4}$ are positive. Then, the convergence of CIDERS  solving PBO (\ref{eq:PBO}) is:

\begin{theorem}[Ergodic convergence of CIDERS]
	\label{thm:ciders-main}
	Under regular assumptions and moderate conditions, let the step size satisfies
	$\eta=\mathcal{O}({T}^{\nicefrac{-1}{2}})$, then	the ergodic joint gap satisfies
	where $V_0^{(0)}=(1-4 a^2) \mathbb{E}\|z^{(0)}-v^*\|^2+\omega_1 E_0$ and $C_{\eta^2}>0$ is a constant from the expansion of $D_{\text {tot }}^{\prime}$. Then the ergodic average of the joint gap satisfies
	{\small\begin{equation}
		\frac{1}{T} \sum_{t=0}^{T-1} J_t \leq \frac{C_\eta}{\delta_{\Phi}}+\frac{2}{\delta_{\Phi}} \sqrt{\frac{V_0^{(0)} C_{\eta^2}}{T}}+\frac{V_1^{(0)}}{\delta_{\Phi} T}+\frac{C_{\eta^3} V_0^{(0)}}{\delta_{\Phi} C_{\eta^2} T}+\mathcal{O}\left(T^{-3 / 2}\right), 
	\end{equation}}
	where $C_\eta, C_{\eta^2}, C_{\eta^3}$ are positive constants, $V_0^{(0)}=(1-4 a^2) \mathbb{E}\|z^{(0)}-v^*\|^2+\omega_1 E_0$ collects the intial client drift, and $V_1^{(0)}=Q_{w 2} C_0+Q_{w 3} \widetilde{C}_0+$ $Q_{w 4} J_0$ collects the initial consensus lags and the joint gap.
\end{theorem}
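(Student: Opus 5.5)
The plan is a Lyapunov-descent argument on $V^{(t)}$. The goal is a one-round contraction inequality
\[
V^{(t+1)}\le V^{(t)}-\eta\,\delta_\Phi J_t+\eta^2 C_{\eta^2}'+\eta^3 C_{\eta^3}'+\eta C_\eta',
\]
which is then telescoped over $t=0,\dots,T-1$ and tuned in $\eta$. Regular assumptions are taken to mean the following. Each $\mathcal{L}_i$ is $L$-smooth with $\rho$-Lipschitz Hessian and bounded stochastic variance $\sigma^2$. The lower-level loss $\mathcal{L}_{\mathrm{KD}}(\cdot;v)$ is $\mu$-strongly convex with jointly Lipschitz gradients, so that $w_S^*(v)$ is $L_w$-Lipschitz and $\Phi$ is smooth. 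Heterogeneity is bounded by $\zeta^2$. Finally, $\Phi$ satisfies a gradient-domination (PL-type) condition, so that $\|\nabla\Phi\|^2\gtrsim\Phi-\Phi^*$; this is what lets $J_t$, rather than merely $\|\nabla\Phi\|^2$, appear on the left. The FO/FD meta-gradient bias, of order $\eta_{in}$ and $\varepsilon_{\rm FD}$, together with the $\zeta^2$ terms surviving the $\gamma$-correction, will be absorbed into the constant $C_\eta$. This is why the bound only reaches a neighborhood.

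\textbf{Step 1 (virtual sequence).} Because the server step $v^{(t+1)}=v^{(t)}-\tau K\eta\bar\Delta_t$ is damped by $\tau$, I introduce the auxiliary sequence $z^{(t)}=v^{(t)}+\frac{1-\tau}{\tau}(v^{(t)}-v^{(t-1)})$. The intended effect is that $z^{(t+1)}-z^{(t)}$ behaves like an undamped step of size $K\eta$ along $\bar\Delta_t$. I expand $\|z^{(t+1)}-v^*\|^2$ and decompose $\bar\Delta_t$ into four pieces: the true gradient $\nabla\Phi(v^{(t)})$, the local drift $\frac1K\sum_k(\widetilde g_{i,k}(w_{i,k})-\widetilde g_{i,k}(v^{(t)}))$, the consensus mismatch $\gamma(c^{(t)}-c_i^{(t)})$, and the backbone error from using $w_S^{(t)}$ instead of $w_S^*(v^{(t)})$. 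The last piece is bounded by $L\|w_S^{(t)}-w_S^*\|^2$, and that is where $J_t$ enters. The inner product with $\nabla\Phi$ produces the $-\eta\delta_\Phi(\Phi-\Phi^*)$ descent term, and Young's inequality with parameter $a$ produces the $(1-4a^2)$ weight.

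\textbf{Step 2 (auxiliary recursions).} I bound the client drift $E_t=\frac1{NK}\sum_{i,k}\mathbb{E}\|w_{i,k}-v^{(t)}\|^2$ by unrolling \eqref{eq:local_update}. The $\epsilon$-proximal term contracts, which gives $E_t\lesssim K^2\eta^2(\|\nabla\Phi\|^2+C_t+\widetilde C_t+\sigma^2+\zeta^2)$. Using the EMA updates, I obtain $C_{t+1}\le(1-\alpha/2)C_t+\mathcal{O}(\alpha)(E_t+\|\nabla\Phi\|^2+\dots)$ for the lag $C_t$ between $c^{(t)}$ and the mean meta-gradient, and the analogous bound for $\widetilde C_t$, the lag of $c_i$. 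For the backbone, the exact argmin of \eqref{eq:server_takd_update} would make this error vanish; assuming only inexact TAKD, strong convexity gives $\|w_S^{(t+1)}-w_S^*(v^{(t+1)})\|^2\le(1-\mu\eta)\|w_S^{(t)}-w_S^*(v^{(t)})\|^2+L_w^2\|v^{(t+1)}-v^{(t)}\|^2/(\mu\eta)$. The smoothness of $\Phi$ controls $\Phi(v^{(t+1)})-\Phi(v^{(t)})$.

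\textbf{Step 3 (assembling and tuning).} I choose $\omega_1,Q_{w2},Q_{w3},Q_{w4}$ so that the positive cross terms, namely the drift and lags feeding into the $z$-expansion and $\|v^{(t+1)}-v^{(t)}\|^2$ feeding into the backbone error, are dominated by the negative contraction terms of each component. This is the main obstacle, because the coupling is circular: drift feeds the lags, the lags feed the drift, and both feed the descent. I would first try to make the weights geometric in $\alpha^{-1}$ and $\mu^{-1}$ and to require $K\eta L\le c/(1+\gamma)$ small enough that the matrix of cross-coefficients is diagonally dominant. Once the one-step inequality holds, summing it and dividing by $\eta\delta_\Phi T$ gives
\[
\frac1T\sum_t J_t\le \frac{V^{(0)}}{\eta\delta_\Phi T}+\frac{C_\eta}{\delta_\Phi}+\frac{\eta C_{\eta^2}}{\delta_\Phi}+\frac{\eta^2 C_{\eta^3}}{\delta_\Phi}.
\]
Since $V_1^{(0)}$ carries the weight $\eta$ inside $V^{(0)}$, it contributes $V_1^{(0)}/(\delta_\Phi T)$. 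Setting $\eta=\sqrt{V_0^{(0)}/(C_{\eta^2}T)}$ balances $V_0^{(0)}/(\eta T)$ against $\eta C_{\eta^2}$ and yields the term $2\sqrt{V_0^{(0)}C_{\eta^2}/T}$. The $\eta^2$ term then gives $C_{\eta^3}V_0^{(0)}/(C_{\eta^2}T)$, and the higher-order remainders give $\mathcal{O}(T^{-3/2})$.
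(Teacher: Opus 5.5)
Your architecture is the paper's (same $V^{(t)}$, same auxiliary recursions, telescoping, and $\eta=\sqrt{V_0^{(0)}/(C_{\eta^2}T)}$), and your final tuning algebra is correct. The gaps are in producing the one-step inequality with an $\eta$-independent $\delta_\Phi$.

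First, the order-$\eta$ decrease in your Step 1 comes from $-2K\eta\langle v^{(t)}-v^*,\nabla\Phi(v^{(t)})\rangle$, and PL gives no control of this inner product. You need $\langle v-v^*,\nabla\Phi(v)\rangle\ge\Phi(v)-\Phi^*$, i.e.\ convexity (star-convexity at $v^*$) of $\Phi$. This is exactly what the paper invokes at this step, and PL does not imply it. PL cannot enter through the function-value part of $V$ instead, because that part carries weight $Q_{w4}\eta$. Smoothness plus PL then yields only $-\Theta(\eta^2)(\Phi-\Phi^*)$. Meanwhile the lag recursions, with weights $Q_{w2}\eta$ and $Q_{w3}\eta$ and forcing $\mathcal{O}(\alpha)\|\nabla\Phi\|^2$, inject $+\Theta(\eta)(\Phi-\Phi^*)$. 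The paper's condition $Q_{w3}<(1-2a)K/(\beta\kappa_1^g)$ is precisely the balance of that term against the convexity-driven descent.

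Second, your lower-level recursion contracts only by $1-\mu\eta$. Since $R_t=\mathbb{E}\|w_S^{(t)}-w_S^*(v^{(t)})\|^2$ sits inside $J_t$ with weight $Q_{w4}\eta$, this gives $-Q_{w4}\mu\eta^2R_t$ rather than the required $-\eta\delta_\Phi R_t$. It also cannot absorb the $+\Theta(\eta)R_t$ fed back through $Q_{w2}\eta\,C_{t+1}$: the consensus variates are built from meta-gradients at $w_S^{(t)}$ but compared with $\nabla\Phi$ at $w_S^*(v^{(t)})$. The paper instead runs $E$ warm-started KD steps with a step size $\eta_{\mathrm{KD}}$ independent of $\eta$, so $\rho_R=2(1-\mu\eta_{\mathrm{KD}})^E<1$ is $\eta$-free. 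That is the source of the entry $Q_{w4}(1-\rho_R)-6\beta^2Q_{w2}[(1-p\alpha)(1+1/\varepsilon)+2p\alpha]$ in $\delta_\Phi$. Alternatively, use the algorithm's exact argmin, which gives $R_t=0$ for $t\ge1$.

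Two further moves in your Step 1 fail as written. The paper's proof makes the same ones, so following it does not close them.

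(i) Under CIDERS's server rule $v^{(t+1)}=v^{(t)}-\tau K\eta\bar\Delta^{(t)}$ one has $z^{(t+1)}=v^{(t)}-K\eta\bar\Delta^{(t)}$, hence $z^{(t+1)}-z^{(t)}=-K\eta\bar\Delta^{(t)}+(1-\tau)K\eta\bar\Delta^{(t-1)}$. The undamped-step identity holds only for a heavy-ball server update with momentum $1-\tau$. Here the leftover is of the same order as the step itself. Running the distance recursion directly on $v^{(t)}$ with step $\tau K\eta$ avoids this, at the cost of a factor $\tau$ in the first entry of $\delta_\Phi$.

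(ii) Young's inequality with parameter $a$ on $\langle v^{(t)}-v^*,\text{bias}\rangle$ leaves $+a\,\mathbb{E}\|v^{(t)}-v^*\|^2$ on the right. So the distance term is multiplied by a factor larger than one every round. Weighting it by $1-4a^2$ on both sides does not cancel this, because telescoping requires the right-hand coefficient of the distance to be at most the left-hand one. You need either a genuinely negative distance term (strong convexity) or a bias estimate that is not weighted by $\|v^{(t)}-v^*\|$. Under full participation the consensus part of the aggregated bias vanishes exactly, $\gamma(c^{(t)}-\bar c^{(t)})=0$, since $\bar c^{(t)}=c^{(t)}$. Only the drift and backbone parts remain to be handled.
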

The detailed proof and discussions on the convergence result is moved to Appendix.

\subsection{Geometry of the Trajectory}
\label{sec:theoretical_tradeoff}

Consider a fixed communication round $t$, the client $i$ obtains the global messenger $\bar{v}:=w_{\mathcal{P}_M}^{(t)}$ from the server and performs multiple local updates, where the trajectory is simultaneously determined by three forces: the local meta-gradient provides a personalized direction, the consensus correction  pulls the local messenger $x_k:=w_{\mathcal{P}_{M, i,k}}^{(t)}$ towards the global consensus, and the proximal constraint resists the messenger's deviation from the starting point, which constitutes a tightly coupled nonlinear dynamical process, making it extremely difficult for an analysis.  Here, we develop an approximate geometric characterization that isolates the  core factors. Specifically,  let us define $\rho:=1-\eta \varepsilon$ and $S_K:=\sum_{j=0}^{K-1} \rho^{K-1-j}$, we can unroll  \eqref{eq:local_update}
\begin{equation}\label{eq:m}
	\bar{v}-x_K=\eta \sum\nolimits_{j=0}^{K-1} \rho^{K-1-j} g_{i, j}+\eta S_K \cdot \gamma\left({c}^{(t)}-{c}_i^{(t)}\right) .
\end{equation}
Then, we  define the along-path average meta-gradient $\widebar{g}_i^{(t)}:=\nicefrac{1}{S_K} \sum_{j=0}^{K-1} \rho^{K-1-j} g_{i, j}$. Then (\ref{eq:m}) can be rewritten as
\begin{equation}\label{eq:delta}
	\Delta_{i, t}=\sigma_K\left(\bar{g}_i^{(t)}+\gamma\left({c}^{(t)}-{c}_i^{(t)}\right)\right),
\end{equation}
where $\sigma_K:=S_K / K \in(0,1]$. It can be seen the   correction shapes the direction and the  proximity  modulates the magnitude. 

\subsubsection{{Convex Combination}} Let $\widetilde{g}_i^{(t)}:=\nabla \mathcal{L}_i(\bar{v})$ denote the local meta-gradient at the broadcast point and $\widetilde{G}^{(t)}:=\sum_j p_j \widetilde{g}_j^{(t)}$ the global average gradient. Define the heterogeneity gap $\delta_i^{(t)}:=\widetilde{g}_i^{(t)}-\widetilde{G}^{(t)}$. To render the directional dynamics analytically tractable and obtain a closed-form characterization of the personalization and globalization trade-off, we introduce the idealizing approximations:

\begin{itemize}
	\item[A1.] \textit{Short local step}: The total local displacement is sufficiently small, i.e., $K \eta \ll 1 / L$, where $L$ is the Lipschitz constant of the meta-gradient, so that $\widebar{g}_i^{(t)} \approx \tilde{g}_i^{(t)}$. The path error $\phi_i^{(t)}:=$ $\widebar{g}_i^{(t)}-\tilde{g}_i^{(t)}$ is bounded by $O\left(L C_K \eta G_{\mathrm{bd}}\right)$.
	\item[A2.] \textit{Fixed heterogeneity}: The heterogeneity gap is constant across rounds, i.e., $\delta_i^{(t+1)}=\delta_i^{(t)}=\delta_i$. This holds exactly for quadratic meta-losses and serves as a first-order approximation when the messenger moves slowly.
	\item[A3.] \textit{No proximal damping}: We set $\varepsilon=0$, hence $\sigma_K=1$. The case $\sigma_K \neq 1$ is treated separately and shown to preserve span invariance while renormalizing the scalar recurrence.
	\item[A4.] \textit{Cold start}: Initializes  ${c}^{(0)}={c}_i^{(0)}=0$, thus $r_i^{(0)}=\delta_i$.
	\item[A5.] \textit{Full participation}: All clients participate in every round.
\end{itemize}

Under (A1-A5), the displacement is simplified to $\Delta_{i, t}\approx\widetilde{g}_i^{(t)}+\gamma({c}^{(t)}-{c}_i^{(t)})$, We now define the reference field, which is he effective total direction as follows
\begin{equation}
	u_i^{(t)}:=\widetilde{g}_i^{(t)}+\gamma({c}^{(t)}-{c}_i^{(t)}). 
\end{equation}
To track how this reference field relates to the global consensus direction $\widetilde{G}^{(t)}$, we introduce the residual
\begin{equation}
	r_i^{(t)}:=u_i^{(t)}-\widetilde{G}^{(t)}=\delta_i^{(t)}+\gamma({c}^{(t)}-{c}_i^{(t)}). 
\end{equation}
Under the idealizing assumptions, the consensus-gap recursion yields a closed-form recurrence for $r_i^{(t)}$: 
\begin{equation}
	r_i^{(t+1)}=(1-\alpha) r_i^{(t)}+\alpha(1-\gamma) \delta_i \,\,\text{where}\, r_i^{(0)}=\delta_i. 
\end{equation}
In fact, the residual remains strictly proportional to $\delta_i$:
\begin{equation}\label{eq:7}
	r_i^{(t)}=\psi_t \delta_i \,\text{where}\,\, \psi_t:=(1-\gamma)+\gamma(1-\alpha)^t.
\end{equation}
Substituting (\ref{eq:7}) back into the definition of $u_i^{(t)}$ yields:
\begin{prop}[Convex combination of local and global directions]\label{prop:1}
	Under approximations (A1)-(A5), the reference direction $u_i^{(t)}$ at client $i$ in round $t$ lies exactly on the line segment joining the local starting gradient $\widetilde{g}_i$ and the global average gradient $\widetilde{G}$ :
	\begin{equation}\label{equi}
		u_i^{(t)}=\psi_t \widetilde{g}_i+\left(1-\psi_t\right) \widetilde{G}, \quad \psi_t=(1-\gamma)+\gamma(1-\alpha)^t.
	\end{equation}
	Moreover, the displacement satisfies $\Delta_{i, t}\approx u_i^{(t)}$.
\end{prop}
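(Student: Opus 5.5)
The plan is to work entirely at the level of the residual $r_i^{(t)}=u_i^{(t)}-\widetilde G^{(t)}$. We first derive its closed-form recurrence, then read off the convex-combination form.

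\emph{Step 1: the displacement.} Under (A3), $\rho=1$ and $\sigma_K=1$, so \eqref{eq:delta} gives $\Delta_{i,t}=\widebar g_i^{(t)}+\gamma(c^{(t)}-c_i^{(t)})$. By (A1), $\widebar g_i^{(t)}=\widetilde g_i^{(t)}+\phi_i^{(t)}$ with $\|\phi_i^{(t)}\|=O(LC_K\eta G_{\mathrm{bd}})$. Hence $\Delta_{i,t}=u_i^{(t)}+\phi_i^{(t)}$, which is the claim $\Delta_{i,t}\approx u_i^{(t)}$. From here on we drop $\phi_i^{(t)}$ and use $\Delta_{i,t}=u_i^{(t)}$ in the consensus recursions.

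\emph{Step 2: recurrence for the consensus gap.} Set $d_i^{(t)}:=c^{(t)}-c_i^{(t)}$. By (A5) and $p_j=1/N$, the server update gives
\[
c^{(t+1)}=(1-\alpha)c^{(t)}+\alpha\bar\Delta_t,\qquad \bar\Delta_t=\sum_j p_j u_j^{(t)}=\widetilde G^{(t)}+\gamma\Bigl(c^{(t)}-\sum_j p_j c_j^{(t)}\Bigr).
\]
The client EMA gives $c_i^{(t+1)}=c_i^{(t)}+\alpha\bigl(u_i^{(t)}-\gamma d_i^{(t)}-c_i^{(t)}\bigr)$, which equals $c_i^{(t)}+\alpha\bigl(\widetilde g_i^{(t)}-c_i^{(t)}\bigr)$. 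Subtracting the two updates yields
\[
d_i^{(t+1)}=(1-\alpha)d_i^{(t)}+\alpha\bigl(\bar\Delta_t-\widetilde g_i^{(t)}\bigr).
\]

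\emph{Step 3: consensus of the average.} We must show $\sum_j p_j c_j^{(t)}=c^{(t)}$ for all $t$. This follows by induction from (A4). At $t=0$ both sides are $0$. For the inductive step, averaging the client updates gives $\sum_j p_j c_j^{(t+1)}=(1-\alpha)\sum_j p_j c_j^{(t)}+\alpha\widetilde G^{(t)}$. The server update gives $c^{(t+1)}=(1-\alpha)c^{(t)}+\alpha\widetilde G^{(t)}$, because $\bar\Delta_t=\widetilde G^{(t)}$ once the induction hypothesis holds. The two updates therefore coincide.

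Consequently $\bar\Delta_t=\widetilde G^{(t)}$, and $d_i^{(t+1)}=(1-\alpha)d_i^{(t)}-\alpha\delta_i^{(t)}$. Combining this with $r_i^{(t)}=\delta_i^{(t)}+\gamma d_i^{(t)}$ and (A2), $\delta_i^{(t)}=\delta_i$, we get
\[
r_i^{(t+1)}=\delta_i+\gamma(1-\alpha)d_i^{(t)}-\gamma\alpha\delta_i=(1-\alpha)r_i^{(t)}+\alpha(1-\gamma)\delta_i,
\]
with $r_i^{(0)}=\delta_i$ by (A4).

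\emph{Step 4: solve the recurrence and conclude.} The fixed point of the recurrence is $(1-\gamma)\delta_i$. Therefore
\[
r_i^{(t)}=(1-\gamma)\delta_i+(1-\alpha)^t\bigl(\delta_i-(1-\gamma)\delta_i\bigr)=\psi_t\delta_i,
\]
which is \eqref{eq:7}. Finally,
\[
u_i^{(t)}=\widetilde G+\psi_t(\widetilde g_i-\widetilde G)=\psi_t\widetilde g_i+(1-\psi_t)\widetilde G,
\]
which is \eqref{equi}. Since $\gamma,\alpha\in[0,1]$ we have $\psi_t\in[0,1]$, so $u_i^{(t)}$ lies on the segment joining $\widetilde g_i$ and $\widetilde G$.

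\emph{Main obstacle.} The delicate part is Step 3, the invariant $\sum_jp_jc_j^{(t)}=c^{(t)}$. It is what makes the correction terms $\gamma d_j^{(t)}$ average to zero, and it relies essentially on full participation (A5), uniform weights and the cold start (A4). A secondary subtlety is that (A2) fixes only $\delta_i$, not $\widetilde G^{(t)}$ itself. The argument never needs $\widetilde G^{(t)}$ to be constant, because everything is expressed relative to it, so \eqref{equi} holds with the current-round $\widetilde g_i^{(t)}$ and $\widetilde G^{(t)}$.
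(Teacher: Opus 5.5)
Your proposal is correct and follows essentially the same route as the paper's appendix derivation: the consensus-gap recursion, the invariant $\sum_j p_j c_j^{(t)}=c^{(t)}$ (which yields $\Delta_{\mathrm{avg}}^{(t)}=\widetilde{G}^{(t)}$), the residual recurrence $r_i^{(t+1)}=(1-\alpha)r_i^{(t)}+\alpha(1-\gamma)\delta_i$ with $r_i^{(0)}=\delta_i$, and its closed-form solution substituted back into $u_i^{(t)}=\widetilde{G}^{(t)}+r_i^{(t)}$. The differences are refinements, not a different argument: you simplify the client EMA to $c_i^{(t+1)}=(1-\alpha)c_i^{(t)}+\alpha\widetilde{g}_i^{(t)}$, make the induction for the invariant explicit, solve the vector recurrence directly (avoiding the paper's division by $\delta_i\neq 0$), and check that $\psi_t\in[0,1]$.
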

Proposition \ref{prop:1} provides a  geometric characterization of the local messenger trajectory. The  global consensus weight $\left(1-\psi_t\right)=\gamma\left(1-(1-\alpha)^t\right)$ increases monotonically from 0 at $t=0$ to $\gamma$ as $t \rightarrow \infty$, with a half-life of approximately $\Theta(\log 2 /|\log (1-\alpha)|)$ rounds. The scalar $\psi_t$ thus interpolates between two regimes: at cold start $\left(\psi_0=1\right)$, the direction is purely local; at lock-in $\psi_{\infty}=1-\gamma$, the direction stabilizes at the convex combination specified by $\gamma$. The parameter $\alpha$ controls the speed of transition, while the pair $\left({c}, {c}_i\right)$ implements this transition through an exponential moving average that smoothly drives the translation from 0 to $-\gamma \delta_i$.

\subsubsection{Quantifying Deviations.} 
We quantify the direction deviations from the idea segment under the idealizing assumptions.  Our first key insight is relaxing the short-inner-loop assumption (A1) introduces a path-averaging error. Note the alongpath  error $\phi_i^{(t)}:=\bar{g}_i^{(t)}-\tilde{g}_i^{(t)}$  is bounded by $\|\phi_i^{(t)}\| \leq L C_K \eta G_{\mathrm{bd}}$. Applying a variation of constants argument to the residual recurrence gives the uniform bound
\begin{equation}
	\|u_i^{(t)}-u_i^{(t)}|_{(\mathrm{E} 0)}\| \leq 2 \gamma R\left(1-(1-\alpha)^t\right) \leq 2 \gamma R, 
\end{equation}
where $R:=L C_K \eta G_{\mathrm{bd}}$ and  $u_i^{(t)}|_{(\mathrm{E} 0)}$ denotes the ideal direction from Proposition \ref{prop:1}. Thus, the true direction lies within an $O\left(\gamma L K \eta G_{\mathrm{bd}}\right)$ neighborhood of the ideal segment. This deviation vanishes as $K \eta \rightarrow 0$ and scales linearly with $\gamma$, reflecting the fact that stronger consensus alignment amplifies the sensitivity of the consensus variates to gradient estimation error. Second, even when the consensus are ideally locked, i.e., ${c}\approx\widetilde{G}$ and ${c}_i\approx\widetilde{g}_i$, the executed field at an arbitrary point $x$ along the inner path contains a Hessian remainder. Specifically, the Taylor expansion around $\bar{v}$ gives
\begin{equation}
	\begin{aligned}
		\nabla \mathcal{L}_i(x)+\gamma&\left({c}-{c}_i\right)\approx(1-\gamma) \nabla \mathcal{L}_i(x)\\
		&+\gamma \nabla \overline{\mathcal{L}}(x)+\gamma\left(H_i-\widebar{H}\right)(x-\bar{v}), 
	\end{aligned}
\end{equation}
where $H_i:=\nabla^2 \mathcal{L}_i(\bar{v})$ and $\bar{H}:=\sum_j p_j H_j$. The remainder is bounded by $\gamma \chi\|x-\bar{v}\|$, with $\chi:=$ $\max _i\left\|H_i-\bar{H}\right\|$ quantifying the heterogeneity of meta-Hessians across clients. This implies that when $\gamma$ is large to extent, long local step may accumulate Hessian mismatch rather than improve consensus.

\subsubsection{The Case $\sigma_K \neq 1$}\label{sigma}
When $\varepsilon>0$, it satisfies  $\sigma_K=S_K / K<1$. Under (A1), (A2), and (A4), the recurrence generalizes to
\begin{equation}\label{eq:r_i}
	r_i^{(t+1)}=\left[1-\alpha+\alpha \gamma\left(1-\sigma_K\right)\right] r_i^{(t)}+\alpha(1-\gamma) \delta_i.
\end{equation}
It can be seen the span-invariance property $r_i^{(t)} \in \operatorname{span}\left\{\delta_i\right\}$ is preserved, i.e., $r_i^{(t)}=\psi_t \delta_i$,  \textcolor{black}{ but $\psi_t$ is modified} as, 
\begin{equation}
	\psi_{t+1}=\left[1-\alpha+\alpha \gamma\left(1-\sigma_K\right)\right] \psi_t+\alpha(1-\gamma),  \psi_0=1, 
\end{equation}
whose steady-state solution can be derived as follows
\begin{equation}
	\psi_{\infty}=\frac{1-\gamma}{1-\gamma\left(1-\sigma_K\right)}.
\end{equation}
Note $\sigma_K<1$, then $\psi_{\infty}>1-\gamma$.  Since $u_i^{(\infty)}=\widetilde{G}+\psi_{\infty} \delta_i=\psi_{\infty} \widetilde{g}_i+(1-\psi_{\infty}) \widetilde{G}$, a larger $\psi_{\infty}$ means the reference field is closer to $\widetilde{g}_i$ than the undamped counterpart  \(1-\gamma\) in \eqref{equi}.
The mechanism can be therefore concluded that the compression of the displacement \(\sigma_K<1\) attenuates \(\Delta_{i,t}\), so the consensus underestimate the gap \(\delta_i\) and  \(\gamma({c}-{c}_i)\) is weaker than in the undamped case, resulting in the proximal damping driving the EMA to a more personalized field. 
\subsubsection{Quantitive Metric Design.} According to above, we define two complementary  metrics that measure where a given trajectory lies on the segment between. Let  $\widetilde{g}_{i,\perp} = \widetilde{g}_i - \text{Proj}_{\widetilde{G}}(\widetilde{g}_i)$ be the component of $\widetilde{g}_i^{(t)}$ orthogonal to the global consensus direction. This axis captures updates that have zero directional derivative on the global average objective while contributing to client $i$ 's personalization. We define the drift reduction (DR) as $\mathrm{DR}_i^{(t)}:=1-\nicefrac{\|r_i^{(t)}\|}{\|\delta_i^{(t)}\|}$, which measures the fraction of the initial heterogeneity gap $\delta_i^{(t)}$
that has been cancelled toward the consensus direction.  When $r_i=0$, we have $\mathrm{DR}=1$, indicating full drift elimination; when $r_i=\delta_i$, we have $\mathrm{DR}=0$, indicating no drift reduction. Similarly, we define the personalization preservation (PPR) as $\mathrm{PPR}_i^{(t)}:=\nicefrac{\langle u_i^{(t)}, \tilde{g}_{i, \perp}^{(t)}\rangle}{\|\tilde{g}_{i, \perp}^{(t)}\|^2}$, which measures the projection of the reference field onto the purely local axis. When $u_i$ is fully aligned with $\tilde{g}_i$, we have $\operatorname{PPR}=1$; when $u_i$ is fully aligned with $\widetilde{G}$, we have $\operatorname{PPR}=0$. Substituting $r_i^{(t)}=\psi_t \delta_i$ and $u_i^{(t)}=\psi_t \widetilde{g}_i+(1-\psi_t) \widetilde{G}$
into the definitions of DR and PPR yields $\mathrm{DR}_i^{(t)}=1-\psi_t$ and $\mathrm{PPR}_i^{(t)}=\psi_t$.

\section{Experiments}
\label{sec:experiments}

\subsection{Experimental Setup and Protocols}
\label{subsec:experimental_setup}

We evaluate CIDERS through three protocols. Protocol A measures
downstream performance and output alignment. Protocol B examines
matched-domain performance of personalized messengers, and Protocol C
tracks PBO consensus-state dynamics under non-IID data.
Table~\ref{tab:protocol_dataset_audit} summarizes the datasets,
client-training-example totals, partitions, and evaluation units.
We also run a separate compression sweep over messenger depths and
learnable backbone drop ratios to quantify the quality and storage trade-off.
Unless otherwise specified, CIDERS denotes CIDERS-FO, the first-order
implementation; CIDERS-FD denotes the finite-difference variant.

\begin{table*}[ht]
	\fontsize{9}{10}\selectfont
	\renewcommand{\arraystretch}{0.95}
	\setlength{\tabcolsep}{6.5pt}
	
	\begin{center}
		\begin{tabular}{l|l|r|c|l|l|l}
			\toprule
			\textbf{Protocol} &
			\textbf{Dataset} &
			\makecell[r]{\textbf{Samples}} &
			\textbf{Clients} &
			\textbf{Partition} &
			\textbf{Evaluation basis} &
			\textbf{Measurements} \\
			\midrule
			 & GSM8K & 7,458 &  & 
			& \makecell[l]{GSM8K: 1,319 questions}
			& Exact-match accuracy \\
			A & Dolly-15K & 14,980 & 5 & Identical independence
			& \makecell[l]{PPL: 6,698 target tokens}
			& PPL and ROUGE-L \\
			 & Rosetta & 7,953 & & 
			& \makecell[l]{HumanEval-X: 164/language}
			& pass@1 \\
			\midrule
			B & Five-domain & 5,000 & 5
			& \makecell[l]{One domain/client}
			& \makecell[l]{1,500 test examples; 300/domain}
			& Response-NLL matrix \\
			\midrule
			C & MetaMath & 45,000 & 8 & MetaMath LDA
			& \makecell[l]{16,912 client-round observations}
			& \makecell[l]{Loss; consensus statistics} \\
			\bottomrule
		\end{tabular}
	\end{center}
	\caption{\small \textbf{Datasets and evaluation scope for Protocols A--C.}
		Training samples are totals across all clients and include only samples
		assigned to client training; public-distillation and held-out samples
		are excluded. Evaluation units are reported in their native form:
		benchmark questions for GSM8K, target tokens for Dolly PPL, examples
		for Protocol B, and client-round observations for Protocol C.}
	\label{tab:protocol_dataset_audit}
\end{table*}

\para{Model composition.}
The main configurations in Protocols A and B use a six-layer messenger. 
For Qwen2.5-3B, the teacher-side segment uses 30 of the 36 backbone
layers, while the student learnable backbone uses 24. For Qwen2-1.5B, the
corresponding counts are 22 teacher-side layers and 18 learnable backbone
layers. Clients train only the messenger LoRA parameters, with rank 4,
LoRA scaling $\alpha_{\mathrm{LoRA}}=16$, and dropout 0.05. Protocol C
additionally evaluates two- and four-layer messengers. A six-layer
messenger uses 84 trainable LoRA tensors across attention and MLP
projections. Server-side TAKD updates the learnable backbone LoRA
separately, and AdapEmu combines the personalized messenger with the
compressed learnable backbone at deployment. Each consensus state has the
same dimensionality as the corresponding messenger LoRA state.
Table~\ref{tab:model_payload} lists the reference layer counts and
parameter sizes.

\begin{table}[ht]
	\fontsize{9}{10}\selectfont
	\renewcommand{\arraystretch}{1}
	\setlength{\tabcolsep}{10.5pt}
	
	\begin{center}
		\begin{tabular}{l|r|r}
			\toprule
			\textbf{Backbone} & \textbf{Qwen2.5-3B} & \textbf{Qwen2-1.5B} \\
			\midrule
			\makecell[l]{Full/Emu./Msg. layers} & 36/24/6 & 28/18/6 \\
			messenger LoRA params & 1.247M & 0.989M \\
			Student LoRA params & 4.989M & 2.968M \\
			Consensus state (MiB) & 4.76 & 3.77 \\
			\bottomrule
		\end{tabular}
	\end{center}
	\caption{\small Model composition and FP32 LoRA states for the six-layer messenger configurations. Clients train messenger LoRA, and server-side TAKD trains learnable backbone LoRA.}
	\label{tab:model_payload}
\end{table}

\para{Data and evaluation.}
Protocol A uses five IID clients for GSM8K, Dolly, and Rosetta.
The loader uses train/validation/test proportions
$[0.998,0.001,0.001]$, respectively. The effective training counts in
Table~\ref{tab:protocol_dataset_audit} exclude the loader's held-out
portions. Task evaluation instead uses GSM8K benchmark questions, a
Dolly prompt--response corpus for corpus-level PPL and ROUGE-L, and
HumanEval-X Python, Java, and Go problems. Rosetta supplies
code-training data, whereas HumanEval-X supplies the code-evaluation
benchmark. For Dolly, 6,698 is the target-token denominator for PPL. Protocol B assigns medical, finance, customer support, code, and
general-domain data to five separate clients. Each domain contains
1,000 private training examples, 200 public examples used for
distillation, and 300 evaluation examples. The five-domain evaluation
uses Qwen2-1.5B with an 18-layer learnable backbone, the drop ratio
0.2, and a six-layer messenger. CIDERS-FO, CIDERS-FD, FedOT, and
FedBiOT use the same client and evaluation split. Each
source-domain messenger is evaluated on the same 300 examples in every
target domain, yielding a $5\times5$ response-NLL matrix over 1,500 test
examples rather than 7,500 distinct examples.Response negative log-likelihood (NLL) is used to evaluate the quality of each source-domain messenger on a target domain. Lower values indicate better response modeling. The client and evaluation splits are shared across methods, whereas the
server-side distillation sources differ. FedBiOT logs use a
separate 4,500-record alignment artifact with a 3,600/450/450
training/validation/test split, while the corresponding FedOT alignment
source was not consistently recorded. This mismatch limits direct
component-level attribution.   Protocol C uses a 50,000-example MetaMath source with logged split
proportions $[0.9,0.05,0.05]$: 45,000 training examples and 2,500
examples in each held-out split. Training data are partitioned across
eight clients by the MetaMath LDA splitter at Dirichlet parameters
$\alpha=0.1$ and $\alpha=0.4$. In the plots, these settings are
labeled $h=1$ and $h=4$, respectively. Protocol C is a mechanism study;
its reported outcomes are training loss and consensus-state statistics.
The temporal and PCA analyses use 14 complete FO/FD training trajectories
covering two messenger depths and two learnable backbone drop ratios. Across eight
clients and rounds
0--150, these trajectories provide 16,912 client-round observations.
The coefficient-response analysis adds one further configuration,
giving 15 configurations in total.

\para{Training.}
Protocol A compares CIDERS with FedBiOT~\cite{wu2024fedbiot},
FedOT~\cite{xiao2023offsitetuning}, and LocalOT. Federated methods run
for 150 rounds, with $K=10$ local optimizer updates per participating
client in each round. LocalOT runs for 150 outer iterations without
cross-client communication or aggregation. Protocol B uses 150 rounds
with $K=20$ local updates per round and batch size 2. CIDERS client updates use AdamW with base learning rate $10^{-4}$,
weight decay 0.01, global-norm clipping at 1.0, and a warmup-cosine
schedule. The lookahead step size $\eta_{\mathrm{in}}$ follows the same
learning-rate schedule. Gradient accumulation specifies how many microbatches
contribute to one optimizer update.  The server aggregation step is $\tau=0.02$, the client-consensus update
coefficient is $\alpha_c=0.002$, and the local anchoring weight is
$\epsilon_{\mathrm{prox}}=0.01$. The coefficient $\alpha_c$ is distinct
from the Dirichlet parameter $\alpha$ used in Protocol C.
In Protocols A and B, the consensus-correction coefficient is fixed at
$\gamma_t=0.01$ for Qwen2.5-3B and 0.05 for Qwen2-1.5B; Protocol C
uses the schedule described in Section~\ref{subsec:auxiliary_convergence}.
Server-side TAKD begins with task-supervised bootstrap alignment. In
subsequent rounds, it performs $T_L=5$ warm-started representation and
output-alignment steps, with $\lambda_{\mathrm{out}}=0.5$ in
Eq.~\eqref{eq:emulator_obj}.

\subsection{Task Performance and Output Alignment}
\label{subsec:main_performance}
\label{subsec:output_kl}

Protocol A evaluates the complete AdapEmu deployment path.
Table~\ref{tab:main_results} compares CIDERS with FedBiOT, FedOT, and
LocalOT on Qwen2-1.5B and Qwen2.5-3B. These are system-level
comparisons under the recorded configurations.

\begin{table*}[t]
	\centering		
	\small
	\renewcommand{\arraystretch}{1.02}
	\setlength{\tabcolsep}{2.5pt}
	\begin{tabular*}{\textwidth}{@{\extracolsep{\fill}}lcccccc@{}}
		\toprule
		& \textbf{GSM8K} & \multicolumn{2}{c}{\textbf{Dolly}} & \multicolumn{3}{c}{\textbf{HumanEval-X pass@1}} \\
		\cmidrule(lr){2-2}\cmidrule(lr){3-4}\cmidrule(l){5-7}
		\textbf{Method} & \textbf{Acc. $\uparrow$} & \textbf{PPL $\downarrow$} & \textbf{R-L $\uparrow$} & \textbf{Py $\uparrow$} & \textbf{Java $\uparrow$} & \textbf{Go $\uparrow$} \\
		\midrule
		\rowcolor{panelshade}\multicolumn{7}{@{}l@{}}{\textbf{Qwen2-1.5B / AdapEmu}\hspace{1em}\footnotesize 18-layer learnable backbone + messenger} \\
		\textbf{CIDERS} & \hibar{7.88}{2.05}{7.88}{\textbf{7.88\%}} & \lobar{9.53}{9.53}{11.52}{\textbf{9.53}} & \hibar{0.2176}{0.1612}{0.2176}{\textbf{0.2176}} & \hibar{7.9}{1.2}{7.9}{\textbf{7.9\%}} & \hibar{3.0}{1.8}{3.7}{\underline{3.0\%}} & \hibar{4.9}{1.2}{4.9}{\textbf{4.9\%}} \\
		\addlinespace[1.5pt]
		FedBiOT & \hibar{2.58}{2.05}{7.88}{\underline{2.58\%}} & \lobar{10.52}{9.53}{11.52}{\underline{10.52}} & \hibar{0.1812}{0.1612}{0.2176}{\underline{0.1812}} & \hibar{3.0}{1.2}{7.9}{\underline{3.0\%}} & \hibar{3.7}{1.8}{3.7}{\textbf{3.7\%}} & \hibar{1.8}{1.2}{4.9}{\underline{1.8\%}} \\
		FedOT & \hibar{2.05}{2.05}{7.88}{2.05\%} & \lobar{11.52}{9.53}{11.52}{11.52} & \hibar{0.1639}{0.1612}{0.2176}{0.1639} & \hibar{1.2}{1.2}{7.9}{1.2\%} & \hibar{1.8}{1.8}{3.7}{1.8\%} & \hibar{1.2}{1.2}{4.9}{1.2\%} \\
		LocalOT & \hibar{2.20}{2.05}{7.88}{2.20\%} & \lobar{11.52}{9.53}{11.52}{11.52} & \hibar{0.1612}{0.1612}{0.2176}{0.1612} & \hibar{1.2}{1.2}{7.9}{1.2\%} & \hibar{2.4}{1.8}{3.7}{2.4\%} & \hibar{1.2}{1.2}{4.9}{1.2\%} \\
		\addlinespace[3pt]
		\rowcolor{panelshade}\multicolumn{7}{@{}l@{}}{\textbf{Qwen2.5-3B / AdapEmu}\hspace{1em}\footnotesize 24-layer learnable backbone + messenger} \\
		\textbf{CIDERS} & \hibar{16.38}{4.40}{16.38}{\textbf{16.38\%}} & \lobar{8.60}{8.60}{10.50}{\textbf{8.60}} & \hibar{0.2384}{0.2151}{0.2384}{\textbf{0.2384}} & \hibar{9.1}{6.7}{9.1}{\textbf{9.1\%}} & \hibar{11.0}{3.0}{11.0}{\textbf{11.0\%}} & \hibar{6.7}{1.2}{6.7}{\textbf{6.7\%}} \\
		\addlinespace[1.5pt]
		FedBiOT & \hibar{5.23}{4.40}{16.38}{\underline{5.23\%}} & \lobar{9.42}{8.60}{10.50}{\underline{9.42}} & \hibar{0.2307}{0.2151}{0.2384}{\underline{0.2307}} & \hibar{6.7}{6.7}{9.1}{\underline{6.7\%}} & \hibar{6.7}{3.0}{11.0}{\underline{6.7\%}} & \hibar{3.0}{1.2}{6.7}{\underline{3.0\%}} \\
		FedOT & \hibar{4.70}{4.40}{16.38}{4.70\%} & \lobar{10.50}{8.60}{10.50}{10.50} & \hibar{0.2234}{0.2151}{0.2384}{0.2234} & \hibar{6.7}{6.7}{9.1}{\underline{6.7\%}} & \hibar{4.3}{3.0}{11.0}{4.3\%} & \hibar{1.2}{1.2}{6.7}{1.2\%} \\
		LocalOT & \hibar{4.40}{4.40}{16.38}{4.40\%} & \lobar{10.50}{8.60}{10.50}{10.50} & \hibar{0.2151}{0.2151}{0.2384}{0.2151} & \hibar{6.7}{6.7}{9.1}{\underline{6.7\%}} & \hibar{3.0}{3.0}{11.0}{3.0\%} & \hibar{1.2}{1.2}{6.7}{1.2\%} \\
		\bottomrule
	\end{tabular*}
	\caption{\small Protocol A downstream performance of compressed AdapEmu. Code models are trained on Rosetta and evaluated using HumanEval-X pass@1. Bars are normalized within each backbone and metric, with PPL reversed. Numbers give the reported scores. Bold and underlined values mark the best and second-best scores at the displayed precision, including ties.}
	\label{tab:main_results}
\end{table*}

CIDERS ranks first on five of six Qwen2-1.5B metrics and all six
Qwen2.5-3B metrics. Its GSM8K margins over FedBiOT are 5.30 and
11.15 percentage points, respectively. On Qwen2.5-3B, Dolly PPL
decreases from 9.42 to 8.60, an approximately 8.7\% reduction, and
Python, Java, and Go pass@1 improve by 2.4--4.3 percentage points.
On Qwen2-1.5B, Python and Go improve, whereas Java is 0.7 percentage
points lower. The largest percentage-point gains are on GSM8K, whereas
code-generation gains vary by language. CIDERS therefore shows its
clearest advantage on numerical reasoning; the code results are
language-dependent.  The paired Protocol A results also evaluate AdapFu, the full-backbone
deployment setting. For CIDERS-FO, reconnecting the full backbone raises GSM8K
accuracy from 16.38\% to 47.46\% on Qwen2.5-3B and from 7.88\% to
29.80\% on Qwen2-1.5B. HumanEval-X macro pass@1, averaged across
Python, Java, and Go, increases from 8.94\% to 39.02\% and from
5.28\% to 33.54\%, respectively. These results show a substantial
remaining capability gap between compressed and full-backbone
deployment, even when CIDERS improves on the compressed baselines.
AdapEmu remains the primary setting for method comparison.    The recurring Output-KL term encourages the Student Emulator to match the
frozen teacher's predictive distribution during server-side TAKD.
Figure~\ref{fig:output_kl} compares CIDERS-FO configurations with and
without this term; Table~\ref{tab:fd_output_kl} gives the CIDERS-FD
comparison. The baseline markers show the corresponding Protocol A
results for context.

\begin{figure}[t]
	\centering
	\includegraphics[width=\columnwidth]
	{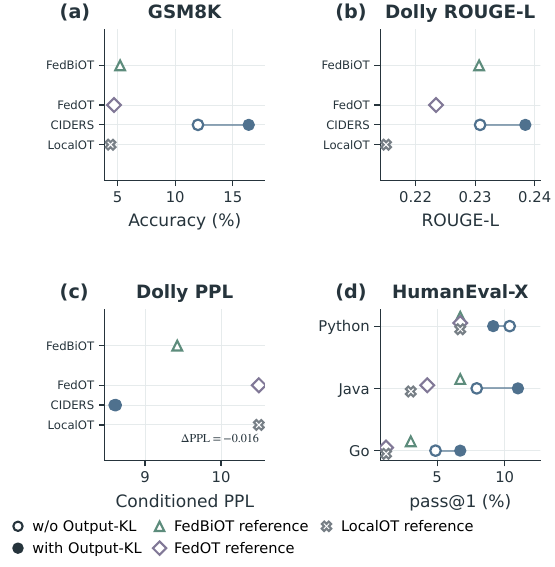}
	\Description{Four panels compare CIDERS-FO configurations with and
		without recurring Output-KL on GSM8K, Dolly, and HumanEval-X. Additional
		method markers show FedBiOT, FedOT, and LocalOT AdapEmu results for
		reference.}
	\caption{\small\textbf{Output-KL comparison for CIDERS-FO on Qwen2.5-3B.}
		Segments connect configurations without and with recurring Output-KL.
		Additional method markers show Protocol A baseline performance.}
	\label{fig:output_kl}
\end{figure}

\begin{table}[ht]
	\fontsize{9}{10}\selectfont
	\renewcommand{\arraystretch}{1}
	\setlength{\tabcolsep}{8pt}
	
	\begin{center}
		\begin{tabular}{l|r|r|r}
			\toprule
			\textbf{Metric} &
			\textbf{Without KL} &
			\textbf{With KL} &
			\textbf{Change} \\
			\midrule
			GSM8K accuracy & 11.98  & 16.38  & $+4.40$   \\
			Dolly PPL      & 8.616  & 8.602  & $-0.014$  \\
			Dolly ROUGE-L  & 0.2271 & 0.2339 & $+0.0068$ \\
			Python pass@1  & 10.37  & 9.15   & $-1.22$   \\
			Java pass@1    & 7.93   & 11.59  & $+3.66$   \\
			Go pass@1      & 4.27   & 6.71   & $+2.44$   \\
			\bottomrule
		\end{tabular}
	\end{center}
	\caption{\small Output-KL comparison for CIDERS-FD on Qwen2.5-3B.
		Accuracy and pass@1 are percentages, with changes reported in percentage points.
		PPL and ROUGE-L changes use their original scales.
		All changes are computed from the displayed values.}
	\label{tab:fd_output_kl}
\end{table}

Fig. \ref{fig:output_kl} and Table \ref{tab:fd_output_kl} show the FO comparison, the configuration with Output-KL has higher
GSM8K accuracy by 4.40 percentage points, Java and Go pass@1 by 3.05
and 1.83 percentage points, and ROUGE-L by 0.0076. Dolly PPL changes
by only 0.016, while Python pass@1 decreases by 1.22 percentage
points. FD shows the same task-selective directions: GSM8K, Java,
Go, and ROUGE-L improve, Python declines, and PPL changes little.
Thus, adding Output-KL has task-selective effects rather than a uniform
capability gain. The opposite changes
across code languages make per-language reporting more informative than
a single code average. A cautious interpretation is that Output-KL may
favor some tasks over others rather than improve all tasks uniformly.
This pattern is consistent with Output-KL acting selectively across
tasks, but the present results do not identify the mechanism behind the
language-specific changes.

\subsection{Finite Difference Diagnostics}
\label{subsec:fd_diagnostics}

CIDERS-FD adds a curvature correction estimated by a finite-difference
Hessian--vector product (HVP). The correction scale
$\eta_{\mathrm{HVP}}$ is set by \texttt{fd\_eta} and need not equal
the lookahead step size $\eta_{\mathrm{in}}$. Protocol A uses a
finite-difference perturbation $\varepsilon_{\mathrm{FD}}=10^{-4}$ and
$\eta_{\mathrm{HVP}}=0.001$;
the five-domain Protocol B FD evaluation uses
$\eta_{\mathrm{HVP}}=0.0005$. We measure the relative correction magnitude as
\begin{equation*}
	r_{\mathrm{FD}}
	=\frac{\|\eta_{\mathrm{HVP}}\widehat{Hv}\|_2}
	{\|v\|_2+10^{-12}},
\end{equation*}
where $v$ is the first-order direction used in the optimizer step and
$\widehat{Hv}$ is its finite-difference HVP estimate. Both are
restricted to trainable LoRA parameters.
An optimizer-step observation is classified as small for
$r_{\mathrm{FD}}\leq0.1$, moderate for
$0.1<r_{\mathrm{FD}}\leq0.2$, and large for $r_{\mathrm{FD}}>0.2$.
These thresholds describe relative correction magnitude, not numerical
instability. Fig.~\ref{fig:solver_sensitivity} compares
$\eta_{\mathrm{HVP}}=0.005$ and $0.0005$ in a separate 50-round
diagnostic. Each setting contains 4,996 recorded optimizer-step
observations out of 5,000 nominal observations. The available records
do not identify why four nominal observations are missing.  At $\eta_{\mathrm{HVP}}=0.005$, large and moderate corrections account
for 68.9\% (3,441/4,996) and 28.0\% (1,401/4,996) of observations.
At $0.0005$, these proportions fall to 0.26\% (13/4,996) and 1.06\%
(53/4,996). Moderate or large events occur in 50/50 rounds at the
larger scale and 26/50 rounds at the smaller scale. The smaller scale
therefore makes such events rare at the observation level, although
they still occur in roughly half the rounds. Because
$\eta_{\mathrm{HVP}}$ appears directly in $r_{\mathrm{FD}}$, this
comparison reflects both the chosen correction scale and the recorded
training trajectories; it should not be read as a standalone curvature
estimate.

\begin{figure}[t]
	\centering
	\includegraphics[width=\columnwidth]
	{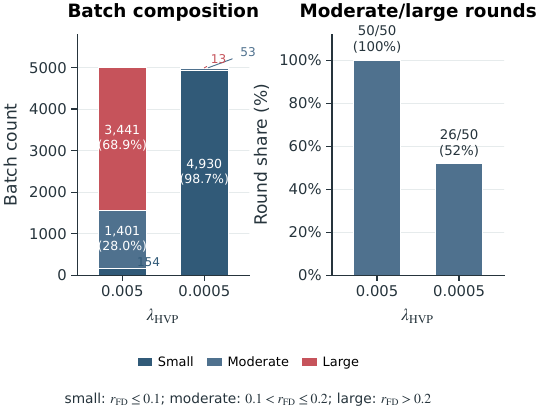}
	\Description{Two panels compare finite-difference correction profiles at
		$\eta_{\mathrm{HVP}}=0.005$ and $0.0005$. The left panel shows small,
		moderate, and large counts among 4,996 optimizer-step observations. The
		right panel shows the fraction of 50 rounds containing at least one
		moderate or large correction event.}
	\caption{\small\textbf{Finite-difference correction profiles under two
			$\eta_{\mathrm{HVP}}$ settings.} Each setting contains 4,996 recorded
		optimizer-step observations across 50 rounds. The left panel shows the
		observation composition by correction magnitude; the right panel shows
		the fraction of rounds containing at least one moderate or large
		correction event.}
	\label{fig:solver_sensitivity}
\end{figure}

Separately, the final Protocol A comparison finds identical reported
FO and FD values in 9 of 12 AdapEmu cells at the displayed precision.
Among the accuracy and pass@1 metrics, the largest difference is
0.6 percentage points on Qwen2.5-3B HumanEval-X Java. The Dolly
ROUGE-L difference is 0.0045 on its original scale. Neither solver is
uniformly better. We therefore use FO as the default because it avoids
finite-difference HVP evaluations while showing no consistent
disadvantage in the reported final metrics. Event frequency changes
sharply across the two diagnostic scales, whereas the
final task metrics show no consistent FO--FD advantage. Within the
evaluated settings, event frequency is therefore not predictive of final
quality. This supports a simpler solver choice,
but it does not establish an end-to-end speedup or explain the
150-round task outcomes from the separate 50-round diagnostic.

\subsection{PBO Dynamics under Data Heterogeneity}
\label{subsec:auxiliary_convergence}

Protocol C tracks training loss and consensus statistics over time, across
scheduled coefficients, and in a shared PCA projection. For client $i$ at round $t$, let $\mathcal{S}_{i,t}$ contain the
indices of its recorded local steps. At step $s=(i,t,k)$, let $g_s$
be the uncorrected local gradient and let $c_{\mathrm{global},s}$
and $c_{\mathrm{local},s}$ be the consensus vectors used in that step.
They correspond to the global consensus $c^t$ and client consensus $c_i^t$
in the update rule.
The corrected gradient is
\begin{equation*}
	g_{\mathrm{eff},s}
	=g_s+\gamma_{t(s)}
	\bigl(c_{\mathrm{global},s}-c_{\mathrm{local},s}\bigr).
\end{equation*}
All steps in a round use the coefficient broadcast at its start.
The normalized squared-residual statistic is
\begin{equation*}
	\mathrm{E\mbox{-}RAE}_{i,t}
	=
	\frac{\sum_{k\in\mathcal{S}_{i,t}}
		\|g_{\mathrm{eff},s}-c_{\mathrm{global},s}\|_2^2}
	{\sum_{k\in\mathcal{S}_{i,t}}
		\bigl(\|g_s\|_2^2+\|c_{\mathrm{global},s}\|_2^2+\epsilon\bigr)},
\end{equation*}
with $\epsilon=10^{-12}$. Each client-round value is a ratio of
sums. Zero indicates an exact match to the global consensus at every
recorded step. Because this is a normalized ratio, a value near one
should not be interpreted as a vanishing residual. We also track S-DR
and S-PPR, the distance-reduction and
perpendicular-preservation diagnostics. Positive S-DR denotes reduced
distance to the global reference, whereas higher S-PPR denotes
greater preservation of the local orthogonal component.

Fig.~\ref{fig:bgp_temporal_dynamics} summarizes loss and E-RAE
over rounds 0--150. Client-round values are first aggregated within
each configuration at each round; curves and bands then show the
cross-configuration mean and standard deviation.  Most of the loss reduction occurs early: by round 15, the curves have
achieved 74.0--76.0\% of their total round-0-to-150 decrease; by round
30, this fraction reaches 86.8--88.5\%. E-RAE approaches
its late-stage range of 0.995--0.996 while loss continues to decrease.
Thus, consensus alignment can stabilize while task loss continues to
improve. E-RAE summarizes normalized alignment between update and
consensus vectors, whereas training loss tracks progress on the task
objective; the two statistics provide complementary views of adaptation.  At round 150, loss at $\alpha=0.4$ is 19.3\% higher than at
$\alpha=0.1$ for CIDERS-FO and 26.9\% higher for CIDERS-FD.
The loss separation between the recorded LDA settings is larger than the
FO--FD separation, despite similar late-stage E-RAE. This difference
reflects the two realized data partitions, so it should not be
interpreted as a monotonic effect of the Dirichlet parameter on training
difficulty. 

\begin{figure*}[t]
	\centering
	\includegraphics[width=\textwidth]
	{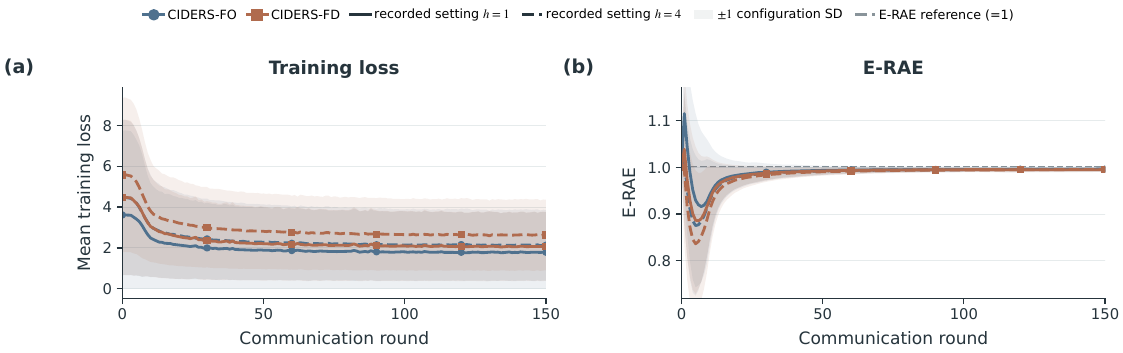}
	\Description{Two panels show CIDERS-FO and CIDERS-FD training
		loss and E-RAE over rounds 0--150. Colors and markers distinguish the
		solvers, line styles distinguish the recorded settings $h=1$
		($\alpha=0.1$) and $h=4$ ($\alpha=0.4$), and translucent
		bands show one configuration standard deviation.}
	\caption{\small\textbf{Temporal PBO dynamics under two non-IID
			settings (Protocol C).} Training loss and E-RAE, the normalized
		squared-residual statistic, are shown over rounds 0--150 for CIDERS-FO
		and CIDERS-FD at $h=1$ ($\alpha=0.1$) and
		$h=4$ ($\alpha=0.4$). Bands denote $\pm1$ configuration standard
		deviation.
		The horizontal line marks E-RAE $=1$ as a diagnostic reference.}
	\label{fig:bgp_temporal_dynamics}
\end{figure*}

In Fig. ~\ref{fig:control_interpolation}, the coefficient sequence reconstructed from the recorded telemetry is
$\gamma_0=\gamma_1=0.9$ and
$\gamma_t=\max(0.01,0.9\cdot0.86^{t-1})$ for $2\leq t\leq150$.
It reaches the floor at round 31, giving 31 distinct values. This
describes the recorded trajectory; it is not an independent check of the coefficient schedule used in training.
Fig. ~\ref{fig:control_interpolation} groups consensus statistics by
the scheduled coefficient within each configuration before summarizing
the distribution across configurations. Because $\gamma_t$ decreases
with round and its floor is shared by multiple late rounds, these curves
describe the training trajectory rather than an independent coefficient
sweep. At small $\gamma_t$, S-DR is close to zero; it becomes more negative
at middle and high coefficient values and partially rebounds at
the largest recorded coefficient. Negative S-DR means that the
corrected direction is farther from the global reference than the
uncorrected local direction under this proxy. S-PPR remains high
with a shallow U-shaped profile rather than following the
$1-\gamma_t$ reference. Thus, preserving the local orthogonal
component can coincide with increased distance to the global reference.
E-RAE also has a U-shaped profile, with smaller normalized residuals in
the middle coefficient range. These patterns do not identify an optimal
correction coefficient.   Panel (c) traces the configuration-mean S-DR/S-PPR path, with endpoints
aggregated from the actual round-0 and round-150 observations.
FO and FD follow nearby but non-identical trajectories. The scheduled
coefficient scales the consensus correction; it should not be interpreted
as a measured mixture of global and personalized updates.

\begin{figure}[t]
	\centering
	\includegraphics[width=0.94\columnwidth]
	{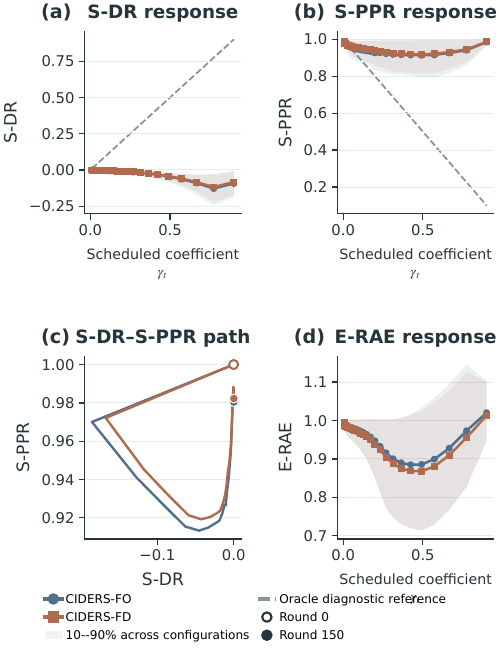}
	\Description{Four panels report the CIDERS-FO and CIDERS-FD
		S-DR, S-PPR, and E-RAE responses to the scheduled coefficient,
		together with their mean S-DR/S-PPR trajectories.
		Open and filled endpoints mark the actual round-0 and round-150
		configuration aggregates.}
	\caption{\small\textbf{Empirical consensus response to the scheduled
			coefficient (Protocol C).} S-DR, S-PPR, and E-RAE are
		aggregated within each configuration and then across
		configurations. Shaded regions show the 10th--90th
		percentile across 15 configurations. Dashed curves are
		idealized diagnostic references, not fitted trends.}
	\label{fig:control_interpolation}
\end{figure}

Fig.~\ref{fig:control_statistics_pca} projects five scalar
statistics, i.e., local-consensus norm, consensus-increment norm, E-RAE,
S-DR, and S-PPR, into a shared PCA basis. We fit the projection
after globally standardizing these statistics with z-scores across
the 16,912 client-round observations. Each point is one observation;
paths join each client's early-, middle-, and late-stage centroids.   Late-stage centroids generally cluster more tightly in this projection,
while Client 7 remains relatively displaced. This indicates a
client-specific difference in the recorded scalar statistics, but the
projection does not identify its cause or establish separation of the
underlying consensus vectors. The PCA view therefore complements the
time-series summaries rather than directly measuring personalization in
the task output. Thus, consensus-state geometry and task loss should be read
together: distinct client states can coexist with continued task
improvement.

\begin{figure}[t]
	\centering
	\includegraphics[width=0.94\columnwidth]
	{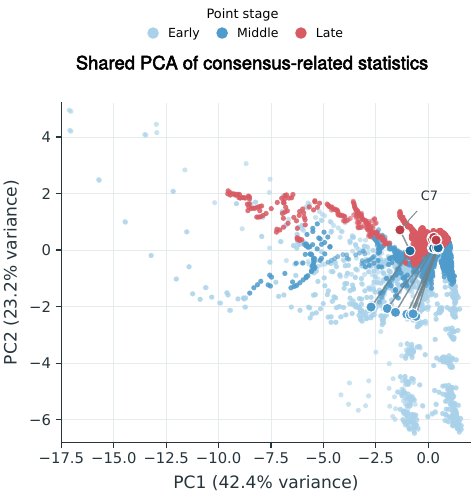}
	\Description{A shared PCA projection of 16,912 client-round
		observations using five consensus-related scalar statistics.
		Point shading shows early, middle, and late training stages.
		Paths connect each client's stage centroids, and Client 7's
		late-stage centroid is annotated.}
	\caption{\small\textbf{Shared PCA of consensus-related statistics
			(Protocol C).} Five scalar statistics are globally
		standardized before a shared PCA fit over 16,912
		client-round observations from 14 configurations,
		eight clients, and rounds 0--150. Point shading denotes
		training stage, paths connect client-stage centroids,
		and Client 7's late-stage centroid is annotated.}
	\label{fig:control_statistics_pca}
\end{figure}

\begin{figure*}[t]
	\centering
	\includegraphics[width=\textwidth]{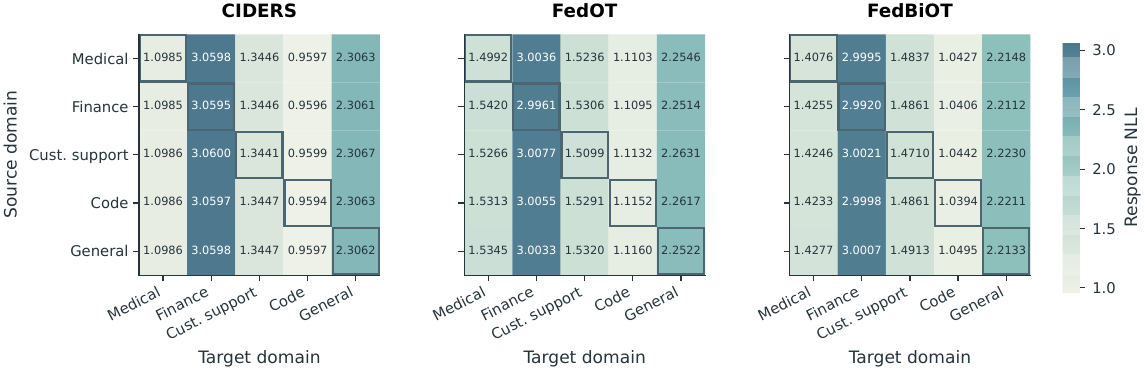}
	\Description{Three five-by-five response-NLL heatmaps compare CIDERS-FO, FedOT, and FedBiOT. Rows identify source messengers, columns identify target domains, and outlined diagonal cells mark matched source--target pairs. All panels use a shared color scale.}
	\caption{\small\textbf{Cross-domain evaluation of personalized messengers (Protocol B).}
		CIDERS-FO, FedOT, and FedBiOT are evaluated on the same five-domain
		Qwen2-1.5B set, with 300 target-domain examples per cell.
		Rows identify source messengers, columns identify target domains, and outlines mark matched cells.
		The shared color scale denotes response NLL; cell labels show four decimal places.}
	\label{fig:all_to_all}
\end{figure*}

\subsection{Personalized Messenger Evaluation}
\label{subsec:native_personalization}

Protocol B asks whether each messenger has lower response NLL on its
matched target domain than messengers trained for other source domains.
Fig.~\ref{fig:all_to_all} shows the five-domain matrices, computed
from 300 shared examples per target domain. Let $L_{ij}$ be the NLL
of source-domain messenger $i$ on target domain $j$. The average
matched-domain advantage is
\begin{equation*}
	\Delta_{\mathrm{cross}}
	=\frac{1}{20}\sum\nolimits_{\substack{1\leq i,j\leq5\\i\ne j}}L_{ij}
	-\frac{1}{5}\sum\nolimits_{i=1}^{5}L_{ii}.
\end{equation*}
Equivalently, each target domain contributes the difference between
its four unmatched messengers' mean NLL and its matched messenger's
NLL, with equal weight across domains. A positive gap favors the
matched messenger.

For CIDERS-FO, $\Delta_{\mathrm{cross}}=3.08\times10^{-4}$,
or 0.0176\% of mean self NLL. The range of target-column means is
2.1001 NLL, whereas the range of source-row means is only
$1.88\times10^{-4}$. Target-domain differences are therefore much
larger than source-messenger differences in these summaries. The small
positive gap indicates a limited average matched-domain benefit, not
strong output-level specialization.

The CIDERS-FD matrix in the five-domain Protocol B evaluation shows a
similar pattern. Mean self NLL
is 1.7755 and $\Delta_{\mathrm{cross}}=4.66\times10^{-4}$
(0.0263\% of self NLL); the target-column and source-row ranges are
2.0945 NLL and $3.56\times10^{-4}$, respectively. The reported
evaluation-sample
intervals are $[2.50,3.66]\times10^{-4}$ for FO and
$[2.56,6.76]\times10^{-4}$ for FD. Because the resampling procedure is unavailable, we report these intervals descriptively rather than as formal confidence intervals or measures of run-to-run variability.

FedOT has mean diagonal and off-diagonal NLLs of 1.8745 and 1.8875,
with a reported gap of 0.01298. FedBiOT has corresponding values
of 1.8247, 1.8349, and 0.01020. Both baselines have larger matched-
domain gaps than CIDERS-FO. A larger gap does not imply lower absolute
NLL: response loss and the benefit of source--target matching are
different criteria. Here, CIDERS combines a smaller matched-domain
effect with lower absolute response NLL, so the two quantities should
be reported separately. The shared evaluation makes this relative
comparison possible, while the alignment-data mismatch limits
component-level attribution.

\subsection{Compression and Deployment Tradeoffs}
\label{subsec:decomposition}
\label{subsec:cost_robustness}

Table~\ref{tab:compression_grid} reports Qwen2-1.5B response NLL
and stored FP32 LoRA payload for messenger depths of two, four,
and six and learnable backbone drop ratios of 0.2 and 0.5. This is a separate
compression sweep, not the Protocol B source--target matrix experiment.
Its NLL values should therefore be interpreted separately from the
Protocol B matrix.

\begin{table*}[ht]
	\fontsize{9}{10}\selectfont
	\renewcommand{\arraystretch}{1}
	\setlength{\tabcolsep}{15.5pt}
	
	\begin{center}
		\begin{tabular}{cccccccc}
			\toprule
			\multirow{2}{*}{\textbf{Drop ratio $\beta$}} &
			\multirow{2}{*}{\textbf{Depth $d$}} &
			\multicolumn{2}{c}{\textbf{CIDERS-FO}} &
			\multicolumn{2}{c}{\textbf{FedBiOT}} &
			\multicolumn{2}{c}{\textbf{FedOT}} \\
			\cmidrule(lr){3-4}\cmidrule(lr){5-6}\cmidrule(l){7-8}
			& & \textbf{NLL $\downarrow$} & \textbf{MiB}
			& \textbf{NLL $\downarrow$} & \textbf{MiB}
			& \textbf{NLL $\downarrow$} & \textbf{MiB} \\
			\midrule
			\multirow{3}{*}{0.2}
			& 2 & \textbf{2.0657} & 14.570 & 2.3576 & 14.568 & 2.2375 & 14.568 \\
			& 4 & \textbf{2.0600} & 14.570 & 2.4090 & 14.568 & 2.2671 & 14.568 \\
			& 6 & \textbf{1.7537} & 15.203 & 1.9454 & 15.201 & 1.8845 & 15.201 \\
			\midrule
			\multirow{3}{*}{0.5}
			& 2 & \textbf{6.5222} & 9.502 & 7.4351 & 9.501 & 7.0028 & 9.501 \\
			& 4 & \textbf{5.1666} & 10.136 & 6.2889 & 10.134 & 6.3932 & 10.134 \\
			& 6 & \textbf{4.1151} & 10.769 & 5.2901 & 10.768 & 5.6596 & 10.768 \\
			\bottomrule
		\end{tabular}
	\end{center}
	\caption{\small \textbf{AdapEmu quality and storage trade-off on Qwen2-1.5B
			(separate compression sweep).}
		$\beta$ denotes learnable backbone drop ratio and $d$ denotes messenger depth.
		NLL is the five-domain response NLL; MiB denotes stored FP32 LoRA
		deployment payload, not total model storage or runtime memory.
		Boldface marks the lowest NLL within each row.}
	\label{tab:compression_grid}
\end{table*}

With six messenger layers, increasing the learnable backbone drop ratio from
0.2 to 0.5 reduces the CIDERS LoRA payload by approximately
29.2\%, but raises its NLL from 1.754 to 4.115.
FedBiOT NLL rises from 1.945 to 5.290 over the same drop-ratio
comparison. The more aggressive compression therefore trades lower
storage for higher response loss.  Increasing messenger depth from two to six reduces CIDERS NLL by
15.1\% at drop ratio 0.2 and 36.9\% at drop ratio 0.5, with payload
increases of approximately 4.3\% and 13.3\%, respectively.
The gain is not uniform across depth increments or methods. At drop
ratio 0.2, CIDERS changes only from 2.0657 to 2.0600
between two and four layers, whereas FedBiOT NLL increases from
2.3576 to 2.4090. The larger CIDERS gain in this setting occurs
from four to six layers. Within the tested CIDERS grid, the six-layer
messenger gives the best NLL, but this does not imply that every method
improves monotonically with depth.   CIDERS has the lowest NLL in all six matched grid configurations,
with NLL 9.9--22.2\% below FedBiOT. Its minimum, 1.754, occurs at
six messenger layers and drop ratio 0.2. This is the best evaluated
CIDERS configuration by NLL, not an optimum over untested depths,
drop ratios, or deployment budgets. Payload measures stored LoRA
state only; it does not measure total model storage or runtime memory.

Table~\ref{tab:communication_cost} reports a separate quantity:
recurrent FP32 messenger/consensus-state communication. Each CIDERS
downlink contains a Meta-messenger and a same-sized global consensus;
each uplink contains a Personalized messenger and a same-sized
consensus increment. Learnable backbone downlink and protocol overhead
are excluded. For per-direction payload $p$ in MiB, the totals are
$150\times5\times2\times p/1024$ GiB, assuming participation by
all five clients in every round.

\begin{table}[ht]
	\fontsize{9}{10}\selectfont
	\renewcommand{\arraystretch}{1.05}
	\setlength{\tabcolsep}{2pt}
	
	\begin{center}
		\begin{tabular}{c|c|c|c}
			\toprule
			\textbf{Backbone} &
			\textbf{Method} &
			\makecell[r]{\textbf{Per direction} \textbf{(MiB)}} &
			\makecell[r]{\textbf{Total} \textbf{(GiB)}} \\
			\midrule
			Qwen2.5-3B & CIDERS & 9.516 & 13.94 \\
			Qwen2.5-3B & FedBiOT/FedOT & 4.758 & 6.97 \\
			Qwen2-1.5B & CIDERS & 7.547 & 11.06 \\
			Qwen2-1.5B & FedBiOT/FedOT & 3.773 & 5.53 \\
			Both & LocalOT & n/a & n/a \\
			\bottomrule
		\end{tabular}
	\end{center}
	\caption{\small FP32 messenger/consensus-state communication.
		Per-direction values are per client per round.
		Totals assume all five clients participate in all 150 rounds and include both directions.
		Learnable backbone downlink and protocol overhead are excluded.}
	\label{tab:communication_cost}
\end{table}

CIDERS therefore communicates approximately twice the counted
messenger/consensus tensor volume of the one-messenger baselines.
This ratio concerns the stated components, not total network traffic.
Taken together, the results define a tunable trade-off among quality,
storage, and communication: using a lower drop ratio and a deeper
messenger improves response NLL, whereas the consensus state doubles the
counted recurrent payload. This is a trade-off, not an unconditional
efficiency advantage.

\subsection{Integrated Discussion and Limitations}
\label{subsec:evaluation_limitations}

Protocol A evaluates the complete compressed CIDERS deployment path
under the recorded configurations. CIDERS leads on most reported
metrics, but the gains vary by task and backbone. Output-KL comparisons
remain task-selective, and FD has no consistent final-metric advantage
over FO. These results characterize the tested systems; they do not
isolate the contribution of any single component. The large differences
between AdapEmu and AdapFu further show that better compressed performance
does not remove the capability cost of the smaller learnable backbone. Protocols B and C measure different aspects of personalization.
Protocol B measures matched-domain response quality, whereas Protocol C
measures consensus-vector geometry. These signals are complementary, but
they are not interchangeable. Because the protocols use different
datasets and clients, they cannot establish a direct link between
directional preservation and response quality. The separate compression
sweep likewise characterizes only the tested quality and storage
trade-offs.  Accordingly, the Output-KL, FO--FD, and cross-domain analyses are
descriptive rather than causal. The resampling procedure for the
reported cross-domain intervals is unavailable, so these intervals are
not interpreted as formal confidence intervals.  In Protocol C, the bands show variation across configurations, not
across independent training runs. The client partitions can be
reconstructed from the available source code, logs, seed, and splitter, but
no per-client manifests were saved with the checkpoints; the
conclusions therefore remain limited to the two recorded LDA
partitions. Runtime, peak memory, full network traffic, and provisioning
costs were not measured on common hardware, so these component totals
do not establish end-to-end system efficiency.

\section{Conclusion}

This paper addresses the fundamental tension between global consensus and client personalization in cloud-edge LLM systems. We formalize the problem as a personalized bilevel optimization framework for cloud-edge LLM collaboration. Then, we further propose CIDERS, an efficient solver that decouples the model into a cloud-learnable backbone and a client-adaptable messenger. By embedding global trajectories into each local step via consensus-variate correction, CIDERS reconciles personalization with globalization. Theoretical analysis provides a geometric characterization of the local trajectory and a full convergence guarantee. Extensive experiments on Qwen2.5-3B/1.5B across the three complementary protocols validate our approach: i)  the local trajectories follows theories, ii) bilevel coordination preserves reasoning capability in compressed deployment, iii) backbone capacity matters more than messenger depth for response quality. These results establish CIDERS as a  practically viable path toward the applicatoins of complex real-world  intelligence. Future work will extend CIDERS to world models for embodied intelligence and physical-world simulation.

\bibliographystyle{IEEEtran}
\bibliography{reference}

\clearpage          
\onecolumn          
\appendix[Detailed Derivations of Theoretical Analysis]

\makeatletter
\fontsize{11}{16}\selectfont

\setlength{\textwidth}{6.8in}
\setlength{\oddsidemargin}{\dimexpr(\paperwidth-\textwidth)/2-1in\relax}
\setlength{\evensidemargin}{\oddsidemargin}

\setlength{\hsize}{\textwidth}
\setlength{\linewidth}{\textwidth}
\setlength{\columnwidth}{\textwidth}
\makeatother

\allowdisplaybreaks

\section{Detailed Derivations in the Section of \textit{Geometry of the Trajectory}}
We provide the complete algebraic derivations for the geometric characterization of the local messenger trajectory presented in Section \ref{sec:theoretical_tradeoff}.

\subsection{Derivation of Approximation A1}
Recall (A1) the total inner-loop displacement is sufficiently small, i.e., $K \eta \ll 1 / L$, so that $\bar{g}_i^{(t)} \approx \tilde{g}_i^{(t)}$. The path error $\phi_i^{(t)}:=\bar{g}_i^{(t)}-\tilde{g}_i^{(t)}$ is bounded by $O\left(L K \eta G_{\mathrm{bd}}\right)$.  We now derive this bound.
Let the meta-gradient map $g(x):=\nabla \mathcal{L}_i(x)$ be $L$-Lipschitz continuous $\|g(x)-g(y)\| \leq L\|x-y\|, \quad \forall x, y$. 
This is a standard smoothness assumption on the local meta-loss. Let $G_{\mathrm{bd}}:=\sup _x\|g(x)\|$ be a uniform bound on the gradient norm. From the inner update (1) in the main text, we have:
\begin{equation}
	x_{k+1}-\bar{v}=\rho\left(x_k-\bar{v}\right)-\eta g_{i, k}\left(x_k\right)-\eta \gamma\left({c}-{c}_i\right), \quad \rho:=1-\eta \varepsilon.
\end{equation}
For $\varepsilon \geq 0$, we have $\rho \in(0,1]$. Taking norms and applying the triangle inequality:	
\begin{equation}
	\begin{aligned}
		\left\|x_{k+1}-\bar{v}\right\| & \leq \rho\left\|x_k-\bar{v}\right\|+\eta\left\|g_{i, k}\left(x_k\right)\right\|+\eta \gamma\left\|{c}-{c}_i\right\| \\
		& \leq \rho\left\|x_k-\bar{v}\right\|+\eta G_{\mathrm{bd}}+\eta \gamma C_c=\rho\left\|x_k-\bar{v}\right\|+\eta G_{\mathrm{bd}}^{\prime} .
	\end{aligned}
\end{equation}	
where $C_c:=\sup _t\|{c}^{(t)}-{c}_i^{(t)}\|$ is finite under the bounded gradient assumption, we further unrolling to derive: 
\begin{equation}
	\left\|x_K-\bar{v}\right\| \leq \eta G_{\mathrm{bd}}^{\prime} \sum\nolimits_{j=0}^{K-1} \rho^{K-1-j}=\eta G_{\mathrm{bd}}^{\prime} S_K\leq K \eta G_{\mathrm{bd}}^{\prime}.
\end{equation}
Now we bound the path error $\phi_i^{(t)}:=\bar{g}_i^{(t)}-\tilde{g}_i^{(t)}$. Recall:
\begin{equation}
	\bar{g}_i^{(t)}=\frac{1}{S_K} \sum\nolimits_{j=0}^{K-1} \rho^{K-1-j} g_{i, j}\left(x_j\right), \quad \tilde{g}_i^{(t)}=g_{i, 0}(\bar{v}),
\end{equation}
and we can derive the path error $\phi_i^{(t)}$ as follows
\begin{equation}\label{eq:phi}
	\begin{aligned}
		\|\bar{g}_i^{(t)}-\tilde{g}_i^{(t)}\| & =\|\frac{1}{S_K} \sum_{j=0}^{K-1} \rho^{K-1-j}(g_{i, j}(x_j)-g_{i, 0}(\bar{v}))\|\leq \frac{1}{S_K} \sum_{j=0}^{K-1} \rho^{K-1-j}\|g_{i, j}(x_j)-g_{i, 0}(\bar{v})\|.
	\end{aligned}
\end{equation}		
Using the Lipschitz property, it results in 	$\|g_{i, j}(x_j)-g_{i, 0}(\bar{v})\| \leq L\|x_j-\bar{v}\|\leq  K \eta G_{\mathrm{bd}}^{\prime}$, which we	substitute into  \eqref{eq:phi}  and it leads to $\|\phi_i^{(t)}\| \leq L C_K \eta G_{\mathrm{bd}}$.

\subsection{Derivation of Approximation A2}
A2 posits that the heterogeneity gap remains constant approximately across rounds, i.e., $\delta_i^{(t+1)}\approx\delta_i^{(t)}\approx\delta_i$. While this holds exactly for quadratic meta-losses, for general nonconvex losses it requires justification. Here, we demonstrate this  locally valid via a second-order Taylor expansion of the loss function around the current broadcast point, and show that the approximation error decays naturally as the global model converges.

Assume that each local loss function $\mathcal{L}_i$ has an $M$-Lipschitz continuous Hessian, i.e., 
\begin{equation}
	\left\|\nabla^2 \mathcal{L}_i\left(w_1\right)-\nabla^2 \mathcal{L}_i\left(w_2\right)\right\| \leq M\left\|w_1-w_2\right\|, \quad \forall w_1, w_2 .
\end{equation}	
Expanding the gradient $\nabla \mathcal{L}_i$ around the broadcast point $\bar{v}^{(t)}$ and evaluating at an arbitrary nearby point $w$, we obtain:
\begin{equation}\label{eq:ri}
	\nabla \mathcal{L}_i(w)=\nabla \mathcal{L}_i(\bar{v}^{(t)})+H_i^{(t)}(w-\bar{v}^{(t)})+\mathcal{R}_i^{(t)}(w), 
\end{equation}	
where $H_i^{(t)}:=\nabla^2 \mathcal{L}_i(\bar{v}^{(t)})$ is the local Hessian at the broadcast point, and the remainder term satisfies
\begin{equation}
	\|\mathcal{R}_i^{(t)}(w)\| \leq \frac{M}{2}\|w-\bar{v}^{(t)}\|^2 . 
\end{equation}	
Similarly, for the global average gradient $\nabla \overline{\mathcal{L}}:=\sum_j p_j \nabla \mathcal{L}_j$, we have:
\begin{equation}\label{eq:rbar}
	\nabla \overline{\mathcal{L}}(w)=\nabla \overline{\mathcal{L}}(\bar{v}^{(t)})+\bar{H}^{(t)}(w-\bar{v}^{(t)})+\overline{\mathcal{R}}^{(t)}(w), 
\end{equation}	
where $\bar{H}^{(t)}:=\sum_j p_j H_j^{(t)}$ is the average Hessian, and the global remainder $\overline{\mathcal{R}}^{(t)}(w)$ satisfies
\begin{equation}
	\|\overline{\mathcal{R}}^{(t)}(w)\| \leq \frac{M}{2}\|w-\bar{v}^{(t)}\|^2.
\end{equation}
Our objective is to characterize the change in $\delta_i$ between rounds $t$ and $t+1$. Let $\bar{v}^{(t+1)}=\bar{v}^{(t)}+$ $\Delta \bar{v}^{(t)}$, where $\Delta \bar{v}^{(t)}$ denotes the global messenger update from round $t$ to $t+1$. Applying the expansions (\ref{eq:ri}) and (\ref{eq:rbar}) at $w=\bar{v}^{(t+1)}$, we obtain:	
\begin{equation}
	\begin{aligned}
		\delta_i^{(t+1)}= & \nabla \mathcal{L}_i(\bar{v}^{(t+1)})-\nabla \overline{\mathcal{L}}(\bar{v}^{(t+1)}) \\
		= & {[\nabla \mathcal{L}_i(\bar{v}^{(t)})+H_i^{(t)} \Delta \bar{v}^{(t)}+\mathcal{R}_i^{(t)}(\bar{v}^{(t+1)})] } \\
		& -[\nabla \overline{\mathcal{L}}(\bar{v}^{(t)})+\bar{H}^{(t)} \Delta \bar{v}^{(t)}+\overline{\mathcal{R}}^{(t)}(\bar{v}^{(t+1)})] \\
		= & \delta_i^{(t)}+(H_i^{(t)}-\bar{H}^{(t)}) \Delta \bar{v}^{(t)}+(\mathcal{R}_i^{(t)}(\bar{v}^{(t+1)})-\overline{\mathcal{R}}^{(t)}(\bar{v}^{(t+1)})) .
	\end{aligned}
\end{equation}	
Therefore, the cross-round variation of the heterogeneity gap is:
\begin{equation}
	\delta_i^{(t+1)}-\delta_i^{(t)}=(H_i^{(t)}-\bar{H}^{(t)}) \Delta \bar{v}^{(t)}+\mathcal{O}(\|\Delta \bar{v}^{(t)}\|^2).
\end{equation}
To justify treating $\delta_i$ as constant across rounds, we must show that the right-hand side of (B.18) is negligible at the scale of our analysis. 	First-order term (Hessian mismatch contribution): Define the worst-case Hessian heterogeneity as
\begin{equation}
	\chi^{(t)}:=\max _i\|H_i^{(t)}-\bar{H}^{(t)}\| . 
\end{equation}	
Then the first-order term is bounded by:
\begin{equation}
	\|(H_i^{(t)}-\bar{H}^{(t)}) \Delta \bar{v}^{(t)}\| \leq \chi^{(t)} \cdot\|\Delta \bar{v}^{(t)}\| .
\end{equation}
Second-order remainder can be simply bounded by $\mathcal{O}(\|\Delta \bar{v}^{(t)}\|^2)$. Therefore, we have $\|\delta_i^{(t+1)}-\delta_i^{(t)}\| \leq \mathcal{O}(\|\Delta \bar{v}^{(t)}\|^2)\rightarrow 0$   as $t \rightarrow \infty$.  

\subsection{Residual Recurrence}
We first derive the consensus-gap recursion, recall the  consensus updates are:
\begin{equation}\label{eq:candci}
	{c}^{(t+1)}=(1-\alpha) {c}^{(t)}+\alpha \sum\nolimits_{j \in \mathcal{S}_t} p_j \Delta_{j, t}, \;\text{and } {c}_i^{(t+1)}=[1+\alpha(\gamma-1)] {c}_i^{(t)}+\alpha \Delta_{i, t}-\alpha \gamma {c}^{(t)} .
\end{equation}
Under (A5) that  $\mathcal{S}_t=[N]$, we can derive according to \eqref{eq:candci}	
\begin{equation}
	\begin{aligned}
		{c}^{(t+1)}-{c}_i^{(t+1)}= & (1-\alpha) {c}^{(t)}+\alpha \Delta_{\mathrm{avg}}^{(t)}-[1+\alpha(\gamma-1)] {c}_i^{(t)}-\alpha \Delta_{i, t}+\alpha \gamma {c}^{(t)} \\
		= & (1-\alpha+\alpha \gamma) {c}^{(t)}-(1-\alpha+\alpha \gamma) {c}_i^{(t)}+\alpha(\Delta_{\mathrm{avg}}^{(t)}-\Delta_{i, t}) \\
		= & {[1-\alpha(1-\gamma)]({c}^{(t)}-{c}_i^{(t)})+\alpha(\Delta_{\mathrm{avg}}^{(t)}-\Delta_{i, t}), }
	\end{aligned}
\end{equation}	
where $\Delta_{\text {avg }}^{(t)}:=\sum_j p_j \Delta_{j, t}$. Thus,	the consensus-gap recursion can be derived 
\begin{equation}\label{eq:controlgap}
	{c}^{(t+1)}-{c}_i^{(t+1)}=[1-\alpha(1-\gamma)]({c}^{(t)}-{c}_i^{(t)})+\alpha(\Delta_{\mathrm{avg}}^{(t)}-\Delta_{i, t}).
\end{equation}
Next, we take average over $c_i$ in \eqref{eq:candci} and recall the definition $\overline{{c}}^{(t)}:=\sum_j p_j {c}_j^{(t)}$, we have
\begin{equation}
	\overline{{c}}^{(t+1)}=[1-\alpha(1-\gamma)] \overline{{c}}^{(t)}+\alpha \Delta_{\mathrm{avg}}^{(t)}-\alpha \gamma {c}^{(t)}=(1-\alpha) {c}^{(t)}+\alpha \Delta_{\text {avg }}^{(t)}.
\end{equation}
Subsequently, comparing to $c$ update in  \eqref{eq:candci}, we have $\overline{{c}}^{(t)}={c}^{(t)}$ if it initializes  $\overline{{c}}^{(0)}={c}^{(0)}$.  Next, we consider $\Delta_{\mathrm{avg}}^{(t)}$. From $\Delta_{i, t}=\tilde{g}_i^{(t)}+\gamma({c}^{(t)}-{c}_i^{(t)})$, the average displacement is
\begin{equation}
	\Delta_{\mathrm{avg}}^{(t)}=\sum\nolimits_j p_j(\tilde{g}_j^{(t)}+\gamma({c}^{(t)}-{c}_j^{(t)}))=\widetilde{G}^{(t)}.
\end{equation}	
Therefore, we obtain the following 
\begin{equation}\label{eq:deltaavg}
	\begin{aligned}
		\Delta_{\text {avg }}^{(t)}-\Delta_{i, t} & =\widetilde{G}^{(t)}-(\widetilde{g}_i^{(t)}+\gamma({c}^{(t)}-{c}_i^{(t)}))=-\delta_i^{(t)}-\gamma({c}^{(t)}-{c}_i^{(t)}) .
	\end{aligned}
\end{equation}	
Recall the reference field $u_i^{(t)}:=\tilde{g}_i^{(t)}+\gamma({c}^{(t)}-{c}_i^{(t)})$ and the residual relative to the global consensus direction satisfies $r_i^{(t)}:=u_i^{(t)}-\widetilde{G}^{(t)}=\delta_i^{(t)}+\gamma({c}^{(t)}-{c}_i^{(t)})$, which also lead to 	$
\gamma({c}^{(t)}-{c}_i^{(t)})=r_i^{(t)}-\delta_i^{(t)} .
$ 	Under (A2) that $\delta_i^{(t)}=\delta_i$, we  Substituting (A.16) into (A.14):
\begin{equation}
	\begin{aligned}
		&{c}^{(t+1)}-{c}_i^{(t+1)}  =[1-\alpha(1-\gamma)]\left({c}^{(t)}-{c}_i^{(t)}\right)-\alpha \delta_i-\alpha \gamma\left({c}^{(t)}-{c}_i^{(t)}\right) \\
		&\quad\quad =[1-\alpha(1-\gamma)-\alpha \gamma]\left({c}^{(t)}-{c}_i^{(t)}\right)-\alpha \delta_i =(1-\alpha)\left({c}^{(t)}-{c}_i^{(t)}\right)-\alpha \delta_i.
	\end{aligned}
\end{equation}	
Now compute $r_i^{(t+1)}$ with the result that $\gamma({c}^{(t)}-{c}_i^{(t)})=r_i^{(t)}-\delta_i^{(t)}$:
\begin{equation}\label{eq:r}
	\begin{aligned}
		r_i^{(t+1)} & =\delta_i+\gamma\left({c}^{(t+1)}-{c}_i^{(t+1)}\right) \\
		& =\delta_i+\gamma\left[(1-\alpha)\left({c}^{(t)}-{c}_i^{(t)}\right)-\alpha \delta_i\right] \\
		& =\gamma(1-\alpha)\left({c}^{(t)}-{c}_i^{(t)}\right)+(1-\alpha \gamma) \delta_i\\
		&=(1-\alpha)\left(r_i^{(t)}-\delta_i\right)+(1-\alpha \gamma) \delta_i\\
		&=(1-\alpha) r_i^{(t)}+\alpha(1-\gamma) \delta_i .
	\end{aligned}
\end{equation}
Since the initial condition follows from (A4) that ${c}^{(0)}={c}_i^{(0)}=0$, $r_i^{(0)}=\delta_i^{(0)}+\gamma(0-0)=\delta_i^{(0)}=\delta_i$ holds.  Thus $r_i^{(t+1)}=(1-\alpha) r_i^{(t)}+\alpha(1-\gamma) \delta_i, \quad r_i^{(0)}=\delta_i .$

\subsection{Derivation of Proposition \ref{prop:1}}
We first solve the recurrence for $\psi_t$.  Since the recurrence (\ref{eq:r}) expresses $r_i^{(t+1)}$ as an $\mathbb{R}$-linear combination of $r_i^{(t)}$ and $\delta_i$, and $r_i^{(0)}=$ $\delta_i$, it follows by induction that
\begin{equation}
	r_i^{(t)} \in \operatorname{span}\left\{\delta_i\right\} \quad \forall t \geq 0 .
\end{equation}	
Thus we can write $r_i^{(t)}=\psi_t \delta_i$ with a scalar $\psi_t$. Substituting into (\ref{eq:r}) leads to
\begin{equation}
	\psi_{t+1} \delta_i=(1-\alpha) \psi_t \delta_i+\alpha(1-\gamma) \delta_i .
\end{equation}	
Since $\delta_i \neq 0$, we can cancel it which results in
\begin{equation}
	\psi_{t+1}=(1-\alpha) \psi_t+\alpha(1-\gamma), \quad \psi_0=1 . 
\end{equation}	
We solve this nonhomogeneous first-order linear recurrence 	
and the constant steady state $\psi^*$
\begin{equation}\label{eq:psi}
	\psi_t=(1-\gamma)+\gamma(1-\alpha)^t\text{ with } \psi^*=1-\gamma. 
\end{equation}	
Substituting (\ref{eq:psi}) into the definition of the reference field we have
\begin{equation}\label{eq:ui}
	\begin{aligned}
		u_i^{(t)} & =\widetilde{G}^{(t)}+r_i^{(t)}=\psi_t \widetilde{g}_i+\left(1-\psi_t\right) \widetilde{G} .
	\end{aligned}
\end{equation}
This completes the derivation of the proposition \ref{prop:1}.

\subsection{Path-Averaging Error and Its Effect on the Residual}
We now analyze the impact of relaxing the short-inner-loop assumption~(A1).
In the implemented local update, the along-path average \(\bar{g}_i^{(t)}\) may differ from the starting oracle \(\widetilde{g}_i^{(t)}\), thus we  introduce the path error $\phi_i^{(t)}:=\bar{g}_i^{(t)}-\widetilde{g}_i^{(t)}$.	As established in Appendix~B.1, this error is uniformly bounded, i.e., $\|\phi_i^{(t)}\|\le L C_K\eta G_{\mathrm{bd}}$.
We quantify how this path error propagates into the residual dynamics and the reference direction. We retain (A2)--(A5) and thus \(\sigma_K=1\),  \(\varepsilon=0\) and  \(\phi_i^{(t)}\ne 0\), then
\begin{equation}
	\Delta_{i,t}
	=\bar{g}_i^{(t)}+\gamma\bigl({c}^{(t)}-{c}_i^{(t)}\bigr)
	=\widetilde{g}_i^{(t)}+\phi_i^{(t)}+\gamma\bigl({c}^{(t)}-{c}_i^{(t)}\bigr),
\end{equation}
and we average over clients, it yields $\Delta_{\mathrm{avg}}^{(t)}=\widetilde{G}^{(t)}+\bar{\phi}^{(t)}$,	where \(\bar{\phi}^{(t)}:=\sum_j p_j\phi_j^{(t)}\).	Subsequently, we evaluate the displacement gap as follows
\begin{equation}
	\begin{aligned}
		\Delta_{\mathrm{avg}}^{(t)}-\Delta_{i,t}
		&=\bigl(\widetilde{G}^{(t)}+\bar{\phi}^{(t)}\bigr)
		-\bigl(\widetilde{g}_i^{(t)}+\phi_i^{(t)}\bigr)
		+\gamma\bigl({c}^{(t)}-\bar{{c}}^{(t)}\bigr)
		-\gamma\bigl({c}^{(t)}-{c}_i^{(t)}\bigr) \\
		&=-\delta_i^{(t)}+\bar{\phi}^{(t)}-\phi_i^{(t)}
		-\gamma\bigl({c}^{(t)}-{c}_i^{(t)}\bigr)
	\end{aligned}
\end{equation}
where \(\widetilde{G}^{(t)}-\widetilde{g}_i^{(t)}=-\delta_i^{(t)}\) and  \(\bar{{c}}^{(t)}={c}^{(t)}\). Substituting into the consensus-gap recursion in \eqref{eq:controlgap} gives
\begin{equation}
	\begin{aligned}
		{c}^{(t+1)}-{c}_i^{(t+1)}
		&=\bigl[1-\alpha(1-\gamma)\bigr]\bigl({c}^{(t)}-{c}_i^{(t)}\bigr)
		+\alpha\bigl[-\delta_i^{(t)}+\bar{\phi}^{(t)}-\phi_i^{(t)}
		-\gamma\bigl({c}^{(t)}-{c}_i^{(t)}\bigr)\bigr] \\
		&=\bigl[1-\alpha(1-\gamma)-\alpha\gamma\bigr]\bigl({c}^{(t)}-{c}_i^{(t)}\bigr)
		+\alpha\bigl(-\delta_i^{(t)}+\bar{\phi}^{(t)}-\phi_i^{(t)}\bigr) \\
		&=(1-\alpha)\bigl({c}^{(t)}-{c}_i^{(t)}\bigr)
		+\alpha\bigl(-\delta_i^{(t)}+\bar{\phi}^{(t)}-\phi_i^{(t)}\bigr).
	\end{aligned}
\end{equation}
Now let consider the residual, which satisfies \(r_i^{(t+1)}=\delta_i^{(t+1)}+\gamma\bigl({c}^{(t+1)}-{c}_i^{(t+1)}\bigr)\).
Under~(A2), we have the approximation \(\delta_i^{(t+1)}=\delta_i^{(t)}=\delta_i\), and therefore it yields 
\begin{equation}
	\begin{aligned}
		r_i^{(t+1)}
		&=\delta_i+\gamma\Bigl[(1-\alpha)\bigl({c}^{(t)}-{c}_i^{(t)}\bigr)
		+\alpha\bigl(-\delta_i+\bar{\phi}^{(t)}-\phi_i^{(t)}\bigr)\Bigr] \\
		&=\gamma(1-\alpha)\bigl({c}^{(t)}-{c}_i^{(t)}\bigr)
		+(1-\alpha\gamma)\delta_i
		+\alpha\gamma\bigl(\bar{\phi}^{(t)}-\phi_i^{(t)}\bigr)\\
		&=(1-\alpha)r_i^{(t)}+\alpha(1-\gamma)\delta_i
		+\alpha\gamma\bigl(\bar{\phi}^{(t)}-\phi_i^{(t)}\bigr),
	\end{aligned}
\end{equation}
where we have used the identity \(\gamma\bigl({c}^{(t)}-{c}_i^{(t)}\bigr)=r_i^{(t)}-\delta_i\). Now, let us turn to the ideal recurrence when \(\phi_i^{(t)}=\bar{\phi}^{(t)}=0\) 
\begin{equation}
	r_i^{(t+1)}\big|_{\mathrm{(E0)}}
	=(1-\alpha)\,r_i^{(t)}\big|_{\mathrm{(E0)}}
	+\alpha(1-\gamma)\delta_i,
	\qquad
	r_i^{(0)}\big|_{\mathrm{(E0)}}=\delta_i.
\end{equation}
Subsequently, we define the path-induced deviation $\tilde{r}_i^{(t)}
:=r_i^{(t)}-r_i^{(t)}\big|_{\mathrm{(E0)}}$, which can directly produce the following linear system
\begin{equation}
	\tilde{r}_i^{(t+1)}
	=(1-\alpha)\tilde{r}_i^{(t)}
	+\alpha\gamma\bigl(\bar{\phi}^{(t)}-\phi_i^{(t)}\bigr),
	\text{ with }
	\tilde{r}_i^{(0)}=0.
	\label{eq:r-dev}
\end{equation}
Unrolling~\eqref{eq:r-dev} from $s=0$ to $t-1$, we can obtain the following
\begin{equation}
	\tilde{r}_i^{(t)}
	=\alpha\gamma\sum\nolimits_{s=0}^{t-1}(1-\alpha)^{t-1-s}
	\bigl(\bar{\phi}^{(s)}-\phi_i^{(s)}\bigr).
	\label{eq:r-unroll}
\end{equation}
Taking norms and applying the triangle inequality,
\begin{equation}
	\begin{aligned}
		\|\tilde{r}_i^{(t)}\|
		&\le 2\alpha\gamma R\sum\nolimits_{s=0}^{t-1}(1-\alpha)^{t-1-s}
		=2\alpha\gamma R\cdot\frac{1-(1-\alpha)^t}{\alpha}=2\gamma R\bigl(1-(1-\alpha)^t\bigr),
	\end{aligned}
	\label{eq:r-bound}
\end{equation}
where we have used  \(\|\phi_j^{(s)}\|\le R\) for every client and \(\bar{\phi}^{(s)}\) is a convex combination of $\phi_j$, this leads to 
\(\|\bar{\phi}^{(s)}-\phi_i^{(s)}\|\le 2R\). Therefore
\begin{equation}
	\begin{aligned}
		\|\tilde{r}_i^{(t)}\|
		&\le 2\alpha\gamma R\sum\nolimits_{s=0}^{t-1}(1-\alpha)^{t-1-s}
		=2\alpha\gamma R\cdot\frac{1-(1-\alpha)^t}{\alpha}=2\gamma R\bigl(1-(1-\alpha)^t\bigr).
	\end{aligned}
	\label{eq:r-bound}
\end{equation}
Finally, \(r_i^{(t)}=u_i^{(t)}-\widetilde{G}\) and \(r_i^{(t)}\big|_{\mathrm{(E0)}}=u_i^{(t)}\big|_{\mathrm{(E0)}}-\widetilde{G}\), so the same bound holds for the reference direction:
\begin{equation}
	\bigl\|u_i^{(t)}-u_i^{(t)}\big|_{\mathrm{(E0)}}\bigr\|
	\le 2\gamma R\bigl(1-(1-\alpha)^t\bigr)
	\le 2\gamma R.
	\label{eq:u-bound}
\end{equation}	
This proves the uniform bound 
and the deviation  vanishes as \(K\eta\to 0\), which shows that a small local update step makes the convex-combination characterization  hold with high accuracy.

\subsection{Hessian remainder along the local path.}
We now consider the effect of evaluating the meta-gradient at an intermediate point \(x\) along the local trajectory, rather than at the broadcast messenger \(\bar{v}\).
Even when the consensus variates are locked,
\({c}=\widetilde{G}\) and \({c}_i=\widetilde{g}_i\),
the executed field at \(x\) is
\begin{equation}
	\nabla\mathcal{L}_i(x)+\gamma({c}-{c}_i)
	=\nabla\mathcal{L}_i(x)+\gamma\bigl(\widetilde{G}-\widetilde{g}_i\bigr).
\end{equation}
We compare this with the designed convex combination $(1-\gamma)\nabla\mathcal{L}_i(x)+\gamma\nabla\bar{\mathcal{L}}(x).$
Subtracting yields
\begin{equation}
	\begin{aligned}
		&\nabla\mathcal{L}_i(x)+\gamma\bigl(\widetilde{G}-\widetilde{g}_i\bigr)
		-\bigl[(1-\gamma)\nabla\mathcal{L}_i(x)+\gamma\nabla\bar{\mathcal{L}}(x)\bigr]=\gamma\nabla\mathcal{L}_i(x)+\gamma\bigl(\widetilde{G}-\widetilde{g}_i\bigr)
		-\gamma\nabla\bar{\mathcal{L}}(x) \\
		&\qquad\quad=\gamma\bigl[\nabla\mathcal{L}_i(x)-\nabla\bar{\mathcal{L}}(x)
		-(\widetilde{g}_i-\widetilde{G})\bigr]=\gamma\bigl[\bigl(\nabla\mathcal{L}_i(x)-\widetilde{g}_i\bigr)
		-\bigl(\nabla\bar{\mathcal{L}}(x)-\widetilde{G}\bigr)\bigr].
	\end{aligned}
\end{equation}
Now we compute the Taylor expansion of  \(\nabla\mathcal{L}_i(x)\) and $\nabla\bar{\mathcal{L}}(x)$ respectively at \(\bar{v}\)
\begin{equation}
	\begin{aligned}
		\nabla\mathcal{L}_i(x)
		=\widetilde{g}_i+H_i(x-\bar{v})+O\bigl(\|x-\bar{v}\|^2\bigr),\text{ and } \nabla\bar{\mathcal{L}}(x)
		=\widetilde{G}+\bar{H}(x-\bar{v})+O\bigl(\|x-\bar{v}\|^2\bigr),
	\end{aligned}
\end{equation}
where \(H_i:=\nabla^2\mathcal{L}_i(\bar{v})\) and \(\bar{H}:=\sum_j p_j H_j\).
Substituting these expansions produces
\begin{equation}
	\begin{aligned}
		\nabla\mathcal{L}_i(x)+\gamma\bigl(\widetilde{G}-\widetilde{g}_i\bigr)
		&=(1-\gamma)\nabla\mathcal{L}_i(x)+\gamma\nabla\bar{\mathcal{L}}(x) \\
		&\qquad+\gamma\bigl[(H_i-\bar{H})(x-\bar{v})+O\bigl(\|x-\bar{v}\|^2\bigr)\bigr].
	\end{aligned}
\end{equation}
Thus the first-order remainder is exactly \(\gamma(H_i-\bar{H})(x-\bar{v})\). Recall \(\chi:=\max_i\|H_i-\bar{H}\|\) and the remainder is bounded by \(\gamma\chi\|x-\bar{v}\|+O(\gamma\|x-\bar{v}\|^2)\).
Therefore, the curvature mismatch introduces an error that scales with \(\gamma\) and with the local  displacement \(\|x-\bar{v}\|\).

\subsection{The case \(\sigma_K\neq 1\): detailed derivation of the renormalized recurrence}
We now relax assumption~(A3) and allow a proximal coefficient \(\varepsilon>0\), which implies \(\rho=1-\eta\varepsilon<1\), and consequently
\begin{equation}
	S_K=\sum\nolimits_{j=0}^{K-1}\rho^{K-1-j}=\frac{1-\rho^K}{1-\rho}<K,
\end{equation}
so that $\sigma_K:=\nicefrac{S_K}{K}\in(0,1).$
The displacement now carries the compression factor \(\sigma_K\):
\begin{equation}
	\Delta_{i,t}
	=\sigma_K\Bigl(\bar{g}_i^{(t)}+\gamma\bigl({c}^{(t)}-{c}_i^{(t)}\bigr)\Bigr)	=\sigma_K\Bigl(\widetilde{g}_i+\gamma\bigl({c}^{(t)}-{c}_i^{(t)}\bigr)\Bigr),
\end{equation}
where under the small local update approximation~(A1) and fixed heterogeneity~(A2), we have used
\(\bar{g}_i^{(t)}=\widetilde{g}_i^{(t)}=\widetilde{g}_i\).
The average displacement can be simply derived  $\Delta_{\mathrm{avg}}^{(t)}=\sigma_K\widetilde{G}$, this leads to
\begin{equation}
	\begin{aligned}
		\Delta_{\mathrm{avg}}^{(t)}-\Delta_{i,t}
		&=\sigma_K\Bigl(\widetilde{G}-\widetilde{g}_i
		-\gamma\bigl({c}^{(t)}-{c}_i^{(t)}\bigr)\Bigr)=\sigma_K\Bigl(-\delta_i-\gamma\bigl({c}^{(t)}-{c}_i^{(t)}\bigr)\Bigr),
	\end{aligned}
\end{equation}
where we substitute into the consensus gap recursion in \eqref{eq:controlgap} as follows 
\begin{equation}\label{eq:cminusci}
	\begin{aligned}
		{c}^{(t+1)}-{c}_i^{(t+1)}
		&=\bigl[1-\alpha(1-\gamma)\bigr]\bigl({c}^{(t)}-{c}_i^{(t)}\bigr)
		+\alpha\sigma_K\Bigl(-\delta_i-\gamma\bigl({c}^{(t)}-{c}_i^{(t)}\bigr)\Bigr) \\
		&=\bigl[1-\alpha(1-\gamma)-\alpha\gamma\sigma_K\bigr]
		\bigl({c}^{(t)}-{c}_i^{(t)}\bigr)
		-\alpha\sigma_K\delta_i.
	\end{aligned}
\end{equation}
Now we substitute \eqref{eq:cminusci} into  \(r_i^{(t+1)}=\delta_i+\gamma\bigl({c}^{(t+1)}-{c}_i^{(t+1)}\bigr)\) and subsequently we have 
\begin{equation}
	\begin{aligned}
		r_i^{(t+1)}
		&=\delta_i+\gamma\Bigl[
		\bigl(1-\alpha(1-\gamma)-\alpha\gamma\sigma_K\bigr)
		\bigl({c}^{(t)}-{c}_i^{(t)}\bigr)
		-\alpha\sigma_K\delta_i
		\Bigr] \\
		&=\gamma\bigl(1-\alpha(1-\gamma)-\alpha\gamma\sigma_K\bigr)
		\bigl({c}^{(t)}-{c}_i^{(t)}\bigr)
		+(1-\alpha\gamma\sigma_K)\delta_i\\
		&=\bigl(1-\alpha(1-\gamma)-\alpha\gamma\sigma_K\bigr)r_i^{(t)}+
		\Bigl[
		-\bigl(1-\alpha(1-\gamma)-\alpha\gamma\sigma_K\bigr)
		+1-\alpha\gamma\sigma_K
		\Bigr]\delta_i\\
		&	=\bigl[1-\alpha+\alpha\gamma(1-\sigma_K)\bigr]r_i^{(t)}
		+\alpha(1-\gamma)\delta_i,
		\qquad
		r_i^{(0)}=\delta_i.
		\label{eq:D9}
	\end{aligned}
\end{equation}
where we have also used  \(\gamma\bigl({c}^{(t)}-{c}_i^{(t)}\bigr)=r_i^{(t)}-\delta_i\) in \eqref{eq:D9}, and we have complete the derivation of (\ref{eq:r_i}) in subsection \ref{sigma}.  	When \(\sigma_K=1\),~\eqref{eq:D9} reduces to the undamped recurrence in proposition \ref{prop:1}, i.e.,  (\ref{eq:r}). Here, 
span invariance is also preserved with  \(r_i^{(0)}=\delta_i\in\operatorname{span}\{\delta_i\}\), since ~\eqref{eq:D9} expresses \(r_i^{(t+1)}\) as an \(\mathbb{R}\)-linear combination of \(r_i^{(t)}\) and \(\delta_i\), so
\begin{equation}
	r_i^{(t)}\in\operatorname{span}\{\delta_i\}
	\quad\text{for all }t\ge 0.
\end{equation}
Writing \(r_i^{(t)}=\psi_t\delta_i\) and substituting into~\eqref{eq:D9} yields
\begin{equation}
	\psi_{t+1}
	=\bigl[1-\alpha+\alpha\gamma(1-\sigma_K)\bigr]\psi_t
	+\alpha(1-\gamma),
	\qquad
	\psi_0=1.
	\label{eq:D10}
\end{equation}
From \eqref{eq:D10}, we can derive $\psi_t$ with $\psi_0=1$ as follows
\begin{equation}
	\begin{aligned}
		\psi_t
		&=\frac{1-\gamma}{1-\gamma(1-\sigma_K)}
		+\Biggl(1-\frac{1-\gamma}{1-\gamma(1-\sigma_K)}\Biggr)\cdot\bigl[1-\alpha+\alpha\gamma(1-\sigma_K)\bigr]^t,
	\end{aligned}
	\label{eq:D12}
\end{equation}
with the steady state of $\psi$ when $t\to\infty$ for \(\alpha>0\)
\begin{equation}
	\psi_\infty
	=\frac{1-\gamma}{1-\gamma(1-\sigma_K)}.
	\label{eq:D11}
\end{equation}
So the approach to~\eqref{eq:D11} is exponential of rate \(\zeta\). From \eqref{eq:D11} we can see $\psi_\infty>1-\gamma$,  thus the residual retains a larger fraction of \(\delta_i\) than the undamped lock-in \(1-\gamma\) in \eqref{eq:psi}.
Then we further derive the steady-state reference field as
\begin{equation}
	u_i^{(\infty)}
	=\widetilde{G}+\psi_\infty\delta_i
	=\psi_\infty\widetilde{g}_i+(1-\psi_\infty)\widetilde{G},
\end{equation}
which lies closer to \(\widetilde{g}_i\) than the undamped counterpart  \(1-\gamma\) in \eqref{eq:ui}.

\section{The Proof of Convergence}
We present a complete convergence analysis of the CIDERS algorithm under the personalized bilevel optimization. The analysis establishes a joint convergence rate on the upper-level meta-objective gap and the lower-level approximation error. 

\subsection{Problem Statement}
The bilevel optimization problem in CIDERS is simplified for our careful proof. Specifically, we denote the frozen full teacher backbone as $w_T$, learnable student backbone as $w_S$, global personalization messenger as  $v$ and the client $i$'s personalization messenger as $v_i$.  	Each client $i$ holds a private dataset $\mathcal{D}_i$ and the server has access to a public dataset $\mathcal{D}_{\rm pub}$. Based on the global messenger $v$ with a good meta-initialization for rapid client-specific adaptation, the upper-level objective is the meta-learning formulation
\begin{equation}
	\Phi(v) = \nicefrac{1}{N} \sum\nolimits_{i=1}^N \mathcal{L}_i \bigl( \widetilde{v}, \, w_S^*(v) \bigr),
\end{equation}
where $\widetilde{v} = v - \eta \nabla_v \mathcal{L}_i(v, w_S^*(v))$ is the one-step global personalized messenger, and the local client loss $\mathcal{L}_i$ is evaluated on its private data $\mathcal{D}_i$. On the other hand, the server solves the  lower-level problem to obtain a high-quality student backbone $w_S$ that is aligned with both the frozen teacher and the downstream task, conditioned on the current global messenger $v$:
\begin{equation}
	w_S^*(v) = \arg\min\nolimits_{w_S} \mathcal{L}_{\rm KD}(w_S; v),
\end{equation}
where the TAKD loss is evaluated on the public data $\mathcal{D}_{\rm pub}$. The overall problem is the following personalized bi-level optimization:
\begin{equation}
	\begin{aligned}
		\min\nolimits_v\; \Phi(v)+\nicefrac{\epsilon}{2}\left\|v-v^{-}\right\|^2,			\text{ s.t. } w_S = \arg\min\nolimits_{w_S} \mathcal{L}_{\rm KD}(w_S; v). 
	\end{aligned}
\end{equation}

\subsection{Algorithm Abstraction}

We simply illustrate the algorithm for further convergence analysis. At the $  t  $-th round, the server maintains the global messenger $v^{(t)}$, the teacher backbone $w_T^{(t)}$, the student backbone $w_S^{(t)}$, and the global consensus variate $  {c}^{(t)}  $. A subset $  \mathcal{S}_t  $ of clients performs local updates in parallel. Each client $  i \in \mathcal{S}_t  $ initializes its local messenger as $  v_{i,0}^{(t)} = v^{(t)}  $ and retrieves its local consensus variate ${c}_i^{(t)}$.
Over $K$ local steps, client $  i  $ computes the personalized meta-gradient with one-step look-ahead:
\begin{equation}
	\widehat{g}_{i,k} = \nabla_v \mathcal{L}_i\bigl( \widetilde{v}_{i,k}^{(t)}, w_S^{(t)}; \xi_{i,k} \bigr),
\end{equation}
where $\widetilde{v}_{i,k}^{(t)} = v_{i,k}^{(t)} - \eta_{\rm inner} \nabla_v \mathcal{L}_i\bigl( v_{i,k}^{(t)}, w_S^{(t)}; \xi_{i,k} \bigr)$. The local messenger is then updated via the consensus-corrected rule:
\begin{equation}
	v_{i,k+1}^{(t)} \leftarrow v_{i,k}^{(t)} - \eta \Bigl[ \widehat{g}_{i,k} + \gamma \bigl( c^{(t)} - c_i^{(t)} \bigr) \Bigr]-\varepsilon \eta \bigl( v_{i,k}^{(t)} - v^{(t)} \bigr).
\end{equation}
After $  K  $ local steps, client $  i  $ computes the averaged displacement $\Delta_{i,t} := \nicefrac{1}{K\eta} \bigl( v^{(t)} - v_{i,K}^{(t)} \bigr)$
and refreshes its local consensus variate by the exponential moving average:
\begin{equation}
	{c}_i^{(t+1)} = \bigl[1 + \alpha(\gamma - 1)\bigr] {c}_i^{(t)} + \alpha \Delta_{i,t} - \alpha \gamma {c}^{(t)}.
\end{equation}
Only the compact increment $  \Delta {c}_i = {c}_i^{(t+1)} - {c}_i^{(t)}  $ is transmitted to the server.
Upon receiving $  \{\Delta {c}_i\}_{i \in \mathcal{S}_t}  $, the server exactly reconstructs the displacements $  \{\Delta_{i,t}\}  $ by inverting the EMA update. It then performs global aggregation and solves the lower-level KD problem:
\begin{equation}
	({c}^{(t+1)},v^{(t+1)}, w_{S}^{(t+1)} )\leftarrow\left\{
	\begin{aligned}
		& (1 - \alpha) {c}^{(t)} + \alpha \sum\nolimits_{i \in \mathcal{S}_t} p_i \Delta_{i,t},\\
		&   v^{(t)} - \tau K \eta \sum\nolimits_{i \in \mathcal{S}_t}p_i \Delta_{i, t},\\
		& \arg\min\nolimits_{w_{S}} \mathcal{L}_{\rm KD}\bigl(w_S ; v^{(t+1)}\bigr).
	\end{aligned}
	\right.
\end{equation}	
After $  T  $ communication rounds, the algorithm outputs the final global messenger $v^{(t)}$ and the student backbone $w_S^{(T)}$. The framework coordinates bi-level optimization through consensus-variate correction, enabling low-communication personalized adaptation while mitigating client drift under heterogeneous data distributions.

\subsection{Definitions}
We define the following auxiliary quantities used throughout the analysis:
\begin{itemize}[leftmargin=*]
	\item 	Client drift $  E_t  $ at round $  t  $:
	client drift quantifies the average squared deviation of each client’s local messenger trajectory from the global messenger during the $  K  $ local steps. It is defined as
	\begin{equation}
		E_t := \frac{1}{KN} \sum\nolimits_{k=0}^{K-1} \sum\nolimits_{i=1}^{N} \mathbb{E} \bigl\| v_{i,k}^{(t)} - v^{(t)} \bigr\|^2,
	\end{equation}	
	where $  v^{(t)}  $ is the global messenger and $  v_{i,k}^{(t)}  $ denotes the local messenger trajectory of client $  i  $ after $  k  $ local steps in round $  t  $.
	\item Local consensus lag $  C_t  $ at round $  t  $:
	local consensus lag measures the average squared difference between the local consensus variates and the true personalized meta-gradients evaluated at the global messenger. It is defined as
	\begin{equation}
		C_t := \frac{1}{N} \sum\nolimits_{i=1}^N \mathbb{E} \bigl\| c_i^{(t)} - {\nabla} \Phi \bigl({v}^{(t)},w_S(v^{(t)}) \bigr) \bigr\|^2.
	\end{equation}		
	\item Glocal consensus lag $  \widetilde{C}_t  $ at round $  t  $:
	global consensus lag measures the  deviation of the global consensus variate ${c}^{(t)}$
	from the meta-gradient of the upper-level objective evaluated at the current global messenger and the current student backbone. It is defined as
	\begin{equation}
		\widetilde{C}_t := \frac{1}{N} \sum\nolimits_{i=1}^N \mathbb{E} \bigl\| c^{(t)} - {\nabla} \Phi \bigl({v}^{(t)},w_S(v^{(t)}) \bigr) \bigr\|^2.
	\end{equation}		
	\item Joint quantity of interest:
	this quantity that jointly tracks upper-level meta-suboptimality and lower-level distillation error:
	\begin{equation}
		\Phi(v) - \Phi^* + \bigl\| w_S - w_S^*(v) \bigr\|^2,
	\end{equation}
	where $  w_S^*(v)  $ denotes the exact minimizer of the lower-level task-aware knowledge distillation loss $  \mathcal{L}_{\rm KD}(w_S; v)  $ for a fixed messenger $  v  $.
	\item Full practical meta-gradient: in CIDERS, the full practical meta-gadient at the local messenger $v_{i,k}$ with the current student backbone $w_S$ on the client $i$ and the $k$th local iteration can be computed via
	\begin{equation}
		\widetilde{g}_{i,k}({v}_{i,k}, w_{S}):= \nabla_v \mathcal{L}_i \Bigl( 
		{v}_{i,k} - \eta_{\rm inner} \nabla_v \mathcal{L}_i({v}_{i,k}, w_S), \, w_S \Bigr),
	\end{equation}
	where $w_S$ is the approximate student backbone currently maintained by the server.
	\item Stochastic practical meta-gradient estimator: CIDERS actually compute the meta-gradient at the local messenger $v_{i,k}$ with the current student backbone $w_S$ on the client $i$ and the $k$th local iteration via
	\begin{equation}
		\widehat{{g}}_{i,k}({v}_{i,k}, w_{S}) = \nabla_{v} \mathcal{L}_i ( \widetilde{v}_{i,k}, w_{S}; \xi_{i,k} ), 
	\end{equation} 	
	where the one-step look-ahead point is $\widetilde{v}_{i,k} = v_{i,k} - \eta_{\rm inner} \nabla_{v} \mathcal{L}_i \Bigl(v_{i,k}, w_{S}; \xi_{i,k} \Bigr)$, and it  satisfies $\mathbb{E}\bigl[ \widehat{{g}}_{i,k}({v}_{i,k}, w_{S})\bigr] 
	= 
	\widetilde{g}_{i,k}({v}_{i,k}, w_{S})$. 
	\item Ideal meta-gradient: this is the meta-gradient that would be obtained if the lower-level problem are solved exactly with respect to the current global messenger $v$:
	\begin{equation}
		{g}_{i,k}^{\rm ideal}(v_{i,k}, w_S^*(v)) := \nabla_v \mathcal{L}_i \Bigl( 
		v_{i,k} - \eta_{\rm inner} \nabla_v \mathcal{L}_i(v_{i,k}, w_S^*(v)), \, w_S^*(v) 
		\Bigr),
	\end{equation}
	where $  w_S^*(v)  $ denotes the exact minimizer of the lower-level task-aware knowledge distillation objective:$$w_S^*(v) := \arg\min_{w_S} \mathcal{L}_{\rm KD}(w_S; v).$$
\end{itemize}

\subsection{Assumptions}
We state the complete set of assumptions for the convergence analysis. 
\begin{assumption}[Joint Smoothness]
	For each client $  i = 1, \dots, N  $, the loss function $  \mathcal{L}_i(v, w_S)  $ is jointly $  L  $-smooth with respect to the pair of variables $  (v, w_S)  $. That is, for any $  (v, w_S)  $ and $  (v', w_S')  $,
	\begin{equation}
		\|\nabla \mathcal{L}_i(v, w_S) - \nabla \mathcal{L}_i(v', w_S')\| \le L \bigl( \|v - v'\| + \|w_S - w_S'\| \bigr).
	\end{equation}
	Consequently, the upper-level meta-objective $  \Phi(v)  $ is $  \beta  $-smooth, where the effective smoothness constant $  \beta  $ is given by 
	\begin{equation}
		\beta = L(1 + \eta_{\rm inner} L)^2 + \eta_{\rm inner} \rho,
	\end{equation}
	and $  \rho  $ denotes the Lipschitz constant of the Hessian of $  \mathcal{L}_i  $ with respect to the messenger $  v  $. In addition, we assume the lower-level TAKD objective $  \mathcal{L}_{\rm KD}(w_S; v)  $ is $  L_y  $-smooth with respect to  $  w_S  $.
\end{assumption}

\begin{assumption}[Polyak-\L{}ojasiewicz Inequality]
	The upper-level objective satisfies
	$\|\nabla \Phi(v)\|^2 \ge 2\mu \bigl( \Phi(v) - \Phi^* \bigr)$
	for some $  \mu > 0  $.
\end{assumption}

\begin{assumption}[Meta-Gradient Dissimilarity]
	There exist constants $  G, B \geq 0  $ such that, for any local mesenger $v_{i,k}$ on the client $i$ $k$th local iteration, global messenger $v  $ and current student backbone $  w_{S}  $,
	\begin{equation}
		\mathbb{E} \bigl\| \widetilde{g}_{i,k}({v}_{i,k}, w_{S})- \nabla\Phi(v) \bigr\|^2  
		\leq G^2 + B^2 \|\nabla\Phi(v)\|^2.
	\end{equation}
\end{assumption}

\begin{assumption}[Gradient Heterogeneity]
	There exists $  \zeta \ge 0  $ such that
	$\nicefrac{1}{N} \sum\nolimits_{i=1}^N \bigl\| \nabla_v \mathcal{L}_i(v, w_S) - \nabla \Phi(v) \bigr\|^2 \le \zeta^2.$
\end{assumption}

\begin{assumption}[Gradient Variance]
	The stochastic meta-gradients used in the upper-level client updates have bounded variance for a constant $  \zeta^2 \ge 0  $ such that:
	\begin{equation}
		\mathbb{E}_{\xi_{i,k}}  \bigl\| \widetilde{g}_{i,k}\bigl(v_{i,k}^{(t)}, w_S^{(t)}; \xi_{i,k}\bigr) - \nabla_v \mathcal{L}_i\bigl(v_{i,k}^{(t)}, w_S^{(t)}\bigr) \bigr\|^2 \le \zeta^2,
	\end{equation}
	Also for the lower-level TAKD objective, there exists a constant $  \sigma_{\rm KD}^2 \ge 0  $ such that
	\begin{equation}
		\mathbb{E}_{\xi}  \bigl\| \nabla_{w_S} \mathcal{L}_{\rm KD}(w_S; v; \xi) - \nabla_{w_S} \mathcal{L}_{\rm KD}(w_S; v) \bigr\|^2  \le \sigma_{\rm KD}^2.
	\end{equation}
\end{assumption}

\begin{assumption}[Bounded gradients and Hessian]
	There exist nonnegative constants $  G, B  $ such that the following bounds hold uniformly for all clients and for all messenger and backbone parameters $  v, w_S  $, i.e., Gradient bound $\|\nabla_v \mathcal{L}_i(v, w_S)\| \le G$, and Hessian norm bound $\|\nabla_v^2 \mathcal{L}_i(v, w_S)\| \le B.$
\end{assumption}

\begin{assumption}[Hessian Lipschitz Continuity]
	There exists a nonnegative constant $  H  $ such that, for each client $i$ and for any two pairs of parameters $  (v, w_S)  $ and $  (v', w_S')  $, i.e., 
	$\| \nabla_v^2 \mathcal{L}_i(v, w_S) - \nabla_v^2 \mathcal{L}_i(v', w_S') \| \le H ( \|v - v'\| + \|w_S - w_S'\| ).$
\end{assumption}

\subsection{Basic Lemmas}

\begin{lemma}[Personalized Meta-Gradient Dissimilarity]\label{lemma:a}
	Under Assps. A1 and A5, for any client $  i  $, messenger $  v  $, and  student backbone  $  w_S  $, let $  g_i(v) = \nabla_v \mathcal{L}_i(v, w_S)  $ denote the primal gradient. Then,
	\begin{equation}
		\left\|\widetilde{g}_i(v)-g_i(v)\right\|^2 \leq L^2 \eta_{\text {inner }}^2\left\|g_i(v)\right\|^2, 
	\end{equation}
	where $  \beta  $ is the effective smoothness constant of the upper-level meta-objective defined in Assps. A1.
\end{lemma}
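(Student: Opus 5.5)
We read $\widetilde{g}_i(v)$ as the full practical meta-gradient from the Definitions, evaluated at the local point $v$ with the backbone $w_S$ held fixed:
\[
\widetilde{g}_i(v)=\nabla_v\mathcal{L}_i\bigl(v-\eta_{\rm inner}g_i(v),\,w_S\bigr)=g_i\bigl(v-\eta_{\rm inner}g_i(v)\bigr).
\]
This is the first-order look-ahead gradient that CIDERS actually computes; no Hessian factor appears. With this reading, the lemma is a single application of the Lipschitz property of $g_i$ between the look-ahead point and the base point.

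\textbf{Step 1 (reduce to a one-variable Lipschitz bound).} Because $w_S$ is the same in both gradients, Assumption~1 (joint $L$-smoothness) with $w_S'=w_S$ gives
\[
\|g_i(v)-g_i(v')\|\le L\|v-v'\|\qquad\text{for all } v,v'.
\]
So $g_i(\cdot)$ is $L$-Lipschitz in the messenger alone.

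\textbf{Step 2 (apply it at the look-ahead point).} Take $v'=v-\eta_{\rm inner}g_i(v)$. Then
\[
\|\widetilde{g}_i(v)-g_i(v)\|\le L\,\|v'-v\|=L\,\eta_{\rm inner}\|g_i(v)\|.
\]
Squaring both sides gives the stated inequality. Assumption~5 (gradient variance) is only needed if one wants the statement for the stochastic estimator $\widehat{g}_{i,k}$ rather than the full practical meta-gradient. In that case, apply the same argument pathwise with the mini-batch gradient $\nabla_v\mathcal{L}_i(\cdot\,;\xi)$, which is also $L$-smooth, and then take expectations using $\mathbb{E}[\widehat{g}_{i,k}]=\widetilde{g}_{i,k}$. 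The constant $\beta$ named in the lemma does not enter the bound; it is only needed later, when the lemma is combined with the descent on $\Phi$.

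\textbf{Main obstacle.} The delicate point is how $\widetilde{g}_i$ is interpreted. If one instead meant the exact MAML meta-gradient
\[
(I-\eta_{\rm inner}\nabla^2_v\mathcal{L}_i)\,g_i(v-\eta_{\rm inner}g_i(v)),
\]
an extra term $\eta_{\rm inner}\nabla_v^2\mathcal{L}_i\,g_i(\cdot)$ appears. Bounding it would use Assumption~6, and the clean constant $L^2\eta_{\rm inner}^2$ would be lost; one would get roughly $(L+B(1+\eta_{\rm inner}L))^2\eta_{\rm inner}^2\|g_i\|^2$. Since CIDERS uses the FO surrogate by default, I would state explicitly that $\widetilde{g}_i$ denotes the first-order look-ahead gradient, and remark that the FD variant adds an $O(\eta_{\rm inner}B)$ correction.
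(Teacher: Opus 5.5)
Your proof is correct and matches the paper's argument. The paper also reads $\widetilde{g}_i(v)$ as $\nabla_v\mathcal{L}_i(v-\eta_{\rm inner}g_i(v),w_S)$, fixes $w_S$ in the joint $L$-smoothness assumption to obtain $L$-Lipschitzness of $\nabla_v\mathcal{L}_i(\cdot,w_S)$, applies it between $v$ and the look-ahead point, and squares, with Assumption A5 and $\beta$ playing no role.
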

\begin{proof}
	By Assumption A1 the client loss $L_i$ is jointly $L$-smooth in the pair $\left(v, w_S\right)$. Fixing  $w_S$, it follows  that $L_i\left(\cdot, w_S\right)$ is $L$-smooth in $v$, i.e., $v \mapsto \nabla_v L_i\left(v, w_S\right)$ is $L$-Lipschitz continuous. Hence, we have
	\begin{equation}
		\left\|\nabla_v L_i\left(v-\eta_{\text {inner }} g_i(v), w_S\right)-\nabla_v L_i\left(v, w_S\right)\right\| \leq L\left\|\left(v-\eta_{\text {inner }} g_i(v)\right)-v\right\|\leq L \eta_{\text {inner }}\left\|g_i(v)\right\| .
	\end{equation}
	This completes the proof. 		
\end{proof}

\begin{lemma}[Upper-level Gap Inequality]\label{lemma:2beta}
	Under A2 and A3, for any messenger parameter $  v  $ and any admissible student backbone parameter $  w_S  $,
	\begin{equation}
		\frac{1}{N} \sum_{i=1}^N \mathbb{E} \|\widetilde{g}_i\left(v, w_S\right)\|^2 \leq  (8 L^2 \eta_{\text {inner }}^2+4) \zeta^2+(16 L^2 \eta_{\text {inner }}^2\beta+4\beta)\left(\Phi(v)-\Phi^*\right),
	\end{equation}
	where $  \Phi^* = \min_v \Phi(v)  $. 
\end{lemma}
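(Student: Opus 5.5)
The proof uses a three-way splitting of $\widetilde g_i(v,w_S)$ around $\nabla\Phi(v)$ rather than around $g_i(v)$. The constants are tuned to that choice: the naive route of first bounding $\|\widetilde g_i\|^2\le 2\|g_i\|^2+2\|\widetilde g_i-g_i\|^2$ doubles the $\nabla\Phi$ contribution and yields $8\beta$ instead of $4\beta$. The quantities in the statement are deterministic given $(v,w_S)$, so the expectation can be carried along trivially. Throughout I write $g_i:=g_i(v)=\nabla_v\mathcal{L}_i(v,w_S)$ and $\widetilde g_i:=\widetilde g_i(v,w_S)$ as in Lemma~\ref{lemma:a}.

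\textbf{Step 1 (outer split).} By $\|a+b\|^2\le 2\|a\|^2+2\|b\|^2$,
\begin{equation*}
\|\widetilde g_i\|^2\le 2\|\nabla\Phi(v)\|^2+2\|\widetilde g_i-\nabla\Phi(v)\|^2 .
\end{equation*}
The first term is handled by the standard consequence of $\beta$-smoothness of $\Phi$ (Assumption A1) together with $\Phi^*=\min\Phi$: applying the descent lemma at $v-\frac{1}{\beta}\nabla\Phi(v)$ gives
\begin{equation*}
\|\nabla\Phi(v)\|^2\le 2\beta\bigl(\Phi(v)-\Phi^*\bigr).
\end{equation*}
Hence $2\|\nabla\Phi(v)\|^2\le 4\beta(\Phi(v)-\Phi^*)$, which is exactly the $4\beta$ term in the statement.

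\textbf{Step 2 (inner split).} Next I write $\widetilde g_i-\nabla\Phi=(g_i-\nabla\Phi)+(\widetilde g_i-g_i)$ and split again, using Lemma~\ref{lemma:a} on the second piece:
\begin{equation*}
\|\widetilde g_i-\nabla\Phi\|^2\le 2\|g_i-\nabla\Phi\|^2+2L^2\eta_{\rm inner}^2\|g_i\|^2 .
\end{equation*}
I then re-center $g_i$ at $\nabla\Phi$:
\begin{equation*}
\|g_i\|^2\le 2\|g_i-\nabla\Phi\|^2+2\|\nabla\Phi\|^2 .
\end{equation*}
Averaging over $i$ and invoking the gradient-heterogeneity Assumption A4, $\frac1N\sum_i\|g_i-\nabla\Phi\|^2\le\zeta^2$, gives
\begin{equation*}
\frac1N\sum_i 2\|\widetilde g_i-\nabla\Phi\|^2\le 4\zeta^2+4L^2\eta_{\rm inner}^2\bigl(2\zeta^2+2\|\nabla\Phi\|^2\bigr).
\end{equation*}
Using the Step 1 bound once more on $\|\nabla\Phi\|^2$, this becomes
\begin{equation*}
(4+8L^2\eta_{\rm inner}^2)\zeta^2+16L^2\eta_{\rm inner}^2\beta\bigl(\Phi(v)-\Phi^*\bigr).
\end{equation*}

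\textbf{Step 3 (combine).} Adding the contribution $4\beta(\Phi(v)-\Phi^*)$ from Step 1 yields exactly
\begin{equation*}
(8L^2\eta_{\rm inner}^2+4)\zeta^2+(16L^2\eta_{\rm inner}^2\beta+4\beta)\bigl(\Phi(v)-\Phi^*\bigr),
\end{equation*}
and taking expectations preserves the inequality.

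\textbf{Main obstacle.} The delicate point is justifying the two inputs tied to $\Phi$. The first is $\|\nabla\Phi\|^2\le 2\beta(\Phi-\Phi^*)$, which needs $\Phi$ globally $\beta$-smooth and bounded below; I take both from Assumption A1, with $\Phi^*$ finite as implicit in A2. The second is Assumption A4, which compares $\nabla_v\mathcal{L}_i(v,w_S)$ at an arbitrary admissible $w_S$ with $\nabla\Phi(v)$ (defined through $w_S^*(v)$). I will use A4 exactly as worded, for all admissible $w_S$; this is what makes the bound hold for the approximate backbone without an extra $\|w_S-w_S^*(v)\|$ term. The PL condition A2 is not needed for this inequality itself.
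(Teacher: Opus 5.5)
Your proof is correct and follows the paper's argument step for step: you use the same outer split of $\widetilde g_i$ around $\nabla\Phi(v)$, the same inner split through Lemma~\ref{lemma:a}, the same re-centering of $g_i$ with the heterogeneity bound, and the same inequality $\|\nabla\Phi\|^2\le 2\beta(\Phi-\Phi^*)$, and you arrive at identical constants. You are also right that the argument actually relies on A1 and A4 rather than the PL condition A2; the paper's proof uses exactly these ingredients, even though the lemma's stated hypotheses are A2 and A3.
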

\begin{proof}
	Fix an arbitrary messenger parameter (v) and an admissible student backbone parameter $w_S$. For notational convenience denote $\widetilde{g}_i:=\widetilde{g}_i\left(v, w_S\right)$ and $g_i:=\nabla_v L_i\left(v, w_S\right)$. All expectations are taken with respect to any stochasticity appearing in the gradient estimators. We first perform the decomposition via the triangle inequality: 
	\begin{equation}
		\begin{aligned}
			\frac{1}{N} \sum\nolimits_{i=1}^N& \mathbb{E}\left\|\widetilde{g}_i\right\|^2 \leq 2 \cdot \frac{1}{N} \sum\nolimits_{i=1}^N \mathbb{E}\left\|\widetilde{g}_i-\nabla \Phi(v)\right\|^2+2\|\nabla \Phi(v)\|^2\\
			&\leq \frac{4}{N} \sum\nolimits_{i=1}^N \mathbb{E}\left\|\widetilde{g}_i-g_i\right\|^2+\frac{4}{N} \sum\nolimits_{i=1}^N \mathbb{E}\left\|g_i-\nabla \Phi(v)\right\|^2+2\|\nabla \Phi(v)\|^2\\
			&\leq L^2 \eta_{\text {inner }}^2 \cdot \frac{4}{N} \sum\nolimits_{i=1}^N E\left\|g_i\right\|^2+4 \zeta^2+2\|\nabla \Phi(v)\|^2\\
			&\leq L^2 \eta_{\text {inner }}^2 \cdot \frac{8}{N} \sum\nolimits_{i=1}^N  \left\|g_i-\nabla \Phi(v)\right\|^2+8L^2 \eta_{\text {inner }}^2\|\nabla \Phi(v)\|^2 +4 \zeta^2+2\|\nabla \Phi(v)\|^2\\ 
			&\leq (8 L^2 \eta_{\text {inner }}^2+4) \zeta^2+(8 L^2 \eta_{\text {inner }}^2+2)\|\nabla \Phi(v)\|^2\\
			&\leq  (8 L^2 \eta_{\text {inner }}^2+4) \zeta^2+(16 L^2 \eta_{\text {inner }}^2\beta+4\beta)\left(\Phi(v)-\Phi^*\right)
		\end{aligned}
	\end{equation}
	where we have performed the decomposition of the dissimilarity term in the second inequality, i.e., $\left\|\widetilde{g}_i-\nabla \Phi(v)\right\|^2 \leq 2\left\|\widetilde{g}_i-g_i\right\|^2+2\left\|g_i-\nabla \Phi(v)\right\|^2$, decomposed via $\left\|g_i\right\|^2 \leq 2\left\|g_i-\nabla \Phi(v)\right\|^2+2\|\nabla \Phi(v)\|^2$ in the fourth inequality, and we have used $\|\nabla \Phi(v)\|^2 \leq 2 \beta(\Phi(v)-\Phi^*)$ in the last inequality.

\end{proof}

\begin{lemma}[Lower-level Approximation Dynamics]\label{lemma:average}
	Suppose it satisfies Assp. A1,  after performing $  E  $ steps of gradient descent on the lower-level task-aware knowledge distillation objective with fixed messenger parameter $  v^{(t)}  $, the expected squared distance to the exact minimizer satisfies
	\begin{equation}
		E\|w_S^{(t)}-w_S^*(v^{(t)})\|^2 \leq 2(1-\mu \eta_{\mathrm{KD}})^E E\|w_S^{(t-1)}-w_S^*(v^{(t-1)})\|^2+\frac{\eta_{\mathrm{KD}} \sigma_{\mathrm{KD}}^2}{\mu}+8 R_w^2(1-\mu \eta_{\mathrm{KD}})^E. 
	\end{equation}			
\end{lemma}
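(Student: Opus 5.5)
The plan is a standard SGD contraction under strong convexity, followed by a triangle-type split that accounts for the moving target $w_S^*(v^{(t)})$ across rounds. Although the statement cites only Assp. A1, the rate $(1-\mu\eta_{\mathrm{KD}})^E$ forces us to read $\mu$ as a strong-convexity (or PL) modulus of $\mathcal{L}_{\mathrm{KD}}(\cdot;v)$ in $w_S$. We therefore assume $\mathcal{L}_{\mathrm{KD}}(\cdot;v)$ is $\mu$-strongly convex and $L_y$-smooth, with step size $\eta_{\mathrm{KD}}\le 1/L_y$. The stochastic gradients are taken to have the variance bound $\sigma_{\mathrm{KD}}^2$ of Assp. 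A5. Finally, $R_w$ denotes a uniform bound on the minimizer, $\|w_S^*(v)\|\le R_w$ for all $v$, or equivalently a diameter bound for the admissible backbone set.

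\emph{Step 1 (one-step contraction).} Fix $v=v^{(t)}$ and write $y_e$ for the $e$-th inner iterate, with $y_0=w_S^{(t-1)}$ (warm start) and $y_E=w_S^{(t)}$. Expand $\|y_{e+1}-w_S^*(v)\|^2$ and take the conditional expectation. Strong convexity and smoothness give $\langle\nabla\mathcal{L}_{\mathrm{KD}}(y_e),y_e-w_S^*\rangle\ge \mu\|y_e-w_S^*\|^2+\tfrac{1}{L_y}\|\nabla\mathcal{L}_{\mathrm{KD}}(y_e)\|^2$ (up to the usual constants). With $\eta_{\mathrm{KD}}\le 1/L_y$ this yields
\[\mathbb{E}\|y_{e+1}-w_S^*(v)\|^2\le(1-\mu\eta_{\mathrm{KD}})\mathbb{E}\|y_e-w_S^*(v)\|^2+\eta_{\mathrm{KD}}^2\sigma_{\mathrm{KD}}^2 .\]

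\emph{Step 2 (unrolling).} Iterate Step 1 over $E$ steps and bound the geometric sum by $\sum_{e}(1-\mu\eta_{\mathrm{KD}})^e\eta_{\mathrm{KD}}^2\sigma^2_{\mathrm{KD}}\le \eta_{\mathrm{KD}}\sigma_{\mathrm{KD}}^2/\mu$. This gives
\[\mathbb{E}\|w_S^{(t)}-w_S^*(v^{(t)})\|^2\le(1-\mu\eta_{\mathrm{KD}})^E\,\mathbb{E}\|w_S^{(t-1)}-w_S^*(v^{(t)})\|^2+\frac{\eta_{\mathrm{KD}}\sigma_{\mathrm{KD}}^2}{\mu}.\]

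\emph{Step 3 (changing the target).} The initial error is measured against $w_S^*(v^{(t)})$, but the induction quantity uses $w_S^*(v^{(t-1)})$. Apply $\|a+b\|^2\le 2\|a\|^2+2\|b\|^2$:
\[\|w_S^{(t-1)}-w_S^*(v^{(t)})\|^2\le 2\|w_S^{(t-1)}-w_S^*(v^{(t-1)})\|^2+2\|w_S^*(v^{(t-1)})-w_S^*(v^{(t)})\|^2 .\]
The second term is at most $2(2R_w)^2=8R_w^2$ by the uniform bound on minimizers. Substituting into Step 2 gives exactly the stated inequality, with factor $2(1-\mu\eta_{\mathrm{KD}})^E$ and additive $8R_w^2(1-\mu\eta_{\mathrm{KD}})^E$.

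The main obstacle is Step 3, specifically justifying the $8R_w^2$ term from the stated assumptions. A1 alone does not control the drift of $w_S^*(v)$. One option is a Lipschitz-minimizer argument via the implicit function theorem, giving $\|w_S^*(v)-w_S^*(v')\|\le (L/\mu)\|v-v'\|$, which would produce a sharper term proportional to $\|v^{(t)}-v^{(t-1)}\|^2$. However, matching the stated constant requires only boundedness, so I would state the boundedness hypothesis $\|w_S^*(\cdot)\|\le R_w$ explicitly. I would also make explicit the strong-convexity modulus $\mu$ of the lower-level objective, as distinct from the upper-level PL constant.
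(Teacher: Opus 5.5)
Your proposal is correct and follows essentially the same route as the paper. Both arguments use an SGD contraction for the $\mu$-strongly convex, $L_y$-smooth TAKD loss with $\eta_{\mathrm{KD}}\le 1/L_y$, unroll it with the geometric sum bounded by $1/(\mu\eta_{\mathrm{KD}})$, and then change the target via $\|a+b\|^2\le 2\|a\|^2+2\|b\|^2$ together with $\|w_S^*(v)\|\le R_w$ to get the factor $2$ and the $8R_w^2$ term. The paper's proof likewise invokes strong convexity and the bound $R_w$, neither of which is provided by Assp.~A1.

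One small fix is needed in Step~1. The inner-product inequality you display is too strong as written; it fails for $\tfrac{\mu}{2}\|w\|^2$ with $L_y=\mu$. The standard version $\langle\nabla\mathcal{L}_{\mathrm{KD}}(y),y-w_S^*\rangle\ge\tfrac{\mu}{2}\|y-w_S^*\|^2+\tfrac{1}{2L_y}\|\nabla\mathcal{L}_{\mathrm{KD}}(y)\|^2$ gives exactly your Step~1 contraction.
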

\begin{proof}
	We now explain the proof  by starting from  the deterministic contraction, which can be the foundational inequality. Specifically, 
	because $L_{\mathrm{KD}}\left(\cdot ; v^{(t)}\right)$ is $L_y$-smooth and $\mu$-strongly convex, any deterministic gradient step with step size $\eta_{\mathrm{KD}} \leq 1 / L_y$ satisfies the contraction
	\begin{equation}\label{eq:tmp25}
		\|w_S-\eta_{\mathrm{KD}} \nabla_{w_S} L_{\mathrm{KD}}(w_S ; v^{(t)})-w_S^*(v^{(t)})\|^2 \leq(1-\mu \eta_{\mathrm{KD}})\|w_S-w_S^*(v^{(t)})\|^2.
	\end{equation}

	Next, we consider the one-step recursion with the SGD for the KD objective. Consider the SGD step by adding and subtracting the true gradient:
	\begin{equation}
		w_S^{k+1}-w_S^*(v^{(t)})=(w_S^k-\eta_{\mathrm{KD}} \nabla_{w_S} L_{\mathrm{KD}}(w_S^k ; v^{(t)})-w_S^*(v^{(t)}))-\eta_{\mathrm{KD}}(g(w_S^k ; \xi_k)-\nabla_{w_S} L_{\mathrm{KD}}(w_S^k ; v^{(t)})).
	\end{equation}		
	We take the $L_2-$norm and the conditional expectation with respect to the current mini-batch $\xi_k$ conditioned on all previous randomness, with the notice that the cross term vanishes by unbiasedness of the stochastic gradient. This immediately yields
	\begin{equation}\label{eq:tmp26}
		\mathbb{E}\|w_S^{k+1}-w_S^*(v^{(t)})\|^2 \mid \text { history up to } k \leq\|w_S^k-\eta_{\mathrm{KD}} \nabla_{w_S} L_{\mathrm{KD}}(w_S^k ; v^{(t)})-w_S^*(v^{(t)})\|^2+\eta_{\mathrm{KD}}^2 \sigma_{\mathrm{KD}}^2.
	\end{equation}
	Subsequently, by inserting the deterministic contraction from (\ref{eq:tmp25}) into (\ref{eq:tmp26}) produces the fundamental one-step inequality
	\begin{equation}\label{eq:tmp27}
		\mathbb{E}\|w_S^{k+1}-w_S^*(v^{(t)})\|^2 \mid \text { history up to } k \leq(1-\mu \eta_{\mathrm{KD}})\|w_S^k-w_S^*(v^{(t)})\|^2+\eta_{\mathrm{KD}}^2 \sigma_{\mathrm{KD}}^2.
	\end{equation}
	Unrolling the recurrence \eqref{eq:tmp27} over $E$ steps
	by taking the unconditional expectation  yields the linear recurrence
	\begin{equation}
		\mathbb{E}\|w_S^E-w_S^*(v^{(t)})\|^2 \leq(1-\mu \eta_{\mathrm{KD}})^E \mathbb{E}\|w_S^0-w_S^*(v^{(t)})\|^2+\eta_{\mathrm{KD}}^2 \sigma_{\mathrm{KD}}^2 \sum\nolimits_{j=0}^{E-1}(1-\mu \eta_{\mathrm{KD}})^j .
	\end{equation}		
	Under the assumption that $0<\mu \eta_{\mathrm{KD}}<1$, the geometric sum is bounded by	$\sum_{j=0}^{E-1}(1-\mu \eta_{\mathrm{KD}})^j=\nicefrac{(1-(1-\mu \eta_{\mathrm{KD}})^E)}{\mu \eta_{\mathrm{KD}}} \leq \nicefrac{1}{\mu \eta_{\mathrm{KD}}} .$		Therefore, we have
	\begin{equation}\label{eq:tmp30}
		E\|w_S^{(t)}-w_S^*(v^{(t)})\|^2 \leq(1-\mu \eta_{\mathrm{KD}})^E E\|w_S^{(t-1)}-w_S^*(v^{(t)})\|^2+\frac{\eta_{\mathrm{KD}} \sigma_{\mathrm{KD}}^2}{\mu} .
	\end{equation}		
	At this point we have a bound expressed in terms of the distance to the current minimizer $w_S^*\left(v^{(t)}\right)$, but the warm-start $w_S^{(t-1)}$ was produced with respect to the previous minimizer. By performing the decomposition, we can simply handle the change of target minimizer with the  bi-level coupling
	\begin{equation}\label{eq:tmp28}
		\begin{aligned}
			\|w_S^{(t-1)}-w_S^*(v^{(t)})\|^2 &\leq 2\|w_S^{(t-1)}-w_S^*(v^{(t-1)})\|^2+2\|w_S^*(v^{(t-1)})-w_S^*(v^{(t)})\|^2\\
			&\leq 2 E\left\|w_S^{(t-1)}-w_S^*\left(v^{(t-1)}\right)\right\|^2+8 R_{w}^2
		\end{aligned}
	\end{equation}		
	where for brevity, we assume the uniform boundedness  $\left\|w_S^*(v)\right\| \leq R_w$ that is valid for every messenger, and it converts the movement of the target into an additive term that still contracts geometrically. Finally, we substitute the inequality \eqref{eq:tmp28} into \eqref{eq:tmp30}:
	\begin{equation}
		E\|w_S^{(t)}-w_S^*(v^{(t)})\|^2 \leq 2(1-\mu \eta_{\mathrm{KD}})^E E\|w_S^{(t-1)}-w_S^*(v^{(t-1)})\|^2+\frac{\eta_{\mathrm{KD}} \sigma_{\mathrm{KD}}^2}{\mu}+8 R_w^2(1-\mu \eta_{\mathrm{KD}})^E,
	\end{equation}
	which completes the proof. 
\end{proof}

\subsection{Key Lemmas}

With client drift and consensus lag defined, we now present four lemmas that analyze the behavior of the PERCE algorithm. These lemmas address the variance of the server update, the evolution of consensus lag, the bounding of client drift, and the progress made per round.

\begin{lemma}[Local Consensus Deviation]\label{lemma:controldeviation}
	For all $  t \geq 0  $ and all clients $  i  $, there exist positive values $(\widetilde{G}^2,\widetilde{B}^2)$ such that 
	\begin{equation}\label{eq:tmp12}
		\begin{aligned}
			\mathbb{E}\|{c}_i^{(t)} - \nabla\Phi(v^{(t)})\|^2& \leq\widetilde{G}^2+\widetilde{B}^2\left\|\nabla\Phi\left(v^{(t)}\right)\right\|^2
		\end{aligned}
	\end{equation}		
\end{lemma}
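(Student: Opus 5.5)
The plan is an induction on $t$ that exploits the affine structure of the EMA update for $c_i^{(t)}$. First rewrite the local consensus recursion as a contraction toward a moving target. From the update,
\begin{equation*}
c_i^{(t+1)}-\nabla\Phi(v^{(t+1)}) = (1-\alpha)\bigl(c_i^{(t)}-\nabla\Phi(v^{(t)})\bigr)+\alpha\bigl(\Delta_{i,t}-\nabla\Phi(v^{(t)})\bigr)+\alpha\gamma\bigl(c_i^{(t)}-c^{(t)}\bigr)+\bigl(\nabla\Phi(v^{(t)})-\nabla\Phi(v^{(t+1)})\bigr).
\end{equation*}
This separates the error into four pieces: the contracted previous error, the displacement error $\Delta_{i,t}-\nabla\Phi(v^{(t)})$, the consensus gap $c_i^{(t)}-c^{(t)}$, and the drift of the target.

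The second step bounds each piece. The displacement is $\Delta_{i,t}=\frac{1}{K}\sum_k\rho^{K-1-k}\bigl[\widehat g_{i,k}+\gamma(c^{(t)}-c_i^{(t)})\bigr]$ with $\rho=1-\eta\varepsilon\le 1$. Its deviation from $\nabla\Phi(v^{(t)})$ is then controlled by three ingredients:
\begin{itemize}
\item the stochastic variance (Assumption~A5);
\item the meta-gradient dissimilarity $\mathbb{E}\|\widetilde g_{i,k}-\nabla\Phi(v)\|^2\le G^2+B^2\|\nabla\Phi(v)\|^2$ (Assumption~A3);
\item the proximal shrinkage $(1-\sigma_K)\|\nabla\Phi\|$, plus again a multiple of $\gamma\|c^{(t)}-c_i^{(t)}\|$.
\end{itemize}
The target drift follows from $\beta$-smoothness, $\|\nabla\Phi(v^{(t+1)})-\nabla\Phi(v^{(t)})\|\le\beta\tau K\eta\|\sum_j p_j\Delta_{j,t}\|$. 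Here the average displacement is bounded by Assumption~A6, which gives $\|\widehat g\|\le G$ uniformly, together with the same consensus gaps. For the consensus gap itself, use $\|c_i-c\|^2\le 2\|c_i-\nabla\Phi\|^2+2\|c-\nabla\Phi\|^2$. The global variate $c^{(t)}$ satisfies an analogous EMA $(1-\alpha)c^{(t)}+\alpha\bar\Delta_t$, so $\widetilde C_t$ obeys a parallel recursion. The cleanest route is therefore to prove the claim jointly for $c_i$ and $c$, since $\bar c^{(t)}=c^{(t)}$ as shown in the residual-recurrence derivation.

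The third step closes the induction. Apply Young's inequality $\|(1-\alpha)a+\alpha b\|^2\le(1-\alpha)\|a\|^2+\alpha\|b\|^2$, which holds by convexity, to the EMA form. Absorb the target-drift and $\gamma$-terms into an effective contraction factor $1-\alpha+\alpha\gamma c_1+c_2\tau K\eta\beta$. Under the moderate conditions on $\gamma$ and $\eta$, this factor is $\le 1-\alpha/2$. One then obtains
\begin{equation*}
D_{t+1}\le(1-\tfrac{\alpha}{2})D_t+\alpha\bigl(c_3G^2+c_4B^2\|\nabla\Phi(v^{(t)})\|^2+c_5\zeta^2\bigr),
\end{equation*}
where $D_t:=\max\{\mathbb{E}\|c_i^{(t)}-\nabla\Phi(v^{(t)})\|^2,\widetilde C_t\}$. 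Unrolling gives a bound by $D_0$ plus $2(c_3G^2+c_5\zeta^2)$ plus a weighted average of past $\|\nabla\Phi\|^2$.

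The main obstacle is that the statement asks for a bound in terms of the current $\|\nabla\Phi(v^{(t)})\|^2$ only, whereas the unrolled recursion involves past gradient norms. I would first use Assumption~A6 to bound $\|\nabla\Phi\|\le G(1+\eta_{\rm inner}B)$ uniformly. This turns all past-gradient terms into constants absorbed into $\widetilde G^2$, along with $D_0$ from the cold start $c_i^{(0)}=c^{(0)}=0$, giving $D_0\le\|\nabla\Phi(v^{(0)})\|^2$. The constant $\widetilde B^2$ then carries the current-gradient dependence. If a sharper $\widetilde B$ is wanted, the fallback is to relate past and current gradients through the bounded per-round movement $\|v^{(t)}-v^{(s)}\|\le\tau K\eta(t-s)G'$, which the geometric weights $(1-\alpha/2)^{t-s}$ make summable.
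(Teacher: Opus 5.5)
\paragraph{The gap: a missed cancellation in the local EMA.} Your contraction step needs a hypothesis the lemma does not have. When you substitute the telescoped displacement $\Delta_{i,t}=\overline{\widetilde g}_i^{(t)}+\gamma\bigl(c^{(t)}-c_i^{(t)}\bigr)+r_{\mathrm{prox},i}^{(t)}$ into $c_i^{(t+1)}=[1+\alpha(\gamma-1)]c_i^{(t)}+\alpha\Delta_{i,t}-\alpha\gamma c^{(t)}$, every $\gamma$-term drops out. What remains is $c_i^{(t+1)}=(1-\alpha)c_i^{(t)}+\alpha\bigl(\overline{\widetilde g}_i^{(t)}+r_{\mathrm{prox},i}^{(t)}\bigr)$.

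In your $\rho$-weighted form, two terms meet: the $\alpha\gamma(c_i^{(t)}-c^{(t)})$ in your first display, and the $\alpha\sigma_K\gamma(c^{(t)}-c_i^{(t)})$ hidden inside $\Delta_{i,t}$. They combine to $\alpha\gamma(1-\sigma_K)(c_i^{(t)}-c^{(t)})$. This is zero for $\varepsilon=0$ and $O(\alpha\gamma\eta\varepsilon K)$ otherwise.

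The paper's proof rests on this identity. The error $e_i^{(t)}=c_i^{(t)}-\nabla\Phi(v^{(t)})$ then obeys $e_i^{(t+1)}=(1-\alpha)e_i^{(t)}+\alpha\delta_i^{(t)}+\Delta\Phi^{(t)}$, with no coupling to $c^{(t)}$. After Young's inequality the contraction factor is $(1-\alpha)^2+(1+\alpha)(1-\alpha)\varepsilon_1$, which is below one for a small Young parameter, whatever $\gamma$ is.

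You bound the two $\gamma$-pieces separately. Your step $1-\alpha+\alpha\gamma c_1+\dots\le 1-\alpha/2$ then forces $\gamma\le 1/(2c_1)$. The lemma carries no such restriction: the paper allows $\gamma\in[0,1]$, and Protocol C uses $\gamma_t$ up to $0.9$. Your route also needs two things the cancellation makes unnecessary:
\begin{itemize}
\item the joint induction with $\widetilde C_t$;
\item the identity $\bar c^{(t)}=c^{(t)}$, which needs full participation and $\bar c^{(0)}=c^{(0)}$. Neither holds in the convergence section, where partial participation is allowed.
\end{itemize}

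\paragraph{Your ``main obstacle'' is real.} The paper's proof does not resolve it. After writing $V_{t+1}\le\rho_1V_t+\rho_2$, it simply replaces the past $\|\nabla\Phi(v^{(s)})\|^2$, $s<t$, by the current one. It also controls the target drift only through an assumed finite $\widetilde\Delta^2=\sup_t\mathbb{E}\|\sum_{i\in\mathcal{S}_t}p_i\Delta_{i,t}\|^2$.

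Your fallback is a sound repair, given a uniform bound on the displacements. It uses $\|\nabla\Phi(v^{(s)})\|^2\le 2\|\nabla\Phi(v^{(t)})\|^2+2\beta^2\|v^{(t)}-v^{(s)}\|^2$, summed against the geometric weights with a per-round movement bound.

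Two caveats:
\begin{itemize}
\item If you instead invoke a uniform bound on $\|\nabla\Phi\|$, then with the cancellation the lemma needs no recursion at all. In that case $c_i^{(t)}$ is an EMA of $\overline{\widetilde g}_i+r_{\mathrm{prox},i}$, which A6 and the proximal-residual bound make uniformly bounded.
\item The bound $\|\nabla\Phi\|\le G(1+\eta_{\rm inner}B)$ covers only the explicit $v$-dependence of $\Phi$. It does not cover the dependence through $w_S^*(v)$, which A6 does not control.
\end{itemize}
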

\begin{proof}
	Let us recall the local update. At round $t$, client $i$ initializes $v_{i,0}^{(t)} = v^{(t)}$, i.e., the global messenger received at the beginning of the round and performs $  K  $ local steps according to
	\begin{equation}
		v_{i,k+1}^{(t)} = v_{i,k}^{(t)} - \eta \Bigl[ \widetilde{g}_{i,k} + \gamma\bigl({c}^{(t)} - {c}_i^{(t)}\bigr) \Bigr] - \varepsilon\eta \bigl(v_{i,k}^{(t)} - v^{(t)}\bigr),
	\end{equation}
	for $  k = 0, \dots, K-1  $, where $  \widetilde{g}_{i,k}  $ is the practical meta-gradient evaluated at the current local point and the current fixed backbone $  w_{\mathcal{B}_L}^{(t)}$. Then, we sum both sides from $  k = 0  $ to $  k = K-1  $		
	\begin{equation}
		v^{(t)}-v_{i, K}^{(t)}=\eta \sum\nolimits_{k=0}^{K-1} \tilde{g}_{i, k}+\eta K \cdot \gamma\left({c}^{(t)}-{c}_i^{(t)}\right)+\eta \varepsilon \sum\nolimits_{k=0}^{K-1}\left(v_{i, k}^{(t)}-v^{(t)}\right).
	\end{equation}
	Define the averaged practical meta-gradient $\overline{\widetilde{g}}_i^{(t)} := \nicefrac{1}{K} \sum_{k=0}^{K-1} \widetilde{g}_{i,k}$, and the averaged proximal residual $r_{\rm prox,i}^{(t)} := \nicefrac{\varepsilon}{K} \sum_{k=0}^{K-1} (v_{i,k}^{(t)} - v^{(t)})$,  then according to $\Delta_{i, t}:=\nicefrac{1}{K \eta}(v^{(t)}-v_{i, K}^{(t)})$,  we have
	\begin{equation}
		\Delta_{i, t}=\overline{\widetilde{g}}_i^{(t)}+\gamma\left({c}^{(t)}-{c}_i^{(t)}\right)+r_{\mathrm{prox}, \mathrm{i}}^{(t)},\text{ and } {c}_i^{(t+1)}=(1-\alpha) {c}_i^{(t)}+\alpha(\overline{\widetilde{g}}_i^{(t)}+r_{\mathrm{prox}, \mathrm{i}}^{(t)}). 
	\end{equation}
	We further define the error vector $e_i^{(t)} := {c}_i^{(t)} - \nabla\Phi(v^{(t)})$,  the driving noise
	$\delta_i^{(t)} := \overline{\widetilde{g}}_i^{(t)} - \nabla\Phi(v^{(t)}) + r_{\rm prox,i}^{(t)}$, and $  \Delta\Phi^{(t)} := \nabla\Phi(v^{(t)}) - \nabla\Phi(v^{(t+1)})  $ for brevity. Subsequently, we can expand
	\begin{equation}\label{eq:tmp8}\begin{aligned}
			\mathbb{E}\|e_i^{(t+1)}\|^2 &= \mathbb{E}\|(1-\alpha)e_i^{(t)} + \alpha\delta_i^{(t)} + \Delta\Phi^{(t)}\|^2 \\
			&= (1-\alpha)^2 \mathbb{E}\|e_i^{(t)}\|^2 + 2\alpha(1-\alpha)\mathbb{E}\langle e_i^{(t)}, \delta_i^{(t)}\rangle+ \alpha^2 \mathbb{E}\|\delta_i^{(t)}\|^2 \\
			&\quad + 2(1-\alpha)\mathbb{E}\langle e_i^{(t)}, \Delta\Phi^{(t)}\rangle + 2\alpha\mathbb{E}\langle \delta_i^{(t)}, \Delta\Phi^{(t)}\rangle+ \mathbb{E}\|\Delta\Phi^{(t)}\|^2.
	\end{aligned}\end{equation}
	For the driving noise term $  \mathbb{E}\|\delta_i^{(t)}\|^2$, through triangle inequality we have
	\begin{equation}\label{eq:tmp9}
		\begin{aligned}
			\mathbb{E}\|\delta_i^{(t)}\|^2 &\leq 2 \mathbb{E}\|\overline{\widetilde{g}}_i^{(t)}-\nabla \Phi(v^{(t)})\|^2+2 \mathbb{E}\|r_{\text {prox }, \mathrm{i}}^{(t)}\|^2\\
			&\leq 2 G^2+2 B^2\|\nabla \Phi(v^{(t)})\|^2+ \frac{2\sigma^2}{K}+2 C_r \varepsilon^2,
		\end{aligned}
	\end{equation}
	where we have used the meta-gradient dissimilarity assumption,  variance of the average over the local $K$ steps, and the proximal term satisfies $  \mathbb{E}\|r_{\rm prox,i}^{(t)}\|^2 \leq C_r \varepsilon^2  $. Next, we bound the three cross terms in (\ref{eq:tmp8}) using Young's inequality with parameter $  \varepsilon_1 > 0  $ as follows:
	\begin{itemize}[leftmargin=*]
		\item For the term $  2\alpha(1-\alpha)\mathbb{E}\langle e_i^{(t)}, \delta_i^{(t)}\rangle  $:
		\begin{equation}
			2 \alpha(1-\alpha) \mathbb{E}\langle e_i^{(t)}, \delta_i^{(t)}\rangle \leq \alpha(1-\alpha) \varepsilon_1 \mathbb{E}\|e_i^{(t)}\|^2+\frac{\alpha(1-\alpha)}{\varepsilon_1} \mathbb{E}\|\delta_i^{(t)}\|^2, 
		\end{equation}
		\item For the term $  2(1-\alpha)\mathbb{E}\langle e_i^{(t)}, \Delta\Phi^{(t)}\rangle  $:
		\begin{equation}
			2(1-\alpha)\mathbb{E}\langle e_i^{(t)}, \Delta\Phi^{(t)}\rangle \leq (1-\alpha)\varepsilon_1 \mathbb{E}\|e_i^{(t)}\|^2 + \frac{1-\alpha}{\varepsilon_1} \mathbb{E}\|\Delta\Phi^{(t)}\|^2.
		\end{equation}			 
		\item For the term $  2\alpha\mathbb{E}\langle \delta_i^{(t)}, \Delta\Phi^{(t)}\rangle  $:
		\begin{equation}
			2\alpha\mathbb{E}\langle \delta_i^{(t)}, \Delta\Phi^{(t)}\rangle \leq \alpha\varepsilon_1 \mathbb{E}\|\delta_i^{(t)}\|^2 + \frac{\alpha}{\varepsilon_1} \mathbb{E}\|\Delta\Phi^{(t)}\|^2.
		\end{equation}		 
	\end{itemize}
	Note by $  \beta  $-smoothness of $  \Phi  $, we have $\mathbb{E}\|\Delta\Phi^{(t)}\|^2 \leq \beta^2 \mathbb{E}\|v^{(t+1)} - v^{(t)}\|^2$, it subsequently leads to
	\begin{equation}\label{eq:tmp10}
		\mathbb{E}\|\Delta\Phi^{(t)}\|^2 \leq\beta^2 \mathbb{E}\left\|v^{(t+1)}-v^{(t)}\right\|^2=\beta^2 \tau^2 K^2 \eta^2 \mathbb{E}\left\|\sum_{i \in \mathcal{S}_t} p_i \Delta_{i, t}\right\|^2 \leq \beta^2 \tau^2 K^2 \eta^2 \widetilde{\Delta}^2,
	\end{equation}
	where we have defined $\widetilde{\Delta^2}:=\sup _t \mathbb{E}\left\|\sum_{i \in \mathcal{S}_t} p_i \Delta_{i, t}\right\|^2$ for simplicity, since $\Delta_{i, t}=\nicefrac{1}{K \eta}(v^{(t)}-v_{i, K}^{(t)})$. Hence, we substitute (\ref{eq:tmp9})-(\ref{eq:tmp10}) into (\ref{eq:tmp8}) and it leads to		
	\begin{equation}\label{eq:tmp11}
		\begin{aligned}
			\mathbb{E}\|e_i^{(t+1)}\|^2 \leq & {\left[(1-\alpha)^2+\alpha(1-\alpha) \varepsilon_1+(1-\alpha) \varepsilon_1\right] \mathbb{E}\|e_i^{(t)}\|^2 } \\
			& +\left[\alpha^2+\frac{\alpha(1-\alpha)}{\varepsilon_1}+\alpha \varepsilon_1\right]\left(2 G^2+2 B^2\left\|\nabla \Phi\left(v^{(t)}\right)\right\|^2+\frac{2 \sigma^2}{K}+2 C_r \varepsilon^2\right) \\
			& +\left[\frac{1-\alpha}{\varepsilon_1}+\frac{\alpha}{\varepsilon_1}+1\right] \beta^2 \tau^2 K^2 \eta^2 \widetilde{\Delta}^2 .
		\end{aligned}
	\end{equation}
	Note (\ref{eq:tmp11}) has the form $V_{t+1} \leq \rho_1 V_t + \rho_2$ for the sequence $V_t$,  thus we can obtain the rsult as follows:
	\begin{equation}\label{eq:tmp12}
		\begin{aligned}
			\mathbb{E}\|{c}_i^{(t)} - \nabla\Phi(v^{(t)})\|^2& \leq \mathbb{E}\|{c}_i^{(0)} - \nabla\Phi(v^{(0)})\|^2+\frac{2 \left[\alpha^2+\frac{\alpha(1-\alpha)}{\varepsilon_1}+\alpha \varepsilon_1\right]B^2 \left\|\nabla \Phi\left(v^{(t)}\right)\right\|^2}{1-\left[(1-\alpha)^2+\alpha(1-\alpha) \varepsilon_1+(1-\alpha) \varepsilon_1\right]}\\
			&+\frac{\left[\alpha^2+\frac{\alpha(1-\alpha)}{\varepsilon_1}+\alpha \varepsilon_1\right]\left(2 G^2+\frac{2 \sigma^2}{K}+2 C_r \varepsilon^2\right)+\left(\frac{1}{\varepsilon_1}+1\right) \beta^2 \tau^2 K^2 \eta^2 \widetilde{\Delta}^2}{1-\left[(1-\alpha)^2+\alpha(1-\alpha) \varepsilon_1+(1-\alpha) \varepsilon_1\right]},
		\end{aligned}
	\end{equation} 
	where it can be seen that  $\widetilde{B}^2$  and $\widetilde{G}^2$ satisfies the following 
	\begin{equation}
		\begin{aligned}
			&\quad\quad\quad\quad\quad\quad\quad\quad\quad\widetilde{B}^2=\frac{2 \left[\alpha^2+\frac{\alpha(1-\alpha)}{\varepsilon_1}+\alpha \varepsilon_1\right]B^2}{1-\left[(1-\alpha)^2+\alpha(1-\alpha) \varepsilon_1+(1-\alpha) \varepsilon_1\right]}, \\
			&\widetilde{G}^2=\mathbb{E}\|{c}_i^{(0)} - \nabla\Phi(v^{(0)})\|^2+\frac{\left[\alpha^2+\frac{\alpha(1-\alpha)}{\varepsilon_1}+\alpha \varepsilon_1\right]\left(2 G^2+\frac{2 \sigma^2}{K}+2 C_r \varepsilon^2\right)+\left(\frac{1}{\varepsilon_1}+1\right) \beta^2 \tau^2 K^2 \eta^2 \widetilde{\Delta}^2}{1-\left[(1-\alpha)^2+\alpha(1-\alpha) \varepsilon_1+(1-\alpha) \varepsilon_1\right]}.
		\end{aligned}
	\end{equation}
\end{proof}

\begin{lemma}[Variance of Reconstructed Updates]\label{lemma:variance}
	Under Assumptions A3, A4 and A7, we have		
	\begin{equation}\label{eq:keylemma1}
		\begin{aligned}
			\mathbb{E}\left\|\frac{1}{S} \sum_{i \in \mathcal{S}_t} \Delta_{i, t}-\nabla \Phi\left(v^{(t)}\right)\right\|^2 \leq \frac{1}{p}&\left(\frac{6 \sigma^2}{K}+12\left({G}^2+{B}^2\|\nabla \Phi\|^2\right)\right.\\
			&\left.+12 \gamma^2 (C_t+\widetilde{C}_t)+6 L_y^2 \mathbb{E}\left\|w_{S}^{(t)}-w_{S}^*(v^{(t)})\right\|^2+2 \varepsilon_2^2\right)
		\end{aligned}
	\end{equation}
\end{lemma}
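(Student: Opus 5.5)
Our plan is to decompose the reconstructed displacement and bound each piece separately. From the proof of Lemma~\ref{lemma:controldeviation}, each client's averaged displacement is
\[
\Delta_{i,t}=\overline{\widetilde{g}}_i^{(t)}+\gamma\bigl(c^{(t)}-c_i^{(t)}\bigr)+r_{\mathrm{prox},i}^{(t)} .
\]
We will add and subtract, inside the average over $\mathcal{S}_t$, three reference quantities: the full practical meta-gradient averaged over the local path, the ideal meta-gradient evaluated at $w_S^*(v^{(t)})$, and $\nabla\Phi(v^{(t)})$.

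This splits $\frac{1}{S}\sum_{i\in\mathcal{S}_t}\Delta_{i,t}-\nabla\Phi(v^{(t)})$ into six groups:
\begin{itemize}
\item[(i)] the stochastic noise $\overline{\widehat g}_i-\overline{\widetilde g}_i$;
\item[(ii)] the lower-level error from using $w_S^{(t)}$ instead of $w_S^*(v^{(t)})$;
\item[(iii)] the meta-gradient dissimilarity $\widetilde g_{i,k}-\nabla\Phi(v^{(t)})$;
\item[(iv)] the consensus term through $c^{(t)}-\nabla\Phi(v^{(t)})$;
\item[(v)] the consensus term through $c_i^{(t)}-\nabla\Phi(v^{(t)})$;
\item[(vi)] the proximal residual.
\end{itemize}
We will then apply Jensen/Young, $\|\sum_{j=1}^m a_j\|^2\le m\sum_j\|a_j\|^2$, with uneven weights. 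This gives factor $6$ on the noise and backbone pieces and factor $12$ on the heterogeneity and consensus pieces (the latter after also splitting $\gamma(c^{(t)}-c_i^{(t)})$ into its two deviations from $\nabla\Phi$). The proximal piece gets the $2\varepsilon_2^2$ slot, where $\varepsilon_2^2$ bounds $\mathbb{E}\|r_{\mathrm{prox},i}^{(t)}\|^2$, as $C_r\varepsilon^2$ did before.

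Next we bound each group. The noise group is a sum over $K$ steps of martingale differences, so the cross terms vanish under conditional expectation. Assumption A5 then gives $\sigma^2/K$ per client.

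The backbone group is handled by joint smoothness (A1). We read the constant $L_y$ in the statement as the Lipschitz constant of the meta-gradient in $w_S$, absorbing the factor $(1+\eta_{\mathrm{inner}}L)$. This yields $L_y^2\mathbb{E}\|w_S^{(t)}-w_S^*(v^{(t)})\|^2$.

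The dissimilarity group is bounded directly by Assumption A3 as $G^2+B^2\|\nabla\Phi\|^2$. The two consensus groups are exactly $\gamma^2\widetilde C_t$ and $\gamma^2 C_t$ by the definitions of the consensus lags, after averaging over clients.

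The prefactor $1/p$ comes from partial participation. We will treat $\mathcal{S}_t$ as sampled so that each client is included with probability at least $p$. Then the per-client average over $\mathcal{S}_t$ is bounded by $\frac{1}{p}$ times the population average $\frac1N\sum_i$ of the same nonnegative quantities, using $\mathbb{E}\bigl[\frac1S\sum_{i\in\mathcal{S}_t}X_i\bigr]\le\frac{1}{pN}\sum_i X_i$ for $X_i\ge0$. We apply this after Jensen has moved the squared norm inside the client average.

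The main obstacle is the noise group. Because we use Jensen across clients before exploiting independence, we lose the $1/S$ variance reduction; this is acceptable since the stated bound has none. The harder part is the dependence of $\overline{\widehat g}_i$ on past minibatches through $v_{i,k}^{(t)}$. We intend to argue via the tower property: each $\widehat g_{i,k}-\widetilde g_{i,k}(v_{i,k}^{(t)},w_S^{(t)})$ has zero conditional mean given the history up to step $k$. This needs the variance assumption to be read relative to $\widetilde g_{i,k}$ rather than to $\nabla_v\mathcal{L}_i$. If that reading fails, we fall back to the literal A5 and absorb the discrepancy $\widetilde g_{i,k}-\nabla_v\mathcal{L}_i$ via Lemma~\ref{lemma:a} into the heterogeneity slot, which the generous constant $12$ allows.
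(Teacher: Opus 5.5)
The step that does not go through is the constant bookkeeping in your six-way split. A weighted Young/Jensen bound $\|\sum_j a_j\|^2\le\sum_j c_j\|a_j\|^2$ holds for all vectors only when $\sum_j 1/c_j\le 1$. Indeed, for aligned $a_j$ with $\|a_j\|\propto 1/c_j$, the left side is $(\sum_j 1/c_j)$ times the right side. The factors you assign, $6,6,12,12,12,2$, give $\tfrac16+\tfrac16+\tfrac1{12}+\tfrac1{12}+\tfrac1{12}+\tfrac12=\tfrac{13}{12}>1$. So no choice of uneven weights yields the stated constants from those six per-group bounds.

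The paper's proof has the same defect in disguise. Its nested split uses $2$ for the proximal residual, $3$ for $S_i,D_i^{(t)},\mathrm{bias}_{L,i}$, and $2$ inside $D_i^{(t)}$, followed by $\mathbb{E}\|c^{(t)}-c_i^{(t)}\|^2\le 2C_t+2\widetilde C_t$. Done honestly, this gives $24\gamma^2(C_t+\widetilde C_t)$. The paper reaches $12$ only by writing $2\gamma^2(C_t+\widetilde C_t)$ in place of $2\gamma^2(2C_t+2\widetilde C_t)$ in its bound on $D_i^{(t)}$.

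Your route can be repaired. Group (iii) as you actually wrote it, $\widetilde g_{i,k}(v_{i,k}^{(t)},w_S^{(t)})-\nabla\Phi(v^{(t)})$, is evaluated at the current backbone, which is exactly the quantity Assumption A3 controls. Then (i), (iii), (iv), (v), (vi) already sum to $\Delta_{i,t}-\nabla\Phi(v^{(t)})$, and group (ii) is unnecessary. Weights $\tfrac16,\tfrac1{12},\tfrac1{12},\tfrac1{12},\tfrac12$ (total $\tfrac{11}{12}$) give exactly $\tfrac{6\sigma^2}{K}+12(G^2+B^2\|\nabla\Phi\|^2)+12\gamma^2(C_t+\widetilde C_t)+2\varepsilon_2^2$, and the $6L_y^2$ term is pure slack. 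If instead you route (iii) through $g^{\rm ideal}_{i,k}$, as your list of reference quantities suggests, then (ii) is genuinely needed and the budget cannot be met.

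Two smaller points. First, the noise group needs no fallback. Since $\mathbb{E}[\widehat g_{i,k}\mid\mathcal{F}_k]=\widetilde g_{i,k}(v_{i,k}^{(t)},w_S^{(t)})$ (as the paper's definitions assert, with $\mathcal{F}_k$ the history up to local step $k$), the conditional variance is at most the conditional second moment about any $\mathcal{F}_k$-measurable point. So Assumption A5 as literally stated, centered at $\nabla_v\mathcal{L}_i(v_{i,k}^{(t)},w_S^{(t)})$, already gives $\mathbb{E}\|\widehat g_{i,k}-\widetilde g_{i,k}\|^2\le\zeta^2$, which is the lemma's $\sigma^2$. Your fallback through Lemma~\ref{lemma:a} would not fit anyway: it produces $L^2\eta_{\rm inner}^2\|\nabla_v\mathcal{L}_i\|^2$, which is not of the form $G^2+B^2\|\nabla\Phi\|^2$, and there is no spare weight.

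Second, beyond the constants, your argument is the paper's decomposition in flat rather than nested form, and it is more careful in two places. Evaluating the dissimilarity along the local path, which A3 allows, keeps client drift out of the noise term; the paper instead centers $S_i$ at the single point $g_i(v^{(t)},w_S)$ and still bounds it by $\sigma^2/K$. You also justify the $1/p$ prefactor, which the paper never derives. That justification is pathwise, $\frac1S\sum_{i\in\mathcal{S}_t}X_i\le\frac NS\cdot\frac1N\sum_{i=1}^N X_i$, and needs $|\mathcal{S}_t|=S$ fixed; then inclusion probabilities at least $p$ force $S\ge pN$. Averaging over all $N$ clients is also what legitimately turns $\mathbb{E}\|c_i^{(t)}-\nabla\Phi\|^2$ into $C_t$, a step the paper skips by identifying the per-client quantity with $C_t$.
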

\begin{proof}
	We prove the stated bound under Assumptions A3, A4 and A7. We work at a fixed communication round $t$, with the global messenger fixed at $v=v^{(t)}$ and the current backbone fixed at $w=w^{(t)}_S$. Let $\nabla \Phi=\nabla \Phi\left(v^{(t)}\right)$ for brevity. By Lemma 1 the reconstruction of each $\Delta_{i, t}$ is exact up to the additive residual $r_{\text {recon }}^{(t)}$ whose second moment is bounded by the finite constant $\varepsilon_3^2$ (Assumption A5). The proximal residual $r_{\text {prox }, \mathrm{i}}^{(t)}$ is likewise bounded in second moment by the finite constant $\varepsilon_2^2$ (Assumption A5). Both residuals contribute additive terms that are absorbed into the explicit constants of the overall convergence rate (as visible in Lemma 4); they do not appear in the leading expression of the present lemma. We first telescoping identity for the client displacement. Specifically, the client executes the consensus-corrected local update
	\begin{equation}
		v_{{i, k+1}}^{(t)}=v_{{i, k}}^{(t)}-\eta[\widetilde{{g}}_{i, k}+\gamma({c}^{(t)}-{c}_i^{(t)})]-\varepsilon \eta(v_{{i, k}}^{(t)}-v^{(t)}),
	\end{equation}		
	for $k=0, \ldots, K-1$, starting from $v_{{i, 0}}^{(t)}=v^{(t)}$. Telescoping the recurrence yields the exact identity
	\begin{equation}
		v^{(t)}-v_{{i, K}}^{(t)}=\eta \sum\nolimits_{k=0}^{K-1} u_{i, k},
	\end{equation}		
	where we define the instantaneous effective direction as
	\begin{equation}
		u_{i, k}=\widetilde{{g}}_{i, k}+\gamma({c}^{(t)}-{c}_i^{(t)})+\varepsilon(v_{{i, k}}^{(t)}-v^{(t)}).
	\end{equation}		
	Dividing by $K \eta$ and invoking the definition of the averaged displacement therefore gives
	\begin{equation}
		\Delta_{i, t}=\frac{1}{K} \sum\nolimits_{k=0}^{K-1} u_{i, k}.
	\end{equation}		
	We consider the separation of the proximal contribution for simplicity, let $\Delta_{i, t}=\Delta_{i, t}^{\prime}+ r_{\text {prox, }, \mathrm{i}}^{(t)}$, where we define $\Delta_{i, t}^{\prime}$ as
	\begin{equation}
		\Delta_{i, t}^{\prime}:=\frac{1}{K} \sum\nolimits_{k=0}^{K-1}(\widetilde{{g}}_{i, k}+\gamma({c}^{(t)}-{c}_i^{(t)})),
	\end{equation}		
	and $r_{\text {prox, } \mathrm{i}}^{(t)}$ denotes the averaged proximal contribution. Therefore, we can evaluate $V_i:= \mathbb{E}\left\|\Delta_{i, t}-\nabla \Phi\right\|^2$ via the triangle inequality $\|a+b\|^2 \leq 2\|a\|^2+2\|b\|^2$ as follows:
	\begin{equation}
		V_i \leq 2 \mathbb{E}\left\|\Delta_{i, t}^{\prime}-\nabla \Phi\right\|^2+2 \varepsilon_2^2,
	\end{equation}
	with the Assumption A5 that $\mathbb{E}\|r_{\text {prox, } \mathbf{i}}^{(t)}\|^2 \leq \varepsilon_2^2$. It therefore suffices to bound the deviation of $\Delta_{i, t}^{\prime}$. Next, we define the averaged stochastic meta-gradient
	\begin{equation}
		\overline{\widetilde{{g}}}_i:=\frac{1}{K} \sum\nolimits_{k=0}^{K-1} \widetilde{{g}}_{i, k}.
	\end{equation}
	Then it can be obtained $\Delta_{i, t}^{\prime}=\overline{\widetilde{{g}}}_i+\gamma({c}^{(t)}-{c}_i^{(t)})$, and we can decompose the following
	\begin{equation}\label{eq:deltaprim}
		\Delta_{i, t}^{\prime}-\nabla \Phi=S_i+D_i^{(t)}+\text { bias }_{L, i},
	\end{equation}
	where have defined $S_i=\overline{\tilde{\boldsymbol{g}}}_i-g_i$ to capture client-specific stochastic, $D_i^{(t)}=g_i- \nabla \Phi+\gamma({c}^{(t)}-{c}_i^{(t)})$ to measure the consensus variation, and bias$_{L, i}$ tracks the lower-level bias and is defined exactly by
	\begin{equation}
		\begin{aligned}
			\operatorname{bias}_{L, i}: & =\nabla_v \mathcal{L}_i\left(v-\eta_{\text {inner }} \nabla_v \mathcal{L}_i\left(v, w_S^{(t)}\right), w_{S}^{(t)}\right)-\nabla_v \mathcal{L}_i(v d\left.-\eta_{\text {inner }} \nabla_v \mathcal{L}_i(v, w_{S}^*(w)), w_{S}^*(v)\right).
		\end{aligned}
	\end{equation}
	Here $w_{S}^*(v)$ is the unique minimizer of the lower-level TAKD loss $\mathcal{L}_{\mathrm{KD}}(\cdot ; w)$. The term bias $_{L, i}$ therefore measures the exact difference between the meta-gradient evaluated at the algorithm's current approximate backbone $w_{S}^{(t)}$ and the meta-gradient that would be obtained if the lower level were solved exactly to optimality for the current messenger $w$. By Assumption A4, each stochastic meta-gradient $\widetilde{{g}}_{i, k}$ is an unbiased estimator of the true client meta-gradient $g_i(v, w_{S}):=\nabla_v \mathcal{L}_i(v- \eta_{\text {inner }} \nabla_v \mathcal{L}_i(v, w_{S}), w_{S}$ ) and satisfies the exact variance bound $\mathbb{E}\|\widetilde{{g}}_{i, k}-g_i\|^2 \leq \sigma^2$. The first term in (\ref{eq:deltaprim}) can be evaluated by averaging the $K$ terms 
	\begin{equation}\label{eq:gminusg}
		\mathbb{E}\|\overline{\widetilde{{g}}}_i-g_i\|^2 \leq \nicefrac{\sigma^2}{K}.
	\end{equation}
	Next, we consider Lipschitz bound on the lower-level bias. By Assumption A2, the lower-level loss $\mathcal{L}_{\mathrm{KD}}(\cdot ; w)$ is $L_y$-smooth for any fixed $w$. Combined with the chain-rule dependence of the upper-level loss $\mathcal{L}_i$ on the backbone parameters through the composition $\mathcal{P}_{M, i} \circ \mathcal{B}_L$, the map
	\begin{equation}
		w_{S} \mapsto \nabla_v \mathcal{L}_i\left(w-\eta_{\text {inner }} \nabla_v \mathcal{L}_i\left(v, w_{S}\right), w_{S}\right)
	\end{equation}		
	is Lipschitz continuous with constant at most $L_y$. Consequently we have
	\begin{equation}
		\left\|\operatorname{bias}_{L, i}\right\|^2 \leq L_y^2 \mathbb{E}\left\|w_{S}^{(t)}-w_{S}^*(v)\right\|^2.
	\end{equation}		
	Hence, it remains to bound the contribution $D_i^{(t)}=g_i-\nabla \Phi+\gamma({c}^{(t)}-{c}_i^{(t)})$, by using the triangle inequality, we can obtain
	\begin{equation}\label{eq:Di}
		\mathbb{E}\left\|D_i^{(t)}\right\|^2 \leq 2 \mathbb{E}_i\left\|g_i-\nabla \Phi\right\|^2+2 \gamma^2 \mathbb{E}\left\|{c}^{(t)}-{c}^{(t)}_i\right\|^2.
	\end{equation}		
	The first term on the right hand side of (\ref{eq:Di}) can be bounded by Assumption A3 that $\mathbb{E}\|g_i-\nabla \Phi\|^2 \leq G^2+B^2\|\nabla \Phi\|^2$. We continue to analyze $E\|c^{(t)}-c_i^{(t)}\|^2$. Specifically,
	while we recall $C_{t}=E\|c_i^{(t)}-{\nabla} \Phi \bigl( {v}^{(t)},w_S(v^{(t)}) \bigr)\|^2$, and with the decomposition $(c^{(t)}-{\nabla} \Phi \bigl( {v}^{(t)},w_S(v^{(t)}) \bigr))-\{c^{(t)}_i-{\nabla} \Phi \bigl( {v}^{(t)},w_S(v^{(t)}) \bigr)\}$, and summing it leads to
	\begin{equation}\label{eq:cbound}
		\begin{split}
			\mathbb{E}\left\|c^{(t)}-c_i^{(t)}\right\|^2 &\leq 2 \mathbb{E}\left\|c_i^{(t)}-{\nabla} \Phi \bigl( {v}^{(t)},w_S(v^{(t)}) \bigr)\right\|^2\\
			&\quad\quad\quad+2 \mathbb{E}\left\|c^{(t)}-{\nabla} \Phi \bigl( {v}^{(t)},w_S(v^{(t)}) \bigr)\right\|^2= 2C_t+2\widetilde{C}_t,
		\end{split}
	\end{equation}
	which we substitute  into (\ref{eq:Di}) and it yields
	\begin{equation}\label{eq:Di_result}
		\mathbb{E}_i\left\|D_i^{(t)}\right\|^2 \leq 2\left({G}^2+{B}^2\|\nabla \Phi\|^2\right)+2\gamma^2 (C_t+\widetilde{C}_t).
	\end{equation}		
	Combining these inequalities (\ref{eq:gminusg})(\ref{eq:Di_result})(\ref{eq:deltaprim}), we have
	\begin{equation}
		\mathbb{E}_{\xi}\left\|\Delta_{i, t}^{\prime}-\nabla \Phi\right\|^2 \leq 3\left(\frac{\sigma^2}{K}+2\left({G}^2+{B}^2\|\nabla \Phi\|^2\right)+2\gamma^2 (C_t+\widetilde{C}_t)+L_y^2 \mathbb{E}\left\|w_{S}^{(t)}-w_{S}^*(v)\right\|^2\right) .
	\end{equation}		
	Substituting $\mathbb{E}_{\xi}\left\|\Delta^{\prime}{ }_{i, t}-\nabla \Phi\right\|^2$ into $V_i \leq 2 \mathbb{E}\left\|\Delta_{i, t}^{\prime}-\nabla \Phi\right\|^2+2 \varepsilon_2^2$, we have
	\begin{equation}
		\begin{aligned}
			V_i \leq &\frac{6 \sigma^2}{K}+12\left({G}^2+{B}^2\|\nabla \Phi\|^2\right)+12 \gamma^2 (C_t+\widetilde{C}_t)+6 L_y^2 \mathbb{E}\left\|w_{S}^{(t)}-w_{S}^*(w)\right\|^2+2 \varepsilon_2^2, 
		\end{aligned}
	\end{equation}
	thus we can obtain the result in (\ref{eq:keylemma1}), this completes the proof.
\end{proof}

\begin{lemma}[Server Messenger Update]\label{lemma:variance2}	
	Under Assumptions A1--A8, there exist positive values $(\Lambda,\Omega,\Upsilon,\Psi,\Gamma)$ that the expected squared displacement of the global personalization messenger between consecutive communication rounds is bounded as:
	\begin{equation}\label{eq:keylemma2}
		\begin{aligned}
			E\left\|v^{(t+1)}-v^{(t)}\right\|^2 \leq \Lambda (C_t+\widetilde{C}_t)+\Upsilon E\left\|w_S^{(t)}-w_S^*(v^{(t)})\right\|^2+\Omega E_t+\Psi\left(\Phi(v^{(t)})-\Phi^*\right)+\Gamma.
		\end{aligned}		
	\end{equation}
\end{lemma}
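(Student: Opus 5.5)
The server rule gives $v^{(t+1)}-v^{(t)}=-\tau K\eta\sum_{i\in\mathcal{S}_t}p_i\Delta_{i,t}$. The plan is to write
\[
\sum_{i\in\mathcal{S}_t}p_i\Delta_{i,t}=\Bigl(\sum_{i\in\mathcal{S}_t}p_i\Delta_{i,t}-\nabla\Phi(v^{(t)})\Bigr)+\nabla\Phi(v^{(t)}).
\]
Applying $\|a+b\|^2\le 2\|a\|^2+2\|b\|^2$ then yields
\[
\mathbb{E}\|v^{(t+1)}-v^{(t)}\|^2\le 2\tau^2K^2\eta^2\Bigl(\mathbb{E}\Bigl\|\sum_{i\in\mathcal{S}_t}p_i\Delta_{i,t}-\nabla\Phi\Bigr\|^2+\|\nabla\Phi(v^{(t)})\|^2\Bigr).
\]
The first term is exactly what Lemma \ref{lemma:variance} controls. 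For uniform weights $p_i$, Jensen's inequality reduces the weighted average to the client-wise quantities $V_i$ bounded there. This step supplies the $C_t+\widetilde{C}_t$ term with coefficient $\propto\gamma^2/p$ and the $\|w_S^{(t)}-w_S^*(v^{(t)})\|^2$ term with coefficient $\propto L_y^2/p$. It also produces the constant pieces $\sigma^2/K$, $G^2$ and $\varepsilon_2^2$, which go into $\Gamma$, and a $B^2\|\nabla\Phi\|^2$ term.

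Next, every occurrence of $\|\nabla\Phi(v^{(t)})\|^2$ is converted into the function gap. I will use the $\beta$-smoothness inequality $\|\nabla\Phi(v)\|^2\le 2\beta(\Phi(v)-\Phi^*)$, already used in Lemma \ref{lemma:2beta}. This gives $\Psi$ proportional to $\tau^2K^2\eta^2\beta(1+12B^2/p)$.

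The client drift $E_t$ must also appear, because Lemma \ref{lemma:variance} treats the proximal residual $r_{\mathrm{prox},i}^{(t)}=\frac{\varepsilon}{K}\sum_k(v_{i,k}^{(t)}-v^{(t)})$ only through the crude constant $\varepsilon_2^2$. Instead, I will redo that split: bound $\mathbb{E}\|r_{\mathrm{prox},i}^{(t)}\|^2\le\frac{\varepsilon^2}{K}\sum_k\mathbb{E}\|v_{i,k}^{(t)}-v^{(t)}\|^2$ by Jensen and average over $i$. This gives $\varepsilon^2E_t$ and hence $\Omega\propto\tau^2K^2\eta^2\varepsilon^2$. Optionally, I would also add the Lipschitz discrepancy $L^2(1+\eta_{\rm inner}L)^2\|v_{i,k}-v^{(t)}\|^2$ between gradients at local and global points into $\Omega$. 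This accounts for the fact that $\widetilde g_{i,k}$ is evaluated along the path rather than at $v^{(t)}$.

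The main obstacle is keeping the decomposition consistent with Lemma \ref{lemma:variance}. That lemma's $D_i^{(t)}$ term implicitly evaluates gradients at $v^{(t)}$, whereas the actual steps use $v_{i,k}^{(t)}$. I will therefore insert the path-error term explicitly, at the cost of one more factor in the Young splitting, and absorb it into $\Omega E_t$. All remaining constants are collected into $(\Lambda,\Omega,\Upsilon,\Psi,\Gamma)$, each carrying the common factor $\tau^2K^2\eta^2$, which is what later makes the bound $\mathcal{O}(\eta^2)$ in the Lyapunov argument.
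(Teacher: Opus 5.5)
Your proposal is correct. Its opening matches the paper's: the split of $\sum_i p_i\Delta_{i,t}$ around $\nabla\Phi(v^{(t)})$, the inequality $\|a+b\|^2\le2\|a\|^2+2\|b\|^2$, and Lemma~\ref{lemma:variance} for the deviation term. You part ways in how you treat $\|\nabla\Phi(v^{(t)})\|^2$, and therefore in where $E_t$ comes from.

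The paper does not apply $\|\nabla\Phi\|^2\le2\beta(\Phi-\Phi^*)$ directly. It writes $\|\nabla\Phi\|^2\le2\|\nabla\Phi-\frac1N\sum_i\widetilde g_i(v)\|^2+\frac2N\sum_i\|\widetilde g_i(v)\|^2$, bounds the second piece with Lemma~\ref{lemma:2beta}, and expands the first through the path-averaged meta-gradients $\overline{\widetilde g}_i$ and the ideal meta-gradients $g_i^{\rm ideal}(v_{i,k})$. That detour is the only place $E_t$ enters in the paper, giving the $20\beta^2$ factor in $\Omega$. It also adds an $8L_y^2$ contribution to $\Upsilon$ and extra $\zeta^2$ and $\eta_{\rm inner}^2B^2G^2$ constants to $\Gamma$. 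It converts $\|\nabla\Phi\|^2$ with factor $(32L^2\eta_{\rm inner}^2+8)\beta$ instead of $2\beta$, and it simply leaves the assumed constant $\varepsilon_2^2$ in $\Gamma$.

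Your route gives smaller, simpler constants and gives $E_t$ a genuine source. That source is the proximal residual, via the derived bound $\frac1N\sum_i\mathbb{E}\|r_{{\rm prox},i}^{(t)}\|^2\le\varepsilon^2E_t$, plus the path discrepancy with Lipschitz constant $L(1+\eta_{\rm inner}L)$, which follows from A1 alone. The price is that you must reopen the proof of Lemma~\ref{lemma:variance} rather than cite it. The paper reuses that lemma verbatim, but its $E_t$ term is somewhat artificial, since $\nabla\Phi(v^{(t)})$ does not depend on the local trajectory.

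Two small corrections to your reasoning:
\begin{enumerate}
\item $E_t$ does not \emph{have} to appear. Because $\Omega E_t\ge0$ and the statement is existential, a bound without it already proves the lemma.
\item The path bias in Lemma~\ref{lemma:variance} is dropped in the $S_i$ step $\mathbb{E}\|\overline{\widetilde g}_i-g_i\|^2\le\sigma^2/K$, not in $D_i^{(t)}$. Moreover, Assumption A3 as stated already quantifies over arbitrary local points $v_{i,k}$, so you could absorb that bias there instead of adding it to $\Omega$.
\end{enumerate}
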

\begin{proof}
	Fix an arbitrary communication round $  t \geq 1  $. For brevity we write $  v = v^{(t)}  $, $  \Phi = \Phi(v^{(t)})  $, $  \nabla\Phi = \nabla\Phi(v^{(t)})  $, and $  w_S = w_S^{(t)}  $. All expectations are taken jointly over the random subset of participating clients $  \mathcal{S}_t  $ and the stochastic gradients realized inside those clients.  
	
	According to the server aggregation rule, the global messenger is updated by
	\begin{equation}
		v^{(t+1)} = v^{(t)} - \tau K \eta \cdot \frac{1}{S} \sum\nolimits_{i \in \mathcal{S}_t} \Delta_{i,t},
	\end{equation}
	here we decompose $\nicefrac{1}{S} \sum\nolimits_{i \in \mathcal{S}_t} \Delta_{i,t} =\nicefrac{1}{S} \sum\nolimits_{i \in \mathcal{S}_t} \Delta_{i,t} - \nabla\Phi+ \nabla\Phi$, which leads to 
	\begin{equation}\label{eq:tmp2}
		\begin{aligned}
			\mathbb{E}&\bigl\| v^{(t+1)} - v^{(t)} \bigr\|^2 = \tau^2 K^2 \eta^2 \cdot \mathbb{E}\Bigl\| \frac{1}{S} \sum\nolimits_{i \in \mathcal{S}_t} \Delta_{i,t} \Bigr\|^2\\
			&\leq 2 \tau^2 K^2 \eta^2 \cdot \mathbb{E}\left\|\frac{1}{S} \sum\nolimits_{i \in \mathcal{S}_t} \Delta_{i, t}-\nabla \Phi\right\|^2+2 \tau^2 K^2 \eta^2 \mathbb{E}\|\nabla \Phi\|^2\\
			&\leq\frac{12 N \tau^2 K^2 \eta^2}{S} \Biggl( 
			\frac{6\sigma^2}{K} + 12G^2 + 12\gamma^2 (C_t+\widetilde{C}_t) 
			+ 6L_y^2 \mathbb{E}\|w_{S} - w_{S}^*(v)\|^2 + 2\varepsilon_2^2 
			\Biggr)\\
			&\qquad +\Biggl( \frac{12 N \tau^2 K^2 \eta^2 B^2}{S} + 2\tau^2 K^2 \eta^2 \Biggr) \|\nabla\Phi\|^2. 
		\end{aligned}
	\end{equation}
	where we have substituted the result in Lemma \ref{lemma:variance} into the first inequality of (\ref{eq:tmp2}). Next, we continue to analyze the term $\mathbb{E}\|\nabla \Phi\|^2$, and decompose it as $\nabla \Phi \leq \nabla \Phi-\nicefrac{1}{N} \sum_{i=1}^N \tilde{g}_i(v)+\nicefrac{1}{N} \sum_{i=1}^N \tilde{g}_i(v)$, which result in
	\begin{equation}\label{eq:temp1}
		\begin{aligned}
			\mathbb{E}&\|\nabla \Phi\|^2 \leq 2\mathbb{E}\left\|\nabla \Phi-\frac{1}{N} \sum\nolimits_{i=1}^N \tilde{g}_i(v)\right\|^2+\frac{2}{N} \sum\nolimits_{i=1}^N\mathbb{E}\left\| \widetilde{g}_i(v)\right\|^2\\
			&\leq 2\mathbb{E}\left\|\nabla \Phi-\frac{1}{N} \sum\nolimits_{i=1}^N \tilde{g}_i(v)\right\|^2+  (16 L^2 \eta_{\text {inner }}^2+8) \zeta^2+(32 L^2 \eta_{\text {inner }}^2\beta+8\beta)\left(\Phi(v)-\Phi^*\right)\\
			&\leq \frac{2}{N} \sum\nolimits_{i=1}^N \mathbb{E}\left\|\nabla \Phi-\tilde{g}_i(v)\right\|^2+  (16 L^2 \eta_{\text {inner }}^2+8) \zeta^2+(32L^2 \eta_{\text {inner }}^2\beta+8\beta)\left(\Phi(v)-\Phi^*\right),
		\end{aligned}
	\end{equation}
	where the second inequality in (\ref{eq:temp1}) is evaluated via Lemma \ref{lemma:2beta}, and it remains to evaluate the first term. Define the averaged local practical meta-gradient for client $  i  $ by
	$\bar{\tilde{g}}_i := \nicefrac{1}{K}\sum_{k=0}^{K-1} \tilde{g}_{i,k},$
	where each $  \tilde{g}_{i,k}  $ is evaluated at the local point $  v_{i,k}  $.   Subsequently $\mathbb{E}\|\nabla\Phi - \tilde{g}_i(v)\|^2$ can be decomposed as: 
	\begin{equation}\label{eq:tmp3}
		\begin{aligned}
			& \mathbb{E}\|\nabla\Phi - \tilde{g}_i(v)\|^2\leq 2 \mathbb{E}\|\tilde{g}_i(v) - \overline{\tilde{g}}_i\|^2+ \frac{4}{K} \sum_{k=0}^{K-1} \mathbb{E}\|\tilde{g}_{i,k} - g_i^{\rm ideal}(v_{i,k})\|^2 + \mathbb{E}\left\| \frac{4}{K} \sum_{k=0}^{K-1} \bigl( g_i^{\rm ideal}(v_{i,k}) - \nabla\Phi \bigr) \right\|^2\\
			&\quad\quad\quad\quad\quad\quad\leq 2\beta^2 E_t+4 L_y^2 \mathbb{E}\left\|w_S-w_S^*(v)\right\|^2+ \frac{4}{K}\sum\nolimits_{k=0}^{K-1}  \mathbb{E}\left\|  \bigl( g_i^{\rm ideal}(v_{i,k}) - \nabla\Phi \bigr) \right\|^2
		\end{aligned}
	\end{equation}
	Here, the first term has been simply obtained by the $  \beta  $-smoothness of each client loss (Assumption A1) and the triangle inequality,i.e., $\mathbb{E}\|\tilde{g}_i(v)-\overline{\tilde{g}}_i\|^2 \leq \beta^2 E_t$.
	Similarly, with the meta-gradient map w.r.t. the backbone is $  L_y  $-Lipschitz, the second term can be obtained via $\|\tilde{g}_{i,k} - g_i^{\rm ideal}(v_{i,k})\| \leq L_y \|w_S - w_S^*(v)\|$. 
	
	We now consider the last term, specifically $\mathbb{E}\|g_i^{\rm ideal}(v_{i,k}) - \nabla\Phi\|^2  $.
	Apply the triangle inequality, it can be decomposed
	\begin{equation}\label{eq:tmp4}
		\begin{aligned}
			\|g_i^{\rm ideal}(v_{i,k}) - \nabla\Phi\|^2 &\leq 2 \|g_i^{\rm ideal}(v_{i,k}) - g_i^{\rm ideal}(v)\|^2 + 2 \|g_i^{\rm ideal}(v) - \nabla\Phi\|^2\\
			&\leq  2\beta^2 \|v_{i,k} - v\|^2+4 \|g_i^{\rm ideal}(v) - h_i(v)\|^2 + 4 \|h_i(v) - \nabla\Phi\|^2,
		\end{aligned}
	\end{equation}
	where for the first term in the first inequality, we apply $  \beta  $-smoothness of each client loss $  \mathcal{L}_i  $ (Assumption A1), while for the second term, we define $h_i(v)=\nabla_v \mathcal{L}_i\left(v, w_S^*(v)\right)$, and further employ the decomposition $\|g_i^{\rm ideal}(v) - \nabla\Phi\|^2 \leq 2 \|g_i^{\rm ideal}(v) - h_i(v)\|^2 + 2 \|h_i(v) - \nabla\Phi\|^2$. Specifically for $\|g_i^{\text {ideal }}(v)-h_i(v)\|$, we have
	\begin{equation}\label{eq:tmp5}
		\begin{aligned}
			\|g_i^{\rm ideal}(v) - h_i(v)\| 
			&\leq \int_0^1 \bigl\| \nabla_v^2 \mathcal{L}_i \bigl( v + t(-\eta_{\text {inner }} \nabla_v \mathcal{L}_i(v, w_S^*(v))), w_S^*(v) \bigr) \bigr\|\\
			&\quad\quad\quad\quad\quad\cdot \|-\eta_{\text {inner }} \nabla_v \mathcal{L}_i(v, w_S^*(v))\| \, dt=\eta_{\text {inner }} BG,
		\end{aligned}
	\end{equation} 
	and for $\|h_i(v) - \nabla\Phi\|^2$, we can use Assumption A3 to directly bound it
	\begin{equation}\label{eq:tmp6}
		\frac{1}{N} \sum\nolimits_{i=1}^N \mathbb{E}\left\|h_i(v)-\nabla \Phi(v)\right\|^2 \leq \zeta^2.
	\end{equation}
	Hence, by combining (\ref{eq:tmp4})(\ref{eq:tmp5})(\ref{eq:tmp6}), it leads to 
	\begin{equation}\label{eq:tmp7}
		\frac{1}{K N} \sum\nolimits_{k=0}^{K-1} \sum\nolimits_{i=1}^N \mathbb{E}\left\|g_i^{\text {ideal }}\left(v_{i, k}\right)-\nabla \Phi\right\|^2 \leq 2 \beta^2 E_t+4 \eta_{\text {inner }}^2 B^2 G^2+4 \zeta^2 .
	\end{equation}
	With (\ref{eq:temp1})(\ref{eq:tmp3})(\ref{eq:tmp7}), we can obtain the evaluation of $\mathbb{E}\|\nabla \Phi\|^2$ as follows: 
	\begin{equation}\label{eq:tmp13}
		\begin{aligned}
			\mathbb{E}\|\nabla\Phi(v)\|^2
			&\leq\frac2N\sum_i \mathbb{E}\bigl\|\nabla\Phi(v)-\widetilde{g}_i(v)\bigr\|^2
			+(16L^2\eta_{\rm inner}^2+8)\zeta^2
			+(32L^2\eta_{\rm inner}^2\beta+8\beta)\bigl(\Phi(v)-\Phi^*\bigr)\\
			&\leq 20\beta^2 E_t
			+8L_y^2 \mathbb{E}\bigl\|w_S-w_S^*(v)\bigr\|^2
			+32\eta_{\rm inner}^2 B^2 G^2
			+(40+16L^2\eta_{\rm inner}^2)\zeta^2\\
			&\qquad+(32L^2\eta_{\rm inner}^2\beta+8\beta)\bigl(\Phi(v)-\Phi^*\bigr).		
		\end{aligned}
	\end{equation}
	By subtituting (\ref{eq:tmp13}) into (\ref{eq:tmp2}), we can obtain the final result of (\ref{eq:keylemma2}) in Lemma \ref{lemma:variance2} as follows:
	\begin{equation}
		\begin{aligned}
			& E\left\|v^{(t+1)}-v^{(t)}\right\|^2 \leq \frac{144 N \tau^2 K^2 \eta^2 \gamma^2}{S} (C_t+\widetilde{C}_t)+20 \beta^2\left(\frac{12 N \tau^2 K^2 \eta^2 B^2}{S}+2 \tau^2 K^2 \eta^2\right) E_t \\
			&+ {\left[\frac{72 N \tau^2 K^2 \eta^2 L_y^2}{S}+8 L_y^2\left(\frac{12 N \tau^2 K^2 \eta^2 B^2}{S}+2 \tau^2 K^2 \eta^2\right)\right] E\left\|w_S-w_S^*(v)\right\|^2 } \\	    		
			&+\left(32 L^2 \eta_{\text {inner }}^2 \beta+8 \beta\right)\left(\frac{12 N \tau^2 K^2 \eta^2 B^2}{S}+2 \tau^2 K^2 \eta^2\right)\left(\Phi(v)-\Phi^*\right)+\frac{12 N \tau^2 K^2 \eta^2}{S}\left(\frac{6 \sigma^2}{K}+12 G^2+2 \varepsilon_2^2\right) \\	    		
			&+\left(\frac{12 N \tau^2 K^2 \eta^2 B^2}{S}+2 \tau^2 K^2 \eta^2\right)\left(32 \eta_{\text {inner }}^2 B^2 G^2+\left(40+16 L^2 \eta_{\text {inner }}^2\right) \zeta^2\right),
		\end{aligned}
	\end{equation}   
	subsequently, we obtain $(\Lambda,\Omega,\Upsilon,\Psi,\Gamma)$ as follows
	\begin{equation}
		\begin{aligned}
			&\Lambda=\frac{144 N \tau^2 K^2 \eta^2 \gamma^2}{S},\quad \Omega=20 \beta^2\left(\frac{12 N \tau^2 K^2 \eta^2 B^2}{S}+2 \tau^2 K^2 \eta^2\right)\\
			&\Upsilon=\frac{72 N \tau^2 K^2 \eta^2 L_y^2}{S}+8 L_y^2\left(\frac{12 N \tau^2 K^2 \eta^2 B^2}{S}+2 \tau^2 K^2 \eta^2\right)\\
			&\Psi=\left(32 L^2 \eta_{\text {inner }}^2 \beta+8 \beta\right)\left(\frac{12 N \tau^2 K^2 \eta^2 B^2}{S}+2 \tau^2 K^2 \eta^2\right)\\
			&\Gamma	=\frac{12 N \tau^2 K^2 \eta^2}{S}\left(\frac{6 \sigma^2}{K}+12 G^2+2 \varepsilon_2^2\right)+\left(\frac{12 N \tau^2 K^2 \eta^2 B^2}{S}+2 \tau^2 K^2 \eta^2\right)\cdot\\
			&\quad\quad\quad\quad\quad\quad\quad\quad\quad\quad\quad\quad\left(32 \eta_{\text {inner }}^2 B^2 G^2+\left(40+16 L^2 \eta_{\text {inner }}^2\right) \zeta^2\right)
		\end{aligned} 
	\end{equation}
	
\end{proof}

\begin{lemma}[Evolution of Global Consensus Lag]\label{lemma:globalcontroalevo}
	Under Assumptions A1–A8 the global, there exists positive values $(\widetilde{\rho_g},\rho_g,\kappa_2^g,\widetilde{\kappa^g_3},\widetilde{\kappa_g^4})$ such that the global consensus lag evolution satisfies the following
	\begin{equation}\label{eq:tmp117}
		\widetilde{C}_{t+1} \leq \widetilde{\rho_g} \widetilde{C}_t+\rho_g C_t+\kappa_2^g \Upsilon E\|w_S^{(t)}-w_S^*(v^{(t)})\|^2+\kappa_2^g \Omega E_t+\widetilde{\kappa^g_3}(\Phi(v^{(t)})-\Phi^*)+\widetilde{\kappa_g^4}.
	\end{equation}		
\end{lemma}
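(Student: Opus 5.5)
The plan is to mirror the one-step expansion used in the proof of Lemma~\ref{lemma:controldeviation}, now applied to the global consensus error $\widetilde{e}^{(t)}:=c^{(t)}-\nabla\Phi(v^{(t)})$. With $\bar{\Delta}^{(t)}:=\sum_{i\in\mathcal{S}_t}p_i\Delta_{i,t}$, the server rule $c^{(t+1)}=(1-\alpha)c^{(t)}+\alpha\bar{\Delta}^{(t)}$ gives
\begin{equation*}
\widetilde{e}^{(t+1)}=(1-\alpha)\widetilde{e}^{(t)}+\alpha\bigl(\bar{\Delta}^{(t)}-\nabla\Phi(v^{(t)})\bigr)+\Delta\Phi^{(t)},
\end{equation*}
where $\Delta\Phi^{(t)}:=\nabla\Phi(v^{(t)})-\nabla\Phi(v^{(t+1)})$ as before.

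\textbf{Squaring and Young's inequality.} I will square this identity and bound the cross terms with Young's inequality, using a parameter $\varepsilon_1$. Choosing $\varepsilon_1$ small relative to $\alpha$ makes the coefficient of $\widetilde{C}_t$ contract. The pure-$\widetilde{C}_t$ part is $(1-\alpha)^2+O(\varepsilon_1)$, which will become part of $\widetilde{\rho_g}$. The middle term $\alpha^2\mathbb{E}\|\bar{\Delta}^{(t)}-\nabla\Phi\|^2$, together with its Young companion of order $\alpha(1-\alpha)/\varepsilon_1$, is controlled directly by Lemma~\ref{lemma:variance}. That lemma yields $\sigma^2/K$, $G^2$, $B^2\|\nabla\Phi\|^2$, and $\gamma^2(C_t+\widetilde{C}_t)$ terms, plus $L_y^2\,\mathbb{E}\|w_S^{(t)}-w_S^*(v^{(t)})\|^2$ and $\varepsilon_2^2$. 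The $\gamma^2\widetilde{C}_t$ piece is added to $\widetilde{\rho_g}$, and the $\gamma^2C_t$ piece defines $\rho_g$.

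\textbf{Converting gradient norms and bounding the drift term.} Each remaining $\|\nabla\Phi\|^2$ I will convert using inequality (\ref{eq:tmp13}) from the proof of Lemma~\ref{lemma:variance2}. This gives $E_t$, the backbone gap, $\Phi(v^{(t)})-\Phi^*$, and constants. For the drift term I will use $\beta$-smoothness, $\mathbb{E}\|\Delta\Phi^{(t)}\|^2\le\beta^2\mathbb{E}\|v^{(t+1)}-v^{(t)}\|^2$. I will then invoke Lemma~\ref{lemma:variance2} rather than the crude $\widetilde{\Delta}^2$ bound, which produces exactly $\Lambda(C_t+\widetilde{C}_t)+\Upsilon\,\mathbb{E}\|w_S-w_S^*\|^2+\Omega E_t+\Psi(\Phi-\Phi^*)+\Gamma$.

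\textbf{Collecting coefficients.} Let $\kappa_2^g:=\beta^2(1+1/\varepsilon_1)$ be the multiplier of $\mathbb{E}\|\Delta\Phi\|^2$. Then the $\Upsilon$ and $\Omega$ terms appear as $\kappa_2^g\Upsilon$ and $\kappa_2^g\Omega$, matching the statement. The extra backbone-gap and $E_t$ contributions coming from Lemma~\ref{lemma:variance} and (\ref{eq:tmp13}) I will absorb by enlarging $\kappa_2^g$, since $\Upsilon,\Omega>0$. This is legitimate because $\Upsilon\ge 16L_y^2\tau^2K^2\eta^2$ and $\Omega\ge 40\beta^2\tau^2K^2\eta^2$. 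The constants assemble as follows:
\begin{itemize}
\item $\widetilde{\rho_g}=(1-\alpha)^2+O(\varepsilon_1)+c_1\gamma^2+\kappa_2^g\Lambda$;
\item $\rho_g=c_1\gamma^2+\kappa_2^g\Lambda$;
\item $\widetilde{\kappa^g_3}$ collects all the $\Phi-\Phi^*$ coefficients;
\item $\widetilde{\kappa_g^4}$ collects $\sigma^2/K$, $G^2$, $\zeta^2$, $\eta_{\rm inner}^2B^2G^2$, $\varepsilon_2^2$ and $\kappa_2^g\Gamma$.
\end{itemize}

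\textbf{Main obstacle.} The hard step is the bookkeeping of the $\widetilde{C}_t$ coefficient. Lemma~\ref{lemma:variance} carries a $1/p$ partial-participation factor and a $\gamma^2$ factor that feed back into $\widetilde{C}_t$. Showing $\widetilde{\rho_g}<1$, which the later Lyapunov argument needs, will require $\gamma^2/p$ and $\eta^2\tau^2K^2$ small relative to $\alpha$. The statement only asserts positivity of the constants, so I will record that condition for later use rather than enforce it here.
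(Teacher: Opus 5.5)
Your proposal is correct at the paper's level of rigor and shares its skeleton. It uses the same error recursion for $c^{(t)}-\nabla\Phi(v^{(t)})$, the same three Young splittings with $\varepsilon_1$, and the same treatment of $\mathbb{E}\|\Delta\Phi^{(t)}\|^2$ via $\beta$-smoothness and Lemma~\ref{lemma:variance2}. That last step is where the $\kappa_2^g\Lambda,\kappa_2^g\Upsilon,\kappa_2^g\Omega,\kappa_2^g\Psi,\kappa_2^g\Gamma$ pattern comes from.

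The routes differ on the middle term $\mathbb{E}\|\sum_{i\in\mathcal{S}_t}p_i\Delta_{i,t}-\nabla\Phi(v^{(t)})\|^2$. The paper does not call Lemma~\ref{lemma:variance}. Instead it:
\begin{itemize}
\item splits $\Delta_{i,t}$ into the averaged meta-gradient, the consensus correction $\gamma(c^{(t)}-c_i^{(t)})$ and the proximal residual;
\item bounds the first piece by Assumption~A3 evaluated at the \emph{current} backbone, so no $\mathbb{E}\|w_S^{(t)}-w_S^*(v^{(t)})\|^2$ term arises;
\item bounds the $c_i$ part of the second piece by the uniform-in-time Lemma~\ref{lemma:controldeviation};
\item converts $\|\nabla\Phi\|^2$ only through $\|\nabla\Phi\|^2\le 2\beta(\Phi-\Phi^*)$.
\end{itemize}
Consequently $C_t$ enters only via $\kappa_2^g\Lambda=O(\eta^2)$, and $\kappa_2^g=\beta^2(1+1/\varepsilon_1)$ stays an $\eta$-independent constant.

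Your route avoids Lemma~\ref{lemma:controldeviation}, with its $\sup_t$ quantity $\widetilde{\Delta}^2$ and the initial-lag constant buried in $\widetilde{G}^2$. It keeps $C_t$ as a tracked state, which is arguably cleaner for a Lyapunov argument. The price comes in two parts:
\begin{itemize}
\item $\rho_g$ acquires an $O(\gamma^2/p)$ part that does not shrink with $\eta$.
\item The $6L_y^2\,\mathbb{E}\|w_S^{(t)}-w_S^*(v^{(t)})\|^2/p$ term of Lemma~\ref{lemma:variance} forces you to inflate $\kappa_2^g$ by a factor of order $(\tau K\eta)^{-2}$. Then $\kappa_2^g\Upsilon$ and $\kappa_2^g\Omega$ become $O(1)$ rather than $O(\eta^2)$.
\end{itemize}
This is harmless for the purely existential statement. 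It clashes, however, with how the Lyapunov lemma later uses these constants. That lemma drops $\omega_3\rho_g C_t$ and $\omega_3\kappa_2^g\Upsilon R_t$ as higher order, and expands $D'_{\rm tot}$ in powers of $\eta$ with $\kappa_2^g$ fixed. You would therefore need additional conditions on $Q_{w3}$ there.

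Finally, your detour through \eqref{eq:tmp13} is unnecessary. The smoothness bound suffices and introduces no extra $E_t$ or backbone-gap terms, although in your route the backbone term from Lemma~\ref{lemma:variance} still forces the inflation of $\kappa_2^g$.
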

\begin{proof}
	We derive the bound directly from the per-client Consensus Deviation result and the algorithm rules. Let us define the global error vector as $e^{(t)} := {c}^{(t)} - \nabla\Phi(v^{(t)})$, recall
	the server-side update rule for the global consensus variate $c^{(t+1)}=(1-\alpha) c^{(t)}+\alpha \sum_{i \in S_t} p_i \Delta_{i, t}$, subtracting $\nabla \Phi\left(v^{(t+1)}\right)$ from both sides yields 
	\begin{equation}
		e^{(t+1)}=(1-\alpha) e^{(t)}+\alpha\left(\sum_{i \in S_t} p_i \Delta_{i, t}-\nabla \Phi\left(v^{(t)}\right)\right)+\Delta \Phi^{(t)},
	\end{equation}	
	where $\Delta \Phi^{(t)}=\nabla \Phi\left(v^{(t)}\right)-\nabla \Phi\left(v^{(t+1)}\right)$. Taking the squared norm and expectation, we expand
	\begin{equation}\label{eq:tmp119}
		\begin{aligned}
			E&\left\|e^{(t+1)}\right\|^2=(1-\alpha)^2 E\left\|e^{(t)}\right\|^2+\alpha^2 E\left\|\sum_{i \in S_t} p_i \Delta_{i, t}-\nabla \Phi\left(v^{(t)}\right)\right\|^2+E\left\|\Delta \Phi^{(t)}\right\|^2\\
			&\quad+2 \alpha(1-\alpha) E\left\langle e^{(t)}, \sum_{i \in S_t} p_i \Delta_{i, t}-\nabla \Phi\left(v^{(t)}\right)\right\rangle+2 \alpha E\left\langle\sum_{i \in S_t} p_i \Delta_{i, t}-\nabla \Phi\left(v^{(t)}\right), \Delta \Phi^{(t)}\right\rangle.
		\end{aligned}
	\end{equation}
	For brevity, we define $\delta_g^{(t-1)}:=\sum_{i \in S_{t-1}} p_i \Delta_{i, t-1}-\nabla \Phi(v^{(t-1)})$.	Then,  we can bound the three cross terms in \eqref{eq:tmp119} by Young’s inequality with a positive parameter $\varepsilon_1 > 0$
	\begin{equation}
		\begin{aligned}
			&2 \alpha(1-\alpha) E\left\langle e_g^{(t-1)}, \delta_g^{(t-1)}\right\rangle \leq \alpha(1-\alpha) \varepsilon_1 E\left\|e_g^{(t-1)}\right\|^2+\frac{\alpha(1-\alpha)}{\varepsilon_1} E\left\|\delta_g^{(t-1)}\right\|^2, \\
			&2(1-\alpha) E\left\langle e_g^{(t-1)}, \Delta \Phi^{(t-1)}\right\rangle \leq(1-\alpha) \varepsilon_1 E\left\|e_g^{(t-1)}\right\|^2+\frac{1-\alpha}{\varepsilon_1} E\left\|\Delta \Phi^{(t-1)}\right\|^2, \\
			&2 \alpha E\left\langle\delta_g^{(t-1)}, \Delta \Phi^{(t-1)}\right\rangle \quad \leq \alpha \varepsilon_1 E\left\|\delta_g^{(t-1)}\right\|^2+\frac{\alpha}{\varepsilon_1} E\left\|\Delta \Phi^{(t-1)}\right\|^2,
		\end{aligned}
	\end{equation}
	which we substitute it into \eqref{eq:tmp119} and  the inequality becomes
	\begin{equation}\label{eq:tmp121}
		\begin{aligned}
			E&\left\|e^{(t)}\right\|^2 \leq\left[(1-\alpha)^2+\alpha(1-\alpha) \varepsilon_1+(1-\alpha) \varepsilon_1\right] E\left\|e^{(t-1)}\right\|^2
			\\
			&\quad\quad\quad+\left[\alpha^2+\frac{\alpha(1-\alpha)}{\varepsilon_1}+\alpha \varepsilon_1\right]E\left\|\delta_g^{(t-1)}\right\|^2+\left[\frac{1-\alpha}{\varepsilon_1}+\frac{\alpha}{\varepsilon_1}+1\right] E\left\|\Delta \Phi^{(t-1)}\right\|^2 .
		\end{aligned}
	\end{equation}	
	Next, we evaluate $\delta_g$. Notice from the exact summation of the local update over $K$ steps we have the identity		
	$\Delta_{i, t-1}=\overline{\widetilde{g}}_i^{(t-1)}+\gamma(c^{(t-1)}-c_i^{(t-1)})+r_{\text {prox, } \mathrm{i}}^{(t-1)},$ where $\overline{\widetilde{g}}_i^{(t-1)}$
	is the averaged practical meta-gradient on client $i$ and $r_{\rm prox,i}^{(t-1)}$
	is the averaged proximal residual. Substituting this identity yields the decomposition		 
	\begin{equation}
		\begin{aligned}
			\mathbb{E}\|\delta_g^{(t-1)}\|^2=3\underbrace{\mathbb{E}\|\sum_{i \in S_{t-1}} p_i\left(\overline{\widetilde{g}}_i^{(t-1)}-\nabla \Phi\left(v^{(t-1)}\right)\right)}_{:=\mathcal{T}_a}\|^2&+3\gamma\underbrace{ \mathbb{E}\|\sum_{i \in S_{t-1}} p_i\left(c^{(t-1)}-c_i^{(t-1)}\right)\|^2}_{:=\mathcal{T}_b}\\
			&+3\underbrace{\mathbb{E}\|\sum_{i \in S_{t-1}} p_i r_{\text {prox, } \mathrm{i}}^{(t-1)}\|^2}_{:=\mathcal{T}_c} .
		\end{aligned}
	\end{equation}
	For bounding Term $\mathcal{T}_a$, we can first use Jensen’s inequality and then  the meta-gradient dissimilarity assumption together with the bounded variance of the stochastic meta-gradients over the $K$ local steps, thus it leads to 
	\begin{equation}
		\mathcal{T}_a\leq \sum\nolimits_{i \in S_{t-1}} p_i\left\|\overline{\widetilde{g}}_i^{(t-1)}-\nabla \Phi\left(v^{(t-1)}\right)\right\|^2\leq G^2+B^2\left\|\nabla \Phi\left(v^{(t-1)}\right)\right\|^2+\frac{\sigma^2}{K} .
	\end{equation}
	For bounding Term $\mathcal{T}_b$, Using the decomposition $c^{(t-1)}-c_i^{(t-1)}=(c^{(t-1)}-\nabla \Phi(v^{(t-1)}))-(c_i^{(t-1)}-\nabla \Phi(v^{(t-1)}))$ together with Jensen's inequality and the Local Consensus Deviation Lemma, we obtain
	\begin{equation}
		\mathcal{T}_b \leq 2\left\|c^{(t-1)}-\nabla \Phi\left(v^{(t-1)}\right)\right\|^2+2\left(\widetilde{G}^2+\widetilde{B}^2\left\|\nabla \Phi\left(v^{(t-1)}\right)\right\|^2\right).
	\end{equation}
	By the standing second-moment assumption on the proximal residual, the third term can be bounded as $\mathcal{T}_c\leq C_r \varepsilon^2 .$ Substitute these three estimates into the inequality for $\mathbb{E}\|\delta_g^{(t-1)}\|^2$:
	\begin{equation}\label{eq:tmp125}
		\begin{aligned}
			E\left\|\delta_g^{(t-1)}\right\|^2 \leq &6 \gamma^2\left\|e^{(t-1)}\right\|^2+\left(3 B^2+6 \gamma^2 \widetilde{B}^2\right)\left\|\nabla \Phi\left(v^{(t-1)}\right)\right\|^2 \\
			&\quad\quad\quad\quad\quad\quad\quad\quad\quad+\left(3 G^2+3 \frac{\sigma^2}{K}+6 \gamma^2 \widetilde{G}^2+3 C_r \varepsilon^2\right) .
		\end{aligned}
	\end{equation}
	We substitute \eqref{eq:tmp125} into \eqref{eq:tmp121} and it leads to 
	\begin{equation}
		\begin{aligned}
			\widetilde{C}_t \leq&\left\{(1-\alpha)^2+\alpha(1-\alpha) \varepsilon_1+(1-\alpha) \varepsilon_1+6 \gamma^2\left[\alpha^2+\frac{\alpha(1-\alpha)}{\varepsilon_1}+\alpha \varepsilon_1\right]\right\} \widetilde{C}_{t-1} \\
			&\quad\quad\quad\quad\quad+\left[\alpha^2+\frac{\alpha(1-\alpha)}{\varepsilon_1}+\alpha \varepsilon_1\right]\left(3 B^2+6 \gamma^2 \widetilde{B}^2\right)\left\|\nabla \Phi\left(v^{(t-1)}\right)\right\|^2 \\
			&\quad\quad\quad\quad\quad+\left[\alpha^2+\frac{\alpha(1-\alpha)}{\varepsilon_1}+\alpha \varepsilon_1\right]\left(3 G^2+3 \frac{\sigma^2}{K}+6 \gamma^2 \widetilde{G}^2+3 C_r \varepsilon^2\right) \\
			&\quad\quad\quad\quad\quad+\left[\frac{1-\alpha}{\varepsilon_1}+\frac{\alpha}{\varepsilon_1}+1\right] \beta^2 E\left\|v^{(t)}-v^{(t-1)}\right\|^2,
		\end{aligned}
	\end{equation}
	which is exactly  \eqref{eq:tmp117} in the lemma, moreover we define the following positive values as
	\begin{equation}
		\begin{aligned}
			&\rho_g=\left\{(1-\alpha)^2+\alpha(1-\alpha) \varepsilon_1+(1-\alpha) \varepsilon_1+6 \gamma^2\left[\alpha^2+\frac{\alpha(1-\alpha)}{\varepsilon_1}+\alpha \varepsilon_1\right]\right\},\\
			&\kappa^g_1=\left[\alpha^2+\frac{\alpha(1-\alpha)}{\varepsilon_1}+\alpha \varepsilon_1\right]\left(3 B^2+6 \gamma^2 \widetilde{B}^2\right),\quad\kappa^g_2=\left[\frac{1-\alpha}{\varepsilon_1}+\frac{\alpha}{\varepsilon_1}+1\right] \beta^2\\
			&\kappa^g_3=\left[\alpha^2+\frac{\alpha(1-\alpha)}{\varepsilon_1}+\alpha \varepsilon_1\right]\left(3 G^2+3 \frac{\sigma^2}{K}+6 \gamma^2 \widetilde{G}^2+3 C_r \varepsilon^2\right), 
		\end{aligned}
	\end{equation}
	and recall $\|\nabla \Phi(v)\|^2 \leq 2 \beta\left(\Phi(v)-\Phi^*\right)$, which can be substituted into and it leads to
	\begin{equation}
		\begin{aligned}
			\widetilde{C}_{t+1} \leq \widetilde{\rho_g} \widetilde{C}_t+\rho_g C_t+\kappa_2^g \Upsilon E\|w_S^{(t)}-w_S^*(v^{(t)})\|^2+\kappa_2^g \Omega E_t+\widetilde{\kappa^g_3}(\Phi(v^{(t)})-\Phi^*)+\widetilde{\kappa_g^4},
		\end{aligned}
	\end{equation}
	where $\widetilde{\rho_g}=\rho_g+\kappa_2^g \Lambda$, $\rho_g=\kappa_2^g \Lambda$, $\widetilde{\kappa^g_3}=\kappa_2^g \Psi+2 \beta \kappa_1^g$, and  $\widetilde{\kappa_g^4}=\kappa_2^g \Gamma+\kappa_3^g$.
\end{proof}

\begin{lemma}[Evolution of Local Consensus Lag]\label{lemma:localcontrolevo}
	Under Assumptions A1–A8, there exsits positive values $(\rho_C, \kappa_{\widetilde{C}}, \kappa_E, \kappa_w, \kappa_0)$ such that the local consensus lag satisfies the following
	\begin{equation}
		\begin{aligned}
			C_{t+1} \leq \rho_C C_t+\kappa_{\widetilde{C}} \widetilde{C}_t+\kappa_E E_t+\kappa_w E\|w_S^{(t)}-w_S^*(v^{(t)})\|^2+\kappa_{\Phi}(\Phi(v^{(t)})-\Phi^*)+\kappa_0
		\end{aligned}
	\end{equation}
	
\end{lemma}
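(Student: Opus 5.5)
I will mirror the proof of Lemma~\ref{lemma:globalcontroalevo}, replacing the server EMA by the client EMA. Define the per-client error $e_i^{(t)}:=c_i^{(t)}-\nabla\Phi(v^{(t)})$. As in the proof of Lemma~\ref{lemma:controldeviation}, the telescoped local update gives $\Delta_{i,t}=\overline{\widetilde{g}}_i^{(t)}+\gamma(c^{(t)}-c_i^{(t)})+r_{\mathrm{prox},i}^{(t)}$. Substituting this into the client refresh $c_i^{(t+1)}=[1+\alpha(\gamma-1)]c_i^{(t)}+\alpha\Delta_{i,t}-\alpha\gamma c^{(t)}$ makes the $\gamma$-terms cancel exactly. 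This leaves
\begin{equation*}
e_i^{(t+1)}=(1-\alpha)e_i^{(t)}+\alpha\,\delta_i^{(t)}+\Delta\Phi^{(t)},
\end{equation*}
where $\delta_i^{(t)}:=\overline{\widetilde{g}}_i^{(t)}-\nabla\Phi(v^{(t)})+r_{\mathrm{prox},i}^{(t)}$ and $\Delta\Phi^{(t)}:=\nabla\Phi(v^{(t)})-\nabla\Phi(v^{(t+1)})$. I will expand the squared norm of this identity and bound the three cross terms with Young's inequality (parameter $\varepsilon_1$), exactly as in \eqref{eq:tmp8}--\eqref{eq:tmp11}. Averaging over $i$ then yields
\begin{equation*}
C_{t+1}\le \bigl[(1-\alpha)^2+(2-\alpha)(1-\alpha)\varepsilon_1\bigr]C_t+\Bigl[\alpha^2+\tfrac{\alpha(1-\alpha)}{\varepsilon_1}+\alpha\varepsilon_1\Bigr]\tfrac1N\sum_i\mathbb{E}\|\delta_i^{(t)}\|^2+\Bigl[\tfrac1{\varepsilon_1}+1\Bigr]\beta^2\,\mathbb{E}\|v^{(t+1)}-v^{(t)}\|^2 .
\end{equation*}

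Next I will bound the driving term more carefully than in \eqref{eq:tmp9}, so that $E_t$ and the backbone error appear explicitly instead of being absorbed into constants. I split $\overline{\widetilde{g}}_i^{(t)}-\nabla\Phi$ into four pieces: the stochastic part, bounded by $\sigma^2/K$; the along-path part, bounded by $\beta^2E_t$ by smoothness, as in \eqref{eq:tmp3}; the backbone bias, bounded by $L_y^2\mathbb{E}\|w_S^{(t)}-w_S^*(v^{(t)})\|^2$; and the ideal-gradient dissimilarity at $v^{(t)}$, bounded by $\eta_{\rm inner}^2B^2G^2+\zeta^2$ via \eqref{eq:tmp5}--\eqref{eq:tmp6}. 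The proximal part contributes $C_r\varepsilon^2$. For the last term of the display I invoke Lemma~\ref{lemma:variance2} directly, which produces $\Lambda(C_t+\widetilde C_t)+\Upsilon\,\mathbb{E}\|w_S-w_S^*\|^2+\Omega E_t+\Psi(\Phi-\Phi^*)+\Gamma$. Any residual $\|\nabla\Phi\|^2$ is converted with $\|\nabla\Phi\|^2\le2\beta(\Phi-\Phi^*)$.

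Collecting coefficients gives the constants. With $c_\alpha:=\alpha^2+\alpha(1-\alpha)/\varepsilon_1+\alpha\varepsilon_1$ and $\kappa_2:=(1/\varepsilon_1+1)\beta^2$:
\begin{itemize}
\item $\rho_C=(1-\alpha)^2+(2-\alpha)(1-\alpha)\varepsilon_1+\kappa_2\Lambda$;
\item $\kappa_{\widetilde C}=\kappa_2\Lambda$;
\item $\kappa_E=4c_\alpha\beta^2+\kappa_2\Omega$;
\item $\kappa_w=4c_\alpha L_y^2+\kappa_2\Upsilon$;
\item $\kappa_\Phi=\kappa_2\Psi+(\text{a }c_\alpha\text{-multiple of }2\beta B^2)$;
\item $\kappa_0=c_\alpha(\sigma^2/K+\eta_{\rm inner}^2B^2G^2+\zeta^2+C_r\varepsilon^2)+\kappa_2\Gamma$.
\end{itemize}

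The main obstacle is making $\rho_C<1$ meaningful for the Lyapunov argument that follows. The Young's-inequality split costs $\mathcal{O}(\varepsilon_1)$, and $\Lambda=\mathcal{O}(\tau^2K^2\eta^2\gamma^2)$ enters through the messenger displacement. I would choose $\varepsilon_1=\alpha/4$, which gives $(1-\alpha)^2+(2-\alpha)(1-\alpha)\alpha/4\le1-\alpha$. Since $\kappa_2=\mathcal{O}(\beta^2/\alpha)$, the $\kappa_2\Lambda$ term is then $\mathcal{O}(\beta^2\tau^2K^2\eta^2\gamma^2/\alpha)$, which is at most $\alpha/2$ once $\eta=\mathcal{O}(T^{-1/2})$ is small, so $\rho_C\le1-\alpha/2$. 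I expect this step to need the most care: if the $1/\varepsilon_1$ blow-up in $c_\alpha$ makes $\kappa_0$ too large, I would instead keep the cross term $\langle e_i,\delta_i\rangle$ with the unbiased stochastic part removed by conditioning, so that only the bias enters it.
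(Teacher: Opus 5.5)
Your route proves the lemma when every client participates, including the $\rho_C<1$ needed later, but it is organized differently from the paper's proof.

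\textbf{How the two routes differ.} You expand $\|(1-\alpha)e_i^{(t)}+\alpha\delta_i^{(t)}+\Delta\Phi^{(t)}\|^2$ and apply Young with a single $\varepsilon_1$ to all three cross terms, as in Lemma~\ref{lemma:controldeviation}. The paper instead introduces a participation indicator $I_i^{(t)}$ with $\mathbb{E}I_i^{(t)}=p$ and writes
$e_i^{(t+1)}=(1-I_i^{(t)}\alpha)\bigl(c_i^{(t)}-\nabla\Phi(v^{(t+1)})\bigr)+I_i^{(t)}\alpha\bigl(\overline{\widetilde g}_i^{(t)}-\nabla\Phi(v^{(t+1)})\bigr)$.
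It applies Jensen to this convex combination to get weights $1-p\alpha$ and $p\alpha$. Only then does it split off the moving target with one $(1+\varepsilon)/(1+1/\varepsilon)$ Young step and invoke Lemma~\ref{lemma:variance2}. This gives $\rho_C=(1-p\alpha)(1+\varepsilon)$ plus $\Lambda$-terms.

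\textbf{What the paper's ordering buys.} It removes the obstacle you anticipate. At $\varepsilon_1=\alpha/4$ your $c_\alpha=4-4\alpha+\tfrac{5}{4}\alpha^2=\Theta(1)$, so the driving term's share of $\kappa_0/(1-\rho_C)$ is inflated by a factor $1/\alpha$. The paper's driving weight is $O(p\alpha)$. Your conditioning fallback would strip out only the martingale noise. The drift, heterogeneity, backbone-bias and proximal pieces are biased and would still pay $1/\varepsilon_1$. The actual fix is to use Young parameter $1$ on $\langle e_i^{(t)},\delta_i^{(t)}\rangle$, which is exactly the Jensen bound $\|(1-\alpha)e+\alpha\delta\|^2\le(1-\alpha)\|e\|^2+\alpha\|\delta\|^2$. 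Reserve the small parameter for the $\Delta\Phi^{(t)}$ terms.

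\textbf{What your route buys.} Your treatment of the driving term is more careful than the paper's. The paper bounds $\mathbb{E}\|\overline{\widetilde g}_i^{(t)}-\nabla\Phi(v^{(t)})\|^2$ by $2\sigma^2/K+2\beta^2E_t$, with no heterogeneity, backbone-bias or proximal contribution. You keep $\zeta^2$, $\eta_{\rm inner}^2B^2G^2$, $L_y^2\mathbb{E}\|w_S^{(t)}-w_S^*(v^{(t)})\|^2$ and $C_r\varepsilon^2$ explicitly. Treating the target $\nabla\Phi$ as a function of $v$ alone also spares you the $R_w^2$ constants the paper incurs by letting the backbone move inside the target.

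\textbf{What needs repair.}
First, you refresh $c_i$ for every client, but in the algorithm only $i\in\mathcal{S}_t$ do so. For the others $e_i^{(t+1)}=e_i^{(t)}+\Delta\Phi^{(t)}$, which under your split contributes $(1+\varepsilon_1)\mathbb{E}\|e_i^{(t)}\|^2$. The averaged contraction factor becomes $1-p\alpha(2-\alpha)+\varepsilon_1(1-p\alpha^2)$. So $\varepsilon_1=\alpha/4$ no longer gives $\rho_C<1$ unless $p$ is bounded below by a constant. You need $\varepsilon_1=O(p\alpha)$, and that keeps the driving weight $O(p\alpha)$ only if you adopt the Jensen ordering.

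Second, a minor slip: your three Young steps give the $C_t$ coefficient $(1-\alpha)^2+(1+\alpha)(1-\alpha)\varepsilon_1$, not $(1-\alpha)^2+(2-\alpha)(1-\alpha)\varepsilon_1$. Your bound $\le1-\alpha$ at $\varepsilon_1=\alpha/4$ still holds.
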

\begin{proof}
	Let $I_i^{(t)}$ be the indicator that client (i) is selected in round (t), so that $E[I_i^{(t)}]=p$. Non-selected clients keep their consensus variates unchanged. Define the local error and the target change 
	\begin{equation}
		\begin{aligned}
			e_i^{(t)}:=c_i^{(t)}-\nabla \Phi(v^{(t)}, w_S(v^{(t)})), \quad d_i^{(t)}:=\nabla \Phi(v^{(t)}, w_S(v^{(t)}))-\nabla \Phi(v^{(t+1)}, w_S(v^{(t+1)})) .
		\end{aligned}
	\end{equation}		
	On a selected client the EMA update together with the exact reconstruction of the displacement $\Delta_{i, t}$ produces the linear error recursion
	\begin{equation}
		\begin{aligned}
			e_i^{(t+1)}=(1-I_i^{(t)} \alpha)(c_i^{(t)}-\nabla \Phi(v^{(t+1)}, w_S(v^{(t+1)})))+I_i^{(t)} \alpha(\overline{\widetilde{g}}_i^{(t)}-\nabla \Phi(v^{(t+1)}, w_S(v^{(t+1)}))).
		\end{aligned}
	\end{equation}		
	Subsequently,		 the unconditional second moment satisfies
	\begin{equation}\label{eq:tmp91}
		\begin{aligned}
			E\|e_i^{(t+1)}\|^2 \leq & (1-p \alpha) E\|c_i^{(t)}-\nabla \Phi(v^{(t+1)}, w_S(v^{(t+1)}))\|^2 \\
			&\quad\quad\quad\quad\quad\quad+p \alpha E\|\bar{g}_i^{(t)}-\nabla \Phi(v^{(t+1)}, w_S(v^{(t+1)}))\|^2 .
		\end{aligned}
	\end{equation} 
	We further decompose the first term on the right-hand side by incorporating $\nabla \Phi(v^{(t)}, w_S(v^{(t)}))$,
	\begin{equation}
		\begin{aligned}
			E\|c_i^{(t)}-\nabla \Phi(v^{(t+1)}, w_S(v^{(t+1)}))\|^2
			\leq  (1+\varepsilon) E\|e_i^{(t)}\|^2+(1+\frac{1}{\varepsilon}) E\|d_i^{(t)}\|^2, 
		\end{aligned}
	\end{equation}
	which we substitute into \eqref{eq:tmp91} to produce
	\begin{equation}\label{eq:tmp93}
		C_{t+1} \leq(1-p \alpha)(1+\varepsilon) C_t+(1-p \alpha)(1+\frac{1}{\varepsilon}) E\|d_i^{(t)}\|^2+p \alpha E\|\bar{g}_i^{(t)}-\nabla \Phi(v^{(t+1)}, w_S(v^{(t+1)}))\|^2.
	\end{equation}
	For evaluating $E\|d_i^{(t)}\|^2$, we start from the joint Lipschitz estimate
	\begin{equation}\label{eq:tmp94}
		\begin{aligned}
			E&\|d_i^{(t)}\|^2 \leq \beta^2 E\|v^{(t+1)}-v^{(t)}\|^2+\beta^2 E\|w_S(v^{(t)})-w_S(v^{(t+1)})\|^2\\
			&\leq\beta^2 E\|v^{(t+1)}-v^{(t)}\|^2+ 2\beta^2\|w_S^{(t)}-w_S^*(v^{(t)})\|^2+2\beta^2\|w_S^*(v^{(t)})-w_S^{(t+1)}\|^2\\
			&\leq \beta^2 E\|v^{(t+1)}-v^{(t)}\|^2+ 2\beta^2\|w_S^{(t)}-w_S^*(v^{(t)})\|^2+2\beta^2(2 E\|w_S^{(t)}-w_S^*(v^{(t)})\|^2+8 R_w^2)\\
			&=\beta^2 E\|v^{(t+1)}-v^{(t)}\|^2+6 \beta^2 E\|w_S^{(t)}-w_S^*(v^{(t)})\|^2+16 \beta^2 R_w^2\\
			&\leq \beta^2 \Lambda(C_t+\widetilde{C}_t)+\beta^2 \Omega E_t+\beta^2 \Psi(\Phi(v^{(t)})-\Phi^*)\\
			&\quad\quad\quad\quad\quad\quad\quad+(\beta^2 \Upsilon+6 \beta^2) E\|w_S^{(t)}-w_S^*(v^{(t)})\|^2+(\beta^2 \Gamma+16 \beta^2 R_w^2)
		\end{aligned}
	\end{equation}
	where we have  applies the already-established lower-level approximation bound in the third inequality that 
	\begin{equation}
		\|w_S^{(t+1)}-w_S^*(v^{(t)})\|^2 \leq 2 E\|w_S^{(t)}-w_S^*(v^{(t)})\|^2+8 R_{w^*}^2,
	\end{equation}
	and the Lemma \ref{lemma:variance2} for the final substitution of $E\|v^{(t+1)}-v^{(t)}\|^2$.  Next, we consider the last term  $E\|\bar{g}_i^{(t)}-\nabla \Phi(v^{(t+1)}, w_S(v^{(t+1)}))\|^2$ in  \eqref{eq:tmp93}. Specifically,
	\begin{equation}\label{eq:tmp96}
		\begin{aligned}
			&E\|\overline{\widetilde{g}}_i^{(t)}-\nabla \Phi(v^{(t+1)}, w_S(v^{(t+1)}))\|^2 \\
			&\leq 2 E\|\overline{\widetilde{g}}_i^{(t)}-\nabla \Phi(v^{(t)}, w_S(v^{(t)}))\|^2+2 E\|\nabla \Phi(v^{(t)}, w_S(v^{(t)}))-\nabla \Phi(v^{(t+1)}, w_S(v^{(t+1)}))\|^2\\
			&\leq \frac{2 \sigma^2}{K}+ \frac{2}{K} \sum_{k=0}^{K-1}\left\|\nabla \Phi(v_{i, k}^{(t)}, w_S(v^{(t)}))-\nabla \Phi(v^{(t)}, w_S(v^{(t)}))\right\|^2\\
			&\quad\quad\quad\quad\quad\quad\quad\quad\quad\quad+2\beta^2 E\|v^{(t+1)}-v^{(t)}\|^2+12 \beta^2 E\|w_S^{(t)}-w_S^*(v^{(t)})\|^2+32 \beta^2 R_w^2\\
			&\leq\frac{2 \sigma^2}{K}+2\beta^2E_t+2\beta^2 E\|v^{(t+1)}-v^{(t)}\|^2+12 \beta^2 E\|w_S^{(t)}-w_S^*(v^{(t)})\|^2+32 \beta^2 R_w^2\\
			&\leq 2 \beta^2 \Lambda C_t+2 \beta^2 \Lambda \widetilde{C}_t+(2 \beta^2+2 \beta^2 \Omega) E_t+(12 \beta^2+2 \beta^2 \Upsilon) E\|w_S^{(t)}-w_S^*(v^{(t)})\|^2\\
			&\quad\quad\quad\quad\quad +2 \beta^2 \Psi(\Phi(v^{(t)})-\Phi^*)+(\frac{2 \sigma^2}{K}+32 \beta^2 R_w^2+2 \beta^2 \Gamma)
		\end{aligned}
	\end{equation}
	Hence, by substituting $E\|d_i^{(t)}\|^2$ in \eqref{eq:tmp94} and $E\|\overline{\widetilde{g}}_i^{(t)}-\nabla \Phi(v^{(t+1)}, w_S(v^{(t+1)}))\|^2$ in \eqref{eq:tmp96}, we can obtain the result in the Lemma
	\begin{equation}
		\begin{aligned}
			&C_{t+1}
			\leq
			\Biggl[
			(1-p\alpha)(1+\varepsilon)
			+(1-p\alpha)\Bigl(1+\frac1\varepsilon\Bigr)\beta^2\Lambda
			+2p\alpha\beta^2\Lambda
			\Biggr]C_t\\[1mm]
			&\qquad
			+\Biggl[
			(1-p\alpha)\Bigl(1+\frac1\varepsilon\Bigr)\beta^2\Lambda
			+2p\alpha\beta^2\Lambda
			\Biggr]\widetilde{C}_t+\Biggl[
			(1-p\alpha)\Bigl(1+\frac1\varepsilon\Bigr)\beta^2\Omega
			+p\alpha(2\beta^2+2\beta^2\Omega)
			\Biggr]E_t\\[1mm]
			&\qquad
			+\Biggl[
			(1-p\alpha)\Bigl(1+\frac1\varepsilon\Bigr)\beta^2\Psi
			+2p\alpha\beta^2\Psi
			\Biggr]\bigl(\Phi(v^{(t)})-\Phi^*\bigr)\\[1mm]
			&\qquad
			+\Biggl[
			(1-p\alpha)\Bigl(1+\frac1\varepsilon\Bigr)(\beta^2\Upsilon+6\beta^2)
			+p\alpha(12\beta^2+2\beta^2\Upsilon)
			\Biggr]
			\mathbb{E}\|w_S^{(t)}-w_S^*(v^{(t)})\|^2\\[1mm]
			&\qquad
			+(1-p\alpha)\Bigl(1+\frac1\varepsilon\Bigr)(\beta^2\Gamma+16\beta^2 R_w^2)
			+p\alpha\Bigl(\frac{2\sigma^2}{K}+32\beta^2 R_w^2+2\beta^2\Gamma\Bigr),
		\end{aligned}		 	
	\end{equation}
	with the positive values defined as follows
	\begin{equation}
		\begin{aligned}
			\rho_C & =(1-p \alpha)(1+\varepsilon)+(1-p \alpha)\left(1+\frac{1}{\varepsilon}\right) \beta^2 \Lambda+2 p \alpha \beta^2 \Lambda, \\
			\kappa_{\widetilde{C}} & =(1-p \alpha)\left(1+\frac{1}{\varepsilon}\right) \beta^2 \Lambda+2 p \alpha \beta^2 \Lambda, \\
			\kappa_E & =(1-p \alpha)\left(1+\frac{1}{\varepsilon}\right) \beta^2 \Omega+p \alpha\left(2 \beta^2+2 \beta^2 \Omega\right), \\
			\kappa_w & =(1-p \alpha)\left(1+\frac{1}{\varepsilon}\right)\left(\beta^2 \Upsilon+6 \beta^2\right)+p \alpha\left(12 \beta^2+2 \beta^2 \Upsilon\right), \\
			\kappa_{\Phi} & =(1-p \alpha)\left(1+\frac{1}{\varepsilon}\right) \beta^2 \Psi+2 p \alpha \beta^2 \Psi, \\
			\kappa_0 & =(1-p \alpha)\left(1+\frac{1}{\varepsilon}\right)\left(\beta^2 \Gamma+16 \beta^2 R_w^2\right)+p \alpha\left(\frac{2 \sigma^2}{K}+32 \beta^2 R_w^2+2 \beta^2 \Gamma\right) .
		\end{aligned}
	\end{equation}
\end{proof}

\begin{lemma}\label{lemma:clientdrift}
	Lemma (Client Drift Bound for CIDERS).
	Under Assumptions A1–A8, there exists positive values $\mathcal{A},\mathcal{B},\widetilde{\mathcal{D}}$ and $\mathcal{E}$ that the client-drift
	satisfies the fully explicit bound
	\begin{equation}
		E_t \leq \frac{\nicefrac{(K-1)}{2}\{D^{\prime} \mathcal{A}\left(\Phi\left(v^{(t)}\right)-\Phi^*\right)+2\mathcal{E} (C_t+\widetilde{C}_t)+\widetilde{\mathcal{D}}\}}{1-K L^2 D^{\prime}}.
	\end{equation}		
\end{lemma}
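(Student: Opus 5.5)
We bound the per-step local deviation $\|v_{i,k}^{(t)}-v^{(t)}\|^2$ by unrolling the consensus-corrected local recursion, then average over $k$ and $i$ to obtain a self-referential inequality in $E_t$, and finally solve for $E_t$.

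\textbf{Step 1: unroll the recursion.} Starting from $v_{i,0}^{(t)}=v^{(t)}$, the update
\[
v_{i,k+1}^{(t)}-v^{(t)}=(1-\varepsilon\eta)\bigl(v_{i,k}^{(t)}-v^{(t)}\bigr)-\eta\bigl[\widehat g_{i,k}+\gamma(c^{(t)}-c_i^{(t)})\bigr]
\]
gives, using $\rho=1-\varepsilon\eta\le 1$,
\[
v_{i,k}^{(t)}-v^{(t)}=-\eta\sum\nolimits_{j<k}\rho^{k-1-j}\bigl[\widehat g_{i,j}+\gamma(c^{(t)}-c_i^{(t)})\bigr].
\]
By Cauchy--Schwarz, $\mathbb{E}\|v_{i,k}^{(t)}-v^{(t)}\|^2\le k\eta^2\sum_{j<k}\mathbb{E}\|\widehat g_{i,j}+\gamma(c^{(t)}-c_i^{(t)})\|^2$.

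\textbf{Step 2: split each summand.} Decompose $\widehat g_{i,j}$ as
\[
\widehat g_{i,j}=(\widehat g_{i,j}-\widetilde g_{i,j})+\bigl(\widetilde g_{i,j}(v_{i,j})-\widetilde g_{i}(v)\bigr)+\widetilde g_i(v),
\]
and keep the consensus term separate. The four resulting pieces are bounded as follows:
\begin{itemize}
\item the variance piece by $\sigma^2$;
\item the path piece by $L^2\|v_{i,j}^{(t)}-v^{(t)}\|^2$ (smoothness; up to constants this is where $L^2$ enters);
\item the anchor piece via Lemma \ref{lemma:2beta}, after averaging over $i$, which gives $(8L^2\eta_{\rm inner}^2+4)\zeta^2+(16L^2\eta_{\rm inner}^2+4)\beta(\Phi(v^{(t)})-\Phi^*)$;
\item the consensus piece via \eqref{eq:cbound}, which gives $\gamma^2\,\mathbb{E}\|c^{(t)}-c_i^{(t)}\|^2\le 2\gamma^2(C_t+\widetilde C_t)$.
\end{itemize}
We lump all numeric prefactors and $\eta^2$ into $D'$, define $\mathcal A$ as the $\Phi$-coefficient and $\mathcal E$ as the consensus coefficient (absorbing $\gamma^2$), and collect $\sigma^2,\zeta^2$ into $\widetilde{\mathcal D}$. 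This yields
\[
\tfrac1N\sum_i\mathbb{E}\|v_{i,k}^{(t)}-v^{(t)}\|^2\le k\Bigl[D'\mathcal A(\Phi-\Phi^*)+2\mathcal E(C_t+\widetilde C_t)+\widetilde{\mathcal D}\Bigr]+kL^2D'\sum\nolimits_{j<k}\tfrac1N\sum_i\mathbb{E}\|v_{i,j}^{(t)}-v^{(t)}\|^2 .
\]

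\textbf{Step 3: average over $k$ and close the loop.} Averaging over $k=0,\dots,K-1$ and using $\frac1K\sum_k k=\frac{K-1}{2}$ produces the prefactor $(K-1)/2$ in front of the bracket. For the drift term we bound $\frac1K\sum_k k\sum_{j<k}(\cdot)\le K\cdot KE_t$, so its coefficient is at most $KL^2D'E_t$ after the same normalization. This gives
\[
E_t\le \tfrac{K-1}{2}\{\cdots\}+KL^2D'E_t .
\]
Under the step-size condition $KL^2D'<1$, which requires $\eta$ small and is consistent with $\eta=\mathcal O(T^{-1/2})$, rearranging gives the claimed bound.

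\textbf{Main obstacle.} The delicate part is the drift term in Step 3: keeping its coefficient at exactly $KL^2D'$ rather than $K^2L^2D'$ while the bracket keeps its $(K-1)/2$ factor. The natural approach is a Grönwall-type induction on $k$, using the same uniform $D'$ in both places. If that does not give the sharp coefficient, we will fold the extra factor of $K$ into the definition of $D'$.
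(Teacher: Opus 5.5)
Your argument goes through, and the fallback in your last paragraph (putting a factor $K$ into $D'$) is the right resolution rather than a patch. However, the displays in Steps 2 and 3 do not follow as written, so keep the raw coefficients through the averaging and define $D'$ only at the end.

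Cauchy--Schwarz gives $k\eta^2\sum_{j<k}(\cdot)$. The constant pieces therefore carry $4k^2\eta^2$, not a single factor $k$. The path piece carries $4k\eta^2L_m^2\sum_{j<k}X_j$, with $X_j=\frac1N\sum_i\mathbb{E}\|v_{i,j}^{(t)}-v^{(t)}\|^2$ and $L_m$ the Lipschitz constant of $v\mapsto\widetilde g_i(v,w_S)$ (e.g.\ $L(1+\eta_{\rm inner}L)$). Using $\frac1K\sum_k k^2\le\frac{K-1}{2}K$ and $\frac1K\sum_k k\sum_{j<k}X_j\le\frac{K(K-1)}{2}E_t$, you get
\begin{align*}
E_t\le{}&\tfrac{K-1}{2}\,4K\eta^2\bigl[(16L^2\eta_{\rm inner}^2+4)\beta\,(\Phi(v^{(t)})-\Phi^*)+2\gamma^2(C_t+\widetilde C_t)\\
&+\sigma^2+(8L^2\eta_{\rm inner}^2+4)\zeta^2\bigr]+2K(K-1)\eta^2L_m^2\,E_t .
\end{align*}
The $K^2\eta^2L_m^2$ feedback scale is intrinsic: deterministic drift grows like $k\eta$, and this is the usual requirement that $K\eta L$ be small. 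An induction on $k$ could at best trade it for a side condition.

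The statement leaves $D'$ unspecified, and its right-hand side is increasing in $D'$ on $\{KL^2D'<1\}$. So you may take
\begin{itemize}
\item $D'=cK\eta^2$ with $c=\max\{4,2L_m^2/L^2\}$,
\item $\mathcal A=(16L^2\eta_{\rm inner}^2+4)\beta$,
\item $\mathcal E=4K\eta^2\gamma^2$,
\item $\widetilde{\mathcal D}=4K\eta^2[\sigma^2+(8L^2\eta_{\rm inner}^2+4)\zeta^2]$,
\end{itemize}
and rearrange under $KL^2D'<1$.

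The paper proceeds differently. It runs a one-step second-moment recursion for $d_{i,k}=v_{i,k}^{(t)}-v^{(t)}$ and expands the cross term $-2\eta\langle d_{i,k},\cdot\rangle$. It rewrites the meta-gradient through a Hessian factorization and applies Young's inequality with free parameters $\epsilon_1,\epsilon_2,\epsilon_3$. It unrolls the resulting geometric recursion, replacing $(\rho^{K-m-1}-1)/(\rho-1)$ by $K-m-1$. It then controls $\|\nabla_v\mathcal L_i(v_{i,m})\|^2$ through function values, using three ingredients:
\begin{itemize}
\item the self-bounding inequality $\|\nabla\mathcal L_i\|^2\le 2L(\mathcal L_i-\mathcal L_i^*)$;
\item an $L$-smoothness expansion about $v^{(t)}$, which is the source of its $L^2D'E_t$ feedback;
\item a client-gap bound by $N(\Phi-\Phi^*)$.
\end{itemize}
It ends with $D'=\eta\epsilon_2+O(\eta^2)$ and $\mathcal A=2LN+4\beta+O(\eta)$.

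You instead stay at the gradient level: Lemma~\ref{lemma:2beta} disposes of the anchor in one step, and the Lipschitz path piece supplies the feedback. That buys brevity and robustness, because you bypass three fragile steps of the paper:
\begin{itemize}
\item the factorization $g_i=(I-H)(\nabla\mathcal L_i+e)$, which drops the $\eta_{\rm inner}$ in front of the Hessian and needs $I-H$ invertible;
\item the use of $K-m-1$ as an upper bound on a geometric sum that exceeds it when $\rho>1$;
\item the final step of the client-gap bound, which would need $\sum_j\mathcal L_j^*\ge N\Phi^*$ whereas only $\le$ holds in general.
\end{itemize}
You also get the standard, sharper order $D'=O(K\eta^2)$ in place of $O(\eta)$.

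What the paper's recursion offers in return is a conditional-expectation treatment of the noise, so that $\sigma^2$ enters as $\eta^2\sigma^2$ per step. Your Cauchy--Schwarz charges an extra factor $k$ there. Since the statement leaves $\widetilde{\mathcal D}$ free, that loss is immaterial.
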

\begin{proof}
	For brevity, at an arbitrary communication round $t\geq0$,  we write $v=v^{(t)}$, $w_S=w_S^{(t)}$, $\nabla\Phi=\nabla\Phi(v^{(t)})$, $  {c}={c}^{(t)}  $ and $  {c}_i={c}_i^{(t)}$. Define the local displacement of client $i$ at local step $k$ by		$d_{i,k}:=v_{i,k}-v.$  The initialization $  v_{i,0}=v  $ immediately yields $  d_{i,0}=0  $. The local update of CIDERS
	is therefore equivalent to the linear recurrence
	\begin{equation}\label{eq:disp_rec}
		d_{i,k+1}=d_{i,k}-\eta\Bigl[\widetilde{g}_{i,k}+\gamma\bigl({c}-{c}_i\bigr)+\varepsilon\,d_{i,k}\Bigr]
	\end{equation}
	Taking $L_2$-norm and the conditional expectation with respect to the stochasticity of the practical meta-gradient $  \widetilde{g}_{i,k}  $ given the current local messenger $  v_{i,k}  $ produces
	\begin{equation}\label{eq:cond_expect}
		\begin{aligned}
			\mathbb{E}\bigl[\|d_{i,k+1}\|^2\bigm|v_{i,k}\bigr]
			&=\|d_{i,k}\|^2-2\eta\bigl\langle d_{i,k},\,\mathbb{E}[\widetilde{g}_{i,k}\mid v_{i,k}]+\gamma({c}-{c}_i)+\varepsilon\,d_{i,k}\bigr\rangle\\
			&\quad+\eta^2\mathbb{E}\bigl[\|\widetilde{g}_{i,k}+\gamma({c}-{c}_i)+\varepsilon\,d_{i,k}\|^2\bigm|v_{i,k}\bigr],
		\end{aligned}
	\end{equation}
	where the practical meta-gradient is written as $  \widetilde{g}_{i,k}=g_i(v_{i,k};w_S)+\xi_{i,k}  $, and  $  g_i(v_{i,k};w_S)  $ denotes the deterministic one-step meta-gradient of client $  i  $. The noise $\xi_{i,k}$ satisfies $\mathbb{E}[\xi_{i,k}\mid v_{i,k}]=0$ and $\mathbb{E}[\|\xi_{i,k}\|^2\mid v_{i,k}]\le\sigma^2$ by Assumption A4. Define the auxiliary vector
	$u_{i,k}:=g_i(v_{i,k};w_S)+\gamma({c}-{c}_i)+\varepsilon\,d_{i,k}.$
	Then the third term in \eqref{eq:cond_expect} can be simply evaluated via 
	\begin{equation}\label{eq:noise_expand}
		\begin{aligned}
			\mathbb{E}\bigl[\|\widetilde{g}_{i,k}+\gamma({c}-{c}_i)+\varepsilon\,d_{i,k}\|^2\bigm|v_{i,k}\bigr]
			&=\|u_{i,k}\|^2+\mathbb{E}[\|\xi_{i,k}\|^2\mid v_{i,k}]\le\|u_{i,k}\|^2+\sigma^2.				
		\end{aligned}
	\end{equation}
	We substitute  \eqref{eq:noise_expand} into \eqref{eq:cond_expect} and it  yields
	\begin{equation}\label{eq:after_noise}
		\begin{aligned}
			\mathbb{E}\bigl[\|d_{i,k+1}\|^2\bigm|v_{i,k}\bigr]
			&\le\|d_{i,k}\|^2-2\eta\bigl\langle d_{i,k},g_i(v_{i,k};w_S)+\gamma({c}-{c}_i)+\varepsilon\,d_{i,k}\bigr\rangle\\
			&\quad\quad\quad\quad\quad+\eta^2\|g_i(v_{i,k};w_S)+\gamma({c}-{c}_i)+\varepsilon\,d_{i,k}\|^2+\eta^2\sigma^2.
		\end{aligned}
	\end{equation}
	The quadratic term on the right-hand side of \eqref{eq:after_noise} is expanded by:
	\begin{equation}\label{eq:quad_triangle}
		\begin{aligned}
			\eta^2\|g_i+\gamma({c}-{c}_i)+\varepsilon\,d_{i,k}\|^2
			&\le3\eta^2\|g_i\|^2+3\eta^2\gamma^2\|{c}-{c}_i\|^2+3\eta^2\varepsilon^2\|d_{i,k}\|^2.
		\end{aligned}
	\end{equation}
	The inner-product term is split into three contributions:
	\begin{equation}\label{eq:inner_split}
		\begin{aligned}
			-2\eta\bigl\langle d_{i,k},g_i+\gamma({c}-{c}_i)+\varepsilon\,d_{i,k}\bigr\rangle
			&=-2\eta\bigl\langle d_{i,k},g_i\bigr\rangle-2\eta\gamma\bigl\langle d_{i,k},{c}-{c}_i\bigr\rangle-2\eta\varepsilon\|d_{i,k}\|^2.
		\end{aligned}
	\end{equation}
	Next, we proof the following: by the definition of the one-step meta-gradient and the $  \beta  $-smoothness of each client meta-loss (Assumption A1) there exists a remainder vector $  e_{i,k}  $ satisfying $  \|e_{i,k}\|\le Q\eta\|g_i(v_{i,k};w_S)\|  $ such that
	\begin{equation}\label{eq:tmp14}
		g_i(v_{i,k};w_S)=(I-\nabla_v^2\mathcal{L}_i(v_{i,k};w_S))\bigl(\nabla_v\mathcal{L}_i(v_{i,k};w_S)+e_{i,k}\bigr).
	\end{equation}
	Recall the practical one-step meta-gradient of client $  i  $  $g_i(v_{i,k};w_S)$ is defined as 
	\begin{equation}
		g_i\bigl(v_{i,k};w_S\bigr)
		:=
		\nabla_v\mathcal{L}_i\Bigl(
		v_{i,k}-\eta_{\rm inner}\nabla_v\mathcal{L}_i\bigl(v_{i,k};w_S\bigr),\;
		w_S
		\Bigr).
	\end{equation}
	Then, we introduce the corresponding look-ahead point $z_{i,k}
	:=
	v_{i,k}-\eta_{\rm inner}\nabla_v\mathcal{L}_i\bigl(v_{i,k};w_S\bigr).$ Assumption A1 guarantees that $  \mathcal{L}_i(\,\cdot\,;w_S)  $ is twice continuously differentiable with respect to the messenger variable. Consequently the gradient map
	$v\;\mapsto\;\nabla_v\mathcal{L}_i(v;w_S)$
	is continuously differentiable. The fundamental theorem of calculus applied along the line segment joining $  v_{i,k}  $ and $  z_{i,k}  $ yields the exact identity
	\begin{equation}\label{eq:ftc_full}
		\begin{aligned}
			&\nabla_v\mathcal{L}_i\bigl(z_{i,k};w_S\bigr)
			=
			\nabla_v\mathcal{L}_i\bigl(v_{i,k};w_S\bigr)
			+
			\int_0^1
			\nabla_v^2\mathcal{L}_i\Bigl(
			v_{i,k}+t\bigl(z_{i,k}-v_{i,k}\bigr);w_S
			\Bigr)
			\bigl(z_{i,k}-v_{i,k}\bigr)\,dt\\
			&\quad=
			\nabla_v\mathcal{L}_i\bigl(v_{i,k};w_S\bigr)
			-
			\eta_{\rm inner}
			\Biggl(
			\int_0^1
			\nabla_v^2\mathcal{L}_i\Bigl(
			v_{i,k}+t\bigl(z_{i,k}-v_{i,k}\bigr);w_S
			\Bigr)\,dt
			\Biggr)
			\nabla_v\mathcal{L}_i\bigl(v_{i,k};w_S\bigr).							
		\end{aligned}
	\end{equation}
	Define the averaged Hessian along the segment by
	$H_{\rm avg,i,k}
	:=
	\int_0^1
	\nabla_v^2\mathcal{L}_i\Bigl(
	v_{i,k}+t\bigl(z_{i,k}-v_{i,k}\bigr);w_S
	\Bigr)\,dt$
	and the difference between this averaged Hessian and the Hessian evaluated at the original local messenger $R_{i,k}:=H_{\rm avg,i,k}	-
	\nabla_v^2\mathcal{L}_i\bigl(v_{i,k};w_S\bigr).$
	Equation \eqref{eq:ftc_full} then becomes
	\begin{equation}\label{eq:with_R}
		\begin{aligned}
			g_i\bigl(v_{i,k};w_S\bigr)
			&=
			\nabla_v\mathcal{L}_i\bigl(v_{i,k};w_S\bigr)
			-
			\eta_{\rm inner}
			\bigl(
			\nabla_v^2\mathcal{L}_i\bigl(v_{i,k};w_S\bigr)
			+
			R_{i,k}
			\bigr)
			\nabla_v\mathcal{L}_i\bigl(v_{i,k};w_S\bigr)
			\\
			&=
			\bigl(
			I
			-
			\nabla_v^2\mathcal{L}_i\bigl(v_{i,k};w_S\bigr)
			\bigr)
			\nabla_v\mathcal{L}_i\bigl(v_{i,k};w_S\bigr)
			-
			\eta_{\rm inner}\,R_{i,k}\,\nabla_v\mathcal{L}_i\bigl(v_{i,k};w_S\bigr).				
		\end{aligned}
	\end{equation}
	Subsequently, the remainder $  e_{i,k}  $ can be  uniquely determined via
	a direct comparison of \eqref{eq:with_R} and \eqref{eq:tmp14} as $e_{i,k}=-(I-		\nabla_v^2\mathcal{L}_i(v_{i,k};w_S)
	)^{-1}	(\eta_{\rm inner}\,R_{i,k}\,\nabla_v\mathcal{L}_i(v_{i,k};w_S)),$	
	whenever the indicated inverse exists. When the inverse does not exist, the same identity continues to hold after a standard $  \delta  $-regularization ($  I-\nabla_v^2\mathcal{L}_i+\delta I  $) followed by the limit $  \delta\to0^+  $; the resulting $  e_{i,k}  $ remains well-defined and bounded.
	
	From the  Hessian-Lipschitz continuity of $\mathcal{L}_i$ in Assumption A6 together with $  \|z_{i,k}-v_{i,k}\|=\eta_{\rm inner}\|\nabla_v\mathcal{L}_i(v_{i,k};w_S)\|  $, it leads to 	
	\begin{equation}\label{eq:integral_bound}
		\begin{aligned}
			\|R_{i,k}\|
			&=
			\Biggl\|
			\int_0^1
			\Biggl(
			\nabla_v^2\mathcal{L}_i
			\Bigl(
			v_{i,k}
			+
			t\bigl(z_{i,k}-v_{i,k}\bigr);
			w_S
			\Bigr)
			-
			\nabla_v^2\mathcal{L}_i\bigl(v_{i,k};w_S\bigr)
			\Biggr)
			\,dt
			\Biggr\|
			\nonumber\\
			&\le
			\int_0^1
			\bigl\|
			\nabla_v^2\mathcal{L}_i
			\Bigl(
			v_{i,k}
			+
			t\bigl(z_{i,k}-v_{i,k}\bigr);
			w_S
			\Bigr)
			-
			\nabla_v^2\mathcal{L}_i\bigl(v_{i,k};w_S\bigr)
			\bigr\|
			\,dt
			\nonumber\\
			&\le
			\int_0^1
			H\,t\,\|z_{i,k}-v_{i,k}\|
			\,dt
			=\nicefrac{H\eta_{\rm inner}}{2}\bigl\|\nabla_v\mathcal{L}_i(v_{i,k};w_S)\bigr\|,
		\end{aligned}
	\end{equation}
	where $  H  $ is the Hessian-Lipschitz constant of $  \mathcal{L}_i  $. Consequently, we can bound $\|e_{i,k}\|$ via 
	\begin{equation}
		\|e_{i,k}\|
		\le
		Q\,\eta_{\rm inner}\bigl\|\nabla_v\mathcal{L}_i(v_{i,k};w_S)\bigr\|, \text{ where } Q=\frac{H \eta_{\text {inner }}^2 G}{2|(1-B)|}.
	\end{equation}		
	Let's denote  $H_{i, k}=\nabla_v^2\mathcal{L}_i(v_{i,k};w_S)$, then from (\ref{eq:inner_split}), we have 
	\begin{equation}
		g_i(v_{i,k}; w_S)
		= (I - H_{i,k}) \nabla_v \mathcal{L}_i(v_{i,k}; w_S)
		+ (I - H_{i,k}) e_{i,k}.
	\end{equation}
	Consequently the inner-product term that appears in the drift recursion becomes
	\begin{equation}\label{dik}
		\begin{aligned}
			\bigl\langle d_{i,k}, g_i(v_{i,k}; w_S) \bigr\rangle
			&= \bigl\langle d_{i,k}, (I - H_{i,k}) \nabla_v \mathcal{L}_i(v_{i,k}; w_S) \bigr\rangle
			+ \bigl\langle d_{i,k}, (I - H_{i,k}) e_{i,k} \bigr\rangle,
		\end{aligned}
	\end{equation}		
	Use Young's inequality on the error term ( $2 a b \leq \frac{a^2}{\epsilon}+\epsilon b^2$ ):
	\begin{equation}\label{eq:tmp15}
		\begin{split}
			&2\left|\left\langle d_{i, k}^{(t)},\left(I-H_{i, k}\right) e_{i, k}\right\rangle\right| \leq \frac{\left\|d_{i, k}^{(t)}\right\|^2}{\epsilon'_1}+\epsilon'_1\left\|I-H_{i, k}\right\|^2 Q^2 \eta^2\left\|\nabla_v \mathcal{L}_i(v_{i,k}; w_S)\right\|^2\\
			&\leq \frac{\left\|d_{i, k}^{(t)}\right\|^2}{\epsilon'_1}+\epsilon'_1(1-M)^2 Q^2 \eta^2\left\|\nabla_v \mathcal{L}_i(v_{i,k}; w_S)\right\|^2\leq \frac{\left\|d_{i, k}^{(t)}\right\|^2}{\epsilon_1}+\epsilon_1 Q^2 \eta^2\left\|\nabla_v \mathcal{L}_i(v_{i,k}; w_S)\right\|^2,
		\end{split}
	\end{equation}
	where we have used the assumption that the Hessian is bounded, i.e., $\|H_{i, k}\| \leq M$. Similarly, applying Cauchy-Schwarz and Young`s inequalities leads to:
	\begin{equation}\label{eq:tmp16}
		\begin{split}
			&-2\left\langle d_{i, k^{\prime}}^{(t)}\left(I-H_{i, k}\right) \nabla_v \mathcal{L}_i(v_{i,k}; w_S)\right\rangle \leq \frac{\left\|d_{i, k}^{(t)}\right\|^2}{\epsilon'_2}+\epsilon'_2\left\|I-H_{i, k}\right\|^2 Q^2 \eta^2\left\|\nabla_v \mathcal{L}_i(v_{i,k}; w_S)\right\|^2\\
			&\leq \frac{\left\|d_{i, k}^{(t)}\right\|^2}{\epsilon'_2}+\epsilon'_2(1-M)^2Q^2 \eta^2 \left\|\nabla_v \mathcal{L}_i(v_{i,k}; w_S)\right\|^2\leq \frac{\left\|d_{i, k}^{(t)}\right\|^2}{\epsilon_2}+\epsilon_2 \left\|\nabla_v \mathcal{L}_i(v_{i,k}; w_S)\right\|^2.
		\end{split}
	\end{equation}   
	Combining the two inner-product bounds  (\ref{eq:tmp15})(\ref{eq:tmp16}) and substituting them into (\ref{dik}) leads to:  
	\begin{equation}\label{eq:tmp17}
		\begin{gathered}
			-2 \eta\left\langle d_{i, k}^{(t)}, g_i(v_{i,k}; w_S)\right\rangle \leq \eta\left(\frac{\left\|d_{i, k}^{(t)}\right\|^2}{\epsilon_2}+\epsilon_2\left\|\nabla_v \mathcal{L}_i(v_{i,k}; w_S)\right\|^2\right)+\eta \frac{\left\|d_{i, k}^{(t)}\right\|^2}{\epsilon_1} \\
			\quad\quad+\eta \epsilon_1 Q^2 \eta^2\left\|\nabla_v \mathcal{L}_i(v_{i,k}; w_S)\right\|^2=\eta\left(\frac{1}{\epsilon_2}+\frac{1}{\epsilon_1}\right)\left\|d_{i, k}^{(t)}\right\|^2+\eta\left(\epsilon_2+\epsilon_1 Q^2 \eta^2\right)\left\|\nabla_v \mathcal{L}_i(v_{i,k}; w_S)\right\|^2.
		\end{gathered}
	\end{equation}   
	Also for the consensus variate term via Young's inequality:    
	\begin{equation}\label{eq:tmp18}
		-2 \eta \gamma\left\langle d_{i, k^{\prime}}^{(t)}, c^{(t)}-c_i^{(t-1)}\right\rangle \leq 2 \eta \gamma\left\|d_{i, k}^{(t)}\right\|\left\|c^{(t)}-c_i^{(t-1)}\right\| \leq \eta \gamma\left(\frac{\left\|d_{i, k}^{(t)}\right\|^2}{\epsilon_3}+\epsilon_3\left\|c^{(t)}-c_i^{(t)}\right\|^2\right)
	\end{equation}
	Substituting (\ref{eq:tmp17})(\ref{eq:tmp18}) into (\ref{eq:inner_split}), we have
	\begin{equation}\label{eq:tmp19}
		\begin{aligned}
			&-2\eta\bigl\langle d_{i,k},g_i(v_{i,k};w_S)+\gamma({c}-{c}_i)+\varepsilon\,d_{i,k}\bigr\rangle
			\leq\eta\Bigl(\frac1{\epsilon_1}+\frac1{\epsilon_2}+\frac\gamma{\epsilon_3}-2\varepsilon\Bigr)\|d_{i,k}\|^2\\
			&\quad\quad\quad\quad\quad\quad\quad\quad\quad\quad\quad\quad+\eta\bigl(\epsilon_2+\epsilon_1 Q^2\eta^2\bigr)\bigl\|\nabla_v\mathcal{L}_i(v_{i,k};w_S)\bigr\|^2+\eta\gamma\epsilon_3\bigl\|{c}-{c}_i\bigr\|^2.
		\end{aligned}		
	\end{equation}
	We continue to combine (\ref{eq:tmp19})(\ref{eq:quad_triangle}) into (\ref{eq:after_noise}), and it results in		
	\begin{equation}
		\begin{aligned}
			&\quad\quad\quad\quad\quad\mathbb{E}\bigl[\|d_{i,k+1}\|^2\bigm|v_{i,k}\bigr]
			\le\Bigl(1+\eta\Bigl(\frac1{\epsilon_1}+\frac1{\epsilon_2}+\frac\gamma{\epsilon_3}-2\varepsilon\Bigr)+3\eta^2\varepsilon^2\Bigr)\|d_{i,k}\|^2\\
			&+\eta\bigl(\epsilon_2+\epsilon_1 Q^2\eta^2\bigr)\bigl\|\nabla_v\mathcal{L}_i(v_{i,k};w_S)\bigr\|^2+\bigl(\eta\gamma\epsilon_3+3\eta^2\gamma^2\bigr)\|{c}-{c}_i\|^2+3\eta^2\|g_i(v_{i,k};w_S)\|^2+\eta^2\sigma^2.
		\end{aligned}
	\end{equation}
	Recall the client drift definition	$E_t=\nicefrac{1}{K N} \sum_{i=1}^N \sum_{k=0}^{K-1} \mathbb{E}\|d_{i, k}^{(t)}\|^2$, we take the relavant summation with $d_{i, k}^{(0)}=0$ and unroll the bound to yield
	\begin{equation}\label{eq:tmp20}
		\begin{aligned}
			&E_t  \leq \frac{1}{K N} \sum_{i=1}^N \sum_{k=1}^{K-1} \sum_{m=0}^{k-1}\left[\eta\left(\epsilon_2+\epsilon_1 Q^2 \eta^2\right) \mathbb{E}\left\|\nabla_v \mathcal{L}_i\left(v_{i, m} ; w_S\right)\right\|^2+\left(\eta \gamma \epsilon_3+3 \eta^2 \gamma^2\right) \mathbb{E}\left\|{c}-{c}_i\right\|^2\right.\\
			&\quad\quad\quad\quad\quad\left. +3 \eta^2 \mathbb{E}\left\|g_i\left(v_{i, m} ; w_S\right)\right\|^2+\eta^2 \sigma^2\right] \left[1+\eta\left(\frac{1}{\epsilon_1}+\frac{1}{\epsilon_2}+\frac{\gamma}{\epsilon_3}-2 \varepsilon\right)+3 \eta^2 \varepsilon^2\right]^{k-1-m}\\
			&\leq\frac{1}{K N} \sum_{i=1}^N \sum_{m=0}^{K-2}\left(\sum_{k=m+1}^{K-1} \left[1+\eta\left(\frac{1}{\epsilon_1}+\frac{1}{\epsilon_2}+\frac{\gamma}{\epsilon_3}-2 \varepsilon\right)+3 \eta^2 \varepsilon^2\right]^{k-1-m}\right) \cdot \left[\eta\left(\epsilon_2+\epsilon_1 Q^2 \eta^2\right) \right.\\
			&\quad\quad\quad\left.\cdot \mathbb{E}\left\|\nabla_v \mathcal{L}_i\left(v_{i, m} ; w_S\right)\right\|^2+\left(\eta \gamma \epsilon_3+3 \eta^2 \gamma^2\right) \mathbb{E}\left\|{c}-{c}_i\right\|^2+3 \eta^2 \mathbb{E}\left\|g_i\left(v_{i, m} ; w_S\right)\right\|^2+\eta^2 \sigma^2\right]\\
			&\leq\frac{1}{K N} \sum_{i=1}^N \sum_{m=0}^{K-2}\frac{\rho^{K-m-1}-1}{\rho-1} \cdot \left[\eta\left(\epsilon_2+\epsilon_1 Q^2 \eta^2\right) \mathbb{E}\left\|\nabla_v \mathcal{L}_i\left(v_{i, m} ; w_S\right)\right\|^2\right.\\
			&\quad\quad\quad\quad\quad\quad\quad\quad\quad\quad\quad\left.+\left(\eta \gamma \epsilon_3+3 \eta^2 \gamma^2\right) \mathbb{E}\left\|{c}-{c}_i\right\|^2+3 \eta^2 \mathbb{E}\left\|g_i\left(v_{i, m} ; w_S\right)\right\|^2+\eta^2 \sigma^2\right],
		\end{aligned}
	\end{equation}
	where we have defined $\rho=1+\eta\left(\nicefrac{1}{\epsilon_1}+\nicefrac{1}{\epsilon_2}+\nicefrac{\gamma}{\epsilon_3}-2 \varepsilon\right)+3 \eta^2 \varepsilon^2$. For small $\eta$, it simply holds
	\begin{equation}
		\begin{aligned}
			\frac{\rho^{K-m-1}-1}{\rho-1} \approx K-m-1,
		\end{aligned}
	\end{equation}
	and combined with the result $E\|c^{(t)}-c_i^{(t)}\|^2\leq 2C_t+2\widetilde{C}_t$ in (\ref{eq:cminusc}) leads (\ref{eq:tmp20}) to
	\begin{equation}\label{eq:tmp23}
		\begin{aligned}
			E_t&\leq\frac1{KN}\sum_{i=1}^N\sum_{m=0}^{K-2}(K-m-1)\Biggl[
			\eta\bigl(\epsilon_2+\epsilon_1 Q^2\eta^2\bigr)\mathbb{E}\bigl\|\nabla_v\mathcal{L}_i(v_{i,m};w_S)\bigr\|^2\\
			&\quad\quad\quad\quad+\bigl(\eta\gamma\epsilon_3+3\eta^2\gamma^2\bigr)(2C_t+2\widetilde{C}_t)+3\eta^2\mathbb{E}\|g_i(v_{i,m};w_S)\|^2+\eta^2\sigma^2\Biggr].
		\end{aligned}
	\end{equation}
	Moreover,   since $K-m-1 \leq K$, and there are $K-1$ terms, it leads to $\sum\nolimits_{m=0}^{K-2}(K-m-1)=\sum_{j=1}^{K-1} j=\nicefrac{(K-1) K}{2} \leq \nicefrac{K^2}{2}$. Then, for $  \|g_i(v_{i,m};w_S)\|^2  $, it can be related to $  \|\nabla_v\mathcal{L}_i(v_{i,m};w_S)\|^2  $:
	\begin{equation}\label{eq:tmp21}
		\begin{aligned}
			\|g_i(v_{i,m};w_S)\|^2 
			&\leq 2\|\nabla_v\mathcal{L}_i(v_{i,m};w_S)\|^2 
			+ 2\eta_{\rm inner}^2L^2\|\nabla_v\mathcal{L}_i(v_{i,m};w_S)\|^2 \\
			&= 2\bigl(1+\eta_{\rm inner}^2L^2\bigr)\|\nabla_v\mathcal{L}_i(v_{i,m};w_S)\|^2,
		\end{aligned}
	\end{equation}
	where we have decomposed $  g_i(v_{i,m};w_S)  $ with the triangle inequality, and used the bound
	\begin{equation}\label{eq:tmp22}
		\|g_i(v_{i,m};w_S)-\nabla_v\mathcal{L}_i(v_{i,m};w_S)\|^2 
		\leq \eta_{\rm inner}^2L^2\|\nabla_v\mathcal{L}_i(v_{i,m};w_S)\|^2.
	\end{equation}
	From \eqref{eq:tmp23}\eqref{eq:tmp21}, we now turn to the evaluation of $  \|\nabla_v \mathcal{L}_i(v_{i,m}; w_S)\|^2  $, which we aim to relate to the global meta-objective value via the algorithm’s progress. Specifically, we have the following by the standard descent inequality for a $L$-smooth function:
	\begin{equation}\label{eq:tmp100}
		\mathbb{E}\left\|\nabla_v \mathcal{L}_i\left(v_{i, m} ; w_S\right)\right\|^2 \leq 2 L \mathbb{E}\left[\mathcal{L}_i\left(v_{i, m} ; w_S\right)-\mathcal{L}_i^*\right]
	\end{equation}
	We can bound $\mathbb{E}\|\nabla_v\mathcal{L}_i(v_{i,m};w_S)\|^2$ by starting via the $L$-smoothness of $  \mathcal{L}_i(\,\cdot\,;w_S)  $ about the global messenger at the beginning of the round:
	\begin{equation}\label{eq:f}
		\begin{aligned}
			\mathbb{E}\bigl[\mathcal{L}_i(v_{i,m};w_S)\bigr] 
			&\leq \mathcal{L}_i(v^{(t)};w_S)
			+\bigl\langle\nabla_v\mathcal{L}_i(v^{(t)};w_S),\mathbb{E}[v_{i,m}-v^{(t)}]\bigr\rangle
			+\frac\beta2\,\mathbb{E}\|v_{i,m}-v^{(t)}\|^2.
		\end{aligned}
	\end{equation}
	Then the inner-product term can be bounded simply by Cauchy–Schwarz and Young’s inequalities subsequently:
	\begin{equation}\label{eq:tmp24}
		\begin{aligned}
			&\quad\quad\quad\bigl\langle\nabla_v\mathcal{L}_i(v^{(t)};w_S),\mathbb{E}[v_{i,m}-v^{(t)}]\bigr\rangle
			\leq\bigl\|\nabla_v\mathcal{L}_i(v^{(t)};w_S)\bigr\|\,\bigl\|\mathbb{E}[v_{i,m}-v^{(t)}]\bigr\|\\
			&\leq\frac1{2L}\bigl\|\nabla_v\mathcal{L}_i(v^{(t)};w_S)\bigr\|^2
			+\frac L2\bigl\|\mathbb{E}[v_{i,m}-v^{(t)}]\bigr\|^2
			\leq\frac1{2L}\bigl\|\nabla_v\mathcal{L}_i(v^{(t)};w_S)\bigr\|^2
			+\frac L2\,\mathbb{E}\|v_{i,m}-v^{(t)}\|^2,
		\end{aligned}
	\end{equation}
	where the last step uses Jensen’s inequality. Substituting \eqref{eq:tmp24} into \eqref{eq:f} and subtracting the minimum value $  \mathcal{L}_i^*  $ yields
	\begin{equation}
		\mathbb{E}\bigl[\mathcal{L}_i(v_{i,m};w_S)-\mathcal{L}_i^*\bigr]
		\leq\mathcal{L}_i(v^{(t)};w_S)-\mathcal{L}_i^*
		+\frac1{2L}\bigl\|\nabla_v\mathcal{L}_i(v^{(t)};w_S)\bigr\|^2
		+L\,\mathbb{E}\|v_{i,m}-v^{(t)}\|^2.
	\end{equation}
	Next we bound the gradient term at the global messenger by client heterogeneity. By Assumption A3, we have 
	\begin{equation}
		\begin{aligned}
			\mathbb{E}\|g_i(v_{i,m};w_S)-\nabla \Phi(v)\|^2 \leq G^2 + B^2 \|\nabla\Phi(v)\|^2,
		\end{aligned}
	\end{equation}
	Decomposing the squared norm therefore gives
	\begin{equation}
		\begin{aligned}
			&\mathbb{E}\bigl\|\nabla_v\mathcal{L}_i(v^{(t)};w_S)\bigr\|^2
			=\mathbb{E}\bigl\|\nabla_v\mathcal{L}_i(v^{(t)};w_S)-g_i(v_{i,m};w_S)+g_i(v_{i,m};w_S)-\nabla\Phi(v^{(t)})+\nabla\Phi(v^{(t)})\bigr\|^2\\
			&\leq 3\mathbb{E}\bigl\|\nabla_v\mathcal{L}_i(v^{(t)};w_S)-g_i(v_{i,m};w_S)\bigr\|^2+3\mathbb{E}\bigl\|g_i(v_{i,m};w_S)-\nabla\Phi(v^{(t)})\bigr\|+3\mathbb{E}\bigl\|\nabla\Phi(v^{(t)})\bigr\|^2\\
			&\leq 3L^2 \eta_{\text {inner }}^2\mathbb{E}\|\nabla_v\mathcal{L}_i(v^{(t)};w_S)\|^2+3 G^2 +(3+3 B^2) \mathbb{E}\|\nabla \Phi(v^{(t)})\|^2.
		\end{aligned}
	\end{equation}
	By letting $\widetilde{\mathcal{C}}=\nicefrac{3(1+B^2 )}{(1-3 L^2 \eta_{\text {inner }}^2)}$ and $\widetilde{\mathcal{L}}_y=\nicefrac{3 G^2}{(1-3 L^2 \eta_{\text {inner }}^2)}$, this will result in 
	\begin{equation}
		\mathbb{E}\|\nabla_v L_i(v^{(t)} ; w_S)\|^2 \leq \widetilde{\mathcal{C}} \mathbb{E}\|\nabla \Phi(v^{(t)})\|^2+\widetilde{\mathcal{L}}_y.
	\end{equation}
	We now bound the client gap $  \mathcal{L}_i(v^{(t)};w_S)-\mathcal{L}_i^*  $ that appears in \eqref{eq:q}. Define the one-step point from the global messenger by $z_i^{(t)}=v^{(t)}-\eta\nabla_v\mathcal{L}_i(v^{(t)};w_S).$ 	By $L$-smoothness we have
	\begin{equation}\label{eq:tmp107}
		\begin{aligned}
			\mathcal{L}_i&(v^{(t)};w_S)
			\leq\mathcal{L}_i(z_i^{(t)};w_S)
			+\bigl\langle\nabla_v\mathcal{L}_i(z_i^{(t)};w_S),v^{(t)}-z_i^{(t)}\bigr\rangle
			+\frac L2\|v^{(t)}-z_i^{(t)}\|^2\\
			&\leq\mathcal{L}_i(z_i^{(t)};w_S)
			+\eta\bigl\langle\nabla_v\mathcal{L}_i(z_i^{(t)};w_S),\nabla_v\mathcal{L}_i(v^{(t)};w_S)\bigr\rangle
			+\frac{L\eta^2}2\bigl\|\nabla_v\mathcal{L}_i(v^{(t)};w_S)\bigr\|^2\\
			&\leq\mathcal{L}_i(z_i^{(t)};w_S)
			+\eta\|\nabla_v\mathcal{L}_i(z_i^{(t)};w_S)\|\,\|\nabla_v\mathcal{L}_i(v^{(t)};w_S)\|
			+\frac{L\eta^2}2\bigl\|\nabla_v\mathcal{L}_i(v^{(t)};w_S)\bigr\|^2\\
			&\leq\mathcal{L}_i(z_i^{(t)};w_S)
			+\eta\Bigl(1+\frac{3L\eta}2\Bigr)\bigl\|\nabla_v\mathcal{L}_i(v^{(t)};w_S)\bigr\|^2\\
			&\leq\mathcal{L}_i(z_i^{(t)};w_S)
			+\eta\Bigl(1+\frac{3L\eta}2\Bigr)\widetilde{\mathcal{C}} \mathbb{E}\|\nabla \Phi(v^{(t)})\|^2+\eta\Bigl(1+\frac{3L\eta}2\Bigr)\widetilde{\mathcal{L}}_y.
		\end{aligned}
	\end{equation}	
	Because $\Phi(v^*)=\nicefrac1N\sum_{j=1}^N\mathcal{L}_j(z_j(v^*);w_S)\geq\nicefrac1N\sum_{j=1}^N\mathcal{L}_j^*  $,  we write the client gap at the one-step point $  z_i^{(t)}  $ as
	\begin{equation}
		\begin{aligned}
			\mathcal{L}_i\bigl(z_i^{(t)};w_S\bigr)-\mathcal{L}_i^*
			&=N\Phi(v^{(t)})
			-\sum_{j\neq i}\mathcal{L}_j\bigl(z_j^{(t)};w_S\bigr)
			-\mathcal{L}_i^*-\bigl(N\Phi(v^{(t)})-\mathcal{L}_i\bigl(z_i^{(t)};w_S\bigr)\bigr)\\
			&\leq N\Phi(v^{(t)})
			-\sum_{j\neq i}\mathcal{L}_j\bigl(z_j^{(t)};w_S\bigr)
			-\mathcal{L}_i^*
		\end{aligned}
	\end{equation}		
	Note that each remaining client loss is bounded from below by its own minimum $\mathcal{L}_j\bigl(z_j^{(t)};w_S\bigr)\ge\mathcal{L}_j^*\; (j\neq i).$ Therefore $-\sum_{j\neq i}\mathcal{L}_j\bigl(z_j^{(t)};w_S\bigr)\le-\sum_{j\neq i}\mathcal{L}_j^*,$ and we obtain
	\begin{equation}\label{eq:tmp109}
		\mathcal{L}_i\bigl(z_i^{(t)};w_S\bigr)-\mathcal{L}_i^*
		\le N\Phi(v^{(t)})-\sum\nolimits_{j=1}^N\mathcal{L}_j^*\leq N\bigl(\Phi(v^{(t)})-\Phi^*\bigr), 
	\end{equation}
	where we have adopted the comparison $  \sum_j\mathcal{L}_j^*\le N\Phi(v^*)  $. As for  $\left\|\nabla_v \mathcal{L}_i\left(v^{(t)} ; w_S\right)\right\|^2$, we can evaluate it through		
	\begin{equation}\label{eq:tmp110}
		\begin{aligned}
			\|\nabla_v \mathcal{L}_i(v^{(t)} ; w_S)\|^2 & =\|\nabla_v \mathcal{L}_i(v^{(t)} ; w_S)-\nabla \Phi(v^{(t)})+\nabla \Phi(v^{(t)})\|^2 \\
			& \leq 2\|\nabla_v \mathcal{L}_i(v^{(t)} ; w_S)-\nabla \Phi(v^{(t)})\|^2+2\|\nabla \Phi(v^{(t)})\|^2 \\
			& \leq 2 N \zeta^2+4 \beta(\Phi(v^{(t)})-\Phi^*) .
		\end{aligned}
	\end{equation}		
	By subsituting \eqref{eq:tmp107}\eqref{eq:tmp109}\eqref{eq:tmp110} into \eqref{eq:tmp100}, we can obtain
	\begin{equation}
		\begin{aligned}
			&E_t
			\le\frac1{KN}\sum_{i=1}^N\sum_{m=0}^{K-2}(K-m-1)
			\Biggl[
			D'\Bigl(
			\bigl(2LN+4\beta\bigr)\bigl(\Phi(v^{(t)})-\Phi^*\bigr)
			+2L\eta\Bigl(1+\tfrac{3L\eta}2\Bigr)\widetilde{\mathcal{C}}\,\mathbb{E}\bigl\|\nabla\Phi(v^{(t)})\bigr\|^2\\
			&\quad+2L\eta\Bigl(1+\tfrac{3L\eta}2\Bigr)\widetilde{\mathcal{L}}_y
			+2N\zeta^2
			+2L^2\,\mathbb{E}\|v_{i,m}-v^{(t)}\|^2
			\Bigr)+\bigl(\eta\gamma\epsilon_3+3\eta^2\gamma^2\bigr)(2C_t+2\widetilde{C}_t)
			+\eta^2\sigma^2
			\Biggr]\\
			&\quad\leq \frac{K-1}2\Biggl[
			D'\Bigl(
			\bigl(2LN+4\beta\bigr)\bigl(\Phi(v^{(t)})-\Phi^*\bigr)
			+2L\eta\Bigl(1+\tfrac{3L\eta}2\Bigr)\widetilde{\mathcal{C}}\,\mathbb{E}\bigl\|\nabla\Phi(v^{(t)})\bigr\|^2\\
			&\quad\quad\quad+2L\eta\Bigl(1+\tfrac{3L\eta}2\Bigr)\widetilde{\mathcal{L}}_y+2N\zeta^2
			\Bigr)
			+2L^2 D'\,E_t				+\bigl(\eta\gamma\epsilon_3+3\eta^2\gamma^2\bigr)(2C_t+2\widetilde{C}_t)
			+\eta^2\sigma^2
			\Biggr].
		\end{aligned}
	\end{equation}
	where recall we have defined the positive prefactor
	\begin{equation}
		D'=\eta\bigl(\epsilon_2+\epsilon_1 Q^2\eta^2\bigr)+6\eta^2\bigl(1+\eta_{\rm inner}^2 L^2\bigr),
	\end{equation}
	then move $E_t$ to the left hand side we can obtain
	\begin{equation}
		\begin{aligned}
			E_t\left(1-(K-1) L^2 D^{\prime}\right) \leq & \frac{K-1}{2} D^{\prime}\left(2 L N+4 \beta+4 \beta L \eta\left(1+\frac{3 L \eta}{2}\right) \tilde{\mathcal{C}}\right)\left(\Phi\left(v^{(t)}\right)-\Phi^*\right) \\
			& +\frac{K-1}{2} \cdot 2 L \eta\left(1+\frac{3 L \eta}{2}\right) \widetilde{\mathcal{L}}_y D^{\prime}+(K-1)\left(\eta \gamma \epsilon_3+3 \eta^2 \gamma^2\right) (C_t+\widetilde{C}_t) \\
			& +\frac{K-1}{2}(2 N \zeta^2 D^{\prime}+\eta^2 \sigma^2),
		\end{aligned}
	\end{equation}
	where the positive variables $\mathcal{A},\mathcal{B},\widetilde{\mathcal{D}}$ and $\mathcal{E}$ can be obtained
	\begin{equation}\label{lemma:phi-evo}
		\begin{aligned}
			&\mathcal{A}=2 L N+4 \beta+4 \beta L \eta\left(1+\frac{3 L \eta}{2}\right) \tilde{\mathcal{C}},\quad \mathcal{E}=\eta \gamma \epsilon_3+3 \eta^2 \gamma^2\\
			&\widetilde{\mathcal{D}}=2 N \zeta^2 D^{\prime}+\eta^2 \sigma^2+2 L \eta\left(1+\frac{3 L \eta}{2}\right) \widetilde{\mathcal{L}}_y D^{\prime}. 
		\end{aligned}
	\end{equation}		
\end{proof}	
\begin{lemma}\label{lemma:10}
	Under Assumptions A1-A8 and the step-size restriction $\tau K \eta \leq \min \{\frac{1}{8 \beta}, \frac{\mu}{4(\gamma^2 \beta+\varepsilon^2 \beta+\Psi /(2 \tau \eta K))}\},$		the expected one-round progress of the meta-objective satisfies
	\begin{equation}\label{eq:tmp139}
		\begin{aligned}
			E& [\Phi(v^{(t+1)})-\Phi(v^{(t)})] \leq -(\tau \eta K \mu-\frac{\beta \Psi}{2}-\tau^2 \eta^2 K \beta(\gamma^2+\varepsilon^2))(\Phi\left(v^{(t)}\right)-\Phi^*) \\
			& +(\frac{\tau \eta K \beta^2}{2}+\frac{\tau \eta \varepsilon}{2}+\frac{\beta \Omega}{2}) E_t+(\tau \eta \gamma+\frac{\beta \Lambda}{2})(C_t+\widetilde{C}_t) +\frac{\beta \Upsilon}{2} E\|w_S^{(t)}-w_S^*(v^{(t)})\|^2+\frac{\beta \Gamma}{2} .
		\end{aligned}
	\end{equation}		
	For simplicity but without loss of generality, the higher-order terms can be omitted when $\eta$ is sufficiently small,
	\begin{equation}
		\begin{aligned}
			E[\Phi(v^{(t+1)})-\Phi(v^{(t)})] & \leq-\frac{\tau \eta K \mu}{2}(\Phi(v^{(t)})-\Phi^*)
			+(\frac{\tau \eta K \beta^2}{2} +\frac{\tau \eta \varepsilon}{2}+\frac{\beta \Omega}{2}) E_t\\
			&\quad\quad\quad+(\tau \eta \gamma+\frac{\beta \Lambda}{2})(C_t+\widetilde{C}_t)+ \frac{\beta \Upsilon}{2} E\|w_S^{(t)}-w_S^*(v^{(t)})\|^2+\frac{\beta \Gamma}{2} .
		\end{aligned}
	\end{equation}
\end{lemma}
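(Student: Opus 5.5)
We will prove Lemma~\ref{lemma:10} by a standard descent argument on the $\beta$-smooth meta-objective $\Phi$. Write the server step as $v^{(t+1)}-v^{(t)}=-\tau K\eta\,\bar\Delta_t$ with $\bar\Delta_t:=\frac1S\sum_{i\in\mathcal S_t}\Delta_{i,t}$. The $\beta$-smoothness of $\Phi$ then gives
\[
\Phi(v^{(t+1)})\le \Phi(v^{(t)})-\tau K\eta\langle\nabla\Phi(v^{(t)}),\bar\Delta_t\rangle+\tfrac{\beta}{2}\|v^{(t+1)}-v^{(t)}\|^2 .
\]
The quadratic term needs no further work. Taking expectations and invoking Lemma~\ref{lemma:variance2} bounds it directly by $\frac{\beta}{2}\bigl[\Lambda(C_t+\widetilde C_t)+\Omega E_t+\Upsilon\,\mathbb E\|w_S^{(t)}-w_S^*(v^{(t)})\|^2+\Psi(\Phi(v^{(t)})-\Phi^*)+\Gamma\bigr]$. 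This accounts exactly for the $\beta\Lambda/2$, $\beta\Omega/2$, $\beta\Upsilon/2$, $\beta\Psi/2$ and $\beta\Gamma/2$ coefficients in \eqref{eq:tmp139}.

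For the inner product, we use the telescoped identity from the proof of Lemma~\ref{lemma:controldeviation}:
\[
\Delta_{i,t}=\overline{\widetilde g}_i^{(t)}+\gamma(c^{(t)}-c_i^{(t)})+r^{(t)}_{\mathrm{prox},i}, \qquad r^{(t)}_{\mathrm{prox},i}=\tfrac{\varepsilon}{K}\sum_k (v_{i,k}-v).
\]
This lets us split $-\tau K\eta\langle\nabla\Phi,\bar\Delta_t\rangle$ into four pieces:
\begin{enumerate}
\item The main term $-\tau K\eta\|\nabla\Phi\|^2$.
\item A local-path error $-\tau K\eta\langle\nabla\Phi,\overline{\widetilde g}_i-\nabla\Phi\rangle$.
\item A consensus term $-\tau K\eta\gamma\langle\nabla\Phi,c-c_i\rangle$.
\item A proximal term.
\end{enumerate}
For the path error, the stochastic part vanishes in conditional expectation. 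The bias splits into the along-path deviation, bounded by $\beta^2 E_t$ via smoothness, and the backbone mismatch, handled with $L_y$ as in Lemma~\ref{lemma:variance}. We apply Young's inequality with weight $\tfrac12$, so that half of $\|\nabla\Phi\|^2$ is absorbed into the main term, which produces the $\tfrac{\tau\eta K\beta^2}{2}E_t$ contribution. The consensus and proximal terms are handled by Young's inequality with the same weighting. Using $\mathbb E\|c-c_i\|^2\le 2C_t+2\widetilde C_t$ from \eqref{eq:cbound} gives the $\tau\eta\gamma(C_t+\widetilde C_t)$ term, and using $\|r_{\mathrm{prox}}\|^2\le \varepsilon^2 E_t$ (Jensen) gives the $\tfrac{\tau\eta\varepsilon}{2}E_t$ term. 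The leftover $\tau^2\eta^2K\beta(\gamma^2+\varepsilon^2)\|\nabla\Phi\|^2$ pieces come from the squared correction parts.

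The remaining negative multiple of $\|\nabla\Phi\|^2$ is converted into a function-value gap by the PL inequality (Assumption A2), $\|\nabla\Phi\|^2\ge 2\mu(\Phi-\Phi^*)$, which yields $-\tau\eta K\mu(\Phi(v^{(t)})-\Phi^*)$. The positive $\|\nabla\Phi\|^2$ leftovers are converted in the opposite direction by the smoothness bound $\|\nabla\Phi\|^2\le 2\beta(\Phi-\Phi^*)$. Collecting all coefficients gives \eqref{eq:tmp139}. The step-size restriction $\tau K\eta\le\min\{1/(8\beta),\,\mu/(4(\gamma^2\beta+\varepsilon^2\beta+\Psi/(2\tau\eta K)))\}$ ensures that $\tfrac{\beta\Psi}{2}+\tau^2\eta^2K\beta(\gamma^2+\varepsilon^2)\le\tfrac{\tau\eta K\mu}{2}$. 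This gives the simplified form with contraction factor $\tfrac{\tau\eta K\mu}{2}$, since $\Psi=O(\tau^2K^2\eta^2)$ by Lemma~\ref{lemma:variance2}.

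The main obstacle is the bookkeeping of the Young weights. They must be chosen so that the total $\|\nabla\Phi\|^2$ absorbed from the path, consensus and proximal cross terms never exceeds half of the main descent term. At the same time, the resulting constants must match those stated, in particular the first-order (rather than squared) dependence on $\gamma$ in the $C_t+\widetilde C_t$ coefficient. The first thing we would try is to pair $\nabla\Phi$ with $\sqrt{\tau\eta K}$ and each error with $\sqrt{\tau\eta}$ in the consensus and proximal terms. This pairing leaves $\gamma$ linear once $\gamma\le1$ is used to drop $\gamma^2\le\gamma$.
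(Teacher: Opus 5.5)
Your outline follows the paper's proof step for step. Both use $\beta$-smoothness of $\Phi$ and Lemma~\ref{lemma:variance2} for $\frac{\beta}{2}\mathbb{E}\|v^{(t+1)}-v^{(t)}\|^2$, which supplies all the $\frac{\beta}{2}\Lambda,\dots,\frac{\beta}{2}\Gamma$ coefficients. Both split the cross term into meta-gradient, consensus and proximal parts, apply Young's inequality, then use PL on the negative part and $\|\nabla\Phi\|^2\le2\beta(\Phi-\Phi^*)$ on the positive leftovers. The gap is your closing claim that collecting coefficients gives \eqref{eq:tmp139}; with the bounds you describe it does not.

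First, you rightly keep the backbone mismatch in the path error, but then you must pay for $\tau\eta K\langle\nabla\Phi,b\rangle$ with $\|b\|\le L_y\|w_S^{(t)}-w_S^*(v^{(t)})\|$. The budget on $\|\nabla\Phi\|^2$ is at most of order $\tau\eta K$. So any Young split leaves a coefficient of order at least $\tau\eta KL_y^2$ on $\mathbb{E}\|w_S^{(t)}-w_S^*(v^{(t)})\|^2$, which cannot hide in $\frac{\beta\Upsilon}{2}=O(\tau^2K^2\eta^2)$. Half of $-\tau\eta K\|\nabla\Phi\|^2$ is already spent on the along-path error, so this piece also forces either a larger $E_t$ coefficient (e.g.\ $\tau\eta K\beta^2$) or a contraction below $\tau\eta K\mu$. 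The paper avoids this only by declaring that the averaged $\langle\nabla\Phi(v^{(t)}),g_{i,k}(v^{(t)},w_S^{(t)})\rangle$ equals $\|\nabla\Phi(v^{(t)})\|^2$ exactly. That discards this bias, and also the gap between the averaged first-order meta-gradient at $w_S^*(v)$ and the true $\nabla\Phi(v)$, a piece your split omits too. To reach the stated form you would have to adopt that identification as an explicit assumption.

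Second, the consensus and proximal coefficients are out of reach of any Young weighting, so the obstacle you flag is not one of bookkeeping. If $s|\langle u,x\rangle|\le c_1\|u\|^2+c_2\|x\|^2$ for all $u,x$, then $4c_1c_2\ge s^2$. For the consensus term $s=\tau K\eta\gamma$. Keeping the second-order coefficient $c_1=\tfrac12\tau^2\eta^2\gamma^2K$ on $\|\nabla\Phi\|^2$ (the origin of $\tau^2\eta^2K\beta\gamma^2$ in the statement) forces $c_2\ge K/2$, i.e.\ $K(C_t+\widetilde C_t)$ via \eqref{eq:cbound}. The stated pair is consistent only if $\tau\eta\gamma\ge K$, which fails for small $\eta$. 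The proximal term, routed through $\|\bar r\|^2\le\varepsilon^2E_t$ for the averaged proximal residual $\bar r$, likewise forces $\frac K2E_t$ rather than $\frac{\tau\eta\varepsilon}{2}E_t$. Your $\sqrt{\tau\eta K}$ versus $\sqrt{\tau\eta}$ pairing instead charges $\tfrac12\gamma\tau\eta K^{3/2}\|\nabla\Phi\|^2$ against the first-order descent, and still leaves $\gamma\sqrt K\,\tau\eta$ on the lags.

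The paper's own Young step, $\frac{(\tau\eta\gamma)^2}{2}\|\nabla\Phi\|^2+\frac12\mathbb{E}\|c-c_i\|^2$, summed over the $K$ local steps gives $K(C_t+\widetilde C_t)$, and its $\mathcal{T}_3$ step gives $\frac K2E_t$. The stated $\tau\eta\gamma$ and $\frac{\tau\eta\varepsilon}{2}$ come from a mis-summation. What your argument honestly yields is an inequality of the same shape with these larger coefficients plus the extra backbone term, which the downstream Lyapunov weights must then absorb.

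One genuine improvement is available. Under full participation with $p_i=1/N$ and $\bar c^{(0)}=c^{(0)}$, the identity $\bar c^{(t)}=c^{(t)}$ from the residual-recurrence appendix holds exactly, since it uses only the two EMA rules. It makes the averaged correction $\gamma(c^{(t)}-\bar c^{(t)})$ vanish, so the consensus cross term is exactly zero; under partial participation it need not be. Finally, the displayed step-size condition only controls the $\tau^2\eta^2K\beta(\gamma^2+\varepsilon^2)$ part of the simplification. The bound $\frac{\beta\Psi}{2}\le\frac{\tau\eta K\mu}{4}$ comes from $\Psi=O(\tau^2K^2\eta^2)$ and small $\eta$, which is how the paper argues it.
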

\begin{proof}
	By taking the Talor expansion on $E[\Phi(v^{(t+1)})]$ at $v^{(t)}$, we combine the $\beta-$smoothness of $\Phi$, it leads to
	\begin{equation}\label{eq:tmp141}
		\begin{aligned}
			E&[\Phi(v^{(t+1)})] \leq \Phi(v^{(t)})+E\langle\nabla \Phi(v^{(t)}), v^{(t+1)}-v^{(t)}\rangle+\frac{\beta}{2} E\|v^{(t+1)}-v^{(t)}\|^2. 
		\end{aligned}
	\end{equation}
	Note exact reconstruction of the client displacements the server update takes the form		
	\begin{equation}
		v^{(t+1)}=v^{(t)}-\tau \eta \cdot \frac{1}{\left|S_t\right|} \sum_{i \in S_t} \sum_{k=0}^{K-1}\left[\hat{g}_{i, k}\left(v_{i, k}^{(t)}, w_S^{(t)}\right)+\gamma\left(c^{(t)}-c_i^{(t)}\right)+\varepsilon\left(v_{i, k}^{(t)}-v^{(t)}\right)\right], 
	\end{equation}
	which we substitute into the cross term in \eqref{eq:tmp141} and it results in
	\begin{equation}\label{eq:tmp142}
		\begin{aligned}
			&E\langle\nabla \Phi(v^{(t)}), v^{(t+1)}-v^{(t)}\rangle=\underbrace{-\tau \eta \cdot \frac{1}{N} \sum_{i=1}^N \sum_{k=0}^{K-1} E\left\langle\nabla \Phi\left(v^{(t)}\right), \hat{g}_{i, k}\left(v_{i, k}^{(t)}, w_S^{(t)}\right)\right\rangle}_{:=\mathcal{T}_1}\\
			&+\underbrace{-\tau \eta \gamma \cdot \frac{1}{N} \sum_{i=1}^N \sum_{k=0}^{K-1} E\left\langle\nabla \Phi\left(v^{(t)}\right), c^{(t)}-c_i^{(t)}\right\rangle}_{:=\mathcal{T}_2}+\underbrace{-\tau \eta \varepsilon \cdot \frac{1}{N} \sum_{i=1}^N \sum_{k=0}^{K-1} E\left\langle\nabla \Phi\left(v^{(t)}\right), v_{i, k}^{(t)}-v^{(t)}\right\rangle}_{:=\mathcal{T}_3}
		\end{aligned}
	\end{equation}
	For $\mathcal{T}_{1}$,	we decompose each practical meta-gradient
	\begin{equation}
		\begin{aligned}
			\mathcal{T}_1=-\tau \eta  \frac{1}{N} \sum_{i, k} E[\langle\nabla \Phi(v^{(t)}), {g}_{i, k}(v^{(t)}, w_S^{(t)})\rangle+\langle\nabla \Phi(v^{(t)}), {g}_{i, k}(v_{i, k}^{(t)}, w_S^{(t)})-{g}_{i, k}(v^{(t)}, w_S^{(t)})\rangle] .
		\end{aligned}
	\end{equation}
	{ The first average exactly equals } $\|\nabla \Phi(v^{(t)})\|^2$, and the second average can be evaluated via the joint $\beta$-smoothness, namely $\|{g}_{i, k}(v_{i, k}^{(t)}, w_S^{(t)})-{g}_{i, k}(v^{(t)}, w_S^{(t)})\| \leq \beta\|v_{i, k}^{(t)}-v^{(t)}\|$, hence we have
	\begin{equation}\label{eq:T1}
		\begin{aligned}
			&\mathcal{T}_1 \leq-\tau \eta K\left\|\nabla \Phi\left(v^{(t)}\right)\right\|^2+\tau \eta \cdot \frac{1}{N} \sum_{i, k}\left\|\nabla \Phi\left(v^{(t)}\right)\right\| \cdot \beta\left\|v_{i, k}^{(t)}-v^{(t)}\right\| \\
			&\leq-\tau \eta K\left\|\nabla \Phi\left(v^{(t)}\right)\right\|^2+\frac{\tau \eta K}{2}\left\|\nabla \Phi\left(v^{(t)}\right)\right\|^2+\frac{\tau \eta \beta^2}{2} \cdot \frac{1}{N} \sum_{i, k} E\left\|v_{i, k}^{(t)}-v^{(t)}\right\|^2 \\
			&=-\frac{\tau \eta K}{2}\left\|\nabla \Phi\left(v^{(t)}\right)\right\|^2+\frac{\tau \eta K \beta^2}{2} E_t\leq-\tau \eta K \mu\left(\Phi\left(v^{(t)}\right)-\Phi^*\right)+\frac{\tau \eta K \beta^2}{2} E_t,
		\end{aligned}
	\end{equation}
	where we have used the Polyak-Łojasiewicz inequality $\|\nabla \Phi(v^{(t)})\|^2 \geq 2 \mu(\Phi(v^{(t)})-\Phi^*)$. For $\mathcal{T}_{2}$,	we first apply Young’s inequality
	\begin{equation}
		\begin{aligned}
			\left|E\left\langle-\tau \eta \gamma \cdot \nabla \Phi\left(v^{(t)}\right), c^{(t)}-c_i^{(t)}\right\rangle\right| \leq \frac{(\tau \eta \gamma)^2}{2}\left\|\nabla \Phi\left(v^{(t)}\right)\right\|^2+\frac{1}{2} E\left\|c^{(t)}-c_i^{(t)}\right\|^2,
		\end{aligned}
	\end{equation}
	and we take the averaging over the $N$ clients and the $K$ local steps and use the bound $E\|c^{(t)}-c_i^{(t)}\|^2 \leq 2(C_t+\widetilde{C}_t)$ leads to
	\begin{equation}\label{eq:T2}
		\begin{aligned}
			\mathcal{T}_2 &\leq \frac{(\tau \eta \gamma)^2 K}{2}\left\|\nabla \Phi\left(v^{(t)}\right)\right\|^2+\tau \eta \gamma\left(C_t+\widetilde{C}_t\right)\leq\tau^2 \eta^2 \gamma^2 K \beta\left(\Phi\left(v^{(t)}\right)-\Phi^*\right)+\tau \eta \gamma\left(C_t+\widetilde{C}_t\right),
		\end{aligned}
	\end{equation}
	where we have used the smoothness result $\|\nabla \Phi(v^{(t)})\|^2 \leq 2 \beta(\Phi(v^{(t)})-\Phi^*)$. For $\mathcal{T}_3$ we apply the identical rewriting and Young's inequality:
	\begin{equation}
		\left|E\left\langle-\tau \eta \varepsilon \cdot \nabla \Phi\left(v^{(t)}\right), v_{i, k}^{(t)}-v^{(t)}\right\rangle\right| \leq \frac{(\tau \eta \varepsilon)^2}{2}\left\|\nabla \Phi\left(v^{(t)}\right)\right\|^2+\frac{1}{2} E\left\|v_{i, k}^{(t)}-v^{(t)}\right\|^2.
	\end{equation}		
	Subsequently, we take the averaging over the $N$ clients and the $K$ local steps, then using the smoothness result, it leads to
	\begin{equation}\label{eq:T3}
		\mathcal{T}_3 \leq \frac{(\tau \eta \varepsilon)^2 K}{2}\left\|\nabla \Phi\left(v^{(t)}\right)\right\|^2+\frac{\tau \eta \varepsilon}{2} E_t\leq  \tau^2 \eta^2 \varepsilon^2 K \beta\left(\Phi\left(v^{(t)}\right)-\Phi^*\right)+\frac{\tau \eta \varepsilon}{2} E_t .
	\end{equation}	
	Therefore, by substituting the Lemma \ref{lemma:variance2}, $\mathcal{T}_1$ in \eqref{eq:T1}, $\mathcal{T}_2$ in \eqref{eq:T2} and $\mathcal{T}_3$ in \eqref{eq:T3} into \eqref{eq:tmp141}, we can obtain the Lemma result (\ref{eq:tmp139}).
\end{proof}
\begin{lemma}[Lyapunov Evolution]
	Consider the bi-level CIDERS algorithm under Assumptions A1–A8. We design the Lyapunov function as follows:
	\begin{equation}\label{eq:Lyap-CIDERS}
		\begin{aligned}
			V^{(t)}
			&=(1-4a^2)\,\mathbb{E}\|z^{(t)}-v^*\|^2
			+\omega_1 E_t
			+Q_{w2}\eta C_t
			+Q_{w3}\eta\widetilde{C}_t
			+Q_{w4}\eta J_t,
		\end{aligned}	 		
	\end{equation}
	where $J_t=\mathbb{E}\bigl[\Phi(v^{(t)})-\Phi^*\bigr]+\mathbb{E}\|w_S^{(t)}-w_S^*(v^{(t)})\|^2$ is the convergence quantity of interest,  	$  z^{(t)}  $ is the virtual messenger sequence defined by
	\begin{equation}
		z^{(t)}=v^{(t)}+\frac{1-\tau}{\tau}\bigl(v^{(t)}-v^{(t-1)}\bigr),
	\end{equation}
	and $Q_{w1},Q_{w2},Q_{w3},Q_{w4}$ are the positive weights that satisfies the following conditions
	\begin{equation}
		\begin{aligned}
			Q_{w 3}<\frac{(1-2 a) K}{\beta \kappa_1^g}, \quad Q_{w 4}\left(1-\rho_R\right)>6 \beta^2 Q_{w 2}\left[(1-p \alpha)\left(1+\frac{1}{\varepsilon}\right)+2 p \alpha\right]. 
		\end{aligned}
	\end{equation}	 	
	Then the one-round progress of the Lyapunov function satisfies
	\begin{equation}\label{eq:V-descent-CIDERS}
		V^{(t+1)}\le V^{(t)}-\delta_{\Phi}\eta\cdot J_t+D_{\mathrm{tot}}^{\prime},
	\end{equation}
	where $D_{\mathrm{tot}}^{\prime}$ is a postive value and $\delta_{\Phi}$ satisfies the following
	\begin{equation}\label{eq:tmp155}
		\begin{aligned}
			\delta_{\Phi}={\rm{min}}\left\{2(1-2 a) K-2 Q_{w 3} \beta \kappa_1^g,\,Q_{w 4}\left(1-\rho_R\right)-6 \beta^2 Q_{w 2}\left[(1-p \alpha)\left(1+\frac{1}{\varepsilon}\right)+2 p \alpha\right]\right\}.
		\end{aligned}
	\end{equation}
\end{lemma}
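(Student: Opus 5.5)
The plan is to write each block of the Lyapunov function at round $t+1$ in terms of the round-$t$ quantities using the lemmas already proved. The resulting coefficient of $J_t$ is then forced to be $-\delta_\Phi\eta$, and everything left over is collected into $D'_{\rm tot}$.

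\textbf{Step 1: virtual-messenger block.} From the server rule, $z^{(t+1)}-z^{(t)}=\tfrac{1}{\tau}\bigl(v^{(t+1)}-v^{(t)}\bigr)-\tfrac{1-\tau}{\tau}\bigl(v^{(t)}-v^{(t-1)}\bigr)$. This reduces to a step of size $K\eta$ along the aggregated direction $\frac1S\sum_{i}\Delta_{i,t}$. I will expand $\|z^{(t+1)}-v^*\|^2$ as follows.
\begin{itemize}
\item The inner product with $\nabla\Phi$ gives $-2K\eta\langle z^{(t)}-v^*,\nabla\Phi\rangle$. I handle it with the PL inequality (Assumption A2) together with Young's inequality with parameter $a$. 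This is what produces the factor $(1-4a^2)$ and the coefficient $-2(1-2a)K\eta\,(\Phi(v^{(t)})-\Phi^*)$.
\item The deviation $\frac1S\sum_i\Delta_{i,t}-\nabla\Phi$ is controlled by Lemma~\ref{lemma:variance}.
\item The squared step is controlled by Lemma~\ref{lemma:variance2}.
\end{itemize}
Every $\|\nabla\Phi\|^2$ is converted by $\|\nabla\Phi\|^2\le 2\beta(\Phi-\Phi^*)$.

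\textbf{Step 2: drift and lag blocks.}
\begin{itemize}
\item For $\omega_1E_{t+1}$, I invoke Lemma~\ref{lemma:clientdrift} at round $t+1$. This gives a bound in terms of $\Phi(v^{(t+1)})-\Phi^*$ and $C_{t+1}+\widetilde C_{t+1}$. I then push these back to round $t$ using Lemma~\ref{lemma:10} and Lemmas~\ref{lemma:globalcontroalevo}--\ref{lemma:localcontrolevo}.
\item For $Q_{w3}\eta\widetilde C_{t+1}$, I use Lemma~\ref{lemma:globalcontroalevo}. Its $\kappa_1^g$ gradient term, after the smoothness conversion, contributes $+2Q_{w3}\beta\kappa_1^g\eta(\Phi-\Phi^*)$. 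This explains the first entry of $\delta_\Phi$ and the condition $Q_{w3}<(1-2a)K/(\beta\kappa_1^g)$.
\item For $Q_{w2}\eta C_{t+1}$, I use Lemma~\ref{lemma:localcontrolevo}. Its $\|w_S-w_S^*\|^2$ coefficient contains $6\beta^2[(1-p\alpha)(1+1/\varepsilon)+2p\alpha]$ from the warm-start decomposition in \eqref{eq:tmp94}.
\end{itemize}

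\textbf{Step 3: joint-gap block.} For $Q_{w4}\eta J_{t+1}$ I combine two bounds.
\begin{itemize}
\item Lemma~\ref{lemma:10} controls the $\Phi$ part.
\item Lemma~\ref{lemma:average} controls the backbone part. I write its contraction factor $2(1-\mu\eta_{\rm KD})^E$ as $\rho_R<1$, assuming $E$ is large enough.
\end{itemize}
This gives $-Q_{w4}(1-\rho_R)\eta\|w_S^{(t)}-w_S^*\|^2$. Against it stands the $6\beta^2Q_{w2}[\cdots]$ term from Step 2, which yields the second entry of $\delta_\Phi$ and the second weight condition.

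\textbf{Step 4: collect coefficients.} I gather the coefficients of $C_t$, $\widetilde C_t$ and $E_t$ and choose $\omega_1,Q_{w2},Q_{w3}$ so that each net coefficient is nonpositive. The contractions available for this are $1-p\alpha$ for $C_t$, $\widetilde\rho_g<1$ for $\widetilde C_t$, and the $1-KL^2D'$ denominator for $E_t$. This is possible because every cross-coupling ($\Lambda,\Omega,\Upsilon,\Psi$, $\mathcal E$) is $\mathcal O(\eta)$ or $\mathcal O(\eta^2)$, so for small $\eta$ the diagonal contractions dominate.

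Taking the minimum of the two negative $J_t$ coefficients gives $\delta_\Phi$. All constants ($\Gamma$, $\kappa_0$, $\widetilde{\kappa_g^4}$, $\widetilde{\mathcal D}$, $\sigma^2_{\rm KD}$, $R_w^2$) are swept into $D'_{\rm tot}$, and the result is \eqref{eq:V-descent-CIDERS}.

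\textbf{Main obstacle.} I expect Step 4 to be the hardest: the weights must be chosen consistently. The lag recursions feed into one another and into the drift bound, which in turn feeds back into the lags. A naive choice can make one coefficient positive. I will first fix $Q_{w3}$ and $Q_{w2}$ proportional to $K$, set $\omega_1=\Theta(\eta)$ so that drift feedback is lower order, and then verify that the remaining $\mathcal O(\eta^2)$ residues can be absorbed into $D'_{\rm tot}$ instead of being cancelled exactly.
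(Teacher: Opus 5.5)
Your skeleton matches the paper's proof:
\begin{itemize}
\item split $V^{(t+1)}-V^{(t)}$ into the five blocks;
\item bound $\omega_1E_{t+1}$ by the drift lemma at round $t+1$, and push $\Phi(v^{(t+1)})$, $C_{t+1}$, $\widetilde C_{t+1}$ back to round $t$ with Lemma~\ref{lemma:10} and Lemmas~\ref{lemma:globalcontroalevo}--\ref{lemma:localcontrolevo};
\item read off $-2Q_{w3}\beta\kappa_1^g$, and $6\beta^2Q_{w2}[(1-p\alpha)(1+1/\varepsilon)+2p\alpha]$ against $Q_{w4}(1-\rho_R)$, then take the minimum.
\end{itemize}

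The gap is in Step~1, which is the only place the leading entry $2(1-2a)K$ of $\delta_\Phi$ can come from. PL cannot produce it. Assumption A2 lower-bounds $\|\nabla\Phi(v^{(t)})\|^2$, but you need a lower bound on the inner product $\langle z^{(t)}-v^*,\nabla\Phi(v^{(t)})\rangle$. For a nonconvex PL function this inner product has no sign relative to a fixed minimizer. For example, $\Phi(x,y)=(y-\sin x)^2$ satisfies PL with $\mu=2$. Yet at $(\pi,-\epsilon)$ with $v^*=(0,0)$ the inner product equals $-2\pi\epsilon+2\epsilon^2<0$, while $\Phi-\Phi^*=\epsilon^2>0$. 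Young's inequality only bounds $|\langle\cdot,\cdot\rangle|$ from above, so it can only add positive terms. The one PL-driven negative term you have is $-\tfrac{\tau\eta K\mu}{2}(\Phi-\Phi^*)$ from Lemma~\ref{lemma:10}. It enters $V$ through the weight $Q_{w4}\eta$, so it is $O(\eta^2)$. It cannot supply an $O(\eta)$ coefficient $2(1-2a)K$, which in any case contains neither $\mu$ nor $\tau$.

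The paper gets this entry differently:
\begin{itemize}
\item it writes $z^{(t)}-v^*=(v^{(t)}-v^*)+\frac{1-\tau}{\tau}(v^{(t)}-v^{(t-1)})$;
\item it bounds the momentum cross term through $\|\bar\Delta\|\le G(1+2\gamma)$, giving a constant sent to $D'_{\mathrm{tot}}$;
\item it invokes \emph{convexity} of $\Phi$, $\langle v^{(t)}-v^*,\nabla\Phi(v^{(t)})\rangle\ge\Phi(v^{(t)})-\Phi^*$;
\item the parameter $a$ enters only through Young's inequality on the bias $\langle v^{(t)}-v^*,b^{(t)}\rangle$, with $b^{(t)}$ split into drift/backbone, consensus-gap and proximal parts.
\end{itemize}
So you need a convexity (or star-convexity at $v^*$) hypothesis on $\Phi$ beyond the listed assumptions. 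PL plus Young will not close Step~1.

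Even with that hypothesis, Step~1 has two loose ends.

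First, Young's inequality with a fixed $a$ leaves $a\,\mathbb{E}\|v^{(t)}-v^*\|^2\le 2a\,\mathbb{E}\|z^{(t)}-v^*\|^2+O(\eta^2)$, with an $O(1)$ coefficient. No block of $V$ produces a negative $\|z^{(t)}-v^*\|^2$ term, and the weight $(1-4a^2)$ does not absorb it. Do not copy the paper here. Its step ``since $a<1/2$, $1-2a>1-4a^2$'' is reversed: for $a\in(0,\tfrac12)$ one has $1-4a^2>1-2a$. The leftover $2a(1-4a^2)\,\mathbb{E}\|z^{(t)}-v^*\|^2$ is also simply dropped there. You would need a Young parameter of order $K\eta$, plus a way to absorb the resulting $O(\eta)\,\mathbb{E}\|v^{(t)}-v^*\|^2$. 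One option is quadratic growth, which PL yields when $v^*$ is the unique minimizer, so the term folds into the $\Phi-\Phi^*$ coefficient. The other is bounded iterates, so the term goes to $D'_{\mathrm{tot}}$.

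Second, under the server rule $v^{(t+1)}=v^{(t)}-\tau K\eta\bar\Delta^{(t)}$, your own identity gives
\[
z^{(t+1)}-z^{(t)}=-K\eta\bar\Delta^{(t)}+(1-\tau)K\eta\bar\Delta^{(t-1)}.
\]
This collapses to a single $K\eta$ step only for a heavy-ball server update. The extra term is not small when $\tau$ is small, so you must carry it through the expansion.
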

\begin{proof}
	Our main target is to evaluate each term's evolution. We first expand the Lyapunov difference as
	\begin{equation}\label{eq:V-expand}
		\begin{aligned}
			V^{(t+1)}-V^{(t)}
			&=\underbrace{(1-4a^2)\bigl(\mathbb{E}\|z^{(t+1)}-v^*\|^2-\mathbb{E}\|z^{(t)}-v^*\|^2\bigr)}_{\mathcal{V}_1}
			+\underbrace{\omega_1(E_{t+1}-E_t)}_{\mathcal{V}_2}\\
			&\quad+\underbrace{\omega_2(C_{t+1}-C_t)}_{\mathcal{V}_3}
			+\underbrace{\omega_3(\widetilde{C}_{t+1}-\widetilde{C}_t)}_{\mathcal{V}_4}
			+\underbrace{\omega_4(J_{t+1}-J_t)}_{\mathcal{V}_5}.
		\end{aligned}
	\end{equation}
	It can be seen the evaluation of the  Lyapunnov function $V$ evolution depends on  each term's evolution, hence we derive them term by term. Notice we have derived the corresponding evolution $E_t$, $C_t$, $\widetilde{C}_t$ and $\Phi$, hence we first foucs on $\mathcal{V}_1$.  Specifically, the server update $  v^{(t+1)}=v^{(t)}-\tau K\eta\,\bar\Delta^{(t)}  $ implies $z^{(t+1)}-z^{(t)}=-K \eta \bar{\Delta}^{(t)}$, where we define $\bar{\Delta}^{(t)}$ and recall $\Delta_{i, t}$
	\begin{equation}\label{eq:tmp153}
		\bar{\Delta}^{(t)}=\frac{1}{\left|S_t\right|} \sum_{i \in S_t} \Delta_{i, t},\quad \Delta_{i, t}=\frac{1}{K} \sum_{k=0}^{K-1} \widehat{g}_{i, k}\left(v_{i, k}^{(t)}, w_S^{(t)}\right)+\gamma\left(c^{(t)}-c_i^{(t)}\right)+r_{\mathrm{prox}, \mathrm{i}}^{(t)} 
	\end{equation}
	which we have used for the expansion of $\mathbb{E}\left\|z^{(t+1)}-v^*\right\|^2$ and it leads to 
	\begin{equation}\label{eq:z-expansion}
		\begin{aligned}
			\mathbb{E}\|z^{(t+1)}-v^*\|^2
			&=\mathbb{E}\|z^{(t)}-v^*-K\eta\,\bar\Delta^{(t)}\|^2\\
			&=\mathbb{E}\|z^{(t)}-v^*\|^2
			-2K\eta\,\mathbb{E}\langle z^{(t)}-v^*,\bar\Delta^{(t)}\rangle
			+K^2\eta^2\mathbb{E}\|\bar\Delta^{(t)}\|^2.
		\end{aligned}	 			
	\end{equation}
	For the inner product term $\mathbb{E}\langle z^{(t)}-v^*, \bar{\Delta}^{(t)}\rangle$, we decompose it as follows
	\begin{equation}\label{eq:inner-split}
		\begin{aligned}
			&\mathbb{E}\langle z^{(t)}-v^*,\bar\Delta^{(t)}\rangle
			=\mathbb{E}\langle v^{(t)}-v^*,\bar\Delta^{(t)}\rangle
			+\frac{1-\tau}{\tau}\mathbb{E}\langle v^{(t)}-v^{(t-1)},\bar\Delta^{(t)}\rangle\\
			&= \mathbb{E}\left\langle v^{(t)}-v^*, \nabla \Phi\left(v^{(t)}\right)\right\rangle+\frac{1}{\eta K} \mathbb{E}\left\langle v^{(t)}-v^*, b^{(t)}\right\rangle+\frac{1-\tau}{\tau}\mathbb{E}\langle v^{(t)}-v^{(t-1)},\bar\Delta^{(t)}\rangle,
		\end{aligned}
	\end{equation} 
	where we have rewritten the following for the decomposition in the first equality
	\begin{equation}
		z^{(t)}-v^*=\left(v^{(t)}-v^*\right)+\frac{1-\tau}{\tau}\left(v^{(t)}-v^{(t-1)}\right)
	\end{equation} 
	and for the second equality we have used
	\begin{equation}
		\mathbb{E}\left[\bar{\Delta}^{(t)} \mid v^{(t)}\right]=\nabla \Phi\left(v^{(t)}\right)+\frac{1}{\eta K} b^{(t)}, 
	\end{equation}
	moreover, we have incorporated the bias that measures the averaged local  direction deviation with the true meta-gradient
	\begin{equation}\label{eq:bias-def}
		\begin{aligned}
			b^{(t)}
			&:=\eta K\,\mathbb{E}\bigl[\bar\Delta^{(t)}\bigm|v^{(t)}\bigr]
			-\eta K\nabla\Phi(v^{(t)}),
		\end{aligned}	 			
	\end{equation}
	and it can be expanded via
	\begin{equation}\label{eq:bias-decomp}
		\begin{aligned}
			b^{(t)}
			&=\underbrace{\frac\eta N\sum_{i=1}^N\sum_{k=0}^{K-1}
				\mathbb{E}\bigl[\widehat g_{i,k}(v_{i,k}^{(t)},w_S^{(t)})-\widetilde g_{i,k}^{\rm ideal}(v^{(t)},w_S^*(v^{(t)}))\bigm|v^{(t)}\bigr]
			}_{\mathcal{T}_I:\; \text{local-to-global meta-gradient deviation}}\\
			&\quad\quad\quad\quad\quad\quad\quad+\underbrace{\eta K\gamma\bigl(c^{(t)}-\bar c^{(t)}\bigr)
			}_{\mathcal{T}_{II}:\;\text{consensus-variate deviation}}
			+\underbrace{\eta K\cdot\frac1N\sum_{i=1}^N\mathbb{E}\bigl[r_{\rm prox,i}^{(t)}\bigm|v^{(t)}\bigr]
			}_{\mathcal{T}_{III}:\;\text{proximal residual}},						
		\end{aligned}
	\end{equation}
	where $  \bar c^{(t)}=\nicefrac1N\sum_{i=1}^N c_i^{(t)}  $. Recall the joint smoothness of each client loss (Assumption A1) the map
	$(v,w_S)\mapsto\nabla_v\mathcal{L}_i\bigl(v-\eta_{\rm inner}\nabla_v\mathcal{L}_i(v,w_S),w_S\bigr)$. Consequently
	\begin{equation}
		\begin{aligned}
			&\mathbb{E}\bigl\|\widehat g_{i,k}(v_{i,k}^{(t)},w_S^{(t)})-\widetilde g_{i,k}^{\rm ideal}(v^{(t)},w_S^*(v^{(t)}))\bigr\|^2\\
			&\qquad\le2\beta^2\mathbb{E}\|v_{i,k}^{(t)}-v^{(t)}\|^2
			+2\beta^2\mathbb{E}\|w_S^{(t)}-w_S^*(v^{(t)})\|^2.
		\end{aligned}
	\end{equation}
	Averaging over the $  K  $ local steps and the $  N  $ clients, and using the definition of client drift $  E_t  $, it produces
	\begin{equation}\label{eq:termI}
		\begin{aligned}
			&\mathbb{E}\|\mathcal{T}_I\|^2=\eta^2\mathbb{E}\Biggl\|\frac1N\sum_{i=1}^N\frac1K\sum_{k=0}^{K-1}
			\bigl(\widehat g_{i,k}(v_{i,k}^{(t)},w_S^{(t)})-\widetilde g_{i,k}^{\rm ideal}(v^{(t)},w_S^*(v^{(t)}))\bigr)\Biggr\|^2\\
			&\qquad\qquad\qquad\qquad\qquad\qquad\qquad\le2\eta^2\beta^2 E_t+2\eta^2\beta^2 \mathbb{E}\|w_S^{(t)}-w_S^*(v^{(t)})\|,
		\end{aligned}
	\end{equation}
	For the averaged consensus-variate deviation term $\mathcal{T}_{II}$, it can be simply derived $\mathbb{E}\|c^{(t)}-\bar{c}^{(t)}\|^2 \leq 2(C_t+\widetilde{C}_t)$, which subsequently leads to
	\begin{equation}\label{eq:termII}
		\mathbb{E}\|\mathcal{T}_{\mathrm{II}}\|^2=\eta^2 K^2 \gamma^2 \mathbb{E}\left\|c^{(t)}-\bar{c}^{(t)}\right\|^2 \leq 2 \eta^2 K^2 \gamma^2\left(C_t+\widetilde{C}_t\right). 
	\end{equation}
	For the proximal residual $\mathcal{T}_{III}$, we can apply the assumption $\mathbb{E}\|r_{\text {prox, } \mathrm{i}}^{(t)}\|^2 \leq \varepsilon_2^2$, which leads to
	\begin{equation}\label{eq:termIII}
		\mathbb{E}\|\mathcal{T}_{III}\|^2
		\le\eta^2 K^2\varepsilon_2^2.
	\end{equation}
	Combining the inequalities \eqref{eq:termI}–\eqref{eq:termIII}  gives the full bound
	\begin{equation}\label{eq:bias-final}
		\begin{aligned}
			\mathbb{E}\|b^{(t)}\|^2
			&\le3\bigl(2\eta^2\beta^2 E_t+2\eta^2 \beta^2 \mathbb{E}\|w_S^{(t)}-w_S^*(v^{(t)})\|^2\bigr)
			+3\bigl(2\eta^2 K^2\gamma^2(C_t+\widetilde{C}_t)\bigr)
			+3\eta^2 K^2\varepsilon_2^2\\
			&=6\eta^2\beta^2 E_t
			+6\eta^2 K^2\gamma^2(C_t+\widetilde{C}_t)
			+6\eta^2 \beta^2 \mathbb{E}\|w_S^{(t)}-w_S^*(v^{(t)})\|^2+3\eta^2 K^2\varepsilon_2^2.
		\end{aligned}
	\end{equation}
	Hence, the third term in \eqref{eq:inner-split} can be evaluated via
	\begin{equation}\label{eq:tmp165}
		\begin{aligned}
			&\mathbb{E}\left\langle v^{(t)}-v^*, b^{(t)}\right\rangle \leq \frac{a}{2} \mathbb{E}\left\|v^{(t)}-v^*\right\|^2+\frac{1}{2 a} \mathbb{E}\left\|b^{(t)}\right\|^2\leq \frac{a}{2} \mathbb{E}\left\|v^{(t)}-v^*\right\|^2+ 6\eta^2\beta^2 E_t
			\\
			&\qquad+\frac{3 \eta^2 \beta^2}{a} E_t+\frac{3 \eta^2 K^2 \gamma^2}{a}\left(C_t+\widetilde{C}_t\right)+\frac{3 \eta^2 \beta^2}{a} \mathbb{E}\left\|w_S^{(t)}-w_S^*\left(v^{(t)}\right)\right\|^2+\frac{3 \eta^2 K^2 \varepsilon_2^2}{2 a} .
		\end{aligned}
	\end{equation}

	For  the term $\mathbb{E}\langle v^{(t)}-v^{(t-1)},\bar\Delta^{(t)}\rangle$ in \eqref{eq:inner-split}, assumption A6  together with the definition of the consensus variates and the proximal residual gives the uniform bound
	\begin{equation}
		\|\widehat g_{i,k}\|\le G,
		\quad
		\|c^{(t-1)}-c_i^{(t-1)}\|\le 2G,
		\quad
		\|r_{\rm prox,i}^{(t-1)}\|\le\varepsilon_2\le G,
	\end{equation}
	where without loss of generality the residual bound can be absorbed into the same constant $  G  $). Consequently, according to the definitions of $\Delta_{i, t}$ in \eqref{eq:tmp153}, we simply obtain
	\begin{equation}
		\left\|\Delta_{i, t-1}\right\| \leq G+\gamma \cdot 2 G+G=G(1+2 \gamma), \quad \mathbb{E}\|\bar{\Delta}^{(t-1)}\|^2 \leq G^2(1+2 \gamma)^2. 
	\end{equation}
	This leads to third second term in \eqref{eq:inner-split} to become
	\begin{equation}\label{eq:tmp168}
		\left|\frac{1-\tau}{\tau} \mathbb{E}\left\langle v^{(t)}-v^{(t-1)}, \bar{\Delta}^{(t)}\right\rangle\right| \leq(1-\tau) K \eta G^2(1+2 \gamma)^2 .
	\end{equation}
	Substituting \eqref{eq:tmp165}\eqref{eq:tmp168} and the  convexity of $  \Phi  $ that
	$\mathbb{E}\langle v^{(t)}-v^*,\nabla\Phi(v^{(t)})\rangle
	\ge\mathbb{E}\bigl[\Phi(v^{(t)})-\Phi^*\bigr]$ into \eqref{eq:inner-split}, we can obtain 
	\begin{equation}\label{eq:final-V1-inner}
		\begin{aligned}
			-2K\eta\,\mathbb{E}\langle z^{(t)}-v^*,\bar\Delta^{(t)}\rangle
			&\le
			-2K\eta\,\mathbb{E}\bigl[\Phi(v^{(t)})-\Phi^*\bigr]
			+a\,\mathbb{E}\|v^{(t)}-v^*\|^2
			+\frac{6\eta^2\beta^2}{a}E_t
			+\frac{6\eta^2 K^2\gamma^2}{a}(C_t+\widetilde{C}_t)\\
			&\quad+\frac{6\eta^2\beta^2}{a}\mathbb{E}\|w_S^{(t)}-w_S^*(v^{(t)})\|^2
			+\frac{3\eta^2 K^2\varepsilon_2^2}{a}
			+2(1-\tau)K^2\eta^2\,G^2(1+2\gamma)^2.
		\end{aligned}	 			
	\end{equation}
	Notice the term $\mathbb{E}\|v^{(t)}-v^*\|^2$ satisifies the elementary expansion that relates the true messenger distance to the virtual-messenger distance:
	\begin{equation}\label{eq:tmp170}
		\begin{aligned}
			&\mathbb{E}\left\|v^{(t)}-v^*\right\|^2 \leq 2 \mathbb{E}\left\|z^{(t)}-v^*\right\|^2+2\left(\frac{1-\tau}{\tau}\right)^2 \mathbb{E}\left\|v^{(t)}-v^{(t-1)}\right\|^2\\
			&\leq2 \mathbb{E}\left\|z^{(t)}-v^*\right\|^2+2(1-\tau^2) K^2 \eta^2 \mathbb{E}\left\|\bar{\Delta}^{(t-1)}\right\|^2=2 \mathbb{E}\left\|z^{(t)}-v^*\right\|^2+2(1-\tau)^2 K^2 \eta^2 G^2(1+2 \gamma)^2
		\end{aligned}
	\end{equation}
	
	For the last term in \eqref{eq:z-expansion}, the Server Messenger Update lemma together with the identity $  K^2\eta^2\mathbb{E}\|\bar\Delta^{(t)}\|^2=\frac1{\tau^2}\mathbb{E}\|v^{(t+1)}-v^{(t)}\|^2  $ yields
	\begin{equation}\label{eq:quad-bound}
		\begin{aligned}
			K^2\eta^2\mathbb{E}\|\bar\Delta^{(t)}\|^2
			&\le\frac{\Lambda}{\tau^2}(C_t+\widetilde{C}_t)
			+\nicefrac{\Upsilon}{\tau^2}\mathbb{E}\|w_S^{(t)}-w_S^*(v^{(t)})\|^2
			+\frac{\Omega}{\tau^2}E_t
			+\frac{\Psi}{\tau^2}\mathbb{E}\bigl[\Phi(v^{(t)})-\Phi^*\bigr]
			+\frac{\Gamma}{\tau^2},
		\end{aligned}	 			
	\end{equation}	 		
	and multiplying by $(1-4a^2)$ yields $\mathcal{V}_1$
	\begin{equation}\label{eq:V1-start}
		\begin{aligned}
			\mathcal{V}_1
			&=(1-4a^2)\Bigl(
			-2K\eta\,\mathbb{E}\langle z^{(t)}-v^*,\bar\Delta^{(t)}\rangle
			+K^2\eta^2\mathbb{E}\|\bar\Delta^{(t)}\|^2
			\Bigr).
		\end{aligned}	 			
	\end{equation}
	By substituting \eqref{eq:final-V1-inner}\eqref{eq:tmp170}\eqref{eq:quad-bound} into \eqref{eq:z-expansion} we have
	\begin{equation}\label{eq:correct-intermediate}
		\begin{aligned}
			&\mathbb{E}\|z^{(t+1)}-v^*\|^2
			-(1+2a)\mathbb{E}\|z^{(t)}-v^*\|^2
			\le\Biggl(-2K\eta+\frac{\Psi}{\tau^2}\Biggr)\mathbb{E}\bigl[\Phi(v^{(t)})-\Phi^*\bigr]
			\\
			&\quad+\Biggl(\frac{6\eta^2\beta^2}{a}+\frac{\Omega}{\tau^2}\Biggr)E_t+\Biggl(\frac{6\eta^2 K^2\gamma^2}{a}+\frac{\Lambda}{\tau^2}\Biggr)(C_t+\widetilde{C}_t)
			+\Biggl(\frac{6\eta^2\beta^2}{a}+\frac{\Upsilon}{\tau^2}\Biggr)\mathbb{E}\|w_S^{(t)}-w_S^*(v^{(t)})\|^2\\
			&\quad+\frac{3\eta^2 K^2\varepsilon_2^2}{a}
			+2(1-\tau)K^2\eta^2\,G^2(1+2\gamma)^2
			+2a(1-\tau)^2 K^2\eta^2\,G^2(1+2\gamma)^2
			+\frac{\Gamma}{\tau^2}.
		\end{aligned}
	\end{equation}
	Since $a<1/2$, we simply have $1-2a>1-4a^2$, this leads to
	\begin{equation}\label{eq:V1-final}
		\begin{aligned}
			&\mathcal{V}_1
			\le(1-2a)\Biggl(-2K\eta+\frac{\Psi}{\tau^2}\Biggr)\mathbb{E}\bigl[\Phi(v^{(t)})-\Phi^*\bigr]
			+(1-2a)\Biggl(\frac{6\eta^2\beta^2}{a}+\frac{\Omega}{\tau^2}\Biggr)E_t\\
			&\quad+(1-2a)\Biggl(\frac{6\eta^2 K^2\gamma^2}{a}+\frac{\Lambda}{\tau^2}\Biggr)(C_t+\widetilde{C}_t)
			+(1-2a)\Biggl(\frac{6\eta^2\beta^2}{a}+\frac{\Upsilon}{\tau^2}\Biggr)\mathbb{E}\|w_S^{(t)}-w_S^*(v^{(t)})\|^2\\
			&\quad+(1-2a)\Biggl(
			\frac{3\eta^2 K^2\varepsilon_2^2}{a}
			+2(1-\tau)K^2\eta^2\,G^2(1+2\gamma)^2
			+2a(1-\tau)^2 K^2\eta^2\,G^2(1+2\gamma)^2
			+\frac{\Gamma}{\tau^2}
			\Biggr),
		\end{aligned}
	\end{equation}
	which is further simplified to
	\begin{equation}
		\begin{aligned}
			\mathcal{V}_1
			&\le-A_z\,\mathbb{E}\bigl[\Phi(v^{(t)})-\Phi^*\bigr]
			+B_z E_t
			+C_z(C_t+\widetilde{C}_t)
			+D_z \mathbb{E}\|w_S^{(t)}-w_S^*(v^{(t)})\|^2
			+D_z',
		\end{aligned}
	\end{equation}
	and we have defined
	\begin{equation}
		\begin{aligned}
			&A_z=(1-2 a)\left(2 K \eta-\frac{\Psi}{\tau^2}\right),\quad B_z=(1-2 a)\left(\frac{6 \eta^2 \beta^2}{a}+\frac{\Omega}{\tau^2}\right),\\
			&C_z=(1-2 a)\left(\frac{6 \eta^2 K^2 \gamma^2}{a}+\frac{\Lambda}{\tau^2}\right),\quad D_z=(1-2 a)\left(\frac{6 \eta^2 \beta^2}{a}+\frac{\Upsilon}{\tau^2}\right)\\
			&D_z^{\prime}=(1-2 a)\left(\frac{3 \eta^2 K^2 \varepsilon_2^2}{a}+2(1-\tau) K^2 \eta^2 G^2(1+2 \gamma)^2+2 a(1-\tau)^2 K^2 \eta^2 G^2(1+2 \gamma)^2+\frac{\Gamma}{\tau^2}\right).
		\end{aligned}
	\end{equation}
	
	For $\mathcal{V}_2$, we can use the client drift Lemma \ref{lemma:clientdrift} for $E_{t+1}-E_t$: 
	\begin{equation}
		\begin{aligned}
			&E_{t+1}-E_t \leq \frac{(K-1) / 2}{1-K L^2 D^{\prime}}\left\{D^{\prime} \mathcal{A} \mathbb{E}\left[\Phi\left(v^{(t+1)}\right)-\Phi^*\right]+2 \mathcal{E}\left(C_{t+1}+\widetilde{C}_{t+1}\right)+\widetilde{\mathcal{D}}\right\}-E_t\\
			&\leq\frac{(K-1) / 2}{1-K L^2 D^{\prime}}D^{\prime} \mathcal{A} \big\{(1-\frac{\tau \eta K \mu}{2}) \mathbb{E}[\Phi(v^{(t)})-\Phi^*]+(\frac{\tau \eta K \beta^2}{2}+\frac{\tau \eta \varepsilon}{2}+\frac{\beta \Omega}{2}) E_t\\
			&\qquad\qquad\qquad\qquad\qquad\qquad\qquad\qquad\qquad+(\tau \eta \gamma+\frac{\beta \Lambda}{2})(C_t+\widetilde{C}_t)+\frac{\beta \Upsilon}{2} R_t+\frac{\beta \Gamma}{2} \big\}\\
			&+\frac{(K-1) / 2}{1-K L^2 D^{\prime}}D^{\prime} \mathcal{A}\cdot 	2 \mathcal{E}\big\{\left(\rho_C+\rho_g\right) C_t+\left(\kappa_{\widetilde{C}}+\tilde{\rho}_g\right) \widetilde{C}_t+\left(\kappa_E+\kappa_2^g \Omega\right) E_t\\
			&+\left(\kappa_w+\kappa_2^g \Upsilon\right) R_t 
			+\left(\kappa_{\Phi}+\widetilde{\kappa}_3^g\right) \mathbb{E}\left[\Phi\left(v^{(t)}\right)-\Phi^*\right]+\left(\kappa_0+\widetilde{\kappa}_g^4\right) \big\}+\frac{(K-1) / 2}{1-K L^2 D^{\prime}}\cdot\widetilde{D}-E_t,
		\end{aligned}
	\end{equation}
	where we have used the Lemma \ref{lemma:phi-evo} for evaluating the term $(\Phi(v^{(t+1)})-\Phi^*)$ and the Lemmas \ref{lemma:localcontrolevo} and \ref{lemma:globalcontroalevo} for evaluating the term $(C_{t+1}+\widetilde{C}_{t+1})$. Moreover, we have defined $R_t=\mathbb{E}\|w_S^{(t)}-w_S^*(v^{(t)})\|^2$ and used the Lemma \ref{lemma:average} for its evolution $\rho_R R_t+D_R$, with the definitions of $\rho_R=2(1-\mu \eta_{\mathrm{KD}})^E$ and $D_R=\nicefrac{\eta_{\mathrm{KD}} \sigma_{\mathrm{KD}}^2}{\mu}+8 R_w^2(1-\mu \eta_{\mathrm{KD}})^E$. For brevity, we let  $\kappa_{kl}=\nicefrac{(K-1) / 2}{(1-K L^2 D^{\prime})}$, then it leads to
	\begin{equation}
		E_{t+1}-E_t \leq A_E \mathbb{E}\left[\Phi\left(v^{(t)}\right)-\Phi^*\right]+\left(\rho_E-1\right) E_t+C_E^C C_t+C_E^{\widetilde{C}} \widetilde{C}_t+D_E R_t+D_E^{\prime},
	\end{equation}
	where the coefficients for $\mathbb{E}[\Phi(v^{(t)})-\Phi^*]$,  $R_t$, $C_t$, $\widetilde{C}_t$ and $E_t$ are provided
	\begin{equation}
		\begin{aligned}
			&A_E=\kappa_{k l} D^{\prime} A\left[\left(1-\frac{\tau \eta K \mu}{2}\right)+2\mathcal{E}\left(\kappa_{\Phi}+\widetilde{\kappa}_3^g\right)\right],\, \rho_E=\kappa_{k l} D^{\prime} A\left(\frac{\tau \eta K \beta^2}{2}+\frac{\tau \eta \varepsilon}{2}+\frac{\beta \Omega}{2}+2 \mathcal{E}\left(\kappa_E+\kappa_2^g \Omega\right)\right)\\
			&D_E=\kappa_{k l}\left[D^{\prime} A\left(\frac{\beta \Upsilon}{2}+2 \mathcal{E}\left(\kappa_w+\kappa_2^g \Upsilon\right)\right)\right],\,C^C_E=\kappa_{k l} D^{\prime} A\left(\tau \eta \gamma+\frac{\beta \Lambda}{2}+2 \mathcal{E}\left(\rho_C+\rho_g\right)\right),\\
			&C^{\widetilde{C}}_E=\kappa_{k l} D^{\prime} A\left(\tau \eta \gamma+\frac{\beta \Lambda}{2}+2 \mathcal{E}\left(\kappa_{\widetilde{C}}+\widetilde{\rho}_g\right)\right), \, D^{\prime}_E=\kappa_{k l}\left[D^{\prime} A\left(\frac{\beta \Gamma}{2}+2 \mathcal{E}\left(\kappa_0+\widetilde{\kappa}_g^4\right)\right)+\widetilde{D}\right].
		\end{aligned}
	\end{equation}
	Next, we continue to derive $\mathcal{V}_3$, which can be directly obtained from Lemma \ref{lemma:localcontrolevo}
	that
	\begin{equation}
		\begin{aligned}
			\mathcal{V}_3
			\le
			\omega_2(\rho_C-1)C_t
			+\omega_2\kappa_{\widetilde{C}}\widetilde{C}_t
			+\omega_2\kappa_E E_t
			+\omega_2\kappa_w R_t
			+\omega_2\kappa_\Phi\mathbb{E}\bigl[\Phi(v^{(t)})-\Phi^*\bigr]
			+\omega_2\kappa_0.
		\end{aligned}
	\end{equation}
	Similarly, $\mathcal{V}_4$ can be directly derived via Lemma \ref{lemma:globalcontroalevo} that
	\begin{equation}
		\begin{aligned}
			\mathcal{V}_4\le
			\omega_3(\widetilde{\rho}_g-1)\widetilde{C}_t
			+\omega_3\rho_g C_t
			+\omega_3\kappa_2^g\Upsilon R_t
			+\omega_3\kappa_2^g\Omega E_t
			+\omega_3\widetilde{\kappa}_3^g\mathbb{E}\bigl[\Phi(v^{(t)})-\Phi^*\bigr]
			+\omega_3\widetilde{\kappa}_g^4.
		\end{aligned}
	\end{equation}
	For $\mathcal{V}_5$, Lemmas \ref{lemma:phi-evo} and \ref{lemma:average} can be adopted and it directly leads to
	\begin{equation}
		\begin{aligned}
			\begin{aligned}
				\mathcal{V}_5= & \omega_4\left(J_{t+1}-J_t\right) \leq \omega_4\left(-\frac{\tau \eta K \mu}{2}\right) \mathbb{E}\left[\Phi\left(v^{(t)}\right)-\Phi^*\right]+\omega_4\left(\frac{\tau \eta K \beta^2}{2}+\frac{\tau \eta \varepsilon}{2}+\frac{\beta \Omega}{2}\right) E_t \\
				& +\omega_4\left(\tau \eta \gamma+\frac{\beta \Lambda}{2}\right)\left(C_t+\widetilde{C}_t\right)+\omega_4\left(\frac{\beta \Upsilon}{2}+\rho_R-1\right) R_t+\omega_4\left(\frac{\beta \Gamma}{2}+D_R\right) .
			\end{aligned}
		\end{aligned}
	\end{equation}
	Combine $(\mathcal{V}_1,\mathcal{V}_2,\mathcal{V}_3,\mathcal{V}_4,\mathcal{V}_5)$, we can have the Lyapunov evolution
	\begin{equation}\label{eq:tmp186}
		V^{(t+1)}-V^{(t)} \leq-A_{\mathrm{tot}} \mathbb{E}\left[\Phi\left(v^{(t)}\right)-\Phi^*\right]-B_{\mathrm{tot}} E_t-C_{\mathrm{tot}}^C C_t-C_{\mathrm{tot}}^{\widetilde{C}} \widetilde{C}_t-D_{\mathrm{tot}} R_t+D_{\mathrm{tot}}^{\prime},
	\end{equation}
	where we have defined $A_{\mathrm{tot}},B_{\mathrm{tot}},C_{\mathrm{tot}}^C,C_{\mathrm{tot}}^{\widetilde{C}},D_{\mathrm{tot}},D_{\mathrm{tot}}^{\prime}$ 
	\begin{equation}\label{eq:tmp187}
		\begin{aligned}
			& A_{\text {tot }}=A_z-\omega_1 A_E-\omega_2 \kappa_{\Phi}-\omega_3 \widetilde{\kappa}_3^g+\omega_4 \cdot \frac{\tau \eta K \mu}{2}\approx\left[2(1-2 a) K-2 Q_{w 3} \beta \kappa_1^g\right] \eta \\
			& B_{\text {tot }}=\omega_1-B_z-\omega_1\rho_E-\omega_2 \kappa_E-\omega_3 \kappa_2^g \Omega-\omega_4\left(\frac{\tau \eta K \beta^2}{2}+\frac{\tau \eta \varepsilon}{2}+\frac{\beta \Omega}{2}\right) \\
			&\qquad\qquad\qquad\qquad\qquad\qquad\qquad \approx \omega_1-2 p \alpha \beta^2 Q_{w2}\eta-Q_{w4} \cdot \frac{\tau}{2}\left(K \beta^2+\varepsilon\right) \eta^2\\
			& C_{\text {tot }}^C=\omega_2-C_z-\omega_1 C_E^C-\omega_2\rho_C-\omega_3 \rho_g-\omega_4\left(\tau \eta \gamma+\frac{\beta \Lambda}{2}\right) \\
			&\qquad\qquad\qquad\qquad\qquad\qquad\qquad \approx \eta Q_{w2}(1-(1-p \alpha)(1+\varepsilon))-Q_{w4} \tau \gamma \eta^2, \\
			& C_{\text {tot }}^{\widetilde{C}}=\omega_3-C_z-\omega_1 C_E^{\widetilde{C}}-\omega_2 \kappa_{\widetilde{C}}-\omega_3\tilde{\rho}_g-\omega_4\left(\tau \eta \gamma+\frac{\beta \Lambda}{2}\right)\approx \omega_3-Q_{w4} \tau \gamma \eta^2, \\
			& D_{\text {tot }}=\omega_4-D_z-\omega_1 D_E-\omega_2 \kappa_w-\omega_3 \kappa_2^g \Upsilon-\omega_4\left(\frac{\beta \Upsilon}{2}+\rho_R\right)\\
			&\qquad\qquad\qquad \approx\left\{Q_{w 4}\left(1-\rho_R\right)-6 \beta^2 Q_{w 2}\left[(1-p \alpha)\left(1+\frac{1}{\varepsilon}\right)+2 p \alpha\right]\right\} \eta\\
			&D_{\mathrm{tot}}^{\prime}=D_z^{\prime}+\omega_1 D_E^{\prime}+\omega_2 \kappa_0+\omega_3 \widetilde{\kappa}_g^4+\omega_4\left(\frac{\beta \Gamma}{2}+D_R\right)\approx C_{\eta}\eta+C_{\eta^2}\eta^2+C_{\eta^3}\eta^3
		\end{aligned}
	\end{equation}
	where especially for $D_{\mathrm{tot}}^{\prime}$, we have defined $C_{\eta},C_{\eta^2},C_{\eta^3}$ as follows: 
	\begin{equation}
		\begin{aligned}
			&C_{\eta}=\left\{\omega_1(K-1)(2 N \zeta^2 \epsilon_2)+\left(Q_{w 2} \kappa_0^{(0)}+Q_{w 3} \kappa_3^g+Q_{w 4} D_R\right)\right\}\\
			&C_{\eta^2}=\omega_1(K-1)\left[2 \epsilon_2 \mathcal{A}_0 \gamma \epsilon_3\left(\kappa_0^{(0)}+\kappa_3^g\right)+12 N \zeta^2\left(1+\eta_{\text {inner }}^2 L^2\right)+\sigma^2+2 L \widetilde{\mathcal{L}}_y \epsilon_2\right]\\
			&\qquad\qquad+(1-2 a)\left(\frac{3 K^2 \varepsilon_2^2}{a}+2(1-\tau) K^2 G^2(1+2 \gamma)^2+2 a(1-\tau)^2 K^2 G^2(1+2 \gamma)^2+\frac{\Gamma_2}{\tau^2}\right)\\
			&C_{\eta^3}=\omega_1(K-1)\left[\epsilon_2 \mathcal{A}_0 \cdot \mathcal{P}_2+\left(\epsilon_2 \mathcal{A}_1+6\left(1+\eta_{\text {inner }}^2 L^2\right) \mathcal{A}_0\right) \cdot 2 \gamma \epsilon_3\left(\kappa_0^{(0)}+\kappa_3^g\right)\right. \\
			& \left.\qquad+2 N \zeta^2 \epsilon_1 Q^2+12 L \widetilde{\mathcal{L}}_y\left(1+\eta_{\text {inner }}^2 L^2\right)+3 L^2 \widetilde{\mathcal{L}}_y \epsilon_2\right]+\left(Q_{w 2} \kappa_0^{(1)}+Q_{w 3} \kappa_2^g+Q_{w 4} \frac{\beta}{2}\right) \Gamma_2,
		\end{aligned}
	\end{equation}
	and we have set the following definitions for brevity:
	\begin{equation}
		\begin{aligned}
			&D^{\prime}  =\eta \epsilon_2+\eta^2 D_2^{\prime}, \, D_2^{\prime}=\epsilon_1 Q^2+6\left(1+\eta_{\text {inner }}^2 L^2\right),\, \mathcal{P}_2=6 \gamma^2\left(\kappa_0^{(0)}+\kappa_3^g\right)+\frac{\beta}{2} \Gamma_2+2 \gamma \epsilon_3\left(\kappa_0^{(2)}+\kappa_2^g\right) \Gamma_2, \\
			&\mathcal{A}  =\mathcal{A}_0+\eta \mathcal{A}_1+\eta^2\mathcal{A}_2, \quad \mathcal{A}_0=2 L N+4 \beta, \quad \mathcal{A}_1=4 \beta L \tilde{\mathcal{C}}, \quad\Gamma  =\eta^2 \Gamma_2\quad{\rm with}\quad \\
			&{{\Gamma_2=\frac{12 N \tau^2 K^2}{S}\left(\frac{6 \sigma^2}{K}+12 G^2+2 \varepsilon_2^2\right)+\left(\frac{12 N \tau^2 K^2 B^2}{S}+2 \tau^2 K^2\right)\left(32 \eta_{\text {inner }}^2 B^2 G^2+\left(40+16 L^2 \eta_{\text {inner }}^2\right) \zeta^2\right)}}\\
			&\kappa_0=\kappa_0^{(0)}+\eta^2 \kappa_0^{(1)} \Gamma_2\quad{\rm with }\, \kappa_0^{(1)}=(1-p \alpha)\left(1+\frac{1}{\varepsilon}\right) \beta^2+2 p \alpha \beta^2\quad{\rm and}\\
			&\kappa_0^{(0)}=(1-p \alpha)\left(1+\frac{1}{\varepsilon}\right) 16 \beta^2 R_w^2+p \alpha\left(\frac{2 \sigma^2}{K}+32 \beta^2 R_w^2\right). 
		\end{aligned}
	\end{equation}
	It can be verified simply from (\ref{eq:tmp187}) that when $\eta$ is sufficiently small, all $(B_{\text {tot }},C_{\text {tot }}^C, C_{\text {tot }}^{\widetilde{C}})>0$. Moreover,  \eqref{eq:tmp155} holds.		 Therefore, \eqref{eq:tmp186} can be simplified to $V^{(t+1)}\le V^{(t)}-\delta_{\Phi}\eta\cdot J_t+D_{\mathrm{tot}}^{\prime}$, which is exactly \eqref{eq:V-descent-CIDERS} in the lemma. 	 		
\end{proof}

\begin{theorem}[Ergodic Convergence]\label{theorem}
	Under Assumptions A1–A8, let the weights of the Lyapunov function $V^{(t)}=(1-4 a^2) \mathbb{E}\|z^{(t)}-v^*\|^2+\omega_1 E_t+Q_{w 2} \eta C_t+Q_{w 3} \eta \widetilde{C}_t+Q_{w 4} \eta J_t$  to be positive  $(\omega_1,Q_{w2},Q_{w3},Q_{w4})>0$,  with $  Q_{w2},Q_{w3},Q_{w4}  $ obeying
	$Q_{w3}<\nicefrac{(1-2a)K}{\beta\kappa_1^g}$, and $Q_{w4}(1-\rho_R)>6\beta^2Q_{w2}[(1-p\alpha)(1+\nicefrac{1}{\varepsilon})+2p\alpha]$. Moreover, the upper-level step-size be chosen as 
	\begin{equation}\label{eq:tmp190}
		\eta=\sqrt{\frac{V_0^{(0)}}{C_{\eta^2} T}},\,\text{where }V_0^{(0)}=\left(1-4 a^2\right) \mathbb{E}\left\|z^{(0)}-v^*\right\|^2+\omega_1 E_0,
	\end{equation}
	Then the ergodic average of the joint gap $J_t=\mathbb{E}[\Phi(v^{(t)})-\Phi^*]+\mathbb{E}\|w_S^{(t)}-w_S^*(v^{(t)})\|^2$ satisfies 
	\begin{equation}\label{eq:tmp191}
		\begin{aligned}
			&\frac{1}{T} \sum_{t=0}^{T-1} J_t \leq \frac{C_\eta}{\delta_{\Phi}}+\frac{2}{\delta_{\Phi}} \sqrt{\frac{V_0^{(0)} C_{\eta^2}}{T}}+\frac{V_1^{(0)}}{\delta_{\Phi} T}+\frac{C_{\eta^3} V_0^{(0)}}{\delta_{\Phi} C_{\eta^2} T}+O\left(T^{-3 / 2}\right)
		\end{aligned}
	\end{equation}
\end{theorem}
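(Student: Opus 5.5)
The plan is to telescope the one-round Lyapunov descent from the Lyapunov Evolution lemma. That lemma gives $V^{(t+1)}\le V^{(t)}-\delta_\Phi\eta J_t+D'_{\mathrm{tot}}$, valid for all sufficiently small $\eta$ under the stated conditions on $Q_{w3}$ and $Q_{w4}$. Those conditions are exactly what make both entries in the minimum defining $\delta_\Phi$ positive. We sum this inequality over $t=0,\dots,T-1$. Since $V^{(T)}\ge 0$, because every weight $(1-4a^2),\omega_1,Q_{w2},Q_{w3},Q_{w4}$ is positive, the telescoped sum yields
\begin{equation*}
\delta_\Phi\eta\sum_{t=0}^{T-1}J_t\le V^{(0)}+T D'_{\mathrm{tot}}.
\end{equation*}
Dividing by $\delta_\Phi\eta T$ then gives $\frac1T\sum_t J_t\le \frac{V^{(0)}}{\delta_\Phi\eta T}+\frac{D'_{\mathrm{tot}}}{\delta_\Phi\eta}$.

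Next we split $V^{(0)}$ according to its $\eta$-dependence. By definition, $V^{(0)}=V_0^{(0)}+\eta V_1^{(0)}$, where $V_0^{(0)}=(1-4a^2)\mathbb{E}\|z^{(0)}-v^*\|^2+\omega_1E_0$ and $V_1^{(0)}=Q_{w2}C_0+Q_{w3}\widetilde C_0+Q_{w4}J_0$. The first term therefore becomes $\frac{V_0^{(0)}}{\delta_\Phi\eta T}+\frac{V_1^{(0)}}{\delta_\Phi T}$. For the additive term we use the expansion $D'_{\mathrm{tot}}\approx C_\eta\eta+C_{\eta^2}\eta^2+C_{\eta^3}\eta^3$ from the proof of the Lyapunov lemma, and we treat the remainder as a higher-order term of order $O(\eta^4)$. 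This gives
\begin{equation*}
\frac{D'_{\mathrm{tot}}}{\delta_\Phi\eta}=\frac{C_\eta}{\delta_\Phi}+\frac{C_{\eta^2}\eta}{\delta_\Phi}+\frac{C_{\eta^3}\eta^2}{\delta_\Phi}+O(\eta^3).
\end{equation*}

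We then substitute the step size \eqref{eq:tmp190}, $\eta=\sqrt{V_0^{(0)}/(C_{\eta^2}T)}$. This choice balances $\frac{V_0^{(0)}}{\delta_\Phi\eta T}$ against $\frac{C_{\eta^2}\eta}{\delta_\Phi}$, and each of the two equals $\frac1{\delta_\Phi}\sqrt{V_0^{(0)}C_{\eta^2}/T}$, which together produce the $\frac{2}{\delta_\Phi}\sqrt{\cdot}$ term. The cubic term gives $C_{\eta^3}\eta^2/\delta_\Phi=\frac{C_{\eta^3}V_0^{(0)}}{\delta_\Phi C_{\eta^2}T}$, and the $O(\eta^3)$ remainder is $O(T^{-3/2})$. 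Collecting all pieces reproduces \eqref{eq:tmp191}.

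The main obstacle is justifying that the small-$\eta$ requirements of the earlier lemmas hold uniformly. These include positivity of $B_{\mathrm{tot}}$, $C^C_{\mathrm{tot}}$ and $C^{\widetilde C}_{\mathrm{tot}}$, the approximation $\frac{\rho^{K-m-1}-1}{\rho-1}\approx K-m-1$, the condition $1-KL^2D'>0$, and the step-size restriction in Lemma~\ref{lemma:10}. We must also check that the neglected higher-order terms are genuinely $O(\eta^4)$ with constants independent of $T$. To handle this, we would choose $T\ge T_0$ large enough that $\eta=O(T^{-1/2})$ falls below the threshold $\bar\eta$ at which all of these conditions hold simultaneously. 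We would also note that $\delta_\Phi$ does not depend on $\eta$ once the Lyapunov weights are fixed, so the division by $\delta_\Phi$ is legitimate. The non-vanishing $C_\eta/\delta_\Phi$ term reflects the residual heterogeneity and lower-level bias floor, which is why the result is convergence to a neighborhood of the optimum rather than to the optimum itself.
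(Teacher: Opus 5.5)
Your proposal is correct and follows the paper's proof essentially step for step. Like the paper, you telescope $V^{(t+1)}\le V^{(t)}-\delta_\Phi\eta J_t+D'_{\mathrm{tot}}$, split $V^{(0)}=V_0^{(0)}+\eta V_1^{(0)}$, expand $D'_{\mathrm{tot}}\approx C_\eta\eta+C_{\eta^2}\eta^2+C_{\eta^3}\eta^3$, and substitute the step size that balances $V_0^{(0)}/(\delta_\Phi\eta T)$ against $C_{\eta^2}\eta/\delta_\Phi$. You also state explicitly that $V^{(T)}\ge 0$ and that $T$ must be large enough for $\eta$ to fall below the small-step threshold of the earlier lemmas; the paper leaves both implicit, and they do not change the argument.
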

\begin{proof}
	Summing the inequality \eqref{eq:V-descent-CIDERS} from $  t=0  $ to $  T-1  $ yields
	\begin{equation}\label{eq:tmp192}
		\frac{1}{T} \sum_{t=0}^{T-1} J_t \leq \frac{V^{(0)}}{\delta_{\Phi} \eta T}+\frac{D_{\mathrm{tot}}^{\prime}}{\delta_{\Phi} \eta}=\frac{V_0^{(0)}}{\delta_{\Phi} \eta T}+\frac{V_1^{(0)}}{\delta_{\Phi} T}+\frac{C_\eta}{\delta_{\Phi}}+\frac{C_{\eta^2}}{\delta_{\Phi}} \eta+\frac{C_{\eta^3}}{\delta_{\Phi}} \eta^2+O\left(\eta^3\right),
	\end{equation}
	where we have defined 
	\begin{equation}
		V_0^{(0)}=\left(1-4 a^2\right) \mathbb{E}\left\|z^{(0)}-v^*\right\|^2+\omega_1 E_0,\quad V_1^{(0)}=Q_{w 2} C_0+Q_{w 3} \widetilde{C}_0+Q_{w 4} J_0. 
	\end{equation}
	We consider the function
	$\mathcal{H}(\eta)=\nicefrac{V_0^{(0)}}{(\delta_\Phi\eta T)}	+\nicefrac{C_{\eta^2}}{\delta_\Phi}$, by minimizing it we can obtain the step size $\eta^*=\sqrt{\nicefrac{V_0^{(0)}}{C_{\eta^2} T}}$ exactly as \eqref{eq:tmp190} and substituting it into \eqref{eq:tmp192} we can obtain the convergence result in \eqref{eq:tmp191}. 	 	
\end{proof}

\subsection{Discussion of the Convergence Result}
\label{sec:conv-discuss}

The convergence result in \eqref{eq:tmp191} provides several important insights into the behavior of CIDERS.

\paragraph{Convergence rate.} 
The leading term in \eqref{eq:tmp191} is $\mathcal{O}({T}^{\nicefrac{-1}{2}})$, which matches the standard sublinear rate for stochastic nonconvex optimization under heterogeneous settings \cite{bottou2018optimization}. The first term $C_\eta/\delta_\Phi$ represents the asymptotic neighborhood of convergence, which is assembled from the local gradient heterogeneity,  lower-level gap, and the proximal second-moment.	It also confirms that larger client participation improves the convergence rate by reducing the variance of the aggregated update.  The step-size choice $\eta = \mathcal{O}(1/\sqrt{T})$ is standard for stochastic gradient methods and balances the initial transient and the asymptotic variance \cite{khaled2020tighter}. In Lemma \ref{lemma:10}, the global learning rate $\tau$  satisfies 
\begin{equation}
	\tau K\eta \le \min\left\{\frac{1}{8\beta}, \frac{\mu}{4(\gamma^2\beta + \varepsilon^2\beta + \Psi/(2\tau\eta K))}\right\},
\end{equation}
which is the stability condition for smooth optimization with momentum-like updates. 

\paragraph{Effect of local step $K$.} 
While the local step $K$ can reduce communication, it may inflate client drift. In Lemma \ref{lemma:clientdrift}, the prefactor
$(1-KL^2D')^{-1}$ remains positive  for $K<1/(L^2D')$, with $D'=\mathcal{O}(\eta)$. Beyond this threshold, the Lyapunov descent inequality fails, which justifies the practical guideline that $K\eta$ must be sufficiently small to prevent divergence. This aligns with the client-drift phenomenon identified in prior federated learning analyses \cite{karimireddy2020scaffold, li2020federated}.

\paragraph{Effect of lower-level distillation.} 
The lower-level approximation error $R_t := \mathbb{E}\|w_S^{(t)} - w_S^*(v^{(t)})\|^2$ contracts at rate $\rho_R = 2(1-\mu\eta_{\mathrm{KD}})^E$ from Lemma \ref{lemma:average}. To ensure that this error does not dominate the upper-level progress, the number of KD steps $E$ should satisfy
\begin{equation}
	E \gtrsim {(\mu\eta_{\mathrm{KD}})^{-1}} \log{(\eta)^{-1}},
\end{equation}
This condition formalizes the requirement that the student backbone is sufficiently well aligned with the current messenger before the next round of client updates begins.

\paragraph{Comparison with baseline federated learning.} 
The convergence rate $\mathcal{O}(1+1/\sqrt{T})$ matches the best-known rates for federated learning with client heterogeneity \cite{karimireddy2020scaffold, acar2021federated, reddi2021adaptive}. Compared to standard FL, which suffers from biased convergence due to client drift \cite{li2020federated}, CIDERS retains a comparable asymptotic neighborhood  arising from the bilevel structure, i.e., the lower-level approximation error $R_t$ is the  additional sources of bias, thus the lower-level updates should be carefully controlled. Importantly, the consensus-variate correction effectively mitigates the heterogeneity from the asymptotic bias.  This aligns with the qualitative finding in Section \ref{sec:theoretical_tradeoff} that the consensus variates reduce the heterogeneity gap when ${c} \approx \widetilde G$ and ${c}_i \approx \widetilde g_i$.

\end{document}